\let\hbar\undefined
\def\l@section{\@tocline{1}{0pt}{1pc}{}{}}
\def\l@subsection{\@tocline{2}{0pt}{1pc}{4.6em}{}}
\def\l@subsubsection{\@tocline{3}{0pt}{1pc}{7.6em}{}}
\renewcommand{\tocsection}[3]{%
  \indentlabel{\@ifnotempty{#2}{\makebox[2.3em][l]{%
    \ignorespaces#1 #2.\hfill}}}#3}
\renewcommand{\tocsubsection}[3]{%
  \indentlabel{\@ifnotempty{#2}{\hspace*{2.3em}\makebox[2.3em][l]{%
    \ignorespaces#1 #2.\hfill}}}#3}
\renewcommand{\tocsubsubsection}[3]{%
  \indentlabel{\@ifnotempty{#2}{\hspace*{4.6em}\makebox[3em][l]{%
    \ignorespaces#1 #2.\hfill}}}#3}
\definecolor{myurlcolor}{rgb}{0,0,0.9}
\newcommand{\proj}[1]{| #1\rangle\!\langle #1 |}
\newcommand{\inner}[2]{\langle #1 , #2\rangle}
\newcommand{\ep}[1]{\langle #1 \rangle}
\DeclareMathOperator{\trace}{Tr}
\newcommand{\Ptr}[2]{\trace_{#1}\Pa{#2}}
\newcommand{\Tr}[1]{\Ptr{}{#1}}
\newcommand{\Pa}[1]{\left[#1\right]}
\newcommand{\norm}[1]{\left\lVert #1 \right\rVert}
\theoremstyle{plain}
\newtheorem{thm}{Theorem}
\newtheorem{lem}[thm]{Lemma}
\newtheorem{prop}[thm]{Proposition}
\newtheorem{cor}[thm]{Corollary}
\newtheorem{Def}[thm]{Definition}
\newtheorem{Rem}[thm]{Remark}
\newcommand*{\myproofname}{Proof}
\def\ot{\otimes}
\def\complex{\mathbb{C}}
\def\CN{\mathbb{C}}
\def\real{\mathbb{R}}
\def\R{\mathbb{R}}
\def\Z{\mathbb{Z}}
\newcommand{\CEE}{\mathcal E}
\newcommand{\CHH}{\mathcal H} 
\newcommand{\CLL}{\mathcal L} 
\newcommand{\CMM}{\mathcal M}
\newcommand{\CNN}{\mathcal N}
\newcommand{\CRR}{\mathcal R}
\newcommand{\be}{\begin{equation}}
\newcommand{\ee}{\end{equation}}
\renewcommand{\ge}{\geqslant}
\renewcommand{\geq}{\geqslant}
\renewcommand{\leq}{\leqslant}
\renewcommand{\le}{\leqslant}
\DeclareMathAlphabet{\mathcal}{OMS}{cmsy}{m}{n}
\begin{document}

\title{Discrete Quantum Gaussians and Central Limit Theorem}


  %

   \author{Kaifeng Bu$^{1,2}$}
  \email{bu.115@osu.edu}
 
   \author{Weichen Gu$^{1,3}$}
  \email{gu.1213@osu.edu}

   \author{Arthur Jaffe$^2$}
  \email{Arthur\_Jaffe@harvard.edu}

  \address[$1$]{Department of Mathematics, The Ohio State University, Columbus, Ohio 43210, USA }
  
   \address[$2$]{Departments of Mathematics and Physics, Harvard University, Cambridge, Massachusetts 02138, USA}

   \address[$3$]{Department of Mathematics and Statistics, University of New Hampshire, Durham, New Hampshire  03824, USA}

\begin{abstract}

We introduce a  quantum convolution and a conceptual framework to study states in discrete-variable (DV) quantum systems.  All our results suggest that stabilizer states play a role in DV quantum systems similar to the role that Gaussian states play in continuous-variable systems. Hence we suggest the name ``discrete quantum Gaussians'' for stabilizer states. For example, we prove that the convolution of two stabilizer  states is another stabilizer state, and that stabilizer states extremize both quantum entropy and Fisher information. We establish a ``maximal entropy principle,'' a ``second law of thermodynamics for quantum  convolution,'' and a ``quantum central limit theorem.'' The latter is based on iterating the convolution of a zero-mean quantum state, which we prove converges to a stabilizer state. We bound the exponential rate of convergence of the quantum central limit theorem by the ``magic gap,'' defined in terms of the support of the characteristic function of the state. We elaborate on our general results with a discussion of some examples, as well as extending many of them to quantum channels.

\end{abstract}

\maketitle

\setcounter{tocdepth}{3}
\tableofcontents

\section{Introduction}
For continuous-variable (CV) quantum information with $n$ degrees of freedom, the Hilbert space is $L^{2}\left(\mathbb{R}^{n}\right)$.  A state is a positive linear functional on the bounded operators taking the identity to 1.
For discrete-variable (DV) quantum information taking $d$ possible values, the Hilbert space for $n$ particles is $\left(\mathbb{C}^d\right)^{\otimes n}$.
A state for such a DV system is a positive transformation on the Hilbert space with unit trace. A goal of this work is to unify these two perspectives. 

To achieve this end, we construct a convolution for DV systems and establish a framework for investigating stabilizer states as limits of repeated convolutions. Stabilizer  states were introduced by Gottesman for qubits (the case $d=2$) to understand quantum error correction. We show  that stabilizer states and quantum channels play a role  in DV quantum systems analogous to the role played by Gaussian states and unitaries in CV quantum systems.

Although many continuous versions of a quantum central limit theorem have been proved, central limit theorems for  DV quantum systems remain relatively unexplored. We address this gap here and present a  central limit theorem specifically tailored for DV quantum systems, one that converges  to stabilizer states.

In classical probability theory, Gaussian distributions (or random variables) play a pivotal role, underpinning several fundamental concepts and results. For instance, the maximal entropy principle  asserts that Gaussian random variables achieve maximum entropy among all random variables with a given expectation and covariance~\cite{CT06}. Gaussians are invariant under convolution, and moreover the iteration of convolution gives rise to the central limit theorem (CLT). Normalized sums of independent and identically distributed random variables, $\frac{1}{\sqrt{N}}\sum^N_{i=1}X_i$, converge to a Gaussian random variable. Furthermore, the Shannon entropy of the normalized sum in the CLT is an increasing function of $N$. This fact, originally conjectured by Lieb~\cite{McKean1966,Lieb78}, was  proved by Artstein, Ball, Barthe, and Naor~\cite{artstein2004JAMS}.

The possibility of a central limit theorem for CV quantum systems dates back to  Cushen and Hudson~\cite{Cushen71} and to Hepp and Lieb~\cite{Lieb73,Lieb1973}. 
Many other CV quantum versions of the central limit theorem have been found, see~\cite{Giri78,Goderis89,Matsui02,Cramer10,Jaksic09,Arous13,Michoel04,GoderisPTRT89,JaksicJMP10,Accardi94,Liu16,JiangLiuWu19,Hayashi09,CampbellPRA13,BekerCMP21}. For example, if the convolution is defined by the beam splitter in CV quantum information, the CLT  converges to a Gaussian state~\cite{Hayashi09,CampbellPRA13,BekerCMP21}. 
Another CLT occurs in subfactor theory,  with  convergence to a biprojection; in this case the positivity preserving property of convolution is Liu's  quantum Schur product theorem~\cite{Liu16,JiangLiuWu19}.

A different point of view arises in  Voiculescu's free probability theory~\cite{VDN92,voiculescu2016free}, leading to the free central limit theorem. Repeated, normalized, (additive) free convolution of a probability measure  converges (with some additional assumptions) to 
a semicircle distribution~\cite{voiculescu1986addition,voiculescu1987multiplication}. The free entropy, introduced by Voiculescu~\cite{voiculescu1993analogues}, is maximized by  random variables satisfying the semicircle law~\cite{voiculescu1997analogues}.
The free entropy increases monotonically under repeated free convolution~\cite{shlyakhtenko2007free,shlyakhtenko2007shannon}.

\subsection{Main Results}
\begingroup
\setlength{\tabcolsep}{5pt} 
\renewcommand{\arraystretch}{1.5} 

\begin{table}[h!tbp]
\centering
 \resizebox{\textwidth}{10mm}{
\begin{tabular}{ |c|c|c| } 
\hline
& \bf DV quantum systems (this work) &\bf CV classical results\\
\hline
\bf Convolution & quantum convolution $\boxtimes$ & classical convolution  $*$\\
\hline
\bf Gaussians & Stabilizer states & Gaussian distributions\\
\hline
\bf Central limit theorem & Converge to stabilizer states &  Converge to Gaussian distributions\\
\hline
\end{tabular}}
\vskip 5pt
\caption{\label{tab:sum_high}Short summary of  main results}

\end{table}

\endgroup

We prove a central limit theorem for DV quantum systems, with the limit of repeated convolution being a stabilizer state. This motivates our calling these states ``discrete quantum Gaussians.'' 
Our quantum CLT is based on a twisted convolution that we denote $\rho\boxtimes \sigma$.
We establish different properties of $\rho\boxtimes \sigma$, with the general goal to reveal different ways in which stabilizer states are extremal.

\begin{enumerate}[(1)] 
\item{}
Given a state $\rho$, in  Definition \ref{def:mean_state} we define its mean state (MS) $\mathcal{M}(\rho)$.   In Definition \ref{def:MSPS} we define the set of minimal stabilizer-projection states (MSPS)  associated to $\rho$ and show that $\mathcal{M}(\rho)\in \text{MSPS}$. We show  in Theorem \ref{thm:main1} that $\mathcal{M}(\rho)$  is the closest state in MSPS to $\rho$ as measured by the relative R\'enyi entropy. We establish an extremality condition: within all quantum states having the same  MS up to conjugation by   a Clifford unitary, the MSPS attains the maximum  R\'enyi entropy. 

\item{}
We introduce the notion of the  \textit{magic gap} of a quantum state, as  the difference between the first  and second largest absolute values in the support of the characteristic function of the state, see  Definition \ref{def:ma_gap}. We prove that the magic gap  provides  a  lower bound on the number of non-Clifford gates in the synthesis of a given unitary. We formulate these  results in Propositions~\ref{prop:magap} and \ref{prop:gap_syn}.

\item{}
Our convolution $\boxtimes$ is given in Definition~\ref{def:convo}. It is different from the convolution considered earlier in~\cite{Audenaert16, Carlen16}. An important  basic property of our convolution is: the convolution of two stabilizer states is a stabilizer state. 
We prove an  inequality on the spectrum, which implies a series of  inequalities for generalized quantum R\'enyi entropy and subentropy.  We also prove an inequality on quantum Fisher information in Theorem \ref{thm:fisher}.

\item{}
Stabilizer states play an important role for the convolution $\boxtimes$. We show in Theorem \ref{thm:min_outen} that  the  convolutional channel  achieves minimal output entropy, if and only if the input states  are pure stabilizer states. We also study the Holevo channel capacity of the convolutional channel. We show in Theorem \ref{thm:holv_stab} that 
the convolutional channel achieves  maximal Holevo capacity, if and only if the state is a stabilizer state. These results provide some further understanding and insight for  stabilizer states. 

\item{}
Our convolutional approach includes two characteristic examples, the DV beam splitter and the DV amplifier. These examples are important, because they share a similar structure to their CV counterparts and give evidence unifying CV and DV quantum information theory. We compare our  DV  results on the beam splitter to the known results for CV quantum systems in \S\ref{subsec:examp}, Table~\ref{tab:sum_B}. We also compare CV and DV cases for the amplifier in \S\ref{subsec:examp},  Table~\ref{tab:sum_S}. 

\item{}
In  \S\ref{sec:CLL}, we use our discrete convolution to establish a quantum central limit theorem for DV systems. To the best of our knowledge, this represents the first central limit theorem for DV quantum systems.

\item{}
We find the second law of quantum R\'enyi entropy under convolution, i.e.,  quantum R\'enyi entropy $H_{\alpha}(\boxtimes^N\rho)$ is nondecreasing w.r.t. the number of convolutions $N$. Moreover,
we find that the repeated convolution of any zero-mean quantum state converges to the MS. The exponential rate of convergence to the limit is bounded by the magic gap of the state, stated  precisely in Theorem \ref{thm:CLT_gap}. 

\item{}
We generalize many of our results for states to quantum channels in \S\ref{sect:ConvolutionChannels}.
\end{enumerate}

\begin{Rem}
In this paper, we introduce a general mathematical framework for quantum convolution, including rigorous and detailed proofs.  In  a companion paper \cite{BGJ23a},
we announced a subset of these results established here, but without proofs. In that work the quantum convolution was based on specific choices, such as discrete beam splitters and amplifiers, with a focus on the physical motivation and basic ideas of quantum convolution.
\end{Rem}

\section{Preliminaries}\label{sec:pre}
We study $n$-qudit systems with Hilbert space $\CHH^{\ot n}$, 
where $\CHH\simeq \CN^d$ and $d$ is prime. \footnote{The assumption that $d$ is prime simplifies the analysis; in this case $\Z_d$ is a field. For example, the Clifford unitaries for a system with prime-power dimension have a complete characterization; this is unknown for arbitrary $d$.} The total dimension of $\CHH^{\ot n}$ is $d^n$, i.e., a power of a prime number. Let $\mathcal B(\mathcal{H}^{\ot n})$ denote the set of all linear operators on $\mathcal{H}^{\ot n}$,
and
$\mathcal{D}(\mathcal{H}^{\ot n})$ denote the set of  all quantum states on $\CHH^{\ot n}$, namely positive operators with unit trace.
{If the quantum state $\rho$ is rank-one, then it is called a pure state, also denoted  $\rho=\proj{\psi}$. Here $\ket{\psi}$ denotes a unit vector in $\mathcal{H}^{\ot n}$.
For simplicity, with pure states we sometimes use the word  ``state'' to refer either to the vector  $\ket{\psi}$ or to the matrix  $\proj{\psi}$.}

Any quantum state $\rho$ has a spectral decomposition 
\begin{align*}
    \rho=\sum_i\lambda_i\proj{\psi_i},
\end{align*}
where $\set{\lambda_i}_i$ are eigenvalues, and $\set{\ket{\psi_i}}_i$ are the eigenvectors. 
Here, we use $\vec \lambda_{\rho}$ to denote the vector of eigenvalues
\begin{align}\label{eq:eigen_vec}
    \vec \lambda_{\rho}=(\lambda_1,...,\lambda_{d^n}).
\end{align}
The Schatten $L^{p}$ norm $\norm{\cdot}_p$ of a linear operator $O$ is 
\begin{eqnarray*}
\norm{O}_p
=\left(\Tr{|O|^p}\right)^{\frac 1p},
\quad\text{where }|O|= \sqrt{O^\dag O}\;.
\end{eqnarray*}
Here $O^\dag$ denotes the Hermitian adjoint. 
\begin{Def}
    A quantum channel is a completely positive, trace-preserving (CPTP) map $\mathcal{B}(\mathcal{H}^{\ot n})\to \mathcal{B}(\mathcal{H}^{\ot n})$. Let $ L(\mathcal{H}^{\ot n})$ denote the set of quantum channels on $\mathcal{H}^{\ot n}$.
\end{Def}
Define a linear map $\Theta:  L(\mathcal{H}^{\ot n})\to  L(\mathcal{H}^{\ot n})$ acting on quantum channels to be a linear supermap. If $id\ot \Theta$ is CPTP preserving for  any identity supermap $id$, then 
$\Theta$ is a superchannel, and it will map any quantum channel to a quantum channel.
In the Hilbert space $\mathcal{H}$, we define one  orthonormal set to be the computational basis and denote it in Dirac notation
\be
\set{\ket{k}}_{k\in \mathbb{Z}_d}\;.
\ee
 The Pauli matrices $X$ and $Z$ are unitary transformations that act as
\[ X |k\rangle = |k+1\rangle\;, 
\quad 
Z |k\rangle = \chi(k) |k\rangle,\;\;\;\forall k\in \Z_d\;.
\]
Here 
\be
\chi(k)=\xi^k_d\;,\quad
\text{and}\quad
\xi_d=\exp(2\pi i /d)\;,
\ee
is a $d$-th root of unity. The Weyl operators 
are defined as 
\begin{eqnarray}\label{eqn:wpq}
w(p,q)=\chi(-2^{-1}pq)\,Z^pX^q\;.
\end{eqnarray}
Here $ 2^{-1}$ denotes the inverse $\frac{d+1}{2}$ of 2 in $\mathbb{Z}_d$, i.e.,  the  $x$ in $\Z_d$ such that $2x \equiv 1 \mod d$.
If $d=2$, the Weyl operators are defined as 
\begin{eqnarray*}
w(p,q)=i^{-pq}\,Z^pX^q \; .
\end{eqnarray*}

The Weyl operators satisfy, when $d>2$,
\begin{eqnarray}\label{0106shi2}
w(p,q)\,w(p',q')=
 \chi(2^{-1}\inner{(p,q)}{(p',q')}_s)\,
 w(p+p',q+q') \; ,
\end{eqnarray}
and when $d=2$, 
\begin{eqnarray}\label{0106shi2_2}
w(p,q)\,w(p',q')=
i^{\inner{(p,q)}{(p',q')}_s}\,
 w(p+p',q+q') \; ,
\end{eqnarray}
where the symplectic inner product $\inner{(p,q)}{(p',q')}_s$ is defined as 
\begin{eqnarray}\label{SymplecticInnerProduct}
\inner{(p,q)}{(p',q')}_s
=pq'-qp' \; .
\end{eqnarray}
Let us denote 
\begin{eqnarray*}
V^n=\mathbb{Z}^n_d\times\mathbb{Z}^n_d \; ,
\end{eqnarray*}  
and for any $(\vec p, \vec q)\in V^n$, the Weyl operator $w(\vec{p}, \vec q)$ is defined as 
\begin{eqnarray*}
w(\vec p, \vec q)
=w(p_1, q_1)\ot...\ot w(p_n, q_n) \; ,
\end{eqnarray*}
with $\vec p=(p_1, p_2,..., p_n)\in \mathbb{Z}^n_d,\ \vec q=(q_1,..., q_n)\in \mathbb{Z}^n_d $.

\begin{Def}[\bf Weyl group]
Let the local dimension $d$ be an odd prime. The Weyl group is 
\begin{eqnarray}
 \set{\xi^k_d:k\in \mathbb{Z}_d} \times  \set{w(\vec p, \vec q): \vec p, \vec q\in \mathbb{Z}^n_d}.
\end{eqnarray}
If the local dimension $d=2$, 
the Weyl group is 
\begin{eqnarray}
 \set{\pm i, \pm 1} \times  \set{w(\vec p, \vec q): \vec p, \vec q\in \mathbb{Z}^n_2}.
\end{eqnarray}
\end{Def}
The set of the Weyl operators $\set{w(\vec{p},\vec q)}_{(\vec p, \vec q)\in V^n}$ forms an orthonormal basis 
in $\mathcal B(\mathcal{H}^{\ot n})$ with respect to the inner product 
\be
\inner{A}{B}=\frac{1}{d^n}\Tr{A^\dag B}\;.
\ee

\begin{Def}[\bf Characteristic function]\label{def:charfn}
For any $n$-qudit state $\rho\in \mathcal{D}(\mathcal{H}^{\ot n})$, the characteristic function $\Xi_{\rho}:V^n\to\complex$ is defined
as 
\begin{eqnarray}\label{eq:charfn}
\Xi_{\rho}(\vec{p},\vec q)
=\Tr{w(-\vec{p},-\vec q)\,\rho } \; .
\end{eqnarray}
\end{Def}
Hence, the state $\rho$ can be written as a linear combination of the Weyl operators with characteristic function
\begin{eqnarray}\label{0109shi5}
\rho=\frac{1}{d^n}
\sum_{(\vec{p},\vec q)\in V^n}
\Xi_{\rho}(\vec{p},\vec q)\,w(\vec{p},\vec q) \; .
\end{eqnarray}

We define the expansion into characteristic functions as the quantum Fourier transform that we study. 
The characteristic function has been used to study quantum Boolean functions~\cite{montanaro2010quantum}, and 
was later  applied to study quantum circuit complexity~\cite{Bucomplexity22}, and quantum scrambling~\cite{GBJPNAS23}.
(See also a more general framework of quantum Fourier analysis through a picture Fourier transform, which intertwines matrix units with Weyl operators~\cite{JaffePNAS20}.)

\begin{Def}[\bf Stabilizer state ~\cite{Gottesman96,Gottesman97}]
A pure stabilizer vector for an $n$-qudit system  is a common unit eigenvector for an abelian subgroup of the Weyl group of size $d^n$. (The corresponding state is the projection onto the eigenvector, and sometimes we use this nomenclature interchangeably, calling the eigenvector a pure state.) A general stabilizer state $\rho$ is a  convex linear combination of pure stabilizer states.
\end{Def}

In other words, $\ket{\psi}$ is a pure stabilizer vector if there exists an abelian subgroup $S$ of the 
Weyl operators with $n$ generators $\set{w(\vec{p}_i, \vec q_i)}_{i\in[n]}$ such that $w(\vec{p}_i, \vec q_i)\ket{\psi}=\chi(x_i)\ket{\psi}$ with 
$x_i\in \mathbb{Z}_d$ for every $i\in[n]$. 
In general, every abelian subgroup $S$ of the Weyl operators has size $d^r$ with $0\leq r\leq n$.
The operators in $S$ generate an abelian $C^*$-algebra $C^*(S)$.  The projections in $C^*(S)$ are called the \emph{stabilizer projections} associated with $S$.

\begin{Def}[\bf Minimal stabilizer-projection state]\label{def:MSPS}
Let $S$ be an abelian subgroup of Weyl operators. 
A minimal projection in $C^*(S)$ is called a \emph{minimal stabilizer projection} associated with $S$.
A minimal stabilizer-projection state (MSPS) is a minimal stabilizer projection normalized by  dividing  by its rank. 
\end{Def}

It is clear that if $P$ is a stabilizer projection associated with a subgroup $S$ of an abelian group $S'$,
then $P$ is also associated with $S'$.
In addition, when some stabilizer projection $P$ is given,
there is a unique minimal abelian subgroup $S$ associated with $P$, in the sense that for every  $S'$ associated with $P$, we have $S\subseteq S'$.
For example, let us consider the abelian group $S=\langle Z_1,...,Z_{n-1}\rangle$ for an $n$-qudit system, 
the states $\set{\frac{1}{d}\proj{\vec j}\ot I}_{\vec j\in\mathbb{Z}^{n-1}_d}$ are MSPS. 

\begin{Def}[\bf Clifford unitary]
An  $n$-qudit unitary $U$ is  Clifford, if conjugation by $U$ maps every Weyl operator to another Weyl operator, up to a phase.
\end{Def}

Clifford unitaries map stabilizer states to stabilizer states. In general, we have the following definition of a stabilizer channel.
\begin{Def}[\bf Stabilizer channel]
A  quantum  channel is a stabilizer channel if it maps stabilizer states 
to stabilizer states.
\end{Def}

The Gottesman-Knill theorem is an important result about stabilizer states~\cite{gottesman1998heisenberg}. It states that circuits comprised of products of Clifford unitaries acting on stabilizer vectors  
can be efficiently simulated  on a classical computer. 
This means that in order for a quantum processor to achieve an advantage over classical  computation, non-stabilizer states are necessary.  In the recent literature, the property of being a non-stabilizer state has been called ``magic''~\cite{bravyi2005universal}.

In addition to the  Weyl operators  $w(\vec p, \vec q)$,
we introduce phase-space point operators  $\set{T(\vec{p}, \vec q)}$, which are  the symplectic Fourier transform of the Weyl operators,
\begin{eqnarray}\label{eq:phase_poin}
T(\vec{p},\vec q)
=\frac{1}{d^n}
\sum_{(\vec{u}, \vec v)\in V^n}
\chi(\inner{(\vec p, \vec q)}{(\vec u, \vec v)}_s)\,
w(\vec{u}, \vec v) \; .
\end{eqnarray}
\begin{Rem}\label{rem:phas_op}
The set of phase space point operators $\set{T(\vec{p}, \vec q)}_{(\vec{p}, \vec q)\in V^n}$ satisfies three properties when the local dimension $d$ is an odd prime \cite{Gross06}:
\begin{enumerate}[(1)]
\item{}
 $\set{T(\vec{p}, \vec q)}_{(\vec{p}, \vec q)\in V^n}$ forms a Hermitian, orthonormal basis with respect to the inner product defined by 
$\inner{A}{B}=\frac{1}{d^n}\Tr{A^\dag B}$.
\item{}
$T(\vec{0},\vec 0)=\sum_{\vec{j}}\ket{-\vec{j}}\bra{\vec{j}}$ in the Pauli $Z$ basis.
\item{}
$T(\vec{p}, \vec q)=w(\vec{p}, \vec q)T(\vec{0}, \vec 0)w(\vec{p}, \vec q)^\dag$.
\end{enumerate}
\end{Rem}

\begin{Def}[\bf Discrete Wigner function] \label{Def:DiscreteWignerFunction}
For any $n$-qudit state $\rho\in \mathcal{D}(\mathcal{H}^{\ot n})$, the discrete Wigner function $W_{\rho}:V^n\to \real$ is defined as
\begin{eqnarray*}
W_{\rho}(\vec{p}, \vec q)
= \langle \rho, T(\vec p, \vec q)  \rangle
=\frac{1}{d^n}\Tr{\rho T(\vec{p}, \vec q)} \; .
\end{eqnarray*}
\end{Def}
Hence, the quantum state $\rho$ can be written as $\rho=\sum_{\vec{p}, \vec q}W_{\rho}(\vec{p}, \vec q)T(\vec{p}, \vec q)$.
One important result about the discrete Wigner function is the Discrete Hudson Theorem.

\begin{lem}[\bf Discrete Hudson Theorem \cite{Gross06}]
Given an $n$-qudit system with the local dimension being an odd prime number, a pure state $\psi$ is a stabilizer state, if and only if  the Wigner function $W_{\psi}$ is non-negative. 
\end{lem}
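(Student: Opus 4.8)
The plan is to prove the two implications separately. For ``$\psi$ a stabilizer state $\Rightarrow W_\psi\ge 0$'': a pure stabilizer vector $\ket\psi$ is a joint eigenvector of $\{w(x):x\in L\}$ for a Lagrangian $L\subseteq V^n$ (the span of the $n$ Weyl generators, which is isotropic by abelianness, hence $n$-dimensional), with eigenvalue character $x\mapsto\chi(\ell(x))$, $\ell\colon L\to\Z_d$ linear; then the standard projector form $\proj{\psi}=d^{-n}\sum_{x\in L}\overline{\chi(\ell(x))}\,w(x)$ together with $\Tr{w(x)\,T(\vec p,\vec q)}=\chi(-\inner{(\vec p,\vec q)}{x}_s)$ (immediate from \eqref{eq:phase_poin} and $\Tr{w(\vec u,\vec v)}=d^n\delta_{(\vec u,\vec v),(\vec 0,\vec 0)}$) gives $W_\psi(\vec p,\vec q)=d^{-2n}\sum_{x\in L}\chi(-\ell(x)-\inner{(\vec p,\vec q)}{x}_s)$, a sum of $\chi$ over a homomorphism $L\to\Z_d$, hence equal to $d^n$ or $0$; so $W_\psi\in\{0,d^{-n}\}\ge 0$. (Alternatively one may reduce to $\ket{0}^{\ot n}$ by Clifford covariance of $\{T(\vec p,\vec q)\}$.)

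For the converse, assume $\ket\psi$ is a unit vector with $W_\psi\ge 0$, and first pin down the shape of $W_\psi$. Each $T(v)$ is Hermitian (Remark~\ref{rem:phas_op}(1)) and squares to $I$ (by (3) and $T(\vec 0)^2=I$ from (2)), so $\norm{T(v)}=1$; hence $|W_\rho(v)|\le d^{-n}$ for every state and $\sum_v W_\rho(v)=\Tr\rho=1$, and purity gives $\sum_v W_\psi(v)^2=d^{-n}\Tr{\psi^2}=d^{-n}$. Combining $0\le W_\psi\le d^{-n}$ with $\sum W_\psi^2=d^{-n}=d^{-n}\sum W_\psi$ forces $W_\psi(v)^2=d^{-n}W_\psi(v)$ for every $v$, so $W_\psi=d^{-n}\mathbf 1_A$ with $A:=\mathrm{supp}(W_\psi)$, and $\sum W_\psi=1$ gives $|A|=d^n$. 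Moreover $\Innerm{\psi}{T(v)}{\psi}=d^nW_\psi(v)=1$ for $v\in A$, which (since $T(v)$ is Hermitian with square $I$) forces $T(v)\ket\psi=\ket\psi$ for all $v\in A$.

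The crux is a product rule for phase-space point operators: from $T(a)=w(a)T(\vec 0)w(a)^\dagger$, $T(\vec 0)w(x)T(\vec 0)=w(-x)$ (both from Remark~\ref{rem:phas_op}), and \eqref{0106shi2} one computes $T(a)T(b)=\chi(-2\inner{a}{b}_s)\,w(2(a-b))$. Applying $T(a)T(b)$ to $\ket\psi$ for $a,b\in A$ gives $w(2(a-b))\ket\psi=\chi(2\inner{a}{b}_s)\ket\psi$, so $\ket\psi$ is a joint eigenvector of $\{w(x):x\in E\}$ with $E:=2(A-A)$. A common eigenvector of $w(x)$ and $w(x')$ forces, via \eqref{0106shi2}, $\inner{x}{x'}_s=0$; hence the subspace $V_A:=\mathrm{span}(E)=\mathrm{span}(A-A)$ (these coincide since $2$ is invertible mod the odd prime $d$) is isotropic, so $\dim V_A\le n$. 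But $A$ is contained in one coset of $V_A$, so $d^n=|A|\le|V_A|$, forcing $\dim V_A=n$: thus $V_A$ is Lagrangian and $A$ is a coset of it. Since $E$ generates $V_A$ and $V_A$ is isotropic, $x\mapsto w(x)$ is a homomorphism on $V_A$, so $\ket\psi$ is a joint eigenvector of the entire abelian subgroup $\{w(x):x\in V_A\}$, which has size $d^n$; by definition $\ket\psi$ is a pure stabilizer state.

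The main obstacle is the converse, and within it the single genuine use of purity: it is exactly the two bounds $0\le W_\psi\le d^{-n}$ and $\sum W_\psi^2=d^{-n}$ that collapse $W_\psi$ to $d^{-n}\mathbf 1_A$ and let one identify $A$; after that, the product rule for the $T(a)$'s together with the invertibility of $2$ mod $d$ do all the remaining work. This last point also shows why oddness of $d$ is essential: for $d=2$ the factor $2$ annihilates the constraint $\inner{x}{x'}_s=0$ and the statement fails.
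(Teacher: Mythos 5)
Your proof is correct. Note that the paper does not prove this lemma at all --- it is imported verbatim from Gross's paper \cite{Gross06} --- so there is no in-paper argument to compare against; what you have written is a complete, self-contained proof along the standard lines. Both halves check out. In the forward direction, the projector formula $\proj{\psi}=d^{-n}\sum_{x\in L}\overline{\chi(\ell(x))}\,w(x)$ is justified because $L$ is maximal isotropic, so for $x\notin L$ Lemma \ref{0109lem2} kills $\Xi_\psi(x)$, and the character-sum computation then gives $W_\psi\in\{0,d^{-n}\}$. In the converse, the chain $0\le W_\psi\le d^{-n}$, $\sum_v W_\psi(v)=1$, $\sum_v W_\psi(v)^2=d^{-n}$ does force $W_\psi=d^{-n}\mathbf 1_A$ with $|A|=d^n$ and $T(v)\ket\psi=\ket\psi$ on $A$; the product rule $T(a)T(b)=\chi(-2\inner{a}{b}_s)\,w(2(a-b))$ is correct (I verified it from $w(a)^\dagger=w(-a)$, $T(\vec 0)w(x)T(\vec 0)=w(-x)$ and \eqref{0106shi2}); and the commutation argument pins $\mathrm{span}(A-A)$ as isotropic while $|A|=d^n$ forces it to be Lagrangian, so $\ket\psi$ is stabilized by an abelian Weyl subgroup of size $d^n$. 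Two small elisions you may want to make explicit: $\sum_v W_\rho(v)=\Tr\rho$ uses $\Tr{T(v)}=\Tr{T(\vec 0)}=\#\{j:2j=0\}=1$, which itself needs $d$ odd; and in the forward direction ``isotropic by abelianness, hence $n$-dimensional'' should read that the dimension $n$ comes from $|S|=d^n$ while abelianness gives isotropy, the two together giving Lagrangian. Your closing remark correctly isolates where oddness of $d$ enters (invertibility of $2$), which is exactly why the discrete Hudson theorem fails for qubits.
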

This result is a discrete version of Hudson's theorem 
\cite{Hudson74,Soto83}.
It remains challenging to generalize the discrete Wigner function to any  local dimension which is not an odd prime number~\cite{RaussendorfPRX15,RaussendorfPRA17,Love17,RaussendorfPRA20}.

\section{Mean state}\label{Sect:MeanState}
Here we define a \textit{mean state} (MS) for a given quantum state, and we show that the mean state is an MSPS in the sense of Definition \ref{def:MSPS}.  We call $\CMM(\rho)$ the mean state, because we use it to define
the mean-value vector of the state $\rho$ in  \eqref{eq:mean_value}, and the zero-mean state in Definition \ref{Def:Zero_mean}.

\begin{Def}[\bf Mean state]\label{def:mean_state}
Given an $n$-qudit state $\rho$,  the mean state $\CMM(\rho)$ is the 
operator with the characteristic function: 
\begin{align}\label{0109shi6}
\Xi_{\CMM(\rho)}(\vec p, \vec q) =
\left\{
\begin{aligned}
&\Xi_\rho ( \vec p, \vec q) \; , && |\Xi_\rho ( \vec p, \vec q)|=1 \; ,\\
& 0 \; , && |\Xi_\rho (  \vec p, \vec q)|<1 \; .
\end{aligned}
\right.
\end{align}
\end{Def}

\begin{prop}\label{0109lem1}
For any $n$-qudit quantum state $\rho$,
$\CMM(\rho)$ is an MSPS.
\end{prop}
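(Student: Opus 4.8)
The plan is to show that $\CMM(\rho)$ is the normalized minimal projection of the abelian $C^*$-algebra generated by the Weyl operators sitting in the "unit-modulus support" of the characteristic function. First I would set $G = \{(\vec p,\vec q)\in V^n : |\Xi_\rho(\vec p,\vec q)|=1\}$. The key structural claim is that $G$ is a subgroup of $V^n$ and that on $G$ the phases of $\Xi_\rho$ are compatible with the Weyl multiplication law \eqref{0106shi2}–\eqref{0106shi2_2}, so that the operators $\{\Xi_\rho(\vec p,\vec q)\, w(\vec p,\vec q)\}_{(\vec p,\vec q)\in G}$ form a group of unitaries (up to the scalar phases already absorbed), generating an abelian algebra precisely when $G$ is isotropic for the symplectic form. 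To get that $G$ is a group: if $|\Xi_\rho(\vec a)|=|\Xi_\rho(\vec b)|=1$, write $\rho$ in the Weyl basis \eqref{0109shi5} and use that $|\Tr{\rho\, w(-\vec a)}|=1$ forces, by Cauchy–Schwarz/operator-norm considerations (each $w$ is unitary, $\|\rho\|_1=1$), that $\rho$ commutes appropriately with $w(\vec a)$ — more precisely $w(-\vec a)\rho = \Xi_\rho(\vec a)^{*}\,\overline{\cdot}$ acts as a phase on the relevant support, i.e. $w(\vec a)$ stabilizes $\rho$ up to phase. Then products and inverses of stabilizing Weyl operators again stabilize $\rho$ up to phase, giving closure of $G$ and multiplicativity $\Xi_\rho(\vec a + \vec b) = (\text{phase})\cdot\Xi_\rho(\vec a)\Xi_\rho(\vec b)$ with the phase dictated by the symplectic cocycle; in particular $G$ is isotropic, hence $|G|\le d^n$, and the $w(\vec a)$, $\vec a\in G$, generate an abelian subgroup $S$ of Weyl operators.

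Next I would identify $\CMM(\rho)$ with a minimal projection of $C^*(S)$. From $\rho = d^{-n}\sum_{\vec a\in G}\Xi_\rho(\vec a) w(\vec a)$ one sees $\CMM(\rho) = d^{-n}\sum_{\vec a\in G}\Xi_\rho(\vec a)w(\vec a)$ lies in $C^*(S)$. Because each $w(\vec a)$ with $\vec a\in G$ stabilizes $\rho$ up to the phase $\Xi_\rho(\vec a)^{*}$, one checks $w(\vec a)\CMM(\rho) = \Xi_\rho(\vec a)^{*}\,\CMM(\rho)$ for every $\vec a\in G$, i.e. $\CMM(\rho)$ is a (scaled) common eigenprojection of $S$ — concretely $\CMM(\rho)=\tfrac{1}{|G|}\sum_{\vec a\in G}\Xi_\rho(\vec a)w(\vec a)\cdot\tfrac{|G|}{d^n}$, and the operator $P := \tfrac{1}{|G|}\sum_{\vec a\in G}\Xi_\rho(\vec a)w(\vec a)$ is the group-averaging projection onto the joint eigenspace, so $P^2=P=P^\dagger$ and $\CMM(\rho) = \tfrac{|G|}{d^n}P$. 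Since $G$ is isotropic, $S$ acts on $\mathcal H^{\ot n}$ with joint eigenspaces all of dimension $d^n/|G|$, so $\operatorname{Tr} P = d^n/|G|$ and $\operatorname{Tr}\CMM(\rho)=1$; moreover $P$ is a minimal projection of $C^*(S)$ (the characters of $S$ are separated by its eigenspaces), so $\CMM(\rho)$ is exactly a minimal stabilizer projection normalized by its rank $=$ dimension, which is the definition of an MSPS.

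The main obstacle I expect is the first step: proving rigorously that $|\Xi_\rho(\vec a)|=1$ forces $w(\vec a)$ to stabilize $\rho$ up to a phase, and extracting the exact cocycle relation among the phases $\Xi_\rho$ on $G$ so that $C^*(S)$ is genuinely abelian and $P$ comes out to be a projection with the right trace. The equality case of $|\Tr{\rho\, w(-\vec a)}|\le \|\rho\|_1\|w(-\vec a)\|=1$ is the technical heart: one should argue, via the spectral decomposition $\rho=\sum_j\lambda_j\proj{\psi_j}$, that $|\sum_j\lambda_j\iinner{\psi_j}{w(-\vec a)\psi_j}|=1$ with $\sum_j\lambda_j=1$, $\lambda_j\ge0$ and $|\iinner{\psi_j}{w(-\vec a)\psi_j}|\le 1$ forces every $\ket{\psi_j}$ to be an eigenvector of $w(-\vec a)$ with a common eigenvalue, hence $w(-\vec a)\rho w(-\vec a)^\dagger = \rho$ and $w(\vec a)\rho = \Xi_\rho(\vec a)^{*}\rho$. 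Once that is in hand, the group law for the phases follows by applying \eqref{0106shi2} to the identity $w(\vec a)w(\vec b)\rho = \Xi_\rho(\vec a)^{*}\Xi_\rho(\vec b)^{*}\rho$, and isotropy of $G$ follows because two Weyl operators that both act as scalars on the nonzero operator $\rho$ must commute, which by \eqref{0106shi2} forces $\inner{\vec a}{\vec b}_s=0$. The rest is bookkeeping with the projection $P$.
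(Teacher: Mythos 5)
Your proposal is correct and follows essentially the same route as the paper: the technical heart in both cases is the equality-case/convexity argument showing that $|\Xi_\rho(\vec a)|=1$ forces $w(\vec a)$ to stabilize $\rho$ up to a phase (the paper's Lemma \ref{0106lem1}), from which commutativity, closure, and multiplicativity of the phases on the unit-modulus support follow (Lemmas \ref{0106lem2} and \ref{0106lem3}). The only cosmetic difference is in the final bookkeeping: you verify directly that the group average $\frac{1}{|G|}\sum_{\vec a\in G}\Xi_\rho(\vec a)w(\vec a)$ is a minimal projection of $C^*(S)$ with trace $d^n/|G|$, whereas the paper conjugates by a Clifford unitary to bring the generators to $Z_1,\dots,Z_r$ and reads off the same projection in canonical form.
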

\begin{proof}
Let
\begin{align}\label{0109shi7}
S = \{\vec x\in V^n: |\Xi_\rho ( \vec x)|=1\} \; .
\end{align}
Based on Lemmas \ref{0106lem1}, \ref{0106lem2} and \ref{0106lem3}, we have three properties:
(a) First, $\forall \vec x,\vec y\in S$, $\ep{\vec x,\vec y}_s=0$, i.e., $[w(\vec x), w(\vec y)]=0$;
(b) second, $\forall \vec x\in S$, $t\vec x\in S$ for every $t\in \Z_d$;
(c) and third, $\forall \vec x_1,...,\vec x_r\in S$, $\vec x_1+\cdots +\vec x_r \in S$.
    Hence, the subgroup $S$ has $r$ generators, i.e.,  there exists $  \vec x_1,...,\vec x_r $ in $S$ with $|S| = d^r$ such that 
\begin{align*}
S = \left\{\sum_{j=1}^r t_j \vec x_j : t_1,...,t_r\in \Z_d\right\} \; ,
\end{align*}
and for some $k_1,...,k_r\in \mathbb{Z}_d$, one has
\begin{eqnarray*}
\Xi_{\rho}\left(\sum_jt_j\vec{x}_j\right)=
\chi\left(\sum_jt_jk_j\right) \; .
\end{eqnarray*}
There exists some Clifford unitary $U$ such that
$Uw(\vec x_j) U^\dag= Z_j$, where $Z_j$ is the Pauli $Z$ operator acting on the $j$-th qudit. 
Hence, 
\begin{eqnarray*}
U\CMM(\rho)U^\dag=
\frac{1}{d^n}
\sum_{t_1,..,t_r}
\chi\left( \sum_jt_jk_j\right)
\left(\prod_j Z^{t_j}_j\right)
=\frac{1}{d^{n-r}}
\Pi^r_{j=1}
\mathbb{E}_{t_j}
[\chi(k_j)Z_j]^{t_j} \; ,
\end{eqnarray*}
is a projection multiplied by a factor $1/d^{n-r}$, where $\mathbb{E}_{t_j}=\frac{1}{d}\sum_{t_j\in\mathbb{Z}_d}$.
\end{proof}

\begin{lem}\label{0106lem1}
Let $\rho$ be an $n$-qudit state,  
such that there exists some  $(\vec p, \vec q)\in V^n$ with
$
|\trace[\rho w(\vec p, \vec q)]|=1
$.
 Then $\rho$ commutes with $w(\vec p, \vec q)$ and $range(\rho)\subseteq range(P_k)$, 
 where $P_k$ is the projection onto the eigenspace of $w(\vec p, \vec q)$, i.e., 
\begin{eqnarray*}
[\rho, w(\vec{p},\vec q)]=0 \; ,
\quad
\text{and} 
\quad
P_k \rho P_k = \rho \; ,
\end{eqnarray*}
for some $1\le k\le d$, where we assume
$
w(\vec p,\vec q) = \sum_j \chi(j) P_j
$
is the spectral decomposition of $w(\vec p,\vec q)$.
Thus $w(\vec p,\vec q)\rho =\chi(k) \rho $.
Moreover,
we also have
\begin{align*}
|\trace[\rho w(t \vec p,t\vec q)]|=1 \;,\quad \forall t\in \Z_d \;.
\end{align*}
\end{lem}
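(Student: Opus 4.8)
The plan is to extract the structural content from the extremal condition $|\trace[\rho w(\vec p,\vec q)]|=1$ using nothing beyond the fact that $\rho$ is a density operator, the spectral decomposition $w(\vec p,\vec q)=\sum_{j=1}^d \chi(j)P_j$ (the Weyl operator is unitary with eigenvalues among the $d$-th roots of unity), and the normalization $\sum_j \trace[\rho P_j]=1$ with all $\trace[\rho P_j]\ge 0$. Writing $\Xi := \trace[\rho w(\vec p,\vec q)] = \sum_j \chi(j)\,\trace[\rho P_j]$, the point is that this is a convex combination of the unimodular numbers $\chi(j)$ with nonnegative weights $\lambda_j := \trace[\rho P_j]$. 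Since $|\Xi|=1$ and the $\chi(j)$ lie on the unit circle, strict convexity of the disk forces all the weight to sit on a single $\chi(k)$: that is, $\lambda_k=1$ for exactly one $k$ and $\lambda_j=0$ for $j\ne k$. This is the only nontrivial inequality argument in the proof, and I expect it to be the main (though mild) obstacle — one has to phrase the ``equality case'' cleanly.

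From $\trace[\rho P_k]=1$ and $\trace[\rho P_j]=0$ for $j\ne k$, positivity of $\rho$ gives $P_j\rho P_j=0$ hence $\rho P_j = P_j\rho = 0$ for $j\ne k$ (using $\trace[\rho P_j]=\trace[P_j\rho P_j]=\norm{\sqrt{\rho}P_j}_2^2$), and therefore $\rho = P_k\rho P_k$, which is the claimed range inclusion $\mathrm{range}(\rho)\subseteq \mathrm{range}(P_k)$. Since $w(\vec p,\vec q) = \sum_j \chi(j)P_j$ commutes with $P_k$ and annihilates $\rho$ on the complement of $\mathrm{range}(P_k)$, we get both $[\rho,w(\vec p,\vec q)]=0$ and $w(\vec p,\vec q)\rho = \chi(k)\rho$ directly. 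So far this uses only elementary operator positivity.

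For the last assertion, that $|\trace[\rho\, w(t\vec p, t\vec q)]|=1$ for all $t\in\Z_d$, I would use the Weyl composition law \eqref{0106shi2} (resp. \eqref{0106shi2_2} when $d=2$): the symplectic form $\inner{(\vec p,\vec q)}{(\vec p,\vec q)}_s$ vanishes, so $w(\vec p,\vec q)^t = w(t\vec p, t\vec q)$ up to a phase (and when $d=2$ one checks the two cases $t=0,1$ directly, with $w(\vec p,\vec q)^2 = I$ up to phase). Then from $w(\vec p,\vec q)\rho=\chi(k)\rho$ it follows by iteration that $w(t\vec p,t\vec q)\rho = (\text{phase})\,\chi(k)^t\rho$, hence $\trace[\rho\, w(t\vec p,t\vec q)]$ is that same unimodular phase times $\trace\rho = 1$, giving absolute value $1$. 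A clean way to organize this is: prove $w(\vec p,\vec q)\rho=\chi(k)\rho$ first, then observe everything else is a formal consequence. The only care needed is tracking the phase factor from the Weyl relations and the distinction between the $d>2$ and $d=2$ conventions for $w(p,q)$; neither introduces any real difficulty.
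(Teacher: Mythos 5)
Your proposal is correct and follows essentially the same route as the paper: the core step is the identical convexity argument that $\trace[\rho\,w(\vec p,\vec q)]=\sum_j\chi(j)\trace[\rho P_j]$ is a convex combination of unimodular numbers, so modulus one forces all weight onto a single $\chi(k)$, whence $\trace[\rho P_k]=1$ and $P_k\rho P_k=\rho$ by positivity. You additionally spell out the derivation of $\rho P_j=P_j\rho=0$ for $j\neq k$ and the final claim $|\trace[\rho\,w(t\vec p,t\vec q)]|=1$ via $w(\vec p,\vec q)^t=w(t\vec p,t\vec q)$ up to a phase, which the paper states but leaves implicit; these fill-ins are accurate.
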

\begin{proof}

Since $w(\vec p,\vec q) = \sum_j \chi(j) P_j$,
we have
\begin{align}\label{0209shi1}
\trace[\rho w(\vec p,\vec q)] = \trace[ \rho\sum_j \chi(j) P_j] =  \sum_j \chi(j) \trace[\rho P_j] \;.
\end{align}
As $\sum_j\trace[\rho P_j]=1$, we infer that the right-hand side of \eqref{0209shi1} is a convex combination of numbers $\chi(j)$ of modulus 1.
While $|\trace[\rho w(\vec p, \vec q)]|=1$, it means \eqref{0209shi1} has modulus 1. Thus $\trace[\rho P_k] = 1$ for some $1\le k\le d$. 
Since $\rho$ is a state, we have $P_k \rho P_k = \rho$.
\end{proof}

\begin{lem}\label{0106lem2}
Let $\rho$ be an $n$-qudit  quantum state  with
\[
|\trace[\rho w(\vec p_1,\vec q_1)]|=|\trace[\rho w(\vec p_2,\vec q_2)]| = 1
\]
for some $(\vec p_1,\vec q_1), (\vec p_2,\vec q_2)\in V^n$.
Then the Weyl operators commute, i.e.,
\begin{align*}
[w(\vec p_1,\vec q_1),w(\vec p_2,\vec q_2) ]=0 \;.
\end{align*}
\end{lem}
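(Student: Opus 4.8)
The plan is to show that if $|\Xi_\rho(\vec p_1,\vec q_1)| = |\Xi_\rho(\vec p_2,\vec q_2)| = 1$, then the two Weyl operators must commute, by deriving a contradiction from the assumption that their symplectic inner product is nonzero. First I would invoke Lemma \ref{0106lem1} twice: from $|\trace[\rho w(\vec p_1,\vec q_1)]| = 1$ we get that $\rho$ is supported on a single eigenspace of $w(\vec p_1,\vec q_1)$, say $w(\vec p_1,\vec q_1)\rho = \chi(k_1)\rho$, and similarly $w(\vec p_2,\vec q_2)\rho = \chi(k_2)\rho$. Hence $\rho$ is a simultaneous eigenvector (as an operator acted on by left multiplication) of both Weyl operators, with $w(\vec p_1,\vec q_1)w(\vec p_2,\vec q_2)\rho = \chi(k_1+k_2)\rho = w(\vec p_2,\vec q_2)w(\vec p_1,\vec q_1)\rho$.

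Next I would use the Weyl commutation relation \eqref{0106shi2} (or \eqref{0106shi2_2} when $d=2$): $w(\vec p_1,\vec q_1)w(\vec p_2,\vec q_2) = \chi(s)\, w(\vec p_2,\vec q_2)w(\vec p_1,\vec q_1)$ where $s = \inner{(\vec p_1,\vec q_1)}{(\vec p_2,\vec q_2)}_s$ (absorbing the factor $2^{-1}$, which only rescales and does not affect whether the phase is trivial). Applying both sides to $\rho$ and using the eigen-relations gives $\chi(k_1+k_2)\rho = \chi(s)\chi(k_1+k_2)\rho$, so $\chi(s)\rho = \rho$. Since $\rho \neq 0$, this forces $\chi(s) = 1$, i.e. $s \equiv 0 \pmod d$. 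Because $d$ is prime and the symplectic form is nondegenerate over $\mathbb{Z}_d$, $s = 0$ is precisely the condition $[w(\vec p_1,\vec q_1), w(\vec p_2,\vec q_2)] = 0$, which is the claim. (For the multi-qudit case the symplectic form is the sum of the local ones, but the same computation goes through verbatim since $\chi$ is a character.)

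The only slightly delicate point — and the step I would be most careful with — is making sure the phase bookkeeping is correct: the relation \eqref{0106shi2} carries the factor $\chi(2^{-1}\inner{\cdot}{\cdot}_s)$ rather than $\chi(\inner{\cdot}{\cdot}_s)$, so one should either work with the "raw" product $Z^{p}X^{q}$ commutation relation $Z^{p_1}X^{q_1}\cdot Z^{p_2}X^{q_2} = \chi(\inner{(\vec p_1,\vec q_1)}{(\vec p_2,\vec q_2)}_s) Z^{p_2}X^{q_2}\cdot Z^{p_1}X^{q_1}$, or note that since $2$ is invertible mod $d$, $\chi(2^{-1}s) = 1 \iff \chi(s) = 1 \iff s = 0$. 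Either way the conclusion is unaffected. An alternative, essentially equivalent route avoids Lemma \ref{0106lem1}: write $\rho = P_{k_1}\rho P_{k_1}$ where $P_{k_1}$ projects onto an eigenspace of $w(\vec p_1,\vec q_1)$, observe that $w(\vec p_2,\vec q_2)$ permutes the eigenspaces of $w(\vec p_1,\vec q_1)$ cyclically by $s$ steps (since $w(\vec p_2,\vec q_2) w(\vec p_1,\vec q_1) w(\vec p_2,\vec q_2)^\dagger = \chi(-s) w(\vec p_1,\vec q_1)$), and then $|\trace[\rho w(\vec p_2,\vec q_2)]| = |\trace[P_{k_1}\rho P_{k_1} w(\vec p_2,\vec q_2)]| = |\trace[\rho P_{k_1} w(\vec p_2,\vec q_2) P_{k_1}]|$, which vanishes unless $s = 0$; this contradicts $|\trace[\rho w(\vec p_2,\vec q_2)]| = 1$ unless the operators commute. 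I would present the first route as the main argument since it is shorter and reuses the already-proved Lemma \ref{0106lem1} directly.
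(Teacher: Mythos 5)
Your main argument is exactly the paper's proof: apply Lemma \ref{0106lem1} to get $w(\vec p_i,\vec q_i)\rho=\chi(k_i)\rho$, commute the two Weyl operators past each other using \eqref{0106shi2}, and conclude the symplectic phase must be trivial. The phase bookkeeping you flag is handled the same way in the paper (the two $2^{-1}$ factors combine to give $\chi(\inner{\cdot}{\cdot}_s)$), so the proposal is correct and essentially identical in approach.
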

\begin{proof}
By Lemma \ref{0106lem1},
$\rho$ commutes with both $w(\vec p_1,\vec q_1)$ and $w(\vec p_2,\vec q_2) $,
and
\begin{align*}
w(\vec p_1,\vec q_1)\rho  = \chi(k_1) \rho \quad \text{ and }\quad w(\vec p_2,\vec q_2)\rho  = \chi(k_2)\rho
\end{align*}
for some $k_1,k_2\in \Z_d$.
Therefore
\begin{align*}
\chi(k_1) \chi(k_2)\rho =&  w(\vec p_1,\vec q_1)  w(\vec p_2,\vec q_2) \rho \\
=& \chi( \ep{ (\vec p_1,\vec q_1),(\vec p_2,\vec q_2) }_s  )   w(\vec p_2,\vec q_2)  w(\vec p_1,\vec q_1) \rho\\
=& \chi( \ep{ (\vec p_1,\vec q_1), (\vec p_2,\vec q_2)}_s  ) \chi(k_1) \chi(k_2) \rho.
\end{align*}
where the second equality comes from
\eqref{0106shi2}, or from \eqref{0106shi2_2} when $d=2$.
Therefore
$\ep{ (\vec p_1,\vec q_1), (\vec p_2,\vec q_2)}_s=0$,
i.e.,
\begin{eqnarray}
    [w(\vec p_1,\vec q_1) ,w(\vec p_2,\vec q_2)]=0.
\end{eqnarray}

\end{proof}

\begin{lem}\label{0109lem2}
Let $\rho$ be an $n$-qudit state with   
\be
[w(\vec p_1,\vec q_1),w(\vec p_2,\vec q_2)]\neq 0\;,
\quad\text{and}\quad
|\Tr{\rho w(\vec p_1,\vec q_1)}| = 1\;,
\ee
for some $(\vec p_1,\vec q_1), (\vec p_2,\vec q_2) \in  V^n$.
Then  $  \Tr{\rho w(\vec p_2,\vec q_2)} =0$.

\end{lem}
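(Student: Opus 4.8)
The plan is to exploit the same commutation-plus-phase structure used in Lemmas \ref{0106lem1} and \ref{0106lem2}, but now with the anticommutation (noncommutation) of the two Weyl operators forcing the corresponding characteristic-function value to vanish. First I would invoke Lemma \ref{0106lem1} on the hypothesis $|\Tr{\rho w(\vec p_1,\vec q_1)}|=1$: it gives that $\rho$ commutes with $w(\vec p_1,\vec q_1)$, that $P_k\rho P_k=\rho$ for the spectral projection $P_k$ onto the $\chi(k)$-eigenspace of $w(\vec p_1,\vec q_1)$ for some $k$, and equivalently $w(\vec p_1,\vec q_1)\rho=\chi(k)\rho$.

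Next I would compute $\Tr{\rho\, w(\vec p_2,\vec q_2)}$ by inserting this eigenvector relation and conjugating. Since $w(\vec p_1,\vec q_1)$ is unitary, $\Tr{\rho\, w(\vec p_2,\vec q_2)} = \Tr{w(\vec p_1,\vec q_1)\rho\, w(\vec p_1,\vec q_1)^\dag \, w(\vec p_2,\vec q_2)}$ (cyclicity plus $w\rho w^\dag$ similarity is not yet what I want — rather I use $\rho = w(\vec p_1,\vec q_1)^\dag \rho\, w(\vec p_1,\vec q_1)$, which follows from $w(\vec p_1,\vec q_1)\rho=\chi(k)\rho$ together with $[\rho,w(\vec p_1,\vec q_1)]=0$, so both one-sided multiplications agree). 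Then
\begin{align*}
\Tr{\rho\, w(\vec p_2,\vec q_2)}
&= \Tr{w(\vec p_1,\vec q_1)^\dag \rho\, w(\vec p_1,\vec q_1)\, w(\vec p_2,\vec q_2)}\\
&= \Tr{\rho\, w(\vec p_1,\vec q_1)\, w(\vec p_2,\vec q_2)\, w(\vec p_1,\vec q_1)^\dag}.
\end{align*}
Using the Weyl relation \eqref{0106shi2} (or \eqref{0106shi2_2} when $d=2$) twice, $w(\vec p_1,\vec q_1)\, w(\vec p_2,\vec q_2)\, w(\vec p_1,\vec q_1)^\dag = \chi(\ep{(\vec p_1,\vec q_1),(\vec p_2,\vec q_2)}_s)\, w(\vec p_2,\vec q_2)$ (with $\chi$ replaced by the appropriate power of $i$ if $d=2$). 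Writing $c=\ep{(\vec p_1,\vec q_1),(\vec p_2,\vec q_2)}_s$, we get $\Tr{\rho\, w(\vec p_2,\vec q_2)} = \chi(c)\,\Tr{\rho\, w(\vec p_2,\vec q_2)}$. The hypothesis $[w(\vec p_1,\vec q_1),w(\vec p_2,\vec q_2)]\neq 0$ means exactly that $c\neq 0$ in $\Z_d$, so $\chi(c)\neq 1$, and therefore $\Tr{\rho\, w(\vec p_2,\vec q_2)}=0$.

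The only subtlety — and the one place to be careful rather than a genuine obstacle — is the $d=2$ case, where the phase picked up by conjugation is a power of $i$ rather than a $d$-th root of unity; one checks that anticommutation gives conjugation phase $-1\neq 1$, so the same cancellation argument applies. I would either handle both cases uniformly by noting that in all cases noncommuting Weyl operators conjugate each other to a nontrivial multiple of themselves, or spell out the $d=2$ computation in one extra line. No deep estimate is needed; the proof is a two-line algebraic identity once Lemma \ref{0106lem1} is in hand.
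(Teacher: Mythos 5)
Your proposal is correct and follows essentially the same route as the paper: invoke Lemma \ref{0106lem1} to get the eigenvector relation $w(\vec p_1,\vec q_1)\rho=\chi(k)\rho$, insert $\rho=w(\vec p_1,\vec q_1)^\dag\rho\, w(\vec p_1,\vec q_1)$ into the trace, and use the conjugation phase $\chi(\ep{(\vec p_1,\vec q_1),(\vec p_2,\vec q_2)}_s)\neq 1$ to force the trace to vanish. Your explicit justification of the first equality and your remark on the $d=2$ phase are details the paper leaves implicit, but the argument is the same.
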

\begin{proof}
By Lemma \ref{0106lem1},
$\rho$ commutes with $w(\vec p_1,\vec q_1)$ 
and $ w(\vec p_1,\vec q_1)\rho = \chi(k) \rho  $ for some $1\le k\le d$.
Thus
\begin{align*}
\Tr{\rho w(\vec p_2,\vec q_2)} 
= & \trace[ \rho w(\vec p_1,\vec q_1) w(\vec p_2,\vec q_2) w(\vec p_1,\vec q_1)^\dag]\\
&= \chi(  \ep{(\vec p_1,\vec q_1),(\vec p_2,\vec q_2)}_s )\Tr{\rho w(\vec p_2,\vec q_2) } \;.
\end{align*}
As $ [w(\vec p_1,\vec q_1),w(\vec p_2,\vec q_2)]\neq 0$, so  
$\chi(  \ep{(\vec p_1,\vec q_1), (\vec p_2,\vec q_2)}_s )\neq 1$,   and $\Tr{ \rho w(\vec p_2,\vec q_2) }=0$.
\end{proof}

\begin{lem}\label{0106lem3}
Let $\rho$ be an $n$-qudit state with $|\trace[\rho w(\vec p_1,\vec q_1)]|=\cdots =|\trace[\rho w(\vec p_r,\vec q_r)]| = 1 $
for some $(\vec p_1,\vec q_1),..., (\vec p_r,\vec q_r) \in V^n$.
Then
\begin{align*}
\Tr{\rho w(\vec p_1+\cdots + \vec p_r, \vec q_1+\cdots + \vec q_r)} = \prod_{ j=1}^r \Tr{\rho w(\vec p_j,\vec q_j)} \;.
\end{align*}

\end{lem}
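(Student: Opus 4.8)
The plan is to prove this by induction on $r$, using Lemmas \ref{0106lem1} and \ref{0106lem2} as the base ingredients. The case $r=1$ is trivial. For the inductive step, the key observation from Lemma \ref{0106lem2} is that the hypothesis $|\Tr{\rho w(\vec p_j,\vec q_j)}|=1$ for all $j$ forces all the Weyl operators $w(\vec p_j,\vec q_j)$ to pairwise commute, so their symplectic inner products vanish; consequently products like $w(\vec p_j,\vec q_j)w(\vec p_k,\vec q_k)$ combine without any extra phase beyond the Weyl composition law, i.e.\ $w(\vec p_j,\vec q_j)w(\vec p_k,\vec q_k)=w(\vec p_j+\vec p_k,\vec q_j+\vec q_k)$ (for $d>2$ the phase $\chi(2^{-1}\inner{\cdot}{\cdot}_s)=1$; for $d=2$ one checks the $i$-phase is also trivial since the symplectic form vanishes). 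Moreover, by Lemma \ref{0106lem1}, each $w(\vec p_j,\vec q_j)$ acts on $\rho$ as a scalar: $w(\vec p_j,\vec q_j)\rho=\chi(k_j)\rho$ for some $k_j\in\Z_d$, and hence $\Tr{\rho w(\vec p_j,\vec q_j)}=\chi(k_j)$.

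First I would record, from Lemma \ref{0106lem1} applied to each index $j$, that $\Tr{\rho w(\vec p_j,\vec q_j)}=\chi(k_j)$ with $w(\vec p_j,\vec q_j)\rho=\chi(k_j)\rho$. Then I would use the commutativity (Lemma \ref{0106lem2}) together with the Weyl composition law to write
\begin{align*}
w(\vec p_1+\cdots+\vec p_r,\vec q_1+\cdots+\vec q_r)
= w(\vec p_1,\vec q_1)\,w(\vec p_2,\vec q_2)\cdots w(\vec p_r,\vec q_r)\;,
\end{align*}
with no residual phase. Applying this to $\rho$ and peeling off the factors one at a time using $w(\vec p_j,\vec q_j)\rho=\chi(k_j)\rho$ gives
\begin{align*}
w(\vec p_1+\cdots+\vec p_r,\vec q_1+\cdots+\vec q_r)\,\rho
= \Big(\prod_{j=1}^r \chi(k_j)\Big)\rho
= \Big(\prod_{j=1}^r \Tr{\rho w(\vec p_j,\vec q_j)}\Big)\rho\;.
\end{align*}
Taking the trace of both sides and using $\Tr{\rho}=1$ yields the claim. (In particular this also re-proves property (c) cited in Proposition \ref{0109lem1}, since the right-hand side has modulus $1$.)

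The only genuinely delicate point is verifying that the composition law introduces no phase in the $d=2$ case as well: here one must check that $i^{\inner{(\vec p_j,\vec q_j)}{(\vec p_k,\vec q_k)}_s}=1$ whenever the two operators commute. Since commutativity for qubit Weyl operators means $\inner{(\vec p_j,\vec q_j)}{(\vec p_k,\vec q_k)}_s\equiv 0\bmod 2$, and for $n$-qudit tensor products this symplectic form is the sum of the local ones, the exponent is even and the phase is indeed trivial; iterating the pairwise-commuting reordering is then unobstructed. Apart from this bookkeeping, the argument is a direct consequence of the two preceding lemmas, so I expect no serious obstacle.
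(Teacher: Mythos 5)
Your argument follows exactly the same route as the paper's proof: Lemma \ref{0106lem1} gives $w(\vec p_j,\vec q_j)\rho=\chi(k_j)\rho$, Lemma \ref{0106lem2} gives pairwise commutativity, and one then writes the Weyl operator of the sum as the product of the individual operators and peels off scalars. For odd prime $d$ this is correct and complete, and in fact slightly more careful than the paper, which asserts $w(\sum_j\vec p_j,\sum_j\vec q_j)=\prod_j w(\vec p_j,\vec q_j)$ with no comment on the phase; your observation that vanishing of the symplectic form mod $d$ kills the phase $\chi(2^{-1}\inner{\cdot}{\cdot}_s)$ is the right justification there.

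The step you single out as ``the only genuinely delicate point,'' namely $d=2$, is however where your argument actually fails. Commutativity of two qubit Weyl operators forces $\inner{(\vec p_j,\vec q_j)}{(\vec p_k,\vec q_k)}_s$ to be even \emph{as an integer}, but $i$ raised to an even integer equals $(-1)^{(\text{exponent})/2}$, not $1$: for $n\ge 2$ the symplectic form is a sum of local terms and can equal $2$, leaving a residual sign in \eqref{0106shi2_2}. Concretely, take $n=2$, $\rho=\proj{\Phi^+}$ with $\ket{\Phi^+}=(\ket{00}+\ket{11})/\sqrt2$, and $w(\vec p_1,\vec q_1)=Z\ot Z=w((1,1),(0,0))$, $w(\vec p_2,\vec q_2)=X\ot X=w((0,0),(1,1))$. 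Both traces equal $1$ and the operators commute, yet $w(\vec p_1+\vec p_2,\vec q_1+\vec q_2)=w((1,1),(1,1))=Y\ot Y$ and $\Tr{\rho\,(Y\ot Y)}=-1\neq 1\cdot 1$. So the phase is genuinely nontrivial, and the lemma as literally stated is false at $d=2$ for $r\ge 2$; no bookkeeping will rescue it. The paper's own proof has the same silent gap, so the honest fix is to restrict the lemma (and its downstream uses, e.g.\ property (c) in the proof of Proposition \ref{0109lem1}) to odd prime $d$, or to weaken the conclusion to an equality up to a sign of the form $(-1)^{\sum_{j<k}\inner{(\vec p_j,\vec q_j)}{(\vec p_k,\vec q_k)}_s/2}$ when $d=2$. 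With the hypothesis that $d$ is an odd prime added, your proof is correct.
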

\begin{proof}
By Lemma \ref{0106lem1}, there are $ k_1,...,k_r\in \Z_d$ such that
\[  \rho w( \vec p_j,\vec q_j) = \chi(k_j) \rho \;,\quad j=1,...,r \;.\]
By Lemma \ref{0106lem2},
$w(\vec p_1,\vec q_1),..., w(\vec p_r, \vec q_r)$ commute with each other.
Thus
\begin{align*}
&\Tr{\rho w(\vec p_1+\cdots + \vec p_r, \vec q_1+\cdots + \vec q_r)} \\
=& \Tr{\rho w(\vec p_1, \vec q_1)\cdots w(\vec p_r, \vec q_r) }\\
=& \chi(k_1)\cdots\chi(k_r) \trace[\rho] \\
=&\prod_{ j=1}^r \Tr{\rho w(\vec p_j , \vec q_j)} \;,
\end{align*}
and the proof is complete.
\end{proof}
 We also find that the 
map $\CMM: \mathcal{D}(\mathcal{H}^{\ot n})\to MSPS$ commutes with all Clifford unitaries.
\begin{lem}\label{lem:comm_mean}
For any $n$-qudit state $\rho$ and any Clifford unitary $U$, 
\begin{eqnarray*}
\CMM(U\rho U^\dag)
=U\CMM(\rho) U^\dag \;.
\end{eqnarray*}
\end{lem}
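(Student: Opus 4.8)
The plan is to track how the characteristic function transforms under Clifford conjugation and then feed that into the definition of $\CMM$. Recall that a Clifford unitary $U$ acts on Weyl operators by $U w(\vec p,\vec q) U^\dag = \chi(f(\vec p,\vec q))\, w(g(\vec p,\vec q))$ for some phase function $f:V^n\to\Z_d$ and some symplectic affine map $g:V^n\to V^n$ (here $g$ is in fact linear since we are not composing with Weyl translations, and it preserves the symplectic form; this is exactly the content of the definition of Clifford unitary given above). Consequently $U w(-\vec p,-\vec q) U^\dag = \chi(f(-\vec p,-\vec q))\, w(g(-\vec p,-\vec q)) = \chi(f(-\vec p,-\vec q))\, w(-g(\vec p,\vec q))$, using linearity of $g$.

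First I would compute, for the conjugated state $U\rho U^\dag$,
\begin{align*}
\Xi_{U\rho U^\dag}(\vec p,\vec q)
&= \Tr{U\rho U^\dag\, w(-\vec p,-\vec q)}
= \Tr{\rho\, U^\dag w(-\vec p,-\vec q) U}\\
&= \overline{\chi(f(-\vec p,-\vec q))}\;\Tr{\rho\, w(-g(\vec p,\vec q))}
= \overline{\chi(f(-\vec p,-\vec q))}\;\Xi_\rho(g(\vec p,\vec q)) \; ,
\end{align*}
where I used $U^\dag w(\cdots) U = \overline{\chi(f(\cdots))} w(g^{-1}(\cdots))$-type relations; the cleanest bookkeeping is to note $g$ is a bijection of $V^n$ and absorb its inverse into the naming. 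The key qualitative consequence is that $|\Xi_{U\rho U^\dag}(\vec p,\vec q)| = |\Xi_\rho(g(\vec p,\vec q))|$, since $|\chi(\cdot)|=1$. Hence the ``support at modulus one'' set transforms as $S_{U\rho U^\dag} = g^{-1}(S_\rho)$ (with $S_\rho$ as in \eqref{0109shi7}).

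Next I would apply the definition of the mean state \eqref{0109shi6} to both sides. On one hand, $\Xi_{\CMM(U\rho U^\dag)}(\vec p,\vec q)$ equals $\Xi_{U\rho U^\dag}(\vec p,\vec q) = \overline{\chi(f(-\vec p,-\vec q))}\,\Xi_\rho(g(\vec p,\vec q))$ when $|\Xi_\rho(g(\vec p,\vec q))|=1$ and $0$ otherwise. On the other hand, $\Xi_{U\CMM(\rho)U^\dag}(\vec p,\vec q) = \overline{\chi(f(-\vec p,-\vec q))}\,\Xi_{\CMM(\rho)}(g(\vec p,\vec q))$ by the same conjugation computation applied to $\CMM(\rho)$, and by \eqref{0109shi6} this is $\overline{\chi(f(-\vec p,-\vec q))}\,\Xi_\rho(g(\vec p,\vec q))$ exactly when $|\Xi_\rho(g(\vec p,\vec q))|=1$ and $0$ otherwise. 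These two expressions agree pointwise on all of $V^n$, so the operators have the same characteristic function, hence are equal by \eqref{0109shi5}. Finally, since $U\rho U^\dag$ is a genuine state, Proposition~\ref{0109lem1} guarantees $\CMM(U\rho U^\dag)$ is again an MSPS, so the map is well-defined on the claimed codomain.

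I do not expect a serious obstacle here; the only thing requiring care is the bookkeeping of the phase $f$ and the sign/inverse of the symplectic map $g$ — in particular making sure the phase $\overline{\chi(f(-\vec p,-\vec q))}$ that appears is \emph{the same} on both sides of the desired identity, which it is, because in both computations it comes purely from conjugating the single Weyl operator $w(-\vec p,-\vec q)$ by $U$ and is independent of the state being conjugated. One may also phrase the argument without ever naming $f$ and $g$ explicitly: conjugation by $U$ is a linear bijection on $L(\mathcal H^{\ot n})$ permuting the Weyl basis up to phases, so it maps the two-valued "indicator truncation'' in \eqref{0109shi6} to itself; this is really just the statement that the operation $\rho \mapsto \CMM(\rho)$ is defined by a rule (keep Weyl components of maximal modulus, zero the rest) that is manifestly covariant under any symmetry permuting the Weyl basis up to phases.
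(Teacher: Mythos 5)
Your argument is correct and is exactly the paper's proof spelled out in detail: the paper disposes of this lemma in one line by noting that Clifford conjugation permutes the Weyl basis up to phases, which is precisely the covariance of the ``keep modulus-one components, zero the rest'' rule that you verify on the level of characteristic functions. The one point worth the care you give it --- that the phase factor is the same on both sides because it depends only on $U$ and the Weyl label, not on the state --- is handled correctly.
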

\begin{proof}
This result comes directly from the fact that Clifford unitary $U$ always maps Weyl operators to Weyl operators.
\end{proof}

\subsection{Mean-value vector and zero-mean state }
For a classical multi-variable random variable $\vec{X}\in \real^r$, 
its characteristic function is
\begin{eqnarray*}
\phi_{\vec X}(\vec t)=\mathbb{E}_{\vec{X}}\exp(i\vec{t}\cdot \vec{X}) \;,\quad \forall \vec{t}=(t_1,..,t_r)\in \real^r \;,
\end{eqnarray*}
and its  mean-value vector $\vec{\mu}$ is equal to the  gradient of $\phi_{X}(\vec t)$ at $\vec t= \vec 0$, i.e., 
\begin{eqnarray*}
\vec{\mu}=\left(-i\left.\frac{d}{d t_j}\phi_{\vec X}(\vec t)\right|_{\vec t=\vec 0}\right)^r_{j=1} \;.
\end{eqnarray*}
If $\vec X$ is zero-mean, then $\vec{\mu}=(0,...,0)$.

For the quantum case, we also need to define the 
zero-mean states for the quantum central limit theorem. 
Given an $n$-qudit state $\rho$,  its MS $\mathcal{M}(\rho)$ has
characteristic function 
\begin{eqnarray}\label{0212shi1}
\Xi_{\mathcal{M}(\rho)}\left(\sum^r_{i=1}t_i(\vec{p}_i, \vec{q}_i)\right)
=\Pi^r_{i=1}\chi(t_ik_i) \;,
\end{eqnarray}
where we assume the abelian group of $\CMM(\rho)$ is generated by  $\set{w(\vec{p}_i, \vec{q}_i)}_{i\in[r]}$,
and 
$
\{\chi(k_i)\}$. 
Similar to the classical case, 
we define the mean-value vector of a state $\rho$ w.r.t. the generators $\set{w(\vec{p}_i, \vec{q}_i)}_{i\in[r]}$ as 
\begin{eqnarray}\label{eq:mean_value}
\vec{\mu}_{\CMM(\rho)}
:=(k_1,...,k_r) \mod d  \;.
\end{eqnarray}
Therefore, 
we have the following definition of zero-mean states.

\begin{Def}[\bf Zero-mean state]\label{Def:Zero_mean}
An $n$-qudit state $\rho$ is called a zero-mean state if its MS $\mathcal{M}(\rho)$ has mean-value 
vector $\vec{\mu}_{\CMM(\rho)}=(0,..., 0)\mod d$, or equivalently, 
the characteristic function of $\mathcal{M}(\rho)$ takes values in $\{0,1\}$. 
\end{Def}
When $\rho$ is not a zero-mean state, there exists some Weyl operator $w(\vec p, \vec q)$ such that $w(\vec p, \vec q)\rho w(\vec p, \vec q)^\dag$
is a zero-mean state.

\begin{lem}\label{lem:zero_mean_pau} 
Given any quantum  state $\rho$, there exists a Weyl operator $w(\vec{p}, \vec q)$ such that $w(\vec{p} ,\vec q)\rho w(\vec{p}, \vec q)^\dag$ has zero mean.
\end{lem}
\begin{proof}
We follow the notation in \eqref{0212shi1}.
Let $U$ be a Clifford unitary such that 
$UZ_iU^\dag=w(\vec{p}_i, \vec{q}_i)$. Since
$\Tr{\rho w(\vec{p}_i, \vec q_i)}=\Tr{\mathcal{M}(\rho) w(\vec{p}_i, \vec{q}_i)}=\chi(k_i)$, 
we have $\Tr{U^\dag \rho UZ_i}=\chi(k_i)$, i.e., $\Tr{\mathcal{M}(U^\dag \rho U)Z_i}=\chi(k_i)$ . Hence
\begin{eqnarray*}
\mathcal{M}(U^\dag \rho U)
=\frac{1}{d^{n-r}} \left(\bigotimes^r_{i=1} \proj{k_i} \right)\ot I_{n-r} \;.
\end{eqnarray*}
Take $w(\vec{p}, \vec q)=U \left(\ot^r_{i=1}X^{-k_i} \ot I_{n-r}\right) U^\dag$ (up to some $d$-th root of unity), 
then we have 
\begin{eqnarray*}
\Xi_{\mathcal{M}(w(\vec{p}, \vec q)\rho w(-\vec{p}, -\vec q))}\left(\sum^r_{i=1}t_i(\vec{p}_i, \vec q_i)\right)
=\Xi_{\mathcal{M}(\rho) }\left(\sum^r_{i=1}t_i(\vec{p}_i, \vec q_i)\right)\Pi^r_{i=1}\chi(-t_ik_i) =1 \;.
\end{eqnarray*}
Therefore $w(\vec{p} ,\vec q)\rho w(\vec{p}, \vec q)^\dag$ has zero mean.
\end{proof}

\subsection{Extremality of MSPS}
In order to discuss the 
extremality of the MSPS, we introduce
an important family of measures, the quantum R\'enyi relative entropy. This allows us to quantify the difference between quantum states and to analyze extremality.

The quantum R\'enyi entropy is a family of entropy measures to study quantum states, which is defined as follows,
\begin{eqnarray}\label{eq:RE}
H_{\alpha}(\rho)
:=\frac{1}{1-\alpha}\log \Tr{\rho^\alpha} \;, 
\end{eqnarray}
for any $\alpha\in[0,+\infty]$.
In particular, 
$
\lim_{\alpha\rightarrow 1} H_\alpha(\rho) = H(\rho),
$
is the von Neumann entropy of $\rho$, 
$
H_{+\infty} (\rho) = \lim_{\alpha\rightarrow +\infty} H_\alpha(\rho) = -\log \lambda_{\max} (\rho)$,
where $\lambda_{\max}$ is the maximal eigenvalue of $\rho$.

\begin{Def}[\bf Quantum  R\'enyi relative entropy \cite{Hiai11,Martin13}]
Given two quantum states $\rho$ and $\sigma$, the quantum R\'enyi relative entropy of $\rho$ with respect to $\sigma$ is 
\begin{eqnarray*}
    D_{\alpha}(\rho||\sigma)=\frac{1}{\alpha-1}\log\Tr{\left(\sigma^{\frac{1-\alpha}{2\alpha}}\rho\sigma^{\frac{1-\alpha}{2\alpha}}\right)^{\alpha}} \;,
\end{eqnarray*}
where $\alpha\in[0,+\infty]$. In the cases $\alpha=0,1,\infty$, the R\'enyi relative entropy is defined as a limit. 
\end{Def}
The  family of quantum R\'enyi relative entropies includes some well-known entropy measures. 
For example, for $\alpha=1$, 
\begin{align*}
    \lim_{\alpha\to 1}D_{\alpha}(\rho||\sigma)=D(\rho||\sigma)=\Tr{\rho\log \rho}-\Tr{\rho\log \sigma}\;.
\end{align*} 
For $\alpha=0$, 
\begin{align*}
    \lim_{\alpha\to 0}D_{\alpha}(\rho||\sigma)=D_0(\rho||\sigma)=-\log\Tr{\Pi_{\rho}\sigma}\;,
\end{align*}
where $\Pi_{\rho}$ is the projector to the support of $\rho$. For $\alpha=\infty$,
\begin{align*}
    \lim_{\alpha\to\infty}D_{\alpha}(\rho||\sigma)=D_{\infty}(\rho||\sigma)=\min\set{\lambda:\rho\leq 2^{\lambda}\sigma}\;.
\end{align*}
The quantum R\'enyi relative entropy $D_{\alpha}\geq 0$, and equality holds iff $\rho=\sigma$ for $\alpha>0$, so R\'enyi relative entropy is useful to quantify the distance of quantum states. The quantum R\'enyi relative entropy $D_{\alpha}$ is additive under tensor products and monotone under quantum channels for $\alpha\geq 1/2$ \cite{Tomamichel2015quantum}. The property of monotonicity will be a useful property for the later part in this work.

Here we find that the MS is the closest MSPS in quantum R\'enyi relative entropy.
\begin{thm}[\bf Extremality of MSPS]\label{thm:main1}
Given an $n$-qudit state $\rho$, we have 
\begin{align*}
\min_{\sigma\in MSPS}D_{\alpha}(\rho||\sigma)
=D_{\alpha}(\rho|| \CMM(\rho)) =H_{\alpha}(\CMM(\rho))-H_{\alpha}(\rho) \;,
\end{align*}
for any $\alpha\geq 1$.
Moreover, $\CMM(\rho)$ is the unique minimizer, that is,  for any other state $\sigma\in MSPS$ with $\sigma\neq \CMM(\rho)$,  we have 
\begin{align*}
D_{\alpha}(\rho||\sigma)>D_{\alpha}(\rho||\CMM(\rho)) \;.
\end{align*}
\end{thm}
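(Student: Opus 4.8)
The plan is to reduce everything to a computation in the eigenbasis of the mean state, exploiting that $\CMM(\rho)$ is an MSPS (Proposition~\ref{0109lem1}), hence a multiple of a projection, and that by Lemma~\ref{lem:comm_mean} both sides of the claimed identity are invariant under simultaneous Clifford conjugation of $\rho$ and $\sigma$. First I would use a Clifford unitary $U$ (as constructed in the proof of Proposition~\ref{0109lem1}) to bring $\CMM(\rho)$ to the standard form $\frac{1}{d^{n-r}}\bigl(\prod_{j=1}^r \E_{t_j}[\chi(k_j)Z_j]^{t_j}\bigr)$, i.e.\ $\frac{1}{d^{n-r}}P$ where $P$ is the stabilizer projection onto the joint eigenspace $\{w(\vec x_j)\ket\psi = \chi(k_j)\ket\psi\}$. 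So WLOG I may assume $\CMM(\rho) = \frac{1}{d^{n-r}}P$ with $P$ a projection of rank $d^{n-r}$. The key structural input is Lemma~\ref{0106lem1}: since $|\Xi_\rho(\vec x)| = 1$ for each $\vec x\in S$, we get $P\rho P = \rho$, i.e.\ $\rho$ is supported on the range of $P$, and $\rho P = P\rho = \rho$.

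Next I would compute $D_\alpha(\rho\|\CMM(\rho))$ directly. Writing $\sigma_0 = \CMM(\rho) = d^{-(n-r)}P$, since $\rho$ and $P$ commute (indeed $\rho P = \rho$) we have $\sigma_0^{\frac{1-\alpha}{2\alpha}}\rho\,\sigma_0^{\frac{1-\alpha}{2\alpha}} = d^{-(n-r)\frac{1-\alpha}{\alpha}}\,P\rho P = d^{(n-r)\frac{\alpha-1}{\alpha}}\rho$ — here the noninteger power of the (non-invertible) $\sigma_0$ is taken on its support, which is legitimate because $D_\alpha$ for $\alpha \geq 1$ only requires $\mathrm{supp}(\rho)\subseteq\mathrm{supp}(\sigma)$, satisfied here. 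Hence $\Tr\bigl[(\sigma_0^{\frac{1-\alpha}{2\alpha}}\rho\sigma_0^{\frac{1-\alpha}{2\alpha}})^\alpha\bigr] = d^{(n-r)(\alpha-1)}\Tr[\rho^\alpha]$, and taking $\frac{1}{\alpha-1}\log$ gives $D_\alpha(\rho\|\sigma_0) = (n-r)\log d - \frac{1}{1-\alpha}\log\Tr[\rho^\alpha] = H_\alpha(\sigma_0) - H_\alpha(\rho)$, since $H_\alpha(\sigma_0) = -\log \lambda_{\max}(\sigma_0) \cdot\! $ — more precisely $H_\alpha\bigl(d^{-(n-r)}P\bigr) = \frac{1}{1-\alpha}\log\bigl(d^{(n-r)}\cdot d^{-(n-r)\alpha}\bigr) = (n-r)\log d$ for all $\alpha$. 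This establishes the middle equality $D_\alpha(\rho\|\CMM(\rho)) = H_\alpha(\CMM(\rho)) - H_\alpha(\rho)$.

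For the minimization and uniqueness — which I expect to be the main obstacle — I would argue as follows. Let $\sigma\in \mathrm{MSPS}$ be arbitrary, say $\sigma = \frac{1}{m}Q$ with $Q$ a minimal stabilizer projection of rank $m$ associated to some abelian subgroup $S'$. If $\mathrm{supp}(\rho)\not\subseteq\mathrm{supp}(\sigma)$ then $D_\alpha(\rho\|\sigma) = +\infty$ for $\alpha\geq 1$ and there is nothing to prove; so assume $P_\rho := \mathrm{supp}(\rho) \leq Q$. The cleanest route is the variational/monotonicity characterization: by the data-processing inequality for $D_\alpha$ ($\alpha\geq 1/2$, hence in particular $\alpha\geq 1$, cf.~\cite{Tomamichel2015quantum}) applied to the pinching channel $\CEE_{S'}(\cdot) = \frac{1}{|S'|}\sum_{g\in S'} g(\cdot)g^\dagger$ onto the commutant of $S'$, one has $D_\alpha(\rho\|\sigma)\geq D_\alpha(\CEE_{S'}(\rho)\|\CEE_{S'}(\sigma)) = D_\alpha(\CEE_{S'}(\rho)\|\sigma)$ since $\sigma$ is fixed by $\CEE_{S'}$. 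Now $\CEE_{S'}(\rho)$ has characteristic function supported on $S'^\perp$ (the symplectic complement), and agrees with $\rho$ exactly on $S'$; combining with Lemmas~\ref{0106lem2}--\ref{0106lem3} one shows that the phases $\Xi_\rho$ on $S\cap S'$ must match the defining characters of $Q$, forcing $S \subseteq S'$ together with compatible characters whenever $D_\alpha(\rho\|\sigma) < \infty$ — equivalently $\CMM(\rho) = \CMM(\sigma')$ on the relevant subgroup, so that $Q \leq P$. Then the same direct computation as above yields $D_\alpha(\rho\|\sigma) = H_\alpha(\sigma) - H_\alpha(\rho) = \log m - \frac{1}{1-\alpha}\log\Tr[\rho^\alpha]$, and since $Q\leq P$ with $Q\neq P$ forces $m = \mathrm{rank}(Q) < \mathrm{rank}(P) = d^{n-r}$, we get $\log m < (n-r)\log d$, hence $D_\alpha(\rho\|\sigma) < D_\alpha(\rho\|\CMM(\rho))$ — wait, this has the inequality backwards, so the correct bookkeeping must be that minimality of $Q$ and $P_\rho\leq Q$ force $Q \geq P_\rho$ and in fact the \emph{smallest} admissible $Q$ containing $\mathrm{supp}(\rho)$ among stabilizer projections is exactly $P$, giving $m \geq d^{n-r}$ with equality iff $\sigma = \CMM(\rho)$; I would pin this down via the "unique minimal abelian subgroup associated with $P$" remark following Definition~\ref{def:MSPS}. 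The hard part is precisely this combinatorial step — showing that any stabilizer projection whose range contains $\mathrm{supp}(\rho)$ and for which $D_\alpha < \infty$ must dominate $P = d^{n-r}\CMM(\rho)$, with strict domination (hence strictly larger rank, hence strictly larger $H_\alpha$ and strictly larger $D_\alpha$) unless it equals $P$; everything else is the eigenbasis computation above.
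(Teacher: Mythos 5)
Your computation of the middle equality is correct and is essentially the paper's: once one knows $P\rho P=\rho$ for $P$ the range projection of $\CMM(\rho)$ (which does follow from Lemma \ref{0106lem1} applied to each generator of $S$, or from the pinching identity $\CMM(U\rho U^\dag)=\mathbb{E}_{w\in S_2}\,w\,U\rho U^\dag w^\dag$ that the paper uses), the identity $D_\alpha(\rho\|\CMM(\rho))=\log \mathrm{rank}(\CMM(\rho))-H_\alpha(\rho)=H_\alpha(\CMM(\rho))-H_\alpha(\rho)$ is a one-line eigenbasis computation, and the same computation shows $D_\alpha(\rho\|\tau)=\log\mathrm{rank}(\tau)-H_\alpha(\rho)$ for \emph{any} MSPS $\tau$ with $D_\alpha(\rho\|\tau)<\infty$. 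So, as you observe, the entire theorem reduces to one combinatorial claim: every minimal stabilizer projection $Q$ whose range contains $\mathrm{range}(\rho)$ must satisfy $\mathrm{range}(\CMM(\rho))\subseteq\mathrm{range}(Q)$, hence $\mathrm{rank}(Q)\ge\mathrm{rank}(\CMM(\rho))$ with equality only when $Q=d^{n-r}\CMM(\rho)$.

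That is exactly the step you do not prove, and it is where your proposal has a genuine gap. Your attempted derivation via the pinching channel $\CEE_{S'}$ concludes ``forcing $S\subseteq S'$,'' which is the wrong direction of containment (it leads to $\mathrm{rank}(Q)<\mathrm{rank}(P)$ and the backwards inequality you then notice); after the self-correction you assert the right conclusion but explicitly defer it as ``the hard part.'' The missing argument is short and you should supply it: finiteness of $D_\alpha(\rho\,\|\,Q/\Tr{Q})$ for $\alpha\ge 1$ gives $\rho Q=\rho$, hence $\Tr{\rho Q}=1$. Writing $Q=\frac{1}{|S'|}\sum_{w(\vec p,\vec q)\in S'}e^{\theta(\vec p,\vec q)}w(\vec p,\vec q)$ with unimodular coefficients, the identity $1=\frac{1}{|S'|}\sum e^{\theta(\vec p,\vec q)}\,\Xi_\rho(-\vec p,-\vec q)$ expresses $1$ as an average of complex numbers of modulus at most $1$, so every term equals $1$: thus $|\Xi_\rho|=1$ on $S'$, i.e.\ $S'\subseteq S$ (not the reverse), and moreover $e^{\theta(\vec p,\vec q)}=\Xi_\rho(\vec p,\vec q)$. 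This phase matching then yields $\Tr{\CMM(\rho)Q}=\Tr{\CMM(\rho)}=1$, hence $\CMM(\rho)Q=\CMM(\rho)$, i.e.\ $\mathrm{range}(\CMM(\rho))\subseteq\mathrm{range}(Q)$ --- precisely the domination you need; strictness of the rank inequality whenever $Q\neq d^{n-r}\CMM(\rho)$ then gives uniqueness. Without this paragraph your proposal establishes the value of $D_\alpha(\rho\|\sigma)$ at every admissible $\sigma$ but not that $\CMM(\rho)$ is the (unique) minimizer.
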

\begin{proof}

(1) We first prove the second equality $D_{\alpha}(\rho||\CMM(\rho))=H_{\alpha}(\CMM(\rho))-H_{\alpha}(\rho)$.
We observe that $\mathrm{range}(\rho)\subseteq \mathrm{range} (\CMM(\rho))$, i.e.,
\begin{eqnarray*}
\rho\leq \mathrm{rank}(\CMM(\rho))\CMM(\rho) \;,
\end{eqnarray*}
where $ \CMM(\rho)$ is an MSPS.

Let $S$ be the abelian Weyl group associated with $\mathcal{M}(\rho)$ and assume $|S|=d^r$. 
Then $\mathrm{rank}(\CMM(\rho))=d^{n-r}$.
Without loss of generality, assume $S$ is generated by the Weyl operators $\set{w(\vec p_i, \vec q_i)}^r_{i=1}$. 
Let $U$ be a Clifford unitary that  maps each $w(\vec p_i, \vec q_i)$ to $Z_i$, where $Z_i$ is the Pauli $Z$ operator 
on the $i$-th position. Hence, the abelian Weyl group $S_1$ associated with $\CMM(U\rho U^\dag)$
is generated by $\set{Z_1,...,Z_r}$.
Due to Lemma \ref{0109lem2},
when $w(\vec p, \vec q)\notin S_1$,
$ \Xi_{ U\rho U^\dag} (\vec p,\vec q) \neq 0$  implies that $w(\vec p,\vec q ) $ commutes with the group $S_1$, i.e., $ q = (0,...,0, q_{r+1},...,q_n)$.
Let us denote
\begin{align*}
S_2 = \{w(\vec p,\vec q):  \vec p = (0,...,0, x_{r+1},...,x_n) \text{ and } \vec q = (0,...,0, y_{r+1},...,y_n)\} \;,
\end{align*}
then 
\begin{align}\label{0202shi1}
\CMM(U\rho U^\dag) = \mathbb{E}_{w(\vec p,\vec q)\in S_2} w(\vec p,\vec q) U\rho U^\dag w(\vec p,\vec q)^\dag \;,
\end{align}
where $\mathbb{E}_{w(\vec p,\vec q)\in S_2}=\frac{1}{|S_2|}\sum_{w(\vec p,\vec q)\in S_2}$.
Hence, we have 
\begin{eqnarray*}
|S_2|\CMM(U\rho U^\dag) 
\geq U\rho U^\dag \;.
\end{eqnarray*}
By Lemma \ref{lem:comm_mean}, we have  $|S_2|\CMM(\rho)\geq \rho$, 
i.e., $d^{n-r}\CMM(\rho)\geq \rho$. Thus 
we have 
\begin{eqnarray*}
\rho\leq \mathrm{rank}(\CMM(\rho))\CMM(\rho) \;,
\end{eqnarray*}
as $\mathrm{rank}(\CMM(\rho))=d^{n-r}$.

Since $\mathrm{range}(\rho) \subseteq  \mathrm{range} (\CMM(\rho))$, 
we have $D_{\alpha}(\rho||\CMM(\rho))<+\infty$.
Denote the projection $P:=\mathrm{rank}(\CMM(\rho))\CMM(\rho)$, i.e., 
$ \CMM(\rho)=\frac{1}{d^{n-r}}P$.  Then we have 
\begin{eqnarray*}
D_{\alpha}(\rho||\CMM(\rho))
=\log d^{n-r}
+\frac{1}{\alpha-1}\log\Tr{(P\rho P)^\alpha}
=H_{\alpha}(\CMM(\rho))
-H_{\alpha}(\rho) \;,
\end{eqnarray*}
where the second equality comes from the fact that $P \rho P = \rho $.

(2) If $Q$ is a minimal stabilizer projection associated with some abelian Weyl group $S_1$ such that
\begin{align}\label{0203shi1}
D_{\alpha} \left(\rho \left\| \frac{Q}{\Tr{Q}}\right.\right)<\infty \;,
\end{align}
then we show that $S_1\subseteq S$.
Moreover, we show  $ Q\CMM( \rho) = \CMM( \rho)$,
i.e.,
the range projection of $\CMM( \rho) \le Q$.

The finiteness of the relative entropy \eqref{0203shi1} implies that $\mathrm{range}(\rho)\subseteq  \mathrm{range}(Q)$,
or equivalently,
$  \rho Q = \rho$.
Therefore $\Tr{\rho Q}=\Tr{\rho}= 1$.
Since  $Q$ is a minimal stabilizer
projection of $S_1$,
\begin{align*}
Q= \frac{1}{|S_1|}\sum_{ w(\vec p, \vec q) \in S_1 } e^{\theta(\vec p, \vec q)}
 w(\vec p, \vec q) \;,
\end{align*}
where each $ |e^{\theta(\vec p, \vec q)}| = 1$.
Then
\begin{align*}
1=\Tr{\rho Q} =& \Tr{\rho\frac{1}{|S_1|} \sum_{w(\vec p, \vec q)\in S_1}  e^{\theta(\vec p, \vec q)} w(\vec p, \vec q) }\\
=&\frac{1}{|S_1|} \sum_{w(\vec p, \vec q)\in S_1} e^{\theta(\vec p, \vec q)} \Tr{\rho w(\vec p, \vec q) }\\
=& \frac{1}{|S_1|} \sum_{w(\vec p, \vec q)\in S_1} e^{\theta(\vec p, \vec q)}
 \Xi_{\rho}(-\vec p, -\vec q) \;,
\end{align*}
which implies that  $|\Xi_{\rho}(-\vec p, -\vec q)|=1$ for every $w(\vec p, \vec q)\in S_1$, 
and hence  $S_1\subseteq S$.
Moreover,
it also implies that $e^{\theta(\vec p, \vec q)} = (\Xi_{\rho}(-\vec p,-\vec q))^{-1} = \Xi_{\rho}(\vec p, \vec q)$ for every $w(\vec p, \vec q)\in S_1$.
Hence
\begin{align*}
\Tr{ \CMM(\rho) Q} = & \Tr{\CMM(\rho)  \frac{1}{|S_1|}\sum_{ w(\vec p, \vec q) \in S_1 } \Xi_{\rho}(\vec p, \vec q)
 w(\vec p, \vec q)}\\
=& \frac{1}{|S_1|} \sum_{ w(\vec p, \vec q) \in S_1 }\Xi_{\rho}(\vec p, \vec q) \Tr{ \CMM(\rho)  w(\vec p, \vec q) } \\
=& \frac{1}{|S_1|} \sum_{ w(\vec p, \vec q) \in S_1 } \Xi_{\rho}(\vec p, \vec q) \Xi_{\rho}(-\vec p, -\vec q) = 1
=\Tr{\CMM( \rho)} \;.
\end{align*}
From this we infer that  $\CMM( \rho)Q = \CMM( \rho)$.

Therefore, for any MSPS $\tau$ with $\tau\neq \CMM(\rho)$ and $D_{\alpha}(\rho||\tau)<+\infty$, 
one has 
\begin{align*}
    \mathrm{range}(\CMM(\rho))
    &\subsetneq \mathrm{range}(\tau)\;,
    \quad\text{and thus}\quad
    \mathrm{rank}(\CMM(\rho))<\mathrm{rank}(\tau)\;.
\end{align*}
Thus
\begin{align*}
D_{\alpha} \left(\rho 
 \left\| \tau\right.\right) 
= \log (\mathrm{rank}(\tau))-H_{\alpha}(\rho)
>  \log  (\mathrm{rank}(\CMM(\rho) )) -H_{\alpha}(\rho)
= D_{\alpha}(\rho||\CMM(\rho)) \;,
\end{align*}
and the proof is complete.
\end{proof}

Based on the above result, we can rewrite the quantum R\'enyi   entropy as follows
\begin{eqnarray*}
H_{\alpha}(\CMM(\rho))=H_{\alpha}(\rho)+D_{\alpha}(\rho||\CMM(\rho)) \;.
\end{eqnarray*}
This equation shows the extremality of MSPS with respect to quantum R\'enyi  entropy:
among all quantum states having the same  MS up to Clifford 
conjugate, the MSPS $\CMM(\rho)$ attains the maximal value for  quantum R\'enyi  entropy, 
which we call "maximal entropy principle in DV systems."
Recall the extremality of Gaussian states in CV systems, i.e.,  among all states
having a given covariance matrix,  Gaussian states attain the maximum von Neumann entropy  \cite{Holevo99,WolfPRL06}.
Hence the above theorem is the discrete version of the extremality of Gaussian states with the same covariance matrix in CV systems. 
In fact, the extremality of MSPSs holds for any Schur concave function. To introduce the Schur concave function, let us first introduce the concept of 
majorization, which we elaborate on in Appendix \ref{0107app1}.

\begin{Def}[\bf Majorization \cite{MOA79}]\label{def:major}
Given two probability vectors $\vec p=\set{p_i}_{i\in[n]}$ and $\vec q=\set{q_i}_{i\in[n]}$, $\vec p$ is said to
be majorized by $\vec q$, denoted as  $\vec p\prec \vec q$,  if 
\begin{eqnarray*}
\sum^k_{i=1}p^{\downarrow}_i
&\leq& \sum^k_{i=1} q^{\downarrow}_i \;, ~~\forall 1\leq k\leq n-1 \;,\\
\sum^n_{i=1}p_i&=&\sum^n_{i=1} q_i=1 \;,
\end{eqnarray*}
where $p^{\downarrow}$ is the vector obtained by permuting the components of $p$ to be in decreasing order, i.e., 
$p^{\downarrow}=(p^{\downarrow}_1,p^{\downarrow}_2,...,p^{\downarrow}_n)$ and $p^{\downarrow}_1\geq p^{\downarrow}_2\geq...\geq p^{\downarrow}_n$.
\end{Def}

\begin{Def}[\bf Schur concavity \cite{MOA79}]
A function $f:\real^n_+\to \real$ is said to be Schur concave if $\vec p\prec \vec q$ implies that $f(\vec p)\geq f(\vec q)$.
A function is called strictly Schur concave if $\vec p\prec \vec q$ implies that $f(\vec p)> f(\vec q)$ except for $\vec p= \vec q$.
\end{Def}

Let us consider two well-known examples of Schur-concave functions: the subentropy and the generalized quantum R\'enyi entropy.  Both of these functions play an important role in quantum information theory.

\begin{Def}[\bf Subentropy \cite{datta2014suben}]\label{def:suben}
Given a state $\rho$ with eigenvalues $\set{\lambda_i}^d_{i=1}$, the subentropy of
$\rho$ is 
\begin{eqnarray}
Q(\rho):=-\sum^d_{i=1}\frac{\lambda^d_i}{\Pi_{j\neq i}(\lambda_i-\lambda_j)}\log\lambda_i.
\end{eqnarray}
\end{Def}

The  generalized quantum R\'{e}nyi entropy is an extension of quantum R\'{e}nyi entropy~\eqref{eq:RE}.

\begin{Def}[\bf Generalized quantum R\'{e}nyi entropy \cite{BHOW15}]\label{def:gen_ren_en}
For  $\alpha\in[-\infty,+\infty]$,
 the generalized R\'enyi entropy $H_\alpha(\rho)$ for a quantum state $\rho$ with eigenvalues $\set{\lambda_i}_i$ is 
\begin{eqnarray*}
H_{\alpha}(\rho)
=\frac{sgn(\alpha)}{1-\alpha}\log {\sum_i\lambda^{\alpha}_i} \;,
\end{eqnarray*}
with
 $sgn(\alpha)=1$ for $\alpha\geq 0$, $sgn(\alpha)=-1$ for $\alpha<0$.
 Here, $\alpha=0,+\infty,-\infty$ are defined by limits.
\end{Def}

\begin{thm}[\bf Generalized extremality of MSPS]\label{thm:stab_proj}
Let   $\rho $ be an $n$-qudit state, we have 
\begin{eqnarray*}
\vec{\lambda}_{\mathcal{M}(\rho)}
\prec \vec{\lambda}_{\rho}\;,
\end{eqnarray*}
where $\vec{\lambda}_{\rho}$ is the vector of eigenvalues of $\rho$, and $\CMM(\rho)$ is the MS of $\rho$.
For any Schur-concave function $f$,  
\begin{eqnarray*}
f(\vec{\lambda}_{\rho})\leq f(\vec{\lambda}_{\mathcal{M}(\rho)}) \;.
\end{eqnarray*}
\end{thm}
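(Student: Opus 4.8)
The plan is to prove the majorization relation $\vec\lambda_{\CMM(\rho)} \prec \vec\lambda_\rho$ and then invoke the definition of Schur concavity to conclude $f(\rho) \le f(\CMM(\rho))$ for all Schur-concave $f$. The key structural input is already established: in the proof of Proposition~\ref{0109lem1} and Theorem~\ref{thm:main1} it is shown that, after conjugating by a suitable Clifford unitary $U$, the mean state takes the form of an average over a group of Weyl conjugations, namely $\CMM(U\rho U^\dag) = \mathbb{E}_{w \in S_2}\, w\, (U\rho U^\dag)\, w^\dag$, where $S_2$ is the set of Weyl operators supported on the last $n-r$ qudits. Since Schur-concavity of $f$ is invariant under unitary conjugation of $\rho$ (eigenvalues are unchanged), it suffices to prove the majorization for $U\rho U^\dag$, so I would henceforth assume without loss of generality that $\CMM(\rho)$ itself is this average $\mathbb{E}_{w\in S_2} w\rho w^\dag$.

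First I would recall the standard fact that a state obtained by averaging $\rho$ over a (finite) group of unitary conjugations is majorized by $\rho$: this is a classical consequence of the fact that $\sigma \mapsto \mathbb{E}_{g} U_g \sigma U_g^\dag$ is a unital, trace-preserving, positive (indeed mixed-unitary) map, and any such map is doubly stochastic on eigenvalues, hence takes $\rho$ to something majorized by $\rho$ (Uhlmann's theorem / the fact that $\vec\lambda_{\Phi(\rho)} \prec \vec\lambda_\rho$ for unital CPTP $\Phi$). Concretely, $\CMM(\rho) = \sum_g p_g\, U_g \rho U_g^\dag$ with $p_g = 1/|S_2|$, and each $U_g \rho U_g^\dag$ has the same spectrum as $\rho$; by Uhlmann's theorem a convex combination of unitary conjugates of $\rho$ is majorized by $\rho$. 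This gives $\vec\lambda_{\CMM(\rho)} \prec \vec\lambda_\rho$ directly.

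Then the conclusion is immediate: by Definition of Schur concavity, $\vec p \prec \vec q$ implies $f(\vec p) \ge f(\vec q)$, so applying this with $\vec p = \vec\lambda_{\CMM(\rho)}$ and $\vec q = \vec\lambda_\rho$ yields $f(\CMM(\rho)) \ge f(\rho)$, which is the claim (here I am reading $f(\rho)$ as shorthand for $f$ evaluated on the eigenvalue vector of $\rho$, as in the statements of subentropy and generalized R\'enyi entropy). One should note the minor bookkeeping point that the eigenvalue vectors of $\rho$ and $\CMM(\rho)$ live in $\real^{d^n}$, padding with zeros as needed, so that the majorization comparison is between vectors of equal length; since $\CMM(\rho)$ has lower rank than $\rho$ in general, its eigenvalue vector has more zero entries, which is consistent with being the majorized (more mixed) vector.

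The only real content is the averaging representation $\CMM(U\rho U^\dag) = \mathbb{E}_{w\in S_2} w\, U\rho U^\dag\, w^\dag$, and this is exactly equation~\eqref{0202shi1} already derived in the proof of Theorem~\ref{thm:main1} from Lemma~\ref{0109lem2}; so there is no new obstacle here beyond citing that identity. The part requiring a moment of care is simply making sure the Clifford-invariance reduction is stated cleanly (Lemma~\ref{lem:comm_mean} gives $\CMM(U\rho U^\dag) = U\CMM(\rho)U^\dag$, so the spectra of $\CMM(\rho)$ and $\CMM(U\rho U^\dag)$ agree, and likewise for $\rho$), after which the majorization is a one-line application of the doubly-stochastic/Uhlmann principle for mixed-unitary channels. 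I would expect the write-up to be short: reduce via Clifford, invoke~\eqref{0202shi1}, apply Uhlmann, apply the definition of Schur concavity.
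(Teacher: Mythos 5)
Your proposal is correct and follows essentially the same route as the paper: reduce by a Clifford unitary, use the averaging representation $\CMM(U\rho U^\dag)=\mathbb{E}_{w\in S_2}\,w\,U\rho U^\dag\,w^\dag$, and deduce majorization, then apply Schur concavity. The only cosmetic difference is that you cite Uhlmann's theorem for mixed-unitary (unital) channels, whereas the paper inlines that theorem's standard proof by explicitly exhibiting the doubly stochastic matrix $M_{ij}=\mathbb{E}_{w\in S_2}\langle\psi_j|w|\varphi_i\rangle\langle\varphi_i|w^\dag|\psi_j\rangle$ and invoking Proposition 1.A.3 of \cite{MOA79}.
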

\begin{proof}
Since $f$ is Schur concave, $f(\vec{\lambda}_{\rho})\leq f(\vec{\lambda}_{\mathcal{M}(\rho)})$ follows directly from $\vec{\lambda}_{\mathcal{M}(\rho)}
\prec \vec{\lambda}_{\rho}$, which we now prove.
Denote by $S $ the abelian Weyl group associated with $\CMM(\rho)$,
and assume $|S| = d^r$.
Then $S$ is generated by $r$ Weyl operators,
say $\set{w(\vec p_i, \vec q_i)}^r_{i=1}$. 
Let $U$ be a Clifford unitary that maps each  $w(\vec p_i, \vec q_i)$ to $Z_i$, where $Z_i$ is the Pauli $Z$ operator 
on the $i$-th qudit. Hence, the abelian Weyl group $S_1$ associated with $\CMM(U\rho U^\dag)$
is generated by $\set{Z_1,...,Z_r}$.

Denote
\begin{align*}
S_2 = \{w(\vec p,\vec q):  \vec p = (0,...,0, p_{r+1},...,p_n) \text{  and } \vec q = (0,...,0, q_{r+1},...,q_n)\} \;.
\end{align*}
As a consequence of Lemma \ref{0109lem2},
the characteristic function $ \Xi_{ U\rho U^\dag} (\vec p,\vec q) \neq 0$ for some $(\vec p,\vec q )$ implies that $w(\vec p,\vec q ) $ commutes with all the elements in  group $S_1$. 
Then 
\begin{align}
\CMM(U\rho U^\dag) = \mathbb{E}_{w(\vec p,\vec q)\in S_2} w(\vec p,\vec q) U\rho U^\dag w(\vec p,\vec q)^\dag \;,
\end{align}
where $ \mathbb{E}_{w(\vec p,\vec q)\in S_2}=\frac{1}{|S_2|}\sum_{w(\vec p,\vec q)\in S_2}$.

Let 
\begin{align*}
U\rho U^\dag= \sum_i \lambda_i | \varphi_i \rangle \langle\varphi_i| \;,\\
\CMM( U\rho U^\dag) = \sum_i \mu_i | \psi_i\rangle \langle \psi_i| \;,
\end{align*}
be the spectral decompositions. 
Since $\CMM(U\rho U^\dag)=U\CMM(\rho)U^\dag$ for any Clifford unitary,  $\vec{\lambda}$ and $\vec{\mu}$ are the 
spectral vectors of $\rho$ and $\mathcal{M}(\rho)$, respectively.

Let $M$ be the $d^n\times d^n$ matrix with 
\begin{align*}
M_{ij} = \mathbb{E}_{w(\vec p,\vec q)\in S_2} \langle \psi_j| w(\vec p,\vec q) | \varphi_i \rangle \langle\varphi_i| w(\vec p,\vec q)^\dag | \psi_j\rangle \;.
\end{align*}
Since
\begin{align*}
\sum_i M_{ij} &
=\sum_i\mathbb{E}_{w(\vec p,\vec q)\in S_2} \langle \psi_j| w(\vec p,\vec q) | \varphi_i \rangle \langle\varphi_i| w(\vec p,\vec q)^\dag | \psi_j\rangle\\
&=
\mathbb{E}_{w(\vec p,\vec q)\in S_2} \langle \psi_j| w(\vec p,\vec q) I w(\vec p,\vec q)^\dag | \psi_j\rangle =1 \;,\\
\sum_j M_{ij} &=\sum_j\mathbb{E}_{w(\vec p,\vec q)\in S_2} \langle \psi_j| w(\vec p,\vec q) | \varphi_i \rangle \langle\varphi_i| w(\vec p,\vec q)^\dag | \psi_j\rangle\\
&=
\mathbb{E}_{w(\vec p,\vec q)\in S_2} \trace\left[ w(\vec p,\vec q) | \varphi_i \rangle \langle\varphi_i| w(\vec p,\vec q)^\dag\right] = 1 \;,
\end{align*}
then the matrix $M = (M_{ij})$  is a doubly stochastic matrix.
Since
\begin{align*}
\mu_j = & \langle \psi_j|  \CMM( U\rho U^\dag)  | \psi_j\rangle  
= \mathbb{E}_{w(\vec p,\vec q)\in S_2} \langle \psi_j| w(\vec p,\vec q) U\rho U^\dag w(\vec p,\vec q)^\dag | \psi_j\rangle \\
=& \sum_i  \lambda_i \mathbb{E}_{w(\vec p,\vec q)\in S_2} \langle \psi_j| w(\vec p,\vec q) | \varphi_i \rangle \langle\varphi_i| w(\vec p,\vec q)^\dag | \psi_j\rangle 
= \sum_i M_{ij} \lambda_i \;,
\end{align*}
where the second equality comes from \eqref{0202shi1}. Thus, 
we have
\begin{align*}
(\mu_1,...,\mu_{d^n})^T = M^T (\lambda_1,...,\lambda_{d^n}) ^T \;.
\end{align*}
By Chapter 1, \S A.3  in \cite{MOA79}, $
(\mu_1,...,\mu_{d^n}) \prec(\lambda_1,...,\lambda_{d^n})$, completing the proof.
\end{proof}

Since subentropy and generalized quantum R\'{e}nyi entropy  are both  Schur concave functions, we have the following 
corollary directly.
\begin{cor}
Let   $\rho $ be an $n$-qudit state, we have 
\begin{eqnarray}
Q(\rho)\leq Q(\CMM(\rho)),\quad
H_{\alpha}(\rho)\leq H_{\alpha}(\CMM(\rho)),
\end{eqnarray}
where $\alpha\in [-\infty,+\infty]$.
\end{cor}

\subsection{The map \texorpdfstring{$\mathcal{M}$}{M} in Definition \ref{def:mean_state}  destroys magic}
Several measures have been proposed to quantify and to study magic~\cite{Veitch12mag,LeonePRL22,HowardPRL17,BeverlandQST20,SeddonPRXQ21,BravyiPRL16,BravyiPRX16,bravyi2019simulation, Bu19,Bucomplexity22,BuPRA19_stat}. They have been used to quantify the complexity of the classical simulation 
of quantum circuits~\cite{SeddonPRXQ21,BravyiPRL16,BravyiPRX16,bravyi2019simulation,Bu19,Bucomplexity22,bu2022classical} and 
also bound the number of resource gates in unitary  synthesis \cite{HowardPRL17,BeverlandQST20}. To achieve quantum advantage in a DV quantum system, several sampling tasks 
have been proposed \cite{jozsa2014classical,koh2015further,bouland2018complexity, boixo2018characterizing,bouland2018onthecomplexity,bremner2010classical,Yoganathan19}.  In several instances
a computational advantage over classical supercomputers  has been claimed in experimental work~\cite{Google19,zucongzi1,Pansci22}.

Quantum resource theory provides a framework for studying magic resources~\cite{HowardPRL17,Chitambar19}. One important concept in this framework is the resource-destroying map.

\begin{Def}[ \cite{LiuZiwenPRL17}]\label{Def:RD}
Given a chosen subset $\mathcal{F}\subseteq \mathcal{D}(\mathcal{H}^{\ot n})$ , 
a map $\lambda: \mathcal{D}(\mathcal{H}^{\ot n})\to \mathcal{D}(\mathcal{H}^{\ot n})$ is a resource-destroying map for 
$\mathcal{F}$
if its range is $\mathcal{F}$ and  $\lambda^2=\lambda$.

\end{Def}

Natural resource-destroying maps are known for several resource theories, including coherence, asymmetry, and non-Gaussianity (see Table \ref{tab:example}). However, a nontrivial resource-destroying map for magic has remained unknown. We present such a map here.
\begin{prop}\label{prop:RD}
The map $\CMM: \mathcal{D}(\mathcal{H}^{\ot n})\to MSPS$ 
 given by $\rho\to\CMM(\rho)$ is a resource-destroying map for magic. This map   
 satisfies  
$\min_{\sigma\in MSPS}D_{\alpha}(\rho||\sigma)=D_{\alpha}(\rho||\mathcal{M}(\rho))$. 
\end{prop}
\begin{proof}
It is straightforward to verify that the map $\CMM$ satisfies the conditions in Definition \ref{Def:RD}. The fact that this map satisfies
$\min_{\sigma\in MSPS}D_{\alpha}(\rho||\sigma)=D_{\alpha}(\rho||\mathcal{M}(\rho))$ follows directly from Theorem \ref{thm:main1}.
\end{proof}

\begin{table*}[!htbp]
\begin{tabularx}{\textwidth}{lp{0.75\textwidth}}
Theory & resource destroying map\\  \hline\hline
Coherence &  $\Delta(\rho) = \sum_i\bra{i}\rho\ket{i}\proj{i}$, where $\Delta$  is the complete dephasing channel:$\{\ket{i}\}$ w.r.t. the  reference basis \cite{BaumgratzPRL14,Plenio17}.\\
Asymmetry & $\mathcal{G}(\rho) = \int_{G} d\mu(U) U\rho U^\dag$, where the integral is taken over the Haar measure on $G$  \cite{Gour09}. \\
Non-Gaussianity& 
$\lambda(\rho) = \rho_G$, where $\rho_G$ is the Gaussian state with the same mean displacement and covariance matrix as $\rho$ \cite{Marian13}\\
Magic &  $\mathcal{M}(\rho)$, the closest MSPS  (Theorem \ref{thm:main1} and Proposition~\ref{prop:RD} in this work).\\ \hline\hline
\end{tabularx}
\vskip 5pt
\caption{\label{tab:example}Resource theories with a nontrivial resource destroying map.}

\end{table*}

\subsection{The magic gap}
In Definition \ref{def:charfn} we gave the characteristic function of a state. 
Consider the gap between the largest absolute value,
which is 1,  
and the second-largest absolute value in the support of the characteristic function.
We call this the magic gap (or non-stabilizer gap).  It plays a role similar to the spectral gap of a Laplacian between the two smallest eigenvalues.

\begin{Def}[\bf Magic gap]\label{def:ma_gap}
Given an $n$-qudit state $\rho$,  the  magic gap  of $\rho$ is 
\begin{eqnarray*}
MG(\rho)=1-\max_{(\vec{p}, \vec q)\in  \text{Supp}(\Xi_{\rho}): |\Xi_{\rho}(\vec{p},\vec q)|\neq 1}|\Xi_{\rho}(\vec{p},\vec q)| \;.
\end{eqnarray*}
If $\set{(\vec{p},\vec q)\in  \text{Supp}(\Xi_{\rho}): |\Xi_{\rho}(\vec{p}, \vec q)|\neq 1}=\emptyset$, 
then $MG(\rho):=0$, i.e., there is no gap on the support.

\end{Def}

\begin{Def}[ \cite{Bu19}]\label{def:PauRank}
Given an $n$-qudit state $\rho$, the Pauli rank $R_P(\rho)$ of $\rho$ is 
\begin{eqnarray*}
R_P(\rho)=|\text{Supp}(\Xi_\rho)|=|\set{(\vec{p}, \vec q)\in V^n:\Xi_{\rho}(\vec{p}, \vec q)\neq 0}| \;.
\end{eqnarray*}
\end{Def}

\begin{lem}[\cite{Bu19}]\label{lem:pauR}
For a pure state $\psi$, the Pauli rank $R_P(\psi)$ satisfies the following properties : 

(1)  $d^n\leq R_P(\psi)\leq d^{2n}$, and $ R_P(\psi)=d^n$ holds if and only if $\psi$ is a stabilizer state; 

(2) The Pauli rank $ R_P$ is invariant under Clifford unitary conjugation.
\end{lem}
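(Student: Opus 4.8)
The plan is to treat the two parts separately, handling Clifford invariance (2) first, since the underlying fact---that a Clifford unitary permutes the Weyl basis up to phases---is also the cleanest device for counting supports. By definition, conjugation by a Clifford unitary $U$ sends each Weyl operator to a unimodular phase times another Weyl operator; since this conjugation is a bijection of the orthonormal Weyl basis, it induces a bijection of $V^n$ together with phases $c(\vec p,\vec q)$ of modulus $1$ such that $U^\dag w(-\vec p,-\vec q)U=c(\vec p,\vec q)\,w(\vec p',\vec q')$ with $(\vec p',\vec q')$ depending bijectively on $(\vec p,\vec q)$. I would substitute this into $\Xi_{U\psi U^\dag}(\vec p,\vec q)=\Tr{\psi\,U^\dag w(-\vec p,-\vec q)U}$ to get $\Xi_{U\psi U^\dag}(\vec p,\vec q)=c(\vec p,\vec q)\,\Xi_{\psi}(\vec p',\vec q')$. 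Hence $\mathrm{Supp}(\Xi_{U\psi U^\dag})$ is the bijective image of $\mathrm{Supp}(\Xi_{\psi})$, so the two have equal cardinality and $R_P(U\psi U^\dag)=R_P(\psi)$.

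For the bounds in (1), the upper bound $R_P(\psi)\leq d^{2n}$ is immediate from $\abs{V^n}=d^{2n}$. For the lower bound I would invoke the Plancherel identity for the Weyl transform: from the expansion \eqref{0109shi5}, the orthonormality of $\set{w(\vec p,\vec q)}$ under $\inner{A}{B}=\frac{1}{d^n}\Tr{A^\dag B}$, and $w(\vec p,\vec q)^\dag=w(-\vec p,-\vec q)$, one gets
\begin{align*}
\sum_{(\vec p,\vec q)\in V^n}\abs{\Xi_{\psi}(\vec p,\vec q)}^2=d^n\Tr{\psi^2}=d^n \;,
\end{align*}
using purity of $\psi$ in the last step. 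Since $\abs{\Xi_{\psi}(\vec p,\vec q)}=\abs{\Tr{\psi\,w(-\vec p,-\vec q)}}\leq 1$ for each $(\vec p,\vec q)$ (as $w$ is unitary and $\psi$ a state), a sum of $R_P(\psi)$ nonnegative terms, each $\leq 1$, equaling $d^n$ forces $R_P(\psi)\geq d^n$.

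It remains to treat the equality case. If $\psi$ is a pure stabilizer state with stabilizer group $S$ of order $d^n$, I would use the resolution $\proj{\psi}=\frac{1}{d^n}\sum_{g\in S}\tilde g$, each $\tilde g$ being a Weyl operator times a unimodular phase; reading $\Xi_{\psi}$ off this expansion shows $\mathrm{Supp}(\Xi_{\psi})$ is exactly the image of $S$ in $V^n$, of size $d^n$, so $R_P(\psi)=d^n$. Conversely, suppose $R_P(\psi)=d^n$. Then the Plancherel identity---$d^n$ nonnegative terms, each $\leq 1$, summing to $d^n$---forces $\abs{\Xi_{\psi}(\vec p,\vec q)}=1$ for every $(\vec p,\vec q)\in\mathrm{Supp}(\Xi_{\psi})$. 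Thus the set $S$ of \eqref{0109shi7} equals $\mathrm{Supp}(\Xi_{\psi})$, has order $d^n$, and $\CMM(\psi)=\psi$; by Proposition \ref{0109lem1} together with Lemmas \ref{0106lem1}--\ref{0106lem3}, $\CMM(\psi)$ is an MSPS whose associated abelian Weyl group is $S$, so, exactly as in the proof of Proposition \ref{0109lem1}, $\mathrm{rank}(\CMM(\psi))=d^{\,n-n}=1$. A rank-one MSPS is precisely a pure stabilizer state, whence $\psi=\CMM(\psi)$ is a pure stabilizer state.

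The only genuine obstacle is the converse direction of the equality case, and it is entirely funneled through the structural Lemmas \ref{0106lem1}--\ref{0106lem3}: these are what promote ``$\abs{\Xi_{\psi}}\equiv 1$ on a subset of $V^n$ of size $d^n$'' to ``$\psi$ is the joint eigenvector of a maximal abelian group of Weyl operators,'' which is the definition of a pure stabilizer state. Everything else is routine bookkeeping with the Weyl expansion \eqref{0109shi5} and the Plancherel identity.
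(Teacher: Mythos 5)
The paper does not prove this lemma at all---it is imported from \cite{Bu19} with only a citation---so there is no in-paper argument to compare yours against. Judged on its own, your proof is correct and complete. Part (2) is the standard observation that Clifford conjugation permutes the Weyl basis up to unimodular phases, hence permutes $\mathrm{Supp}(\Xi_\psi)$ bijectively. For part (1), the Plancherel identity $\sum_{(\vec p,\vec q)}|\Xi_\psi(\vec p,\vec q)|^2=d^n\Tr{\psi^2}=d^n$ combined with $|\Xi_\psi|\le 1$ gives the lower bound, and the equality analysis is handled cleanly: the forward direction by expanding the stabilizer projection $\proj{\psi}=\frac{1}{d^n}\sum_{g\in S}\tilde g$ (noting, as you implicitly do, that distinct elements of $S$ project to distinct points of $V^n$), and the converse by observing that saturation of the bound forces $|\Xi_\psi|\equiv 1$ on the support, so that $\CMM(\psi)=\psi$ and Proposition \ref{0109lem1} identifies $\psi$ as an MSPS of rank $d^{n-n}=1$, i.e.\ a pure stabilizer state. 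Routing the converse through the paper's own structural Lemmas \ref{0106lem1}--\ref{0106lem3} and Proposition \ref{0109lem1} is a nice economy; it is exactly the machinery needed to promote ``unimodular characteristic function on a set of size $d^n$'' to ``joint eigenvector of a maximal abelian Weyl subgroup,'' and it makes the cited lemma self-contained within this paper's framework.
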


\begin{prop}[\bf Properties of magic gap]\label{prop:magap}
The magic gap ($MG$) satisfies the following properties:

(1) The magic gap of $\rho$ is 0, iff $\rho$ is an MSPS. In addition, for any state $\rho$ that is not an MSPS, we have
\begin{align}
  0< MG(\rho)\leq 1-\sqrt{ \frac{d^n\Tr{\rho^2}-d^k}{R_P(\rho)-d^k}}\;,  
\end{align}
where the Pauli rank $R_P(\rho)=|\text{Supp}(\Xi_{\rho})|$ is given in Definition \ref{def:PauRank}, and 
$d^k$ is the size of the set  $S=\set{(\vec{p}, \vec q):|\Xi_{\rho}(\vec{p}, \vec q)|=1}$.

(2) The $MG$ is invariant under a Clifford unitary acting on $\rho$. 

(3) The magic gap $MG(\rho_1\ot\rho_2)= \min\set{MG(\rho_1),MG(\rho_2)}$ if $\rho_1$ and $\rho_2$ are not MSPS.
\end{prop}

\begin{proof}
Properties (2) and (3) follow directly from the definition and factorization of characteristic function in Definition \ref{def:charfn}, so we only need to prove (1). 
First, $MG(\rho)\geq 0$ by definition. Note
\begin{eqnarray*}
\rho=\mathcal{M}(\rho)+\frac{1}{d^n}\sum_{(\vec{p}, \vec q):0<|\Xi_{\rho}(\vec{p}, \vec q)|<1}\Xi_{\rho}(\vec{p}, \vec q)w(\vec{p}, \vec q) \;,
\end{eqnarray*}
so using Proposition~\ref{0109lem1}, $\mathcal{M}(\rho)$ is an MSPS. Hence $MG(\rho)=0$ iff $\rho$ is an MSPS. 
In addition, the set $S=\set{(\vec{p}, \vec q):|\Xi_{\rho}(\vec{p}, \vec q)|=1}$ has size $d^k$ for some  $0\leq k\leq n$. 
If $k<n$, then
\begin{eqnarray*}
\sum_{(\vec{p}, \vec q):0<|\Xi_{\rho}(\vec{p}, \vec q)|<1}|\Xi_{\rho}(\vec{p}, \vec q)|^2
+d^k=d^n\Tr{\rho^2} \;,
\end{eqnarray*}
and thus
\begin{eqnarray*}
\max_{(\vec{p}, \vec q)\in  \text{Supp}(\Xi_{\rho}): |\Xi_{\rho}(\vec{p}, \vec q)|\neq 1}| \Xi_{\rho}(\vec{p}, \vec q)|^2
\geq \frac{d^n \Tr{\rho^2} -d^k}{R_P(\rho)-d^k} \;.
\end{eqnarray*}
\end{proof}

Since $-\log(1-x)=x+O(x^2)$, let us consider the logarithmic magic gap. 

\begin{Def}[\bf Logarithmic magic gap (LMG)]
Given  an $n$-qudit state  $\rho$,  the logarithmic magic gap (LMG) of  $\rho$ is
\begin{eqnarray*}
LMG(\rho)=-\log\max_{(\vec{p}, \vec q)\in  \text{Supp}(\Xi_{\rho}): |\Xi_{\rho}(\vec{p}, \vec q)|\neq 1}|\Xi_{\rho}(\vec{p}, \vec q)| \;.
\end{eqnarray*}
If $\set{(\vec{p},\vec q)\in  \text{Supp}(\Xi_{\rho}): |\Xi_{\rho}(\vec{p}, \vec q)|\neq 1}=\emptyset$,  define $\max_{(\vec{p},\vec q)\in  \text{Supp}(\Xi_{\rho}): |\Xi_{\rho}(\vec{p}, \vec q)|\neq 1}|\Xi_{\rho}(\vec{p}, \vec q)|=1$, and then 
$LMG(\rho)=0$.
\end{Def}

\begin{prop}[\bf Properties of the LMG]\label{prop:lmagap}
The  $LMG$ satisfies: 

(1)  $LMG(\rho)=0$ iff $\rho$ is an MSPS, i.e., $\rho=\CMM(\rho)$.
In addition, if $\rho$ is not an MSPS, we have  $0<LMG(\rho)\leq \frac{1}{2}\log\left[\frac{R_P(\rho)-d^k}{d^n\Tr{\rho^2}-d^k}\right]$;

(2) $LMG$ is invariant under the Clifford unitaries;

(3) $LMG(\rho_1\ot\rho_2)= \min\set{LMG(\rho_1),LMG(\rho_2)}$ if $\rho_1$ and $\rho_2$ are not MSPS.

\end{prop}

\begin{proof}
The proof is similar to that of Proposition \ref{prop:magap}.
\end{proof}

We now consider the application of the magic gap in the unitary synthesis. 
 In an $n$-qubit system,  universal quantum 
circuits usually   consist of Clifford unitary gates and $T$ gates, where
\begin{equation}
    T=\left[\begin{array}{cc}
       1  & 0 \\
       0  & e^{i\pi/4}
    \end{array}\right].
\end{equation}
Since the Clifford unitaries can be efficiently simulated  on 
a classical computer, see the Gottesman-Knill theorem~\cite{gottesman1998heisenberg}, the $T$ gates are a resource for quantum computational advantage. 
Hence, it is important to determine how many $T$ gates are necessary to generate the target unitary. 
We find that the LMG  provides a lower bound on the number of $T$ gates.

\begin{prop}\label{prop:gap_syn}
Given an $n$-qubit  input state $\rho$, after applying quantum circuit $V_N$ which consists of Clifford unitaries and $N$ magic $T$ gates, 
the LMG of output state $V_N\rho V^\dag_N$ and of input state $\rho$ have the following relation,
\begin{eqnarray*}
LMG(V_N\rho V^\dag_N)\leq LMG(\rho)+\frac{N}{2} \;.
\end{eqnarray*}
\end{prop}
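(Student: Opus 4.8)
The plan is to track how a single $T$ gate can change the logarithmic magic gap, and then iterate. The key quantity to control is $\max_{(\vec p,\vec q)}|\Xi_{\rho}(\vec p,\vec q)|$ over those $(\vec p,\vec q)$ with $|\Xi_\rho(\vec p,\vec q)|\neq 1$; call its negative logarithm the $LMG$. Since $V_N = C_N T_N C_{N-1}T_{N-1}\cdots C_1 T_1 C_0$ where each $C_j$ is Clifford and each $T_j$ is a $T$ gate (on some qubit), and by Proposition~\ref{prop:magap}(2) (respectively Proposition~\ref{prop:lmagap}(2)) the $LMG$ is invariant under Clifford conjugation, it suffices to show that conjugation by one $T$ gate changes the $LMG$ by at most $\tfrac12$, i.e.
\begin{eqnarray*}
LMG(T\rho T^\dag)\leq LMG(\rho)+\tfrac12 \; ,
\end{eqnarray*}
and then compose $N$ such bounds together with the Clifford invariance.

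First I would compute the action of $T$-conjugation on the Weyl (Pauli) basis for a single qubit. One has $TXT^\dag = \tfrac{1}{\sqrt2}(X+Y)$ (up to the standard phase conventions of $w(p,q)$), while $TZT^\dag = Z$ and $T$ fixes $I$. Consequently, for an $n$-qubit state, conjugating $\rho$ by a $T$ gate on a fixed qubit expresses each new characteristic-function value $\Xi_{T\rho T^\dag}(\vec p,\vec q)$ as a linear combination, with coefficients of modulus $\tfrac{1}{\sqrt2}$ or $1$, of at most two of the old values $\Xi_\rho(\cdot)$; more precisely the Weyl operators split into those that commute with the $T$-conjugation action (whose coefficients are unchanged) and those that get mapped to a $\tfrac{1}{\sqrt2}$-combination of two Weyl operators differing only in the $X\leftrightarrow Y$ slot on that qubit. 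The crucial structural point is that the $T$ gate is Clifford-like enough that it permutes the Pauli group modulo this two-term mixing, so $|\Xi_{T\rho T^\dag}(\vec p,\vec q)| \leq \tfrac{1}{\sqrt2}\big(|\Xi_\rho(\vec p,\vec q_1)| + |\Xi_\rho(\vec p,\vec q_2)|\big)$ for the mixed terms.

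From this I would extract the $LMG$ bound. Write $m = \max|\Xi_\rho|$ over the non-unit-modulus support, so $2^{-LMG(\rho)} = m$. For any $(\vec p,\vec q)$ in the support of $\Xi_{T\rho T^\dag}$ with $|\Xi_{T\rho T^\dag}(\vec p,\vec q)|\neq 1$: either its value is inherited unchanged from a value of $\Xi_\rho$, in which case either that old value had modulus $1$ (and then so does the new one — excluded) or modulus $\leq m$; or it is the $\tfrac{1}{\sqrt2}$-mixture of two old values, each of modulus $\leq 1$, giving $|\Xi_{T\rho T^\dag}(\vec p,\vec q)|\leq \tfrac{1}{\sqrt2}\cdot 2 \cdot 1 = \sqrt2$ — too weak. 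The fix is to observe that if either of the two mixed old values has modulus exactly $1$, then by Lemma~\ref{0106lem1} it forces the other (which shares the same $Z$-part and differs by a Weyl operator not commuting with it) to have modulus $0$ via Lemma~\ref{0109lem2}, so the mixture has modulus $\tfrac{1}{\sqrt2}$; and if both have modulus $<1$, hence $\leq m$, the mixture has modulus $\leq \sqrt2\, m = 2^{1/2}\,2^{-LMG(\rho)}$. Either way $2^{-LMG(T\rho T^\dag)} \leq \max\{m, \sqrt2\,m, \tfrac{1}{\sqrt2}\} $; since $m\leq 1$ we get $2^{-LMG(T\rho T^\dag)}\leq \sqrt2\, \max\{m,\ldots\}$, i.e. $LMG(T\rho T^\dag)\geq LMG(\rho) - \tfrac12$. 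But we want an upper bound on $LMG(T\rho T^\dag)$, equivalently a lower bound on its argument's modulus; the same case analysis applied to $T^\dag = T^{-1}$ conjugation (which is also Clifford-plus-one-$T$-type up to Cliffords) gives $2^{-LMG(\rho)} = 2^{-LMG(T^\dag(T\rho T^\dag)T)}\geq 2^{-1/2}\,2^{-LMG(T\rho T^\dag)}$, which rearranges to $LMG(T\rho T^\dag)\leq LMG(\rho) + \tfrac12$. Finally, iterating over the $N$ $T$ gates and using Clifford invariance of $LMG$ at each Clifford layer gives $LMG(V_N\rho V_N^\dag)\leq LMG(\rho) + \tfrac{N}{2}$.

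\textbf{Main obstacle.} The delicate point is the two-term mixing step: one must verify that the only way a mixed coefficient can fail the $\tfrac{1}{\sqrt2}$ bound is when one of the two mixed characteristic-function values already has modulus $1$, and then invoke Lemmas~\ref{0106lem1} and~\ref{0109lem2} to kill the partner term. This requires checking that the two Weyl operators appearing in the mixture indeed fail to commute (so that the modulus-$1$ hypothesis on one forces the vanishing of the other), which comes down to the explicit single-qubit computation $T Z^p X^q T^\dag$ and the symplectic form; getting the phase and commutation bookkeeping exactly right in the $d=2$ Weyl convention $w(p,q)=i^{-pq}Z^pX^q$ is where care is needed. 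Everything else is routine composition.
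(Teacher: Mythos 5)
Your reduction to a single $T$-gate step via Clifford invariance, and the local computation $TXT^\dag=\tfrac{1}{\sqrt2}(X+Y)$, $TYT^\dag=\tfrac{1}{\sqrt2}(X-Y)$ together with Lemma~\ref{0109lem2} killing the partner of a modulus-one term, all match the paper. The gap is in the step that converts the two-term mixing into the $LMG$ inequality. Writing $m(\rho)$ for the largest modulus of $\Xi_\rho$ on its non-unit support, your case analysis honestly yields only
\begin{equation*}
m(T\rho T^\dag)\ \le\ \max\Bigl\{\sqrt2\,m(\rho),\ \tfrac{1}{\sqrt2}\Bigr\},
\end{equation*}
and the phrase ``since $m\le 1$ we get $\ldots\le\sqrt2\,m$'' is not valid: when $m(\rho)<\tfrac12$ the branch $\tfrac{1}{\sqrt2}$ dominates and is an absolute floor, not a multiple of $m(\rho)$. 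Concretely, for $\sigma=\proj{+}\ot\tfrac12(I+\epsilon Z)$ and a $T$ on the first qubit one has $m(\sigma)=\epsilon$ but $m(T_1\sigma T_1^\dag)=\tfrac{1}{\sqrt2}$, because the modulus-one value at $X\ot I$ splits into two values of modulus $\tfrac{1}{\sqrt2}$ that land in the non-unit support. Because of this extra branch the reversal trick does not close: applying the corrected upper bound to $T^\dag$ acting on $T\rho T^\dag$ gives only $m(\rho)\le\max\{\sqrt2\,m(T\rho T^\dag),\tfrac{1}{\sqrt2}\}$, which yields the desired $m(T\rho T^\dag)\ge m(\rho)/\sqrt2$ only when $m(\rho)>\tfrac{1}{\sqrt2}$ and gives nothing otherwise. (There is also a direction slip in your displayed rearrangement of the $T^\dag$ step, but that is cosmetic next to the missing branch.)

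The paper's proof is structurally different: it never proves an upper bound and reverses it. Instead it fixes the maximizer $(\vec p_*,\vec q_*)$ of $|\Xi_\rho|$ on the non-unit support, follows that single value through the $T$-conjugation, and shows via $\max\{|a+b|,|a-b|\}\ge\sqrt{|a|^2+|b|^2}$ that one of the at most two positions it feeds into retains modulus at least $\tfrac{1}{\sqrt2}|\Xi_\rho(\vec p_*,\vec q_*)|$ --- a direct lower bound on the new maximum. To repair your argument you would need a lower bound of this kind; and note that even then one must check that the surviving value is eligible for the new maximum, i.e.\ nonzero and of modulus strictly less than one. The configuration that breaks your reversal --- a pair of $\tfrac{1}{\sqrt2}$ values recombining into a modulus-one value, which is then excluded from the set over which $LMG(T\rho T^\dag)$ is computed --- is exactly the case that must be ruled out or handled separately in the direct approach as well, so this is not a cosmetic issue.
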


\begin{proof}
Without loss of generality, we may assume the single-qubit $T$ gate acts on the first qubit, i.e., $T_1$.
We only need to show
\begin{eqnarray*}
\max_{(\vec{p},\vec q)\in  \text{Supp}(\Xi_{\rho}): |\Xi_{T_1\rho T^\dag_1}(\vec{p}, \vec q)|\neq 1}|\Xi_{T_1\rho T^\dag_1}(\vec{p},\vec q)|
\geq \frac{1}{\sqrt{2}}\max_{(\vec{p},\vec q)\in  \text{Supp}(\Xi_{\rho}): |\Xi_{\rho}(\vec{p}, \vec q)|\neq 1}|\Xi_{\rho}(\vec{p}, \vec q)| \;.
\end{eqnarray*}
Denote $S=\set{(\vec{p},\vec q):|\Xi_{\rho}(\vec{p}, \vec q)|=1}$ and $S^C=\text{Supp}[\Xi_{\rho}]\setminus S$, then 
\begin{eqnarray*}
\rho=\frac{1}{d^n}\sum_{(\vec{p},\vec q)\in S}\Xi_{\rho}(\vec{p}, \vec q)w(\vec{p}, \vec q)
+\frac{1}{d^n}\sum_{(\vec{p},\vec q)\in S^C }\Xi_{\rho}(\vec{p}, \vec q)w(\vec{p}, \vec q) \;.
\end{eqnarray*}
Consider the following cases:
(1)
If $S^C=\emptyset$, i.e., $\rho$ is an MSPS, 
then \begin{eqnarray*}
\max_{(\vec{p},\vec q)\in  \text{Supp}(\Xi_{\rho}): |\Xi_{T_1\rho T^\dag_1}(\vec{p},\vec q)|\neq 1}|\Xi_{T_1\rho T^\dag_1}(\vec{p},\vec q)|
\geq \frac{1}{\sqrt{2}}\max_{(\vec{p}, \vec q)\in  \text{Supp}(\Xi_{\rho}): |\Xi_{\rho}(\vec{p}, \vec q)|\neq 1}|\Xi_{\rho}(\vec{p}, \vec q)|
\end{eqnarray*}
comes directly from the fact that $TXT^\dag=\frac{1}{\sqrt{2}}X+\frac{1}{\sqrt{2}}Y$ and 
$TYT^\dag=\frac{1}{\sqrt{2}}X-\frac{1}{\sqrt{2}}Y$, where $Y=iXZ$.

(2) 
If $S^C\neq\emptyset$,
let
\begin{eqnarray*}
S^C_I&=&\set{(\vec{p}, \vec q)\in V^{n-1}:(0, \vec p, 0, \vec q)\in S^C } \;,\\
S^C_X&=&\set{(\vec{p}, \vec q)\in V^{n-1}:(0, \vec p, 1, \vec q)\in S^C} \;,\\
S^C_Y&=&\set{(\vec{p}, \vec q)\in V^{n-1}:(1, \vec p, 1, \vec q)\in S^C } \;,\\
S^C_Z&=&\set{(\vec{p}, \vec q)\in V^{n-1}:(1, \vec p, 0, \vec q)\in S^C} \;.
\end{eqnarray*}

(2.1)
If the second-largest absolute value of the characteristic function $\Xi_{\rho}$ is taken from either $S^C_Z$ or $S^C_I$, i.e., there exists $(\vec{p}_*, \vec q_*)\in S^C_Z$ or $S^C_I$,
\begin{eqnarray*}
\max_{(\vec{p}, \vec q)\in  \text{Supp}(\Xi_{\rho}): |\Xi_{\rho}(\vec{p}, \vec q)|\neq 1}|\Xi_{\rho}(\vec{p}, \vec q)|
=|\Xi_{\rho}(0,\vec p_*, 0, \vec q_*)| \;,
\end{eqnarray*}
or 
\begin{eqnarray*}
\max_{(\vec{p}, \vec q)\in  \text{Supp}(\Xi_{\rho}): |\Xi_{\rho}(\vec{p}, \vec q)|\neq 1}|\Xi_{\rho}(\vec{p}, \vec q)|
=|\Xi_{\rho}(1, \vec p_*, 0, \vec q_*)| \;.
\end{eqnarray*}
Then 
\begin{eqnarray*}
\max_{(\vec{p}, \vec q)\in  \text{Supp}(\Xi_{\rho}): |\Xi_{T_1\rho T^\dag_1}(\vec{p}, \vec q)|\neq 1}|\Xi_{T_1\rho T^\dag_1}(\vec{p}, \vec q)|
\geq \max_{(\vec{p}, \vec q)\in  \text{Supp}(\Xi_{\rho}): |\Xi_{\rho}(\vec{p}, \vec q)|\neq 1}|\Xi_{\rho}(\vec{p}, \vec q)| \;.
\end{eqnarray*}

(2.2) If the second-largest absolute value of the characteristic function $\Xi_{\rho}$ is taken from $S^C_X$, 
i.e., there exists $(\vec{p}_*, \vec q_*)\in S^C_X$ such that
\begin{eqnarray*}
\max_{(\vec{p}, \vec q)\in  \text{Supp}(\Xi_{\rho}): |\Xi_{\rho}(\vec{p}, \vec q)|\neq 1}|\Xi_{\rho}(\vec{p}, \vec q)|
=|\Xi_{\rho}(0, \vec p_*, 1, \vec q_*)| \;.
\end{eqnarray*}
Let 
\begin{eqnarray*}
S_X&=&\set{(\vec{p}, \vec q) \in V^{n-1}:(0, \vec p, 1, \vec q)\in S} \;,\\
S_Y&=&\set{(\vec{p}, \vec q) \in V^{n-1}:(1, \vec p, 1, \vec q)\in S} \;.
\end{eqnarray*}
Then 
\begin{eqnarray*}
S_X\cap S_Y=\emptyset \;,\quad
S_X\cap S^C_X=\emptyset \;,\quad
S_Y\cap S^C_Y=\emptyset \;.
\end{eqnarray*}
By Lemma \ref{0109lem2}, we have 
\begin{eqnarray*}
S_X\cap S^C_Y=\emptyset \;,\quad
S_Y\cap S^C_X=\emptyset \;.
\end{eqnarray*}
Hence $(\vec{p}_*, \vec q_*)\notin S_X$ or $S_Y$.
Then 
\begin{eqnarray*}
\rho&=&\frac{1}{d^n}
\sum_{(\vec{p}, \vec q)\in S_X}
\Xi_{\rho}(0, \vec p, 1, \vec q)X\ot w(\vec{p}, \vec q)
+\frac{1}{d^n}
\sum_{(\vec{p}, \vec q)\in S_Y}
\Xi_{\rho}(1,\vec p, 1, \vec q)Y\ot w(\vec{p}, \vec q)\\
&+&\frac{1}{d^n}
\sum_{(\vec{p}, \vec q)\in S^C_X}
\Xi_{\rho}(0,\vec p, 1, \vec q)X\ot w(\vec{p}, \vec q)
+\frac{1}{d^n}
\sum_{(\vec{p}, \vec q)\in S^C_Y}
\Xi_{\rho}(1, \vec p, 1, \vec q)Y\ot w(\vec{p}, \vec q)\\
&+&\Delta_{Z_1}(\rho) \;,
\end{eqnarray*}
and 
\begin{eqnarray*}
T_1\rho T^\dag_1&=&\frac{1}{d^n}
\sum_{(\vec{p}, \vec q)\in S_X}
\Xi_{\rho}(0, \vec p, 1, \vec q)\frac{1}{\sqrt{2}}(X+Y)\ot w(\vec{p}, \vec q)\\
&+&\frac{1}{d^n}
\sum_{(\vec{p}, \vec q)\in S_Y}
\Xi_{\rho}(1,\vec p, 1, \vec q)\frac{1}{\sqrt{2}}(X-Y)\ot w(\vec{p}, \vec q)\\
&+&\frac{1}{d^n}
\sum_{(\vec{p}, \vec q)\in S^C_X}
\Xi_{\rho}(0, \vec p, 1, \vec q)\frac{1}{\sqrt{2}}(X+Y)\ot w(\vec{p}, \vec q)\\
&+&\frac{1}{d^n}
\sum_{(\vec{p}, \vec q)\in S^C_Y}
\Xi_{\rho}(1, \vec p, 1, \vec q) \frac{1}{\sqrt{2}}(X-Y)\ot w(\vec{p}, \vec q)\\
&+&\Delta_{Z_1}(\rho) \;.
\end{eqnarray*}
Here, $\Delta_{Z_1}$ denotes the completely dephasing channel 
$\Delta_{Z_1}(\rho)=\sum_j\bra{j}\rho\ket{j}\proj{j}$, with $\set{\ket{j}}$ being the eigenbasis for 
$Z_1$. (This definition of a completely dephasing channel is generalized to arbitrary Hermitian operators in \eqref{Def:CDC}.)

Hence  there are only two cases:
(a) $(\vec{p}_*, \vec q_*)\in S^C_Y$;
(b) $(\vec{p}_*, \vec q_*)\notin S^C_Y$.

(2.2.1)
If $(\vec{p}_*, \vec q_*)\in S^C_Y$, then 
\begin{eqnarray*}
\Xi_{T_1\rho T^\dag_1}(0,\vec p_*, 1, \vec q_*)=
\frac{1}{\sqrt{2}}[\Xi_{\rho}(1, \vec p_*,1, \vec q_*)+\Xi_{\rho}(0, \vec p_*, 1, \vec q_*)] \;,\\
\Xi_{T_1\rho T^\dag_1}(1, \vec p_*, 1, \vec q_*)=
\frac{1}{\sqrt{2}}[\Xi_{\rho}(1, \vec p_*,1, \vec q_*)-\Xi_{\rho}(0, \vec p_*,1, \vec q_*)] \;.
\end{eqnarray*}
Hence, we have 
\begin{align*}
&\max\set{|\Xi_{T_1\rho T^\dag_1}(0, \vec p_*, 1, \vec q_*)|,|\Xi_{T_1\rho T^\dag_1}(1, \vec p_*, 1, \vec q_*)|}\\
&\qquad\qquad\geq \frac{1}{\sqrt{2}}\sqrt{|\Xi_{\rho}(1, \vec p_*, 1, \vec q_*)|^2+|\Xi_{\rho}(0, \vec p_*, 1, \vec q_*)|^2}\\
&\qquad\qquad\geq \frac{1}{\sqrt{2}}|\Xi_{\rho}(0, \vec p_*, 1, \vec q_*)| \;.
\end{align*}

(2.2.2)
If $(\vec{p}_*, \vec q_*)\notin S^C_Y$, 
then 
\begin{eqnarray*}
\Xi_{T_1\rho T^\dag_1}(0, \vec p_*, 1, \vec q_*)=
\frac{1}{\sqrt{2}}\Xi_{\rho}(0,\vec p_*, 1, \vec q_*) \;,\\
\Xi_{T_1\rho T^\dag_1}(1, \vec p_*, 1, \vec q_*)=
\frac{1}{\sqrt{2}}\Xi_{\rho}(0, \vec p_*, 1, \vec q_*) \;.
\end{eqnarray*}
Hence
\begin{eqnarray*}
|\Xi_{T_1\rho T^\dag_1}(0, \vec p_*, 1, \vec q_*)|=
\frac{1}{\sqrt{2}}|\Xi_{\rho}(0, \vec p_*, 1, \vec q_*)| \;.
\end{eqnarray*}

(2.3) If the second largest absolute value of the characteristic function $\Xi_{\rho}$  is taken from $S^C_Y$, i.e., there exists $(\vec{p}_*, \vec q_*)\in S^C_Y$ such that
\begin{eqnarray*}
\max_{(\vec{p}, \vec q)\in  \text{Supp}(\Xi_{\rho}): |\Xi_{\rho}(\vec{p}, \vec q)|\neq 1}|\Xi_{\rho}(\vec{p}, \vec q)|
=|\Xi_{\rho}(1, \vec p_*,1, \vec q_*)| \;.
\end{eqnarray*}
We can prove this case similarly as the above case (2.2).
\end{proof}

\section{A framework for convolution of quantum states}

We introduce the convolution of two $n$-qudit systems,  denoted by  
$\mathcal{H}_A=\mathcal{H}^{\ot n}$ and $\mathcal{H}_B=\mathcal{H}^{\ot n}$  where
$\dim \mathcal{H}=d$. 

\subsection{Definition of the discrete convolution}
Given a prime number $d$,
consider the $2\times 2 $ invertible  matrix of parameters, 
\begin{equation}\label{0125shi1}
G=\left[
\begin{array}{cc}
g_{00}&g_{01}\\
g_{10}&g_{11}
\end{array}
\right] \;,
\end{equation}
with entries in $\mathbb{Z}_d$, 
satisfying $\det \,G=g_{00}g_{11}-g_{01}g_{10} \not\equiv 0\mod d$. The inverse of $G$ mod $d$ is  
\begin{equation*}
G^{-1}=N\left[
\begin{array}{cc}
\phantom{-}g_{11}&-g_{01}\\
-g_{10}&\phantom{-}g_{00}
\end{array}
\right] \;, \quad
\text{where } N=(\det \,G)^{-1}\;.
\end{equation*}

 \begin{Def}[\bf Classes of G]\label{NontrivialMatrix }
 \begin{enumerate}[i.]
\item{} The invertible matrix $G$ in \eqref{0125shi1} is called \emph{nontrivial} if at most one of $ g_{11}, g_{10}, g_{01}, g_{00}$ is 0 mod $d$. 

\item{} 
A nontrivial matrix $G$ is called \emph{odd-parity positive} if neither  $g_{01}$ nor $g_{10} $ is  $0$ mod  $d$. 

\item{}
A nontrivial matrix $G$ is called \emph{even-parity positive} if  neither  $g_{00}$ nor $g_{11} $ is 0 mod $  d$.

\item{}
A nontrivial matrix $G$ is called \emph{positive} if it is both odd-parity and even-parity positive.
\end{enumerate}
 \end{Def}
 
 \begin{lem}
For any odd prime number $d$, there always exists a positive and invertible matrix $G$ in $\mathbb{Z}_d$. However,
for $d=2$, there is no positive and  invertible matrix $G$. 
\end{lem}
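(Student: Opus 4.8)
The plan is to prove the two halves of the statement by a combination of explicit construction (for odd $d$) and exhaustive case analysis (for $d=2$), working directly from Definition \ref{NontrivialMatrix }. Recall that $G$ is \emph{positive} if it is invertible, nontrivial, and all four entries $g_{00}, g_{01}, g_{10}, g_{11}$ are nonzero mod $d$; in that situation ``at most one entry is $0$'' forces in fact \emph{no} entry is $0$, so the nontriviality condition is automatic once all entries are nonzero and $\det G \not\equiv 0$.

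\textbf{The odd prime case.} For $d$ an odd prime, I would simply exhibit a matrix. The natural candidate is
\[
G=\begin{bmatrix} 1 & 1 \\ 1 & 2^{-1} \end{bmatrix},
\]
where $2^{-1}=\frac{d+1}{2}$ is the inverse of $2$ in $\mathbb{Z}_d$ introduced in \S\ref{sec:pre}. All four entries are nonzero mod $d$ (here we use $d\ge 3$, so $2^{-1}\not\equiv 0$ and $1\not\equiv 0$), and $\det G = 2^{-1}-1 = 2^{-1}(1-2) = -2^{-1} \not\equiv 0 \bmod d$ since $2^{-1}$ is a unit. Hence $G$ is invertible and positive. (One should double-check the tiny edge case $d=3$: then $2^{-1}=2$, the entries are $1,1,1,2$, and $\det = 2-1 = 1\ne 0$, so it works.) Any equally explicit choice, e.g. $\begin{bmatrix}1&1\\ 1&-1\end{bmatrix}$ with $\det=-2\not\equiv 0$ for $d>2$, would do just as well; I would present one and remark that others exist.

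\textbf{The $d=2$ case.} Here I argue by contradiction/exhaustion. Over $\mathbb{Z}_2$ the only nonzero entry value is $1$, so a matrix with all entries nonzero is forced to be $G=\begin{bmatrix}1&1\\1&1\end{bmatrix}$, whose determinant is $1\cdot 1-1\cdot 1=0\bmod 2$. Thus $G$ is not invertible, contradicting the requirement that a positive matrix be invertible. Since positivity demands all four entries be nonzero, no invertible positive $G$ exists over $\mathbb{Z}_2$. I would phrase this cleanly: if $G$ is positive then $g_{00}=g_{01}=g_{10}=g_{11}=1$ in $\mathbb{Z}_2$, forcing $\det G=0$, so $G$ cannot be invertible.

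\textbf{Main obstacle.} There is essentially no deep obstacle; the only thing requiring care is bookkeeping the small primes (especially $d=3$) in the odd case to make sure the chosen entries are genuinely nonzero and the determinant is genuinely a unit, and being careful to state that for a \emph{positive} matrix the nontriviality hypothesis degenerates into ``all four entries nonzero.'' I would make sure the write-up records the determinant computation explicitly so the reader can verify invertibility mod $d$ at a glance.
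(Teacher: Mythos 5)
Your proof is correct and follows essentially the same route as the paper's: exhibit one explicit all-nonzero invertible matrix for odd $d$ (the paper uses $\left[\begin{smallmatrix}1&1\\1&d-1\end{smallmatrix}\right]$ with determinant $d-2$, which is the alternative you mention), and note that over $\mathbb{Z}_2$ the only all-ones candidate has determinant $0$. No gaps.
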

\begin{proof}
For any odd prime number $d$, let  
\begin{equation*}
G=\left[
\begin{array}{cc}
1&1\\
1&d-1
\end{array}
\right]\;.
\end{equation*}
It is clearly positive, and it is invertible as $\det\,G=d-2\not\equiv 0\mod d$. 
For $d=2$, the only positive matrix in $\mathbb{Z}_2$
is 
\begin{equation*}
G=\left[
\begin{array}{cc}
1&1\\
1&1
\end{array}
\right] \;,
\end{equation*}
which is not invertible as the $\det\,G\equiv 0\mod 2$.
\end{proof}

\begin{Def}[\bf Key Unitary]\label{def:Cli_unitary}
Given a nontrivial and invertible matrix $G$.
The key unitary $U_G$ on $\CHH_A\ot \CHH_B$ is
\begin{eqnarray}\label{eq:Cli_unitary}
U_G=\sum_{\vec{i},\vec j}\ket{\vec{i}'}\bra{\vec i}\ot \ket{\vec{j}'}\bra{\vec j} \;,
\end{eqnarray}
 where the vectors $| \vec i \rangle = |  i_1 \rangle \otimes \cdots \otimes |  i_n \rangle \in \CHH_A$ and $| \vec j \rangle = |  j_1 \rangle \otimes \cdots \otimes |  j_n \rangle \in \CHH_B $,
 and $\left[\begin{array}{c}
 i'_k\\
 j'_k
 \end{array}\right]=(G^{-1})^T
 \left[\begin{array}{c}
 i_k\\
 j_k
 \end{array}
 \right]
 $ for any $k\in [n]$.
That is, $U_G$ maps the vector $\ket{\vec i} \ot\ket{ \vec j} $ to the vector $ \ket{Ng_{11}\vec{i}-Ng_{10}\vec{j}}\ot\ket{ -Ng_{01}\vec{i}+Ng_{00}\vec{j}}$,  
where $N=(det\,G)^{-1}=(g_{00}g_{11}-g_{01}g_{10})^{-1}$.
\end{Def}

By checking the conjugate action of the key unitary $U_G$ on Weyl operators, we show that 
$U_G$ is a Clifford unitary.

\begin{prop}[\bf The key unitary is Clifford]\label{prop:U_wely}
The key unitary $U_G$ satisfies:  
\begin{align*}
&U_G w(\vec{p}_1,\vec{q}_1)\ot w(\vec{p}_2,\vec{q}_2) U^\dag_G\\
&=w(g_{00}\vec{p}_1+g_{01}\vec{p}_2, Ng_{11}\vec{q}_1-Ng_{10}\vec{q}_2)\ot w(g_{10}\vec{p}_1+g_{11}\vec{p}_2,-Ng_{01}\vec{q}_1+Ng_{00}\vec{q}_2) \;,
\end{align*}
for $(\vec{p}_1, \vec{q}_1), (\vec{p}_2, \vec{q}_2)\in V^n$.
\end{prop}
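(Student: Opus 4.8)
The plan is to verify the conjugation formula by direct computation, reducing the $n$-qudit statement to the single-qudit case via the tensor-product structure of both the key unitary $U$ and the Weyl operators. First I would observe that $U = U_1 \ot \cdots \ot U_n$ where each $U_k$ acts on the pair of qudits $(\CHH_A)_k \ot (\CHH_B)_k$ by the same rule $\ket{i_k}\ot\ket{j_k}\mapsto \ket{Ng_{11}i_k - Ng_{10}j_k}\ot\ket{-Ng_{01}i_k+Ng_{00}j_k}$; since $w(\vec p_1,\vec q_1)\ot w(\vec p_2,\vec q_2) = \bigotimes_k \bigl(w(p_{1,k},q_{1,k})\ot w(p_{2,k},q_{2,k})\bigr)$ (after reordering the tensor factors so that the $A$-part and $B$-part of slot $k$ sit together), it suffices to prove the identity for $n=1$.

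For $n=1$, I would work from the definition $w(p,q) = \chi(-2^{-1}pq)Z^pX^q$ (odd $d$; the $d=2$ case is parallel with $i^{-pq}$ in place of $\chi(-2^{-1}pq)$). The cleanest route is to track how $U$ transforms the generators $Z_A, X_A, Z_B, X_B$ and then assemble. Using $U\ket{i}\ot\ket{j} = \ket{a i + b j}\ot\ket{c i + e j}$ with $\begin{bmatrix} a & b \\ c & e\end{bmatrix} = (G^{-1})^T$, one computes $U (X_A\ot I)U^\dag$, etc., by acting on basis vectors: shifting $i\mapsto i+1$ on the input side corresponds, on the output side, to shifting the first output register by $a$ and the second by $c$, so $U(X_A\ot I)U^\dag = X_A^{a'}\ot X_B^{c'}$ for the appropriate entries $a', c'$ of $(G^{-1})^{-T} = G^T$; similarly for $X_B$, while the $Z$ operators transform by the transpose-inverse pattern because $Z$ is diagonal and picks up the linear form $i\mapsto \chi(i)$ which composes with the change of variables. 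Carrying this out gives $U X_A U^\dag \ot$-type formulas with coefficients $g_{00}, g_{01}$ in the $X$-exponents and $Ng_{11}, -Ng_{10}$ (resp.) in the $Z$-exponents, matching the claimed right-hand side once one checks that the change of variables $(i,j)\mapsto (G^{-1})^T(i,j)^T$ is inverted by multiplication by $G^T$.

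Then I would reassemble $w(p_1,q_1)\ot w(p_2,q_2) = \chi(\text{phase})\, Z_A^{p_1}X_A^{q_1}\ot Z_B^{p_2}X_B^{q_2}$, conjugate each factor using the generator formulas, and collect powers of $Z$ and $X$ on each output register. The main bookkeeping obstacle — and the step I expect to be the only real work — is tracking the $\chi$-phases: conjugation reorders products of $Z$'s and $X$'s with mixed exponents, each reordering contributes a power of $\chi$ via the commutation relation $Z^aX^b = \chi(ab)X^bZ^a$, and one must check that all these phases, together with the original $\chi(-2^{-1}p_iq_i)$ factors and the determinant factor $N$, recombine exactly into the two factors $\chi(-2^{-1}(\cdot)(\cdot))$ implicit in the Weyl operators on the right-hand side. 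A convenient way to sidestep sign-chasing is to note that both sides are, up to a phase, products of Paulis determined by their $(Z,X)$-exponents, so it is enough to (i) match the exponent vectors — a linear-algebra check that the map on exponents is $(\vec p_1,\vec p_2)\mapsto G(\vec p_1,\vec p_2)^T$ and $(\vec q_1,\vec q_2)\mapsto N\,\mathrm{adj}(G)^T(\vec q_1,\vec q_2)^T$ up to labeling — and (ii) check the phase on a single convenient input, e.g.\ by evaluating $\Tr{\,\cdot\,}$ against $T(\vec 0,\vec 0)$ or by comparing the action on $\ket{\vec 0}$. This reduces the phase verification to one scalar identity in $\Z_d$, which follows from $\det G^{-1} = N$ and the definition of $2^{-1}$. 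For $d=2$ the identical argument runs with $i^{(\cdot)}$ phases and the alternate Weyl convention, and no positivity or nontriviality hypothesis on $G$ is needed here — only invertibility.
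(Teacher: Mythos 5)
Your approach is essentially the paper's: reduce to $n=1$ via the tensor-product structure, compute the conjugates $U(X\ot I)U^\dag$, $U(I\ot X)U^\dag$, $U(Z\ot I)U^\dag$, $U(I\ot Z)U^\dag$ by acting on basis vectors, and reassemble — the paper in fact stops after the four generator computations and leaves the reassembly and phase bookkeeping implicit, which you treat more explicitly. One labeling slip to correct before carrying this out: the output $X$-exponents are $Ng_{11},-Ng_{01}$ and $-Ng_{10},Ng_{00}$, i.e.\ the columns of $(G^{-1})^{T}$ (not of $G^{T}$ as you write), while the output $Z$-exponents are $g_{00},g_{10}$ and $g_{01},g_{11}$, the columns of $G$ — your prose swaps these two families, though the basis-vector computation you describe (shifts for $X$, diagonal characters for $Z$) lands on the correct assignment.
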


\begin{proof}
It is sufficient to verify the statement for single-qudit Weyl operators. We verify the case  $n=1$ in the following lemma.  The general case is similar.
\end{proof}

\begin{lem} The following identities hold:
\begin{align}
U_G(X \otimes I) U^\dag_G
&= X^{Ng_{11}}\otimes X^{-Ng_{01}}\;,
&U_G(I\otimes X) U^\dag_G
&= X^{-Ng_{10}}\otimes X^{Ng_{00}}\;,\\
U_G (Z\otimes I) U^\dag_G
&= Z^{g_{00}}\otimes Z^{g_{10}}\;,
&
U_G(I\otimes Z) U^\dag_G
&=    Z^{g_{01}}\otimes Z^{g_{11}}\;,
\end{align}
where the key unitary $U_G$ is  given in \eqref{eq:Cli_unitary}.
\end{lem}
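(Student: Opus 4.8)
The plan is to verify the four operator identities directly from the definition of the key unitary $U$ on a single qudit, using the action of $U$ on computational basis vectors and the standard commutation/transport rules for the Pauli operators $X$ and $Z$. Set $n=1$, so that $U\ket{i}\ot\ket{j}=\ket{i'}\ot\ket{j'}$ with $\binom{i'}{j'}=(G^{-1})^T\binom{i}{j}$, i.e.\ $i'=Ng_{11}i-Ng_{10}j$ and $j'=-Ng_{01}i+Ng_{00}j$. Since $G^{-1}$ is invertible mod $d$, so is $(G^{-1})^T$, and $U$ is a permutation of the basis $\set{\ket{i}\ot\ket{j}}_{(i,j)\in\Z_d^2}$, hence unitary; its inverse $U^\dag=U^{-1}$ is the basis permutation induced by $G^T$, namely $U^\dag\ket{k}\ot\ket{l}=\ket{g_{00}k+g_{10}l}\ot\ket{g_{01}k+g_{11}l}$ (reading off the entries of $G^T$).

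The computation then proceeds one generator at a time. For $U(X\ot I)U^\dag$, apply both sides to an arbitrary basis vector $\ket{k}\ot\ket{l}$: first $U^\dag$ sends it to $\ket{g_{00}k+g_{10}l}\ot\ket{g_{01}k+g_{11}l}$, then $X\ot I$ shifts the first register by $1$, and then $U$ maps the result back. Tracking the linear change of indices mod $d$, the net effect is the shift $k\mapsto k+Ng_{11}$ on the first register and $l\mapsto l-Ng_{01}$ on the second (the coefficients are exactly the first column of $(G^{-1})^T$, up to sign conventions), which is precisely the action of $X^{Ng_{11}}\ot X^{-Ng_{01}}$. The identity for $U(I\ot X)U^\dag$ is obtained the same way, shifting the second register by $1$ and reading off the second column of $(G^{-1})^T$. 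For the $Z$ generators one uses instead that $Z^a$ is diagonal with eigenvalue $\chi(a\cdot(\text{index}))$: compute $\langle k,l| U(Z\ot I)U^\dag |k',l'\rangle$; since $U^\dag$ and $U$ are basis permutations, this reduces to evaluating $\chi$ on a linear form in $(k,l)$, and matching coefficients shows the phase equals $\chi((g_{00}k+g_{10}l))$ — wait, more precisely the phase picked up equals $\chi(g_{00}k+g_{10}l)$ written in terms of the \emph{output} indices, giving $Z^{g_{00}}\ot Z^{g_{10}}$; similarly $U(I\ot Z)U^\dag=Z^{g_{01}}\ot Z^{g_{11}}$, with exponents the second row of $G$. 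A clean bookkeeping device is to note that conjugation by $U$ acts on the exponent vectors: it sends the $X$-exponent column vector by $(G^{-1})^T$ and the $Z$-exponent column vector by $G$, which is the expected ``symplectic'' behavior and makes all four identities simultaneous consequences of one matrix statement.

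I do not expect a genuine obstacle here; the only thing requiring care is the index/sign bookkeeping mod $d$ — in particular keeping straight which of $G$, $G^T$, $G^{-1}$, $(G^{-1})^T$ governs $U$ versus $U^\dag$ versus the $X$- versus $Z$-transport, and that $N=(\det G)^{-1}$ is carried correctly through the inversion. Once the single-qudit identities are in hand, Proposition~\ref{prop:U_wely} follows by tensoring over the $n$ sites and using the multiplicativity $w(\vec p,\vec q)=\bigotimes_k w(p_k,q_k)$ together with the fact that $U$ acts site-wise by the same $G$, so that the phase factors $\chi(-2^{-1}pq)$ in the definition \eqref{eqn:wpq} of $w(p,q)$ recombine correctly on each side; since the statement of the proposition is phrased directly in terms of $w(\cdot,\cdot)$, one should double-check that the $X$- and $Z$-exponent transformations above, when fed into \eqref{0106shi2}, reproduce exactly the claimed arguments $w(g_{00}\vec p_1+g_{01}\vec p_2,\,Ng_{11}\vec q_1-Ng_{10}\vec q_2)$ and $w(g_{10}\vec p_1+g_{11}\vec p_2,\,-Ng_{01}\vec q_1+Ng_{00}\vec q_2)$.
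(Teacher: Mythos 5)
Your proposal is correct and follows essentially the same route as the paper: a direct verification on computational basis vectors, using that $U$ permutes the basis via $(G^{-1})^T$ (so $U^\dag$ acts via $G^T$), tracking the induced index shift for the $X$ identities and the induced diagonal phase for the $Z$ identities. The closing observation that conjugation transports $X$-exponents by $(G^{-1})^T$ and $Z$-exponents by $G$ is a tidy summary but not a different argument.
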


\begin{proof}
We carry out the computation.  
\begin{align*}
U_G(X \otimes I) U^\dag_G
&= \sum_{i,j} U_G\big(|i+1 \rangle \langle i| \otimes |j \rangle \langle j|\big)U^\dag_G \\
&=   \sum_{i,j}     |Ng_{11}(i+1)-Ng_{10}j \rangle 
\langle Ng_{11}i-Ng_{10}j| \\
&\qquad \qquad\otimes |-Ng_{01}(i+1)+Ng_{00}j \rangle \langle -Ng_{01}i+Ng_{00}j|  \\
&=  \sum_{i',j'} |i'+Ng_{11} \rangle \langle i'| \otimes |j'-Ng_{01} \rangle \langle j'|\\
&=  X^{Ng_{11}}\otimes X^{-Ng_{01}}\;,
\end{align*}
where in the third equality we used the fact that,
for every pair $(i',j')$ the following equation
\begin{equation}
\left[
\begin{array}{c}
i'\\
j'
\end{array}
\right]
=(G^{-1})^T
\left[
\begin{array}{c}
i\\
j
\end{array}
\right]
\end{equation}
always has a solution $(i,j)$ as $G$ is invertible.  Likewise, 
\begin{align*}
U_G(I\otimes X) U^\dag_G
&= \sum_{i,j} U_G \big(|i  \rangle \langle i| \otimes |j+1 \rangle \langle j|\big) U^\dag_G\\
&=  \sum_{i,j}      |Ng_{11}i-Ng_{10}(j+1) \rangle \langle Ng_{11}i-Ng_{10}j  | \\
&\qquad \qquad\otimes |-Ng_{01}i+Ng_{00}(j+1) \rangle \langle -Ng_{01}i+Ng_{00}j|  \\
&= \sum_{i',j'} |i'-Ng_{10} \rangle \langle i'| \otimes |j'+Ng_{00} \rangle \langle j'|\\
&= X^{-Ng_{10}}\otimes X^{Ng_{00}}\;.
\end{align*}
\begin{align*}
U_G(Z\otimes I) U^\dag_G
&= \sum_{i,j}\chi(i)U_G \big(|i  \rangle \langle i| \otimes |j \rangle \langle j|\big)U^\dag_G\\
&=  \sum_{i,j}   \chi(i)   |Ng_{11}i-Ng_{10}j \rangle \langle Ng_{11}i-Ng_{10}j| \\
&\qquad \qquad\otimes |-Ng_{01}i+Ng_{00}j \rangle \langle -Ng_{01}i+Ng_{00}j|  \\
&= \sum_{i',j'} \chi( g_{00}i'+g_{10}j')|i' \rangle \langle i'| \otimes |j' \rangle \langle j'|\\
&= Z^{g_{00}}\otimes Z^{g_{10}}\;.
\end{align*}
\begin{align*}
U_G(I\otimes Z) U^\dag_G
&= \sum_{i,j}\chi(j) U_G\big(|i  \rangle \langle i| \otimes |j \rangle \langle j|\big) U^\dag_G\\
&=  \sum_{i,j}   \chi(j)   |Ng_{11}i-Ng_{10}j \rangle \langle Ng_{11}i-Ng_{10}j| \\
&\qquad \qquad\otimes |-Ng_{01}i+Ng_{00}j \rangle \langle -Ng_{01}i+Ng_{00}j| \\
&= \sum_{i',j'} \chi( g_{01}i'+g_{11}j')|i' \rangle \langle i'| \otimes |j' \rangle \langle j'|\\
&= Z^{g_{01}}\otimes Z^{g_{11}} \;.
\end{align*}
This completes the proof.
\end{proof}

\begin{Def}[\bf Convolution of states]\label{def:convo}
Given two quantum states $\rho\in \mathcal{D}(\mathcal{H}_A)$, $\sigma\in \mathcal{D}(\mathcal{H}_B)$,
the convolution  of $\rho$ and $\sigma$ is  
\begin{align}\label{eq:convo}
\rho \boxtimes \sigma = \Ptr{B}{ U_G (\rho \otimes \sigma) U^\dag_G}\;,
\end{align}
where the key unitary $U_G$ is defined in \eqref{eq:Cli_unitary}, and the partial trace is taken on the second $n$-qudit system $\mathcal{H}_B$.
\end{Def}

\begin{Def}[\bf Convolutional channel]\label{def:con_Channel}
The quantum convolutional channel $\mathcal{E}_G$ based on the key unitary $U_G$ defined in \eqref{eq:Cli_unitary} is
\begin{eqnarray}\label{eq:convo_chn}
\mathcal{E}_G(\ \cdot\ )=\Ptr{B}{ U_G(\ \cdot\ ) U^\dag_G} \;.
\end{eqnarray}
Hence, $\rho \boxtimes \sigma=\mathcal{E}_G(\rho\ot\sigma)$ is the output state of the 
convolutional channel $\mathcal{E}_G$. 

\end{Def}

\begin{lem}\label{lem:adj_con}
For the quantum convolutional channel $\mathcal{E}_G$, we have 
\begin{eqnarray*}
\mathcal{E}^\dag_G(w(\vec p,\vec q))
=w(Ng_{11}\vec p, g_{00}\vec q)\ot w(-Ng_{10}\vec p, g_{01}\vec q) \;,\quad\forall (\vec p, \vec q ) \in V^n \;.
\end{eqnarray*}
\end{lem}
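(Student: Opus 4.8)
The plan is to compute the adjoint $\mathcal{E}^\dagger$ directly from the definition $\mathcal{E}(\,\cdot\,)=\Ptr{B}{U(\,\cdot\,)U^\dagger}$, using the already-established fact (Proposition \ref{prop:U_wely}) that $U$ conjugates Weyl operators into Weyl operators. Recall that the adjoint of a channel is defined with respect to the Hilbert--Schmidt inner product $\langle X, \mathcal{E}(Y)\rangle = \langle \mathcal{E}^\dagger(X), Y\rangle$. Since the partial trace $\Ptr{B}{\cdot}$ and the ``tensor with identity on $B$'' map $X\mapsto X\otimes I_B$ are mutually adjoint, and conjugation by $U$ is adjoint to conjugation by $U^\dagger$, we immediately get $\mathcal{E}^\dagger(w(\vec p,\vec q)) = U^\dagger\bigl(w(\vec p,\vec q)\otimes I_B\bigr)U$.

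The second step is to evaluate $U^\dagger\bigl(w(\vec p,\vec q)\otimes I\bigr)U$ explicitly. I would write $I = w(\vec 0,\vec 0)$ on system $B$ and apply Proposition \ref{prop:U_wely} — but with $U^\dagger$ in place of $U$, i.e. with the matrix $G$ replaced by an appropriate inverse/transpose variant. Concretely, Proposition \ref{prop:U_wely} gives
\[
U\, w(\vec p_1,\vec q_1)\otimes w(\vec p_2,\vec q_2)\, U^\dagger = w(g_{00}\vec p_1 + g_{01}\vec p_2,\ Ng_{11}\vec q_1 - Ng_{10}\vec q_2)\otimes w(g_{10}\vec p_1 + g_{11}\vec p_2,\ -Ng_{01}\vec q_1 + Ng_{00}\vec q_2).
\]
To get $U^\dagger(\cdot)U$ I invert this linear change of variables on the Weyl labels: set the right-hand side equal to $w(\vec p,\vec q)\otimes w(\vec 0,\vec 0)$ and solve for $(\vec p_1,\vec q_1),(\vec p_2,\vec q_2)$. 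The $\vec q$-block forces $Ng_{11}\vec q_1 - Ng_{10}\vec q_2 = \vec q$ and $-Ng_{01}\vec q_1 + Ng_{00}\vec q_2 = \vec 0$, a $2\times2$ system whose matrix is $N\!\left[\begin{smallmatrix} g_{11} & -g_{10}\\ -g_{01} & g_{00}\end{smallmatrix}\right] = (G^{-1})^T\cdot(\text{swap})$ or similar; solving gives $\vec q_1 = g_{00}\vec q$, $\vec q_2 = g_{01}\vec q$ after simplification using $N=(\det G)^{-1}$. Likewise the $\vec p$-block forces $g_{00}\vec p_1 + g_{01}\vec p_2 = \vec p$, $g_{10}\vec p_1 + g_{11}\vec p_2 = \vec 0$, giving $\vec p_1 = Ng_{11}\vec p$, $\vec p_2 = -Ng_{10}\vec p$. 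Substituting back yields $\mathcal{E}^\dagger(w(\vec p,\vec q)) = w(Ng_{11}\vec p,\, g_{00}\vec q)\otimes w(-Ng_{10}\vec p,\, g_{01}\vec q)$, as claimed.

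The only mild subtlety — and the step I'd be most careful about — is bookkeeping the $\det G$ and $N$ factors correctly when inverting the two $2\times2$ systems, and making sure the $\vec p$-labels and $\vec q$-labels transform by the correct (mutually inverse-transpose) matrices rather than the same matrix; this is exactly the place where a sign or an $N$ can go astray. An alternative, perhaps cleaner, route is to verify the claimed formula directly by checking $\langle w(\vec p,\vec q),\, \mathcal{E}(w(\vec p_1,\vec q_1)\otimes w(\vec p_2,\vec q_2))\rangle = \langle \mathcal{E}^\dagger(w(\vec p,\vec q)),\, w(\vec p_1,\vec q_1)\otimes w(\vec p_2,\vec q_2)\rangle$ on the Weyl basis: the left side is $\frac1{d^n}\Tr{w(-\vec p,-\vec q)\,\Ptr{B}{U\,w(\vec p_1,\vec q_1)\otimes w(\vec p_2,\vec q_2)\,U^\dagger}}$, which by Proposition \ref{prop:U_wely} is a product of two Weyl-orthogonality delta functions, and one checks it matches the pairing of the proposed $\mathcal{E}^\dagger(w(\vec p,\vec q))$ against $w(\vec p_1,\vec q_1)\otimes w(\vec p_2,\vec q_2)$. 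Since the Weyl operators span $L(\mathcal{H}^{\otimes n})$, agreement on this basis suffices.
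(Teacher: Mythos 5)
Your proposal is correct and follows essentially the same route as the paper: both first identify $\mathcal{E}^\dagger(w(\vec p,\vec q))=U^\dagger\bigl(w(\vec p,\vec q)\otimes I\bigr)U$ from the adjointness of the partial trace, and then read off the answer from Proposition \ref{prop:U_wely}. Your explicit inversion of the label maps (giving $\vec p_1=Ng_{11}\vec p$, $\vec p_2=-Ng_{10}\vec p$, $\vec q_1=g_{00}\vec q$, $\vec q_2=g_{01}\vec q$) is correct and simply spells out the step the paper leaves implicit.
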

\begin{proof}
This is because 
\begin{eqnarray*}
\mathcal{E}^\dag_G(w(\vec p,\vec q))
=U^\dag_G (w(\vec p,\vec q)\ot I  )U_G
=w(Ng_{11}\vec p, g_{00}\vec q)\ot w(-Ng_{10}\vec p, g_{01}\vec q) \;,
\end{eqnarray*}
where the second equality comes from 
Proposition \ref{prop:U_wely}.
\end{proof}

Based on the above  lemma, we can get the following results on the conditions for  abelian convolution.
\begin{cor}
If $g_{11}=-g_{10}$, and $g_{00}=g_{01}$, then the corresponding convolution is 
 abelian, i.e., $\rho\boxtimes \sigma=\sigma\boxtimes\rho$.
\end{cor}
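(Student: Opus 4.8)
The plan is to compute the characteristic function $\Xi_{\rho\boxtimes\sigma}$ explicitly and show that under the stated conditions on $G$ it is symmetric in the two input states. First I would recall from Lemma~\ref{lem:adj_con} that the adjoint of the convolutional channel acts on a single Weyl operator by
\begin{eqnarray*}
\CEE^\dag(w(\vec p,\vec q))
=w(Ng_{11}\vec p, g_{00}\vec q)\ot w(-Ng_{10}\vec p, g_{01}\vec q)\;,
\end{eqnarray*}
so that, by the definition of the characteristic function (Definition~\ref{def:charfn}) and the duality $\Tr{\CEE(\rho\ot\sigma)\,w(-\vec p,-\vec q)}=\Tr{(\rho\ot\sigma)\,\CEE^\dag(w(-\vec p,-\vec q))}$, we obtain
\begin{eqnarray*}
\Xi_{\rho\boxtimes\sigma}(\vec p,\vec q)
=\Xi_{\rho}(Ng_{11}\vec p,\,g_{00}\vec q)\,\Xi_{\sigma}(-Ng_{10}\vec p,\,g_{01}\vec q)\;.
\end{eqnarray*}
This is the key factorization identity; everything else is reading off when it is symmetric under swapping $\rho$ and $\sigma$.

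Next I would substitute the hypotheses $g_{11}=-g_{10}$ and $g_{00}=g_{01}$. Writing $a:=g_{00}=g_{01}$ and $b:=g_{11}=-g_{10}$, the determinant is $\det G=g_{00}g_{11}-g_{01}g_{10}=ab-a(-b)=2ab$, so $N=(2ab)^{-1}$ exists (recall $d$ is an odd prime and nontriviality forces $a,b\not\equiv0$). Then $Ng_{11}=Nb$, $g_{00}=a$, $-Ng_{10}=Nb$, $g_{01}=a$, and the factorization becomes
\begin{eqnarray*}
\Xi_{\rho\boxtimes\sigma}(\vec p,\vec q)
=\Xi_{\rho}(Nb\,\vec p,\,a\,\vec q)\,\Xi_{\sigma}(Nb\,\vec p,\,a\,\vec q)\;,
\end{eqnarray*}
which is manifestly symmetric in $\rho\leftrightarrow\sigma$. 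Since a state is determined by its characteristic function via \eqref{0109shi5}, this gives $\rho\boxtimes\sigma=\sigma\boxtimes\rho$.

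I would present this as three short steps: (i) derive the product formula for $\Xi_{\rho\boxtimes\sigma}$ from Lemma~\ref{lem:adj_con}; (ii) plug in $g_{11}=-g_{10}$, $g_{00}=g_{01}$ and observe the two arguments coincide; (iii) invoke injectivity of $\rho\mapsto\Xi_\rho$ to conclude commutativity. The only point requiring a little care is the $d=2$ case, where Lemma~\ref{lem:adj_con} as stated uses the odd-$d$ Weyl relations and the factor $2^{-1}$ is not available; I would either restrict the corollary to odd prime $d$ or check separately that the analogous identity still holds using \eqref{0106shi2_2}. I do not anticipate a genuine obstacle — the heart of the matter is simply that the hypotheses on $G$ are exactly what makes the two tensor factors appearing in $\CEE^\dag(w)$ equal (up to the relabeling that the partial trace then symmetrizes), so the proof is a direct computation rather than anything subtle.
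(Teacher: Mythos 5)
Your proof is correct and follows essentially the same route as the paper: the product formula you derive is exactly Proposition~\ref{prop:chara_conv} (itself a consequence of Lemma~\ref{lem:adj_con}), and the paper's corollary is obtained by precisely the observation that $g_{11}=-g_{10}$, $g_{00}=g_{01}$ makes the two arguments (equivalently, the two tensor factors of $\mathcal{E}^\dag(w(\vec p,\vec q))$) coincide. Your $d=2$ caveat is moot, since those conditions force $\det G\equiv 0 \bmod 2$, so no invertible $G$ of this form exists for qubits.
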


\begin{prop}\label{prop:comm_wel}
For a $2n$-qudit state $\rho_{AB}$,
the quantum convolutional channel $\mathcal{E}_G$ commutes with the conjugation of Weyl operators:
 \begin{eqnarray*}
  \mathcal{E}_G[
( w(\vec{x}_1, \vec{y}_1)\ot w(\vec{x}_2, \vec{y}_2))
\,\rho_{AB}\, 
( w(\vec{x}_1, \vec{y}_1)^\dag\ot w(\vec{x}_2, \vec{y}_2)^\dag)
 ]
 =w(\vec{x},\vec{y})
 \mathcal{E}_G(\rho_{AB})
w(\vec{x}, \vec{y})^\dag \;,
 \end{eqnarray*}
 where  $\vec{x}=g_{00}\vec{x}_1+g_{01}\vec{x}_2$ and 
 $\vec{y}=Ng_{11}\vec{y}_1-Ng_{10}\vec{y}_2$.
\end{prop}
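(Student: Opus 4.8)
The plan is to reduce the claim to a statement about how the convolutional channel's Kraus representation — or equivalently its adjoint action — intertwines Weyl conjugations, and then to verify the required identity by a direct computation built on Proposition~\ref{prop:U_wely} and Lemma~\ref{lem:adj_con}. First I would recall that $\mathcal{E}(\cdot) = \Ptr{B}{U(\cdot)U^\dag}$, so conjugating the input $\rho_{AB}$ by $w(\vec x_1,\vec y_1)\otimes w(\vec x_2,\vec y_2)$ and then applying $U(\cdot)U^\dag$ produces $U\big(w(\vec x_1,\vec y_1)\otimes w(\vec x_2,\vec y_2)\big)U^\dag \cdot U\rho_{AB}U^\dag \cdot U\big(w(\vec x_1,\vec y_1)\otimes w(\vec x_2,\vec y_2)\big)^\dag U^\dag$. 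By Proposition~\ref{prop:U_wely}, the conjugated Weyl operator becomes
\begin{eqnarray*}
U\big(w(\vec x_1,\vec y_1)\otimes w(\vec x_2,\vec y_2)\big)U^\dag = w(\vec x, \vec y)\otimes w(g_{10}\vec x_1+g_{11}\vec x_2, -Ng_{01}\vec y_1+Ng_{00}\vec y_2),
\end{eqnarray*}
with $\vec x = g_{00}\vec x_1+g_{01}\vec x_2$ and $\vec y = Ng_{11}\vec y_1 - Ng_{10}\vec y_2$ exactly as in the statement. So the quantity we must partial-trace is $\big(w(\vec x,\vec y)\otimes w(\vec x',\vec y')\big)\, U\rho_{AB}U^\dag\, \big(w(\vec x,\vec y)\otimes w(\vec x',\vec y')\big)^\dag$, where $(\vec x',\vec y')$ denotes the $B$-factor of the transformed operator.

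The key step is then to push the $A$-factor $w(\vec x,\vec y)$ outside the partial trace over $B$ and to show the $B$-factor $w(\vec x',\vec y')$ drops out. For the $A$-factor this is immediate: $\Ptr{B}{(w(\vec x,\vec y)\otimes I)\,M\,(w(\vec x,\vec y)^\dag\otimes I)} = w(\vec x,\vec y)\,\Ptr{B}{M}\,w(\vec x,\vec y)^\dag$, since $w(\vec x,\vec y)$ acts only on $\CHH_A$ and commutes with the trace over $\CHH_B$. For the $B$-factor, I would use cyclicity of the partial trace over the $B$ system: $\Ptr{B}{(I\otimes w(\vec x',\vec y'))\,N\,(I\otimes w(\vec x',\vec y')^\dag)} = \Ptr{B}{N}$ because the conjugating unitary on $B$ can be cycled around under $\trace_B$. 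Combining, $\mathcal{E}\big(w(\vec x_1,\vec y_1)\otimes w(\vec x_2,\vec y_2)\,\rho_{AB}\,w(\vec x_1,\vec y_1)^\dag\otimes w(\vec x_2,\vec y_2)^\dag\big) = w(\vec x,\vec y)\,\Ptr{B}{U\rho_{AB}U^\dag}\,w(\vec x,\vec y)^\dag = w(\vec x,\vec y)\,\mathcal{E}(\rho_{AB})\,w(\vec x,\vec y)^\dag$, which is the claim.

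I would double-check that the formulas for $\vec x$ and $\vec y$ coming out of Proposition~\ref{prop:U_wely} literally match those in the proposition statement; since Proposition~\ref{prop:U_wely} gives the $A$-factor as $w(g_{00}\vec p_1+g_{01}\vec p_2,\, Ng_{11}\vec q_1-Ng_{10}\vec q_2)$, the identification is exact with $\vec p_i \leftrightarrow \vec x_i$, $\vec q_i \leftrightarrow \vec y_i$. The only point requiring a word of care — and the main (minor) obstacle — is the cyclicity argument for the $B$-factor: strictly one writes $\Ptr{B}{(I\otimes w)\,N\,(I\otimes w^\dag)} = \sum_k (I\otimes\bra{k})(I\otimes w)N(I\otimes w^\dag)(I\otimes\ket{k}) = \sum_k (I\otimes \bra{k}w)N(I\otimes w^\dag\ket{k})$ and then reindexes the sum over the orthonormal basis $\{w^\dag\ket{k}\}$ of $\CHH_B^{\otimes n}$, which is legitimate because $w$ is unitary. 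Everything else is routine; no step requires anything beyond the two cited results and elementary properties of the partial trace.
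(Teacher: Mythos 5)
Your proof is correct, and it takes a genuinely different route from the paper's. The paper argues in the Heisenberg picture: it applies Lemma~\ref{lem:adj_con} to write $\mathcal{E}^\dag(w(\vec p,\vec q))$ as a tensor product of Weyl operators, conjugates that by $w(\vec x_1,\vec y_1)\ot w(\vec x_2,\vec y_2)$ using the commutation relation \eqref{0106shi2}, tracks the resulting phase $\chi(-g_{00}\vec q^T\vec x_1+Ng_{11}\vec p^T\vec y_1-g_{01}\vec q^T\vec x_2-Ng_{10}\vec p^T\vec y_2)$ explicitly, and identifies it with the phase produced by conjugating $w(\vec p,\vec q)$ by $w(\vec x,\vec y)$ before applying $\mathcal{E}^\dag$; the claim for $\mathcal{E}$ then follows by duality since the Weyl operators span $L(\CHH^{\ot n})$. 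You instead work directly in the Schr\"odinger picture: insert $U^\dag U$, invoke Proposition~\ref{prop:U_wely} to turn the conjugating operator into $w(\vec x,\vec y)\ot w(\vec x',\vec y')$, pull the $A$-factor through $\Ptr{B}{\cdot}$, and eliminate the $B$-factor by cyclicity of the trace over $\CHH_B$. Your version buys a cleaner bookkeeping — all phases cancel automatically under conjugation, so no symplectic inner products need to be matched — at the cost of relying on Proposition~\ref{prop:U_wely} rather than the weaker Lemma~\ref{lem:adj_con} (though the former is what the paper uses to prove the latter anyway, so nothing circular is introduced). Your parenthetical care about reindexing the basis $\{w^\dag\ket{k}\}$ in the cyclicity step is sound; both arguments are complete.
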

\begin{proof}
For every Weyl operator
$w(\vec p, \vec q)$, we have 
\begin{align*}
 &w(\vec{x}_1, \vec{y}_1)^\dag\ot w(\vec{x}_2, \vec{y}_2)^\dag \mathcal{E}^\dag_G (w(\vec p,\vec q)) \,w(\vec{x}_1, \vec{y}_1)\ot w(\vec{x}_2, \vec{y}_2)\\
&\quad = w(\vec{x}_1, \vec{y}_1)^\dag\ot w(\vec{x}_2, \vec{y}_2)^\dag
 (w(Ng_{11}\vec p, g_{00}\vec q)\ot w(-Ng_{10}\vec p, g_{01}\vec q))
   w(\vec{x}_1, \vec{y}_1)\ot w(\vec{x}_2, \vec{y}_2)\\
&\quad = \chi(-g_{00}\vec{q}^T\vec{x}_1+Ng_{11}\vec{p}^T\vec{y}_1-g_{01}\vec{q}^T\vec{x}_2-Ng_{10}\vec{p}^T\vec{y}_2)\\
&\qquad \qquad \times w(Ng_{11}\vec p, g_{00}\vec q) \ot w(-Ng_{10}\vec p, g_{01}\vec q) \\
& \quad =  \chi(-\vec{q}^T(g_{00}\vec{x}_1+g_{01}\vec{x}_2)-\vec{p}^T(-Ng_{11}\vec{y}_1+Ng_{10}\vec{y}_2))w(Ng_{11}\vec p, g_{00}\vec q)\\ 
& \qquad \qquad \qquad \ot w(-Ng_{10}\vec p, g_{01}\vec q) \\
&\quad =\mathcal{E}^\dag_G  (w(\vec{x}, \vec{y})^\dag w(\vec p, \vec q) \, w(\vec{x}, \vec{y})) \;,
\end{align*}
where $\vec{x}=g_{00}\vec{x}_1+g_{01}\vec{x}_2$ and 
 $\vec{y}=Ng_{11}\vec{y}_1-Ng_{10}\vec{y}_2$.
\end{proof}

\begin{prop}[\bf Convolution-multiplication duality]\label{prop:chara_conv}
Given two $n$-qudit states $\rho $ and $\sigma$, the characteristic function of $\rho\boxtimes\sigma$ satisfies
\begin{align*}
\Xi_{ \rho \boxtimes \sigma} (\vec p, \vec q) = \Xi_\rho (Ng_{11}\vec p, g_{00}\vec q) \;\Xi_\sigma (-Ng_{10}\vec p, g_{01}\vec q) \;, \quad \forall \vec p, \vec q \in \mathbb{Z}^n_d \;.
\end{align*}
\end{prop}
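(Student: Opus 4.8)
The plan is to compute the characteristic function of $\rho \boxtimes \sigma$ directly from the definition, using the adjoint of the convolutional channel already identified in Lemma~\ref{lem:adj_con}. The key observation is that the characteristic function is just the trace of the state against a Weyl operator, so dualizing moves the channel off the state and onto the Weyl operator, where Lemma~\ref{lem:adj_con} gives a clean tensor-product formula. Concretely, I would start from
\[
\Xi_{\rho\boxtimes\sigma}(\vec p,\vec q)=\Tr{(\rho\boxtimes\sigma)\,w(-\vec p,-\vec q)}=\Tr{\CEE(\rho\ot\sigma)\,w(-\vec p,-\vec q)}=\Tr{(\rho\ot\sigma)\,\CEE^\dag(w(-\vec p,-\vec q))}.
\]

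Next I would substitute the formula from Lemma~\ref{lem:adj_con}, namely $\CEE^\dag(w(-\vec p,-\vec q))=w(-Ng_{11}\vec p,-g_{00}\vec q)\ot w(Ng_{10}\vec p,-g_{01}\vec q)$ (applying the lemma with $(\vec p,\vec q)$ replaced by $(-\vec p,-\vec q)$ and using linearity of the arguments of $w$). Then the trace over the tensor product $\mathcal H_A\ot\mathcal H_B$ factorizes as $\Tr{\rho\, w(-Ng_{11}\vec p,-g_{00}\vec q)}\cdot\Tr{\sigma\, w(Ng_{10}\vec p,-g_{01}\vec q)}$, and recognizing each factor as a characteristic function via Definition~\ref{def:charfn} gives exactly $\Xi_\rho(Ng_{11}\vec p,g_{00}\vec q)\,\Xi_\sigma(-Ng_{10}\vec p,g_{01}\vec q)$, which is the claimed identity.

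There is essentially no hard step here; the only point requiring a little care is the bookkeeping of signs when passing $(-\vec p,-\vec q)$ through Lemma~\ref{lem:adj_con}, since $w(-\vec a,-\vec b)$ is the relevant operator in the definition of $\Xi$ and one must make sure the minus signs land so that the final characteristic functions are evaluated at $(Ng_{11}\vec p,g_{00}\vec q)$ and $(-Ng_{10}\vec p,g_{01}\vec q)$ rather than their negatives. One should also note that $\CEE^\dag$ as used above is the Hilbert–Schmidt adjoint of the channel, so that $\Tr{\CEE(Y)\,Z}=\Tr{Y\,\CEE^\dag(Z)}$ holds; this is exactly the identity invoked in the proof of Lemma~\ref{lem:adj_con}, so it is already available. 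Everything else is a one-line factorization of a trace over a tensor product.
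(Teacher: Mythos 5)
Your proof is correct and follows essentially the same route as the paper: the paper's own argument unwinds $\Tr{\Ptr{B}{U(\rho\ot\sigma)U^\dag}\,w(-\vec p,-\vec q)}$ into $\Tr{(\rho\ot\sigma)\,U^\dag(w(-\vec p,-\vec q)\ot I)U}$ and then applies Proposition~\ref{prop:U_wely}, which is exactly the content of Lemma~\ref{lem:adj_con} that you invoke directly. Your sign bookkeeping when substituting $(-\vec p,-\vec q)$ into the adjoint formula and re-reading the two factors as characteristic functions is also right.
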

\begin{proof}
\begin{align*}
\Xi_{ \rho \boxtimes \sigma} (\vec p,\vec q) 
=&  \Tr{  \rho \boxtimes\sigma  w(-\vec p,-\vec{q}) }
= \Tr{ \Ptr{B}{U_G(\rho \otimes \sigma) U^\dag_G }w(-\vec p, -\vec q) } \\
=& \Tr{ U_G (\rho \otimes \sigma) U^\dag_G(w(-\vec p,-\vec{q}) \otimes I)}
= \Tr{  (\rho \otimes \sigma) U^\dag_G (w(-\vec p,-\vec q)\otimes I) U_G}\\
=& \Tr{  (\rho \otimes \sigma)   (w(-Ng_{11}\vec p, -g_{00}\vec q)\ot w(Ng_{10}\vec p, -g_{01}\vec q)) }\\
=& \Tr{  \rho w(-Ng_{11}\vec p, -g_{00}\vec q)  }\Tr{    \sigma w(Ng_{10}\vec p, -g_{01}\vec q)}\\
=& \Xi_\rho (Ng_{11}\vec p, g_{00}\vec q)\; \Xi_\sigma (-Ng_{10}\vec p, g_{01}\vec q) \;.
\end{align*} 
\end{proof}

Based on the relation between characteristic functions, we have the following result directly.
\begin{lem}\label{lem:conv_iden}
Let  $\rho$ be an $n$-qudit state.

(1) If $G$ is odd-parity positive, then 
\begin{align*}
 \rho \boxtimes \frac{I_n}{d^n} = \frac{I_n}{d^n} \;,
\end{align*}

(2) If $G$ is even-parity positive, then
\begin{align*}
    \frac{I_n}{d^n}\boxtimes \rho =\frac{I_n}{d^n} \;.
\end{align*}
\end{lem}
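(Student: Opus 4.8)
The statement to prove is Lemma~\ref{lem:conv_iden}: if $G$ is odd-parity positive then $\rho\boxtimes\frac{I_n}{d^n}=\frac{I_n}{d^n}$, and if $G$ is even-parity positive then $\frac{I_n}{d^n}\boxtimes\rho=\frac{I_n}{d^n}$. The natural route is to work entirely at the level of characteristic functions, using Proposition~\ref{prop:chara_conv} (convolution-multiplication duality). First I would record the characteristic function of the maximally mixed state: $\Xi_{I_n/d^n}(\vec p,\vec q)=\frac{1}{d^n}\Tr{w(-\vec p,-\vec q)}=\delta_{\vec p,\vec 0}\,\delta_{\vec q,\vec 0}$, since the Weyl operators are traceless except for the identity. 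A state is determined by its characteristic function via \eqref{0109shi5}, so it suffices to show the right-hand side of the duality formula equals $\delta_{\vec p,\vec 0}\delta_{\vec q,\vec 0}$.

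For part (1), apply Proposition~\ref{prop:chara_conv} with $\sigma=I_n/d^n$:
\[
\Xi_{\rho\boxtimes I_n/d^n}(\vec p,\vec q)=\Xi_\rho(Ng_{11}\vec p,\,g_{00}\vec q)\;\Xi_{I_n/d^n}(-Ng_{10}\vec p,\,g_{01}\vec q)=\Xi_\rho(Ng_{11}\vec p,\,g_{00}\vec q)\,\delta_{Ng_{10}\vec p,\vec 0}\,\delta_{g_{01}\vec q,\vec 0}.
\]
Since $G$ is odd-parity positive, $g_{01}\not\equiv 0$ and $g_{10}\not\equiv 0$ mod $d$, and $N=(\det G)^{-1}\not\equiv 0$; because $d$ is prime, multiplication by any nonzero scalar is a bijection on $\mathbb Z_d^n$, so $Ng_{10}\vec p=\vec 0\iff\vec p=\vec 0$ and $g_{01}\vec q=\vec 0\iff\vec q=\vec 0$. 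Hence the product of deltas is $\delta_{\vec p,\vec 0}\delta_{\vec q,\vec 0}$, and when $(\vec p,\vec q)=(\vec 0,\vec 0)$ the remaining factor is $\Xi_\rho(\vec 0,\vec 0)=\Tr\rho=1$. Thus $\Xi_{\rho\boxtimes I_n/d^n}=\Xi_{I_n/d^n}$, proving (1). Part (2) is the mirror image: take $\rho\mapsto I_n/d^n$ on the left slot, so $\Xi_{I_n/d^n\boxtimes\sigma}(\vec p,\vec q)=\delta_{Ng_{11}\vec p,\vec 0}\,\delta_{g_{00}\vec q,\vec 0}\,\Xi_\sigma(-Ng_{10}\vec p,g_{01}\vec q)$, and even-parity positivity gives $g_{00},g_{11}\not\equiv 0$, so the same bijection argument collapses it to $\delta_{\vec p,\vec 0}\delta_{\vec q,\vec 0}$.

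There is essentially no obstacle here — the lemma is a direct corollary of Proposition~\ref{prop:chara_conv} once one notes $\Xi_{I_n/d^n}$ is the indicator of the origin. The only point requiring a word of care is the use of primality of $d$ to conclude that $\vec v\mapsto c\vec v$ is injective on $\mathbb Z_d^n$ for $c\not\equiv 0$; this is why the hypotheses are phrased in terms of which $g_{ij}$ vanish. I would also remark that the two halves explain the terminology: odd-parity positivity is exactly what makes $I_n/d^n$ a right identity, even-parity positivity a left identity, and a positive $G$ makes it two-sided.
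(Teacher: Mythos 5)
Your proof is correct and follows essentially the same route as the paper's: apply Proposition~\ref{prop:chara_conv}, use that $\Xi_{I_n/d^n}$ is the indicator of the origin, and invoke odd-/even-parity positivity to collapse the deltas. You simply spell out a few details the paper leaves implicit (primality of $d$ making $\vec v\mapsto c\vec v$ a bijection, and $\Xi_\rho(\vec 0,\vec 0)=1$), which is fine.
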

\begin{proof}
By Proposition \ref{prop:chara_conv},  the characteristic function of $\rho \boxtimes I_n/d^n$ is
\begin{align*}
\Xi_{ \rho \boxtimes I_n/d^n} (\vec{p}, \vec q) = \Xi_\rho (Ng_{11}\vec p, g_{00}\vec q) \,\Xi_{I_n/d^n}(-Ng_{10}\vec p, g_{01}\vec q)
= \Xi_\rho (Ng_{11}\vec p, g_{00}\vec q)
\delta_{\vec{p},\vec{0}}\delta_{\vec{q},\vec{0}} \;,
\end{align*}
where the second equality comes from the fact that  $G$ is odd-parity positive, i.e., both $g_{01}$ and $g_{10}$ are not $  0\mod d$.
Thus $\rho \boxtimes \frac{I_n}{d^n}=\frac{I_n}{d^n}$.
A similar argument can be made for the case when $G$ is even-parity positive.
\end{proof}

In classical probability theory, it is well-known that the convolution of 
two Gaussian distributions is still Gaussian. Here, we find the analogous property of MSPS.
\begin{lem}
The convolution $\rho\boxtimes\sigma$ of two MSPSs $\rho$ and $\sigma$  is an MSPS.
\end{lem}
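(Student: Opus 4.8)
The plan is to argue entirely with characteristic functions, combining the convolution--multiplication duality of Proposition~\ref{prop:chara_conv} with the description of MSPSs through the mean-state map $\CMM$.

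First I would record the characterization we need: a quantum state $\tau$ is an MSPS if and only if $\tau=\CMM(\tau)$. Indeed, by Proposition~\ref{prop:magap}(1) this is the same as $MG(\tau)=0$, and since $|\Xi_\tau(\vec p,\vec q)|\le 1$ for all $(\vec p,\vec q)$ (because $\Xi_\tau(\vec p,\vec q)=\Tr{\tau w(-\vec p,-\vec q)}$), the condition $MG(\tau)=0$ says precisely that $\Xi_\tau$ never takes a value of modulus strictly between $0$ and $1$; comparing with Definition~\ref{def:mean_state} and \eqref{0109shi5}, this is exactly $\tau=\CMM(\tau)$. So it suffices to show that $\rho\boxtimes\sigma$ is a quantum state whose characteristic function takes only the values $0$ and numbers of modulus $1$.

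Next, since $\rho$ and $\sigma$ are MSPSs, the functions $\Xi_\rho$ and $\Xi_\sigma$ each take values in $\{0\}\cup\{z\in\complex:|z|=1\}$. Proposition~\ref{prop:chara_conv} gives
\begin{align*}
\Xi_{\rho\boxtimes\sigma}(\vec p,\vec q)=\Xi_\rho(Ng_{11}\vec p,\,g_{00}\vec q)\;\Xi_\sigma(-Ng_{10}\vec p,\,g_{01}\vec q),
\end{align*}
a product of two numbers each of modulus $0$ or $1$, so $|\Xi_{\rho\boxtimes\sigma}(\vec p,\vec q)|\in\{0,1\}$ for every $(\vec p,\vec q)\in V^n$. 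Moreover $\rho\boxtimes\sigma=\mathcal{E}(\rho\otimes\sigma)$ is a genuine quantum state, since the convolutional channel $\mathcal{E}$ of Definition~\ref{def:con_Channel} is CPTP (conjugation by the unitary $U$ composed with a partial trace). By the characterization of the previous paragraph, $\rho\boxtimes\sigma=\CMM(\rho\boxtimes\sigma)$, and therefore $\rho\boxtimes\sigma$ is an MSPS by Proposition~\ref{0109lem1}.

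There is no serious obstacle here: once the convolution--multiplication duality and the $\CMM$-description of MSPSs are in hand, the statement reduces to the elementary fact that $\{0\}\cup\{z\in\complex:|z|=1\}$ is closed under multiplication. If one wishes, the associated abelian group can be exhibited explicitly: writing $S_\rho,S_\sigma\subseteq V^n$ for the isotropic subgroups attached to $\rho$ and $\sigma$, one checks from the displayed identity that $\rho\boxtimes\sigma$ is the normalized minimal stabilizer projection for the subgroup $\{(\vec p,\vec q)\in V^n:(Ng_{11}\vec p,g_{00}\vec q)\in S_\rho\text{ and }(-Ng_{10}\vec p,g_{01}\vec q)\in S_\sigma\}$, but this refinement is not needed for the conclusion.
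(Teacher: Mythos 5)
Your proof is correct and follows essentially the same route as the paper: apply the convolution--multiplication duality of Proposition~\ref{prop:chara_conv} to see that $|\Xi_{\rho\boxtimes\sigma}|$ takes only the values $0$ and $1$, and conclude that the result is an MSPS. The only difference is that you spell out the final step (that such a state equals $\CMM$ of itself and is hence an MSPS by Proposition~\ref{0109lem1}), which the paper leaves implicit.
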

\begin{proof}
By Proposition \ref{prop:chara_conv},  the characteristic function of $\rho \boxtimes \sigma$ is
\begin{eqnarray}
    \Xi_{ \rho \boxtimes \sigma} (\vec{p}, \vec q) = \Xi_\rho (Ng_{11}\vec p, g_{00}\vec q) \Xi_{\sigma}(-Ng_{10}\vec p, g_{01}\vec q).
\end{eqnarray}
Since $\rho,\sigma$ are MSPSs, 
$|\Xi_{\rho}|$ and $ |\Xi_{\sigma}|$ are either equal to 1 or 0. Thus, 
$|\Xi_{\rho\boxtimes\sigma}|$ is either  equal to 1 or 0. Therefore, $\rho\boxtimes \sigma$ is an MSPS.
\end{proof}
In fact, the above result also implies  convolutional stability.
\begin{prop}[\bf Convolutional stability for states]
Given two $n$-qudit stabilizer states $\rho$ and $\sigma$, their convolution $\rho\boxtimes\sigma$ is a stabilizer state.
\end{prop}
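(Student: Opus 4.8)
The plan is to reduce the statement to the lemma just proved, that the convolution of two MSPSs is again an MSPS, by exploiting bilinearity of $\boxtimes$ together with two elementary inclusions: every pure stabilizer state is an MSPS, and every MSPS is a (mixed) stabilizer state.

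First I would record that $\boxtimes$ is bilinear. This is immediate from the definition $\rho\boxtimes\sigma=\Ptr{B}{U(\rho\ot\sigma)U^\dag}$: the tensor product is bilinear in $(\rho,\sigma)$, conjugation $X\mapsto UXU^\dag$ is linear, and $\Ptr{B}{\cdot}$ is linear. Hence, writing $\rho=\sum_i p_i\rho_i$ and $\sigma=\sum_j q_j\sigma_j$ as convex combinations of pure stabilizer states (possible by the definition of a general stabilizer state), we get
\[
\rho\boxtimes\sigma=\sum_{i,j}p_iq_j\,(\rho_i\boxtimes\sigma_j)\;.
\]
Next I would note that each summand $\rho_i\boxtimes\sigma_j$ is an MSPS. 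A pure stabilizer state is, by definition, the common eigenprojection of an abelian subgroup of Weyl operators of the maximal size $d^n$; this projection is rank one, so it is precisely a minimal stabilizer projection associated with that subgroup, normalized by its dimension $1$, i.e.\ the $r=n$ case of an MSPS. Thus $\rho_i$ and $\sigma_j$ are MSPSs, and by the preceding lemma so is $\rho_i\boxtimes\sigma_j$.

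Finally I would observe that every MSPS $\tau=P/\dim P$ is itself a stabilizer state, so that $\rho\boxtimes\sigma$, being a convex combination of the MSPSs $\rho_i\boxtimes\sigma_j$, is a convex combination of stabilizer states and hence — the stabilizer states being by definition closed under convex combination — a stabilizer state. To see that $\tau$ is a stabilizer state, let $S$ (of size $d^r$) be the abelian subgroup associated with $P$ and pick, as in the proof of Proposition~\ref{0109lem1}, a Clifford unitary $U_0$ sending the generators of $S$ to $Z_1,\dots,Z_r$; then $U_0\tau U_0^\dag$ is a tensor product of rank-one $Z$-eigenprojections on the first $r$ qudits with the maximally mixed state $I/d^{\,n-r}$ on the remaining qudits, and writing $I/d=\frac1d\sum_k\proj{k}$ on each of the latter exhibits $U_0\tau U_0^\dag$ — and hence $\tau$, since Clifford conjugates of pure stabilizer states are pure stabilizer states — as a uniform convex combination of $d^{\,n-r}$ pure stabilizer states.

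There is no serious obstacle here: the whole content is carried by the already-established lemma on convolution of MSPSs. The only points needing a line of care are the two inclusions above — that pure stabilizer states are the full-rank case of MSPSs, and that a normalized minimal stabilizer projection is a uniform mixture of pure stabilizer states — and both become transparent after conjugating into standard form by a Clifford unitary.
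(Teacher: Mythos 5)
Your proof is correct and follows essentially the same route as the paper's: decompose both states into convex combinations of pure stabilizer states, use bilinearity of $\boxtimes$, and apply the preceding lemma that the convolution of two MSPSs is an MSPS. You simply spell out in more detail the two inclusions (pure stabilizer states are the rank-one case of MSPSs, and every MSPS is a uniform mixture of pure stabilizer states) that the paper's proof uses implicitly.
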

\begin{proof}
Since $\rho,\sigma $ are stabilizer states, they can be written as a convex combination of pure stabilizer states, i.e.,
$\rho=\sum_i\lambda_i\proj{\psi_i},\sigma=\sum_j\mu_j\proj{\phi_j}$. 
Based on the above Lemma, the convolution of two pure stabilizer states $\proj{\psi_i}\boxtimes\proj{\phi_j}$ is an MSPS, hence a convex combination of pure stabilizer states, then $\rho\boxtimes \sigma=\sum_{i,j}\lambda_i\mu_j\proj{\psi_i}\boxtimes\proj{\phi_j}$ 
is a stabilizer state.
\end{proof}

\begin{prop}[\bf Convolution of discrete Wigner functions]\label{prop:wig_conv}
Assume that the parameter matrix $G$ is  positive and invertible. Then the discrete Wigner function of Definition \ref{Def:DiscreteWignerFunction},
for $n$-qudit states $\rho$ and $\sigma$, satisfies
\begin{eqnarray*}
W_{\rho\boxtimes\sigma}(\vec u, \vec v)
=\sum_{\vec{u}_1,\vec{v}_1}
W_{\rho}(g^{-1}_{00}\vec{u}_1,(Ng_{11})^{-1}\vec{v}_1)\;\;
W_{\sigma}(g^{-1}_{01}(\vec{u}-\vec{u}_1), (Ng_{10})^{-1}(\vec{v}_1-\vec{v})) \;.
\end{eqnarray*}
\end{prop}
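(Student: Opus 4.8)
The plan is to reduce the claim to the convolution--multiplication duality of Proposition~\ref{prop:chara_conv} by passing through the symplectic Fourier transform that links the discrete Wigner function to the characteristic function. Combining the definition \eqref{eq:phase_poin} of $T(\vec p,\vec q)$ with the Weyl expansion \eqref{0109shi5} and the identity $\rho=\sum_{\vec p,\vec q}W_\rho(\vec p,\vec q)T(\vec p,\vec q)$, one obtains for every state the mutually inverse relations
\begin{align*}
\Xi_\rho(\vec a,\vec b)&=\sum_{\vec s,\vec t}W_\rho(\vec s,\vec t)\,\chi\!\left(\inner{(\vec s,\vec t)}{(\vec a,\vec b)}_s\right),\\
W_\rho(\vec a,\vec b)&=\frac{1}{d^{2n}}\sum_{\vec a',\vec b'}\Xi_\rho(\vec a',\vec b')\,\chi\!\left(-\inner{(\vec a,\vec b)}{(\vec a',\vec b')}_s\right),
\end{align*}
the inversion using the orthogonality $\sum_{\vec a',\vec b'}\chi\!\left(\inner{(\vec r,\vec w)}{(\vec a',\vec b')}_s\right)=d^{2n}\delta_{\vec r,\vec 0}\delta_{\vec w,\vec 0}$ together with bilinearity of $\inner{\cdot}{\cdot}_s$. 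The same identities hold verbatim with $\rho$ replaced by $\sigma$ or by $\rho\boxtimes\sigma$.

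First I would apply the second identity to $\rho\boxtimes\sigma$ and substitute Proposition~\ref{prop:chara_conv}, writing $\Xi_{\rho\boxtimes\sigma}(\vec a',\vec b')=\Xi_\rho(Ng_{11}\vec a',g_{00}\vec b')\,\Xi_\sigma(-Ng_{10}\vec a',g_{01}\vec b')$. Then I would expand each of these two factors by the first identity, introducing separate summation variables $(\vec s_1,\vec t_1)$ for $\rho$ and $(\vec s_2,\vec t_2)$ for $\sigma$, interchange the order of summation, and collect the terms in the exponent that are linear in the dummy Fourier variables $\vec a',\vec b'$. Carrying out the sum over $(\vec a',\vec b')$ then produces two Kronecker deltas that exactly cancel the prefactor $1/d^{2n}$, leaving
\begin{align*}
W_{\rho\boxtimes\sigma}(\vec u,\vec v)=\sum_{\vec s_1,\vec t_1,\vec s_2,\vec t_2}W_\rho(\vec s_1,\vec t_1)\,W_\sigma(\vec s_2,\vec t_2)\,\delta_{g_{00}\vec s_1+g_{01}\vec s_2,\,\vec u}\;\delta_{Ng_{11}\vec t_1-Ng_{10}\vec t_2,\,\vec v}\;.
\end{align*}

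To finish I would solve the two delta constraints. Since $G$ is positive and invertible, $\det G$ and each of $g_{00},g_{01},g_{11},g_{10}$ is a unit mod $d$, so $N=(\det G)^{-1}$, $g_{00}$, $g_{01}$, $Ng_{11}$, $Ng_{10}$ are all invertible. Setting $\vec u_1:=g_{00}\vec s_1$ and $\vec v_1:=Ng_{11}\vec t_1$ gives bijective changes of variables on $\mathbb{Z}_d^n$ with $\vec s_1=g_{00}^{-1}\vec u_1$ and $\vec t_1=(Ng_{11})^{-1}\vec v_1$; the first delta then forces $\vec s_2=g_{01}^{-1}(\vec u-\vec u_1)$ and the second forces $\vec t_2=(Ng_{10})^{-1}(\vec v_1-\vec v)$, collapsing the $(\vec s_2,\vec t_2)$ sums and yielding precisely the stated formula with summation over $\vec u_1,\vec v_1$. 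I do not expect a genuine obstacle: the only delicate point is the bookkeeping of signs inside the symplectic inner products and of which entries of $G$ decorate which argument, and the sole role of the positivity hypothesis is to make all four of the above substitutions invertible. An equivalent route that bypasses the $W\leftrightarrow\Xi$ dictionary is to write $W_{\rho\boxtimes\sigma}(\vec u,\vec v)=\frac{1}{d^n}\Tr{U(\rho\ot\sigma)U^\dag\,(T(\vec u,\vec v)\ot I)}$, expand $T(\vec u,\vec v)$ in Weyl operators, push $U^\dag(\,\cdot\,\ot I)U$ through by Lemma~\ref{lem:adj_con}, factor the trace over $A$ and $B$, and re-sum; this lands on the same delta-function identity.
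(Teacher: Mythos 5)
Your proof is correct and takes essentially the same route as the paper's: both substitute the convolution--multiplication duality of Proposition~\ref{prop:chara_conv} into the symplectic Fourier inversion formula for $W_{\rho\boxtimes\sigma}$, expand each characteristic function back into a Wigner sum, perform the Fourier sum to obtain the two Kronecker deltas $\delta_{g_{00}\vec s_1+g_{01}\vec s_2,\vec u}\,\delta_{Ng_{11}\vec t_1-Ng_{10}\vec t_2,\vec v}$, and resolve them by the invertible substitutions $\vec u_1=g_{00}\vec s_1$, $\vec v_1=Ng_{11}\vec t_1$. The only difference is presentational (you state the $W\leftrightarrow\Xi$ dictionary explicitly before using it), and your remark that positivity of $G$ is needed only to invert $g_{00},g_{01},Ng_{11},Ng_{10}$ is accurate.
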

The formula in Proposition \ref{prop:wig_conv} simplifies in some 
cases, such as for the discrete beam splitter in Proposition~\ref{prop:wign_bs}, and for the discrete amplifier in Proposition~\ref{prop:wign_am}.
\begin{proof}
Since 
$\Xi_{ \rho \boxtimes \sigma} (\vec p, \vec q) = \Xi_\rho (Ng_{11}\vec p, g_{00}\vec q)\; \Xi_\sigma (-Ng_{10}\vec p, g_{01}\vec q)$, we have 
\begin{eqnarray*}
&&W_{\rho\boxtimes\sigma}(\vec u, \vec v)\\
&=&\frac{1}{d^{2n}}
\sum_{\vec{p},\vec{q}} \Xi_{ \rho \boxtimes \sigma} (\vec p, \vec q) 
\chi(\vec{p}^T\vec{v}-\vec{q}^T\vec{u})\\
&=&\frac{1}{d^{2n}}
\sum_{\vec{p},\vec{q}} 
\Xi_\rho (Ng_{11}\vec p, g_{00}\vec q) \Xi_\sigma (-Ng_{10}\vec p, g_{01}\vec q)
\chi(\vec{p}^T\vec{v}-\vec{q}^T\vec{u})\\
&=&\frac{1}{d^{2n}}
\sum_{\vec{p},\vec{q}} 
\left( \sum_{\vec{u}_1,\vec{v}_1} W_{\rho}(\vec{u}_1,\vec{v}_1)\chi(-Ng_{11}\vec{p}^T\vec{v}_1+g_{00}\vec{q}^T\vec{u}_1)\right)\\
&&\quad\times
\left( \sum_{\vec{u}_2,\vec{v}_2} W_{\sigma}(\vec{u}_2,\vec{v}_2)\chi(Ng_{10}\vec{p}^T\vec{v}_2+g_{01}\vec{q}^T\vec{u}_2)\right)
\chi(\vec{p}^T\vec{v}-\vec{q}^T\vec{u})\\
&=&\sum_{\vec{u}_1,\vec{v}_1}
\sum_{\vec{u}_2,\vec{v}_2}W_{\rho}(\vec{u}_1,\vec{v}_1)
W_{\sigma}(\vec{u}_2,\vec{v}_2)\\
&&\quad\times
\frac{1}{d^{2n}}\sum_{\vec{p},\vec{q}} \chi(-Ng_{11}\vec{p}^T\vec{v}_1+g_{00}\vec{q}^T\vec{u}_1)\chi(Ng_{10}\vec{p}^T\vec{v}_2+g_{01}\vec{q}^T\vec{u}_2)\chi(\vec{p}^T\vec{v}-\vec{q}^T\vec{u})\\
&=&\sum_{\vec{u}_1,\vec{v}_1}
\sum_{\vec{u}_2,\vec{v}_2}W_{\rho}(\vec{u}_1,\vec{v}_1)
W_{\sigma}(\vec{u}_2,\vec{v}_2)
\delta_{\vec{v}-Ng_{11}\vec{v}_1+Ng_{10}\vec{v}_2,\vec{0}}
\delta_{\vec{u}-g_{00}\vec{u}_1-g_{01}\vec{u}_2, \vec 0}\\
&=&\sum_{\vec{u}_1,\vec{v}_1}
W_{\rho}(g^{-1}_{00}\vec{u}_1,(Ng_{11})^{-1}\vec{v}_1)
W_{\sigma}(g^{-1}_{01}(\vec{u}-\vec{u}_1), (Ng_{10})^{-1}(\vec{v}_1-\vec{v})) \;.
\end{eqnarray*}
\end{proof}

\subsection{Monotonicity under convolution}
In classical probability theory, it is well-known that the distance measures are
monotone under convolution $*$, such as the $L_1$ norm (also known as total variation distance), the relative entropy, and the Wasserstein distance.

To compare with the classical case, we consider the quantum version of the monotonicity of distance measures under quantum convolution, including the $L_1$-norm, the relative entropy, and the quantum Wasserstein distance. Here, we  consider two properties of a distance measure $D:\mathcal{D}(\mathcal{H}^{\ot n})\times \mathcal{D}(\mathcal{H}^{\ot n})\to \real$: (i) monotonicity under quantum channels, i.e., $D(\Lambda(\rho),\Lambda(\sigma))\leq D(\rho,\sigma)$; 
(ii) subadditivity under tensor product, i.e., $D(\rho_1\ot\sigma_1,\rho_2\ot\sigma_2)\leq D(\rho_1, \rho_2)+D(\sigma_1,\sigma_2)$.

\begin{prop}[\bf Monotonicity under Quantum Convolution]
Let $D:\mathcal{D}(\mathcal{H}^{\ot n})\times \mathcal{D}(\mathcal{H}^{\ot n})\to \real$ be a distance measure, satisfying
(i) monotonicity under quantum channels, and 
(ii) subadditivity under tensor product.
Then
\begin{eqnarray}\label{ineq:suba_1}
D(\rho_1\boxtimes\sigma_1,\rho_2\boxtimes\sigma_2)
        \leq D(\rho_1,\rho_2)+D(\sigma_1,\sigma_2) \;.
\end{eqnarray}
Moreover, if $\tau$ is invariant under $\boxtimes$, i.e., $\tau\boxtimes\tau=\tau$, then 
 \begin{eqnarray}\label{ineq:suba_2}
        D(\rho\boxtimes\sigma,\tau)\leq D(\rho,\tau)
        +D(\sigma,\tau) \;.
    \end{eqnarray}
\end{prop}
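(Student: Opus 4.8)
## Proof Proposal

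The plan is to derive \eqref{ineq:suba_1} as a direct consequence of the two hypothesized properties combined with the structure of the convolution as given in Definition~\ref{def:convo}, and then to obtain \eqref{ineq:suba_2} as a special case by taking $\sigma_1=\rho_2=\sigma_2=\tau$ (or rather $\rho_1=\rho$, $\sigma_1=\sigma$, $\rho_2=\sigma_2=\tau$) and invoking the invariance $\tau\boxtimes\tau=\tau$.

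\medskip\noindent\textbf{Step 1 (unwinding the convolution).}
First I would recall that $\rho\boxtimes\sigma=\CEE(\rho\ot\sigma)$ where $\CEE(\,\cdot\,)=\Ptr{B}{U(\,\cdot\,)U^\dag}$ is the convolutional channel of Definition~\ref{def:con_Channel}. Since $U$ is a unitary and the partial trace is CPTP, $\CEE$ is a genuine quantum channel $\mathcal{D}(\mathcal{H}_A\ot\mathcal{H}_B)\to\mathcal{D}(\mathcal{H}_A)$. This is the only structural fact about $\boxtimes$ that is needed.

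\medskip\noindent\textbf{Step 2 (apply monotonicity under channels).}
Using property (i) with the channel $\Lambda=\CEE$,
\begin{eqnarray*}
D(\rho_1\boxtimes\sigma_1,\rho_2\boxtimes\sigma_2)
= D\big(\CEE(\rho_1\ot\sigma_1),\CEE(\rho_2\ot\sigma_2)\big)
\leq D(\rho_1\ot\sigma_1,\rho_2\ot\sigma_2) \;.
\end{eqnarray*}
Then property (ii), subadditivity under tensor product, gives
$D(\rho_1\ot\sigma_1,\rho_2\ot\sigma_2)\leq D(\rho_1,\rho_2)+D(\sigma_1,\sigma_2)$, which chains with the previous inequality to yield \eqref{ineq:suba_1}.

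\medskip\noindent\textbf{Step 3 (the fixed-point corollary).}
For \eqref{ineq:suba_2}, set $\rho_1=\rho$, $\sigma_1=\sigma$, and $\rho_2=\sigma_2=\tau$ in \eqref{ineq:suba_1}. The left-hand side becomes $D(\rho\boxtimes\sigma,\tau\boxtimes\tau)=D(\rho\boxtimes\sigma,\tau)$ by the invariance hypothesis $\tau\boxtimes\tau=\tau$, and the right-hand side is $D(\rho,\tau)+D(\sigma,\tau)$, giving the claim.

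\medskip
I do not anticipate a genuine obstacle here; the statement is essentially a packaging of the two axioms together with the observation that $\boxtimes$ factors through a channel applied to a tensor product. The only point requiring a word of care is confirming that $\CEE$ is indeed CPTP so that (i) applies — but this is immediate since it is a unitary conjugation followed by a partial trace. One could optionally remark that the three named examples (trace distance / $L_1$-norm, relative entropy, and the quantum Wasserstein distance) all satisfy (i) and (ii), so that the proposition applies to each; the data-processing inequality gives (i) in every case, and subadditivity under tensor products is standard (for relative entropy it is additivity; for trace distance and Wasserstein it follows from the triangle inequality together with tensor-stability).
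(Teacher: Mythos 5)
Your argument is correct and is essentially identical to the paper's proof: both write $\rho\boxtimes\sigma=\CEE(\rho\ot\sigma)$, apply monotonicity under the channel $\CEE$ followed by subadditivity under tensor products to get \eqref{ineq:suba_1}, and then obtain \eqref{ineq:suba_2} by substituting $\tau\boxtimes\tau$ for $\tau$. Nothing is missing.
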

\begin{proof}
Based on the monotonicity of $D$ under quantum channels and 
 the subadditivity of $D$ under tensor product, we have
\begin{align*}
D(\rho_1\boxtimes\sigma_1,\rho_2\boxtimes\sigma_2)
=&D(\mathcal{E}_G(\rho_1\ot\sigma_1),\mathcal{E}_G(\rho_2\ot\sigma_2))\\
\leq& D(\rho_1\ot\sigma_1,\rho_2\ot\sigma_2)\\
\leq&  D(\rho_1,\rho_2)+D(\sigma_1,\sigma_2) \;,
\end{align*}
If $\tau\boxtimes\tau=\tau$, then 
\begin{eqnarray*}
 D(\rho\boxtimes\sigma,\tau)
 =D(\rho\boxtimes\sigma,\tau\boxtimes\tau)
 \leq D(\rho,\tau)+D(\sigma,\tau) \;,
\end{eqnarray*}
to complete the proof.
\end{proof}

\begin{cor}\label{prop:mon_con_ren}
Both the $L_1$-norm $\norm{\cdot}_1$ and the quantum  R\'enyi relative entropy $D_{\alpha}$ with $\alpha\geq 1/2$
satisfy  \eqref{ineq:suba_1} and \eqref{ineq:suba_2}.
\end{cor}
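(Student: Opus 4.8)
The plan is to deduce the corollary directly from the preceding Proposition on \textbf{Monotonicity under Quantum Convolution}. That proposition shows that \emph{any} distance measure $D\colon\mathcal{D}(\mathcal{H}^{\ot n})\times\mathcal{D}(\mathcal{H}^{\ot n})\to\real$ which is (i) monotone (contractive) under quantum channels and (ii) subadditive under tensor products automatically satisfies \eqref{ineq:suba_1} and \eqref{ineq:suba_2}, since $\mathcal{E}$ is a quantum channel and $\rho_1\boxtimes\sigma_1=\mathcal{E}(\rho_1\ot\sigma_1)$. So the whole task reduces to verifying properties (i) and (ii) for the two specific choices $D(\rho,\sigma)=\norm{\rho-\sigma}_1$ and $D(\rho,\sigma)=D_\alpha(\rho\,\|\,\sigma)$ with $\alpha\ge 1/2$.

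First I would treat the trace distance. Property (i) is the standard contractivity of the trace norm under CPTP maps (the data-processing inequality for trace distance). Property (ii) follows by writing $\rho_1\ot\sigma_1-\rho_2\ot\sigma_2=(\rho_1-\rho_2)\ot\sigma_1+\rho_2\ot(\sigma_1-\sigma_2)$, applying the triangle inequality for $\norm{\cdot}_1$, using multiplicativity $\norm{A\ot B}_1=\norm{A}_1\norm{B}_1$, and then $\norm{\sigma_1}_1=\norm{\rho_2}_1=1$ since these are states. Feeding these two properties into the Proposition yields \eqref{ineq:suba_1} and \eqref{ineq:suba_2} for $\norm{\cdot}_1$.

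Next I would treat $D_\alpha$. Property (i) for $\alpha\ge 1/2$ is precisely the data-processing inequality for this (sandwiched) Rényi relative entropy, already recalled in \S\ref{sec:pre} (see \cite{Tomamichel2015quantum}); this is exactly where the hypothesis $\alpha\ge 1/2$ enters, as data processing can fail for $\alpha<1/2$. Property (ii) is in fact an equality, the additivity $D_\alpha(\rho_1\ot\sigma_1\,\|\,\rho_2\ot\sigma_2)=D_\alpha(\rho_1\,\|\,\rho_2)+D_\alpha(\sigma_1\,\|\,\sigma_2)$, which one checks by inserting the definition, factoring $(\rho_2\ot\sigma_2)^{\frac{1-\alpha}{2\alpha}}=\rho_2^{\frac{1-\alpha}{2\alpha}}\ot\sigma_2^{\frac{1-\alpha}{2\alpha}}$, and using $\Tr{A\ot B}=\Tr{A}\Tr{B}$; additivity trivially implies subadditivity. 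The Proposition then again gives the two inequalities. If a relative entropy is infinite, all statements are read in $[0,+\infty]$ and remain valid.

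I do not anticipate a genuine obstacle: everything is an invocation of standard facts together with the already-proved Proposition. The only points deserving a word of care are the correct range of $\alpha$ for the data-processing inequality and the bookkeeping of unit-trace normalizations in the tensor-product estimate for $\norm{\cdot}_1$; for \eqref{ineq:suba_2} one simply inherits the hypothesis $\tau\boxtimes\tau=\tau$ from the Proposition, concrete such $\tau$ being MSPSs fixed by $\boxtimes$, for example $I_n/d^n$ when $G$ is positive, by Lemma \ref{lem:conv_iden}.
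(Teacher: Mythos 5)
Your proposal is correct and is exactly the intended argument: the paper states this as an immediate corollary of the preceding Proposition, with the two hypotheses verified just as you do (contractivity of $\norm{\cdot}_1$ under CPTP maps plus the triangle-inequality/multiplicativity estimate for tensor products, and data processing plus additivity for the sandwiched $D_\alpha$ with $\alpha\ge 1/2$, as recalled in \S\ref{sec:pre}). No gaps.
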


\begin{Def}[\bf De Palma, Marvian, Trevisan, and Lloyd \cite{Palma21}]
Given two $n$-qudit states $\rho,\sigma\in\mathcal{D}(\mathcal{H}^{\ot n})$, the \textit{quantum Wasserstein distance of order 1} 
is 
\begin{eqnarray*}
\norm{\rho-\sigma}_{W_1}
:= 
\min
\left\{\sum_{i=1}^n |c_i|: \rho-\sigma=\sum_{i=1}^n c_i(\rho_i-\sigma_i)\;,  \rho_i,\sigma_i\in \mathcal{D}(\mathcal{H}^{\ot n}),
\text{ and }\ \right.\\
\hskip -1in
\left.\phantom{\sum_{i}^{n}} 
\Ptr{i}{\rho_i}=\Ptr{i}{\sigma_i} ,\  i=1,...,n\;. \right\}
\label{eq:def_W1}
\end{eqnarray*}
\end{Def}
The quantum Wasserstein distance, as defined above, has numerous applications in quantum information and computation, such as quantum machine learning~\cite{Kiani21,de2022limitations} and quantum circuit complexity~\cite{li2022wasserstein}. Here, we consider how the quantum Wasserstein distance acts under our quantum convolution.

\begin{prop}[\bf Wasserstein monotonicity]
Let  $\rho,\tau, \sigma$ be three $n$-qudit states.

(1) If $G$ is odd-parity positive, then 
\begin{eqnarray*}
\norm{\sigma\boxtimes\rho-\sigma\boxtimes\tau}_{W_1}
\leq \norm{\rho-\tau}_{W_1} \;.
\end{eqnarray*}

(2) If $G$ is even-parity positive, then 
\begin{eqnarray*}
\norm{\rho\boxtimes\sigma-\tau\boxtimes\sigma}_{W_1}
\leq \norm{\rho-\tau}_{W_1} \;.
\end{eqnarray*}

(3) If  $G$ is positive, then 
\begin{eqnarray*}
\norm{\rho_1\boxtimes\sigma_1-\rho_2\boxtimes\sigma_2}_{W_1}
\leq \norm{\rho_1-\rho_2}_{W_1}+ \norm{\sigma_1-\sigma_2}_{W_1} \;.
\end{eqnarray*}
\end{prop}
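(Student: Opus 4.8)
The plan is to argue entirely through the variational (primal) definition of the quantum Wasserstein distance $\norm{\cdot}_{W_1}$, using only the product structure of the key unitary. Recall from Definition~\ref{def:Cli_unitary} that $U=\bigotimes_{k=1}^n U_k$, where each $U_k$ is a two-qudit unitary supported on the pair of qudits $A_k,B_k$. First I would isolate the one structural fact that does all the work: \emph{for any two $2n$-qudit operators $\omega,\omega'$ with $\Ptr{A_iB_i}{\omega}=\Ptr{A_iB_i}{\omega'}$, one has $\Ptr{A_i}{\mathcal{E}(\omega)}=\Ptr{A_i}{\mathcal{E}(\omega')}$}, where $\mathcal{E}$ is the convolutional channel of Definition~\ref{def:con_Channel}. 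To see it, write $U=U_i\ot V$ with $V=\bigotimes_{k\neq i}U_k$ supported on the complement of $\{A_i,B_i\}$; since the partial trace over $A_iB_i$ is invariant under conjugation by $U_i$ and commutes past $V$, one gets $\Ptr{A_iB_i}{U\omega U^\dag}=V\,\Ptr{A_iB_i}{\omega}\,V^\dag$, and therefore $\Ptr{A_i}{\mathcal{E}(\omega)}=\Ptr{A_iB}{U\omega U^\dag}=\Ptr{B\setminus B_i}{V\,\Ptr{A_iB_i}{\omega}\,V^\dag}$ depends on $\omega$ only through $\Ptr{A_iB_i}{\omega}$.

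Granting this, part~(1) follows by transporting an optimal $W_1$-decomposition through $\mathcal{E}$. Choose $\rho-\tau=\sum_{i=1}^n c_i(\rho_i-\tau_i)$ with $\rho_i,\tau_i\in\mathcal{D}(\mathcal{H}^{\ot n})$, $\Ptr{i}{\rho_i}=\Ptr{i}{\tau_i}$ and $\sum_i|c_i|=\norm{\rho-\tau}_{W_1}$. By bilinearity of $\boxtimes$,
\[
\sigma\boxtimes\rho-\sigma\boxtimes\tau=\sum_{i=1}^n c_i\big(\mathcal{E}(\sigma\ot\rho_i)-\mathcal{E}(\sigma\ot\tau_i)\big).
\]
Each $\mathcal{E}(\sigma\ot\rho_i)$ and $\mathcal{E}(\sigma\ot\tau_i)$ is a state because $\mathcal{E}$ is CPTP, and since $\Ptr{A_iB_i}{\sigma\ot\rho_i}=(\Ptr{A_i}{\sigma})\ot(\Ptr{B_i}{\rho_i})$ with $\Ptr{B_i}{\rho_i}=\Ptr{B_i}{\tau_i}$, the structural fact gives $\Ptr{A_i}{\mathcal{E}(\sigma\ot\rho_i)}=\Ptr{A_i}{\mathcal{E}(\sigma\ot\tau_i)}$. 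Hence the displayed identity is an admissible decomposition in the definition of $\norm{\cdot}_{W_1}$, so $\norm{\sigma\boxtimes\rho-\sigma\boxtimes\tau}_{W_1}\le\sum_i|c_i|=\norm{\rho-\tau}_{W_1}$. Part~(2) is the mirror image: put $\sigma$ in the second slot and perturb the first, so that $\Ptr{A_iB_i}{\rho_i\ot\sigma}=(\Ptr{A_i}{\rho_i})\ot(\Ptr{B_i}{\sigma})$ and $\Ptr{A_i}{\rho_i}=\Ptr{A_i}{\tau_i}$, and the same computation applies.

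For part~(3) I would combine the two one-sided estimates with the triangle inequality for the norm $\norm{\cdot}_{W_1}$. Since $G$ positive means $G$ is both odd- and even-parity positive, parts (1) and (2) both apply, and
\begin{align*}
\norm{\rho_1\boxtimes\sigma_1-\rho_2\boxtimes\sigma_2}_{W_1}
&\le\norm{\rho_1\boxtimes\sigma_1-\rho_1\boxtimes\sigma_2}_{W_1}+\norm{\rho_1\boxtimes\sigma_2-\rho_2\boxtimes\sigma_2}_{W_1}\\
&\le\norm{\sigma_1-\sigma_2}_{W_1}+\norm{\rho_1-\rho_2}_{W_1}.
\end{align*}
I expect the only delicate point to be the subsystem bookkeeping in the structural fact — in particular, tracking that $U_k$ pairs $A_k$ with $B_k$ for the \emph{same} index $k$, so a perturbation confined to input qudit $i$ produces an output perturbation confined to qudit $i$ with no relabeling; given the explicit form of $U$ in Definition~\ref{def:Cli_unitary} this is routine. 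The parity hypotheses on $G$ enter only in supplying parts (1) and (2) in the one-sided form used in (3) (these are the same hypotheses that appear in Lemma~\ref{lem:conv_iden}); everything else is a formal manipulation of partial traces against the definition of $\norm{\cdot}_{W_1}$.
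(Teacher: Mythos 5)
Your proof is correct, and it takes a genuinely different route from the paper's. The paper also transports the optimal $W_1$-decomposition through the map $\rho\mapsto\rho\boxtimes\sigma$, but it verifies the marginal condition $\Ptr{i}{\mathcal{E}_\sigma(\rho_i)}=\Ptr{i}{\mathcal{E}_\sigma(\tau_i)}$ by a Weyl-twirling argument: it combines Proposition~\ref{prop:comm_wel} with the identity $\mathbb{E}_{(p,q)\in V_i}\,w(p,q)(\cdot)w(p,q)^\dag=\Ptr{i}{\cdot}\ot I_i/d$, and this is precisely where the parity hypothesis is invoked --- one needs the transformed exponents (e.g. $g_{00}p$ and $Ng_{11}q$) to sweep out all of $V_i$, i.e. the relevant entries of $G$ to be invertible mod $d$. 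You instead read the locality off the key unitary itself: $U$ factorizes into commuting two-qudit gates $U_k$, each pairing $A_k$ with $B_k$, so $\Ptr{A_iB_i}{U\omega U^\dag}=V\,\Ptr{A_iB_i}{\omega}\,V^\dag$ and hence $\Ptr{A_i}{\mathcal{E}(\omega)}$ depends on $\omega$ only through $\Ptr{A_iB_i}{\omega}$. This is sound, and it buys something: your argument never uses odd- or even-parity positivity, only the invertibility of $G$ (which is all that is needed for $U$ to be a well-defined product of two-qudit unitaries), so you have in fact proved (1) and (2) --- and hence (3) by the triangle inequality --- for every invertible parameter matrix $G$. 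Your closing remark that the parity hypotheses ``enter only in supplying parts (1) and (2)'' therefore undersells your own proof: they do not enter at all. The subsystem bookkeeping you flag is indeed the only delicate point, and you have it right: $U_k$ couples $A_k$ to $B_k$ with the same index, so a perturbation localized (after partial trace) on input qudit $i$ yields an output perturbation localized on output qudit $i$, which is exactly the admissibility condition in the definition of $\norm{\cdot}_{W_1}$.
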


\begin{proof}
We begin with the case that $G$ is odd-parity positive. 
We need to prove that
\begin{eqnarray*}
\norm{\mathcal{E}_{\sigma}(\rho)-\mathcal{E}_{\sigma}(\tau)}_{W_1}
\leq \norm{\rho-\tau}_{W_1} \;,
\end{eqnarray*}
for the quantum convolutional channel 
$\mathcal{E}_{\sigma}(\rho)=\Ptr{B}{U_G\rho\ot\sigma U^\dag_G}$. 
Consider the optimal decomposition of  $ \norm{\rho-\tau}_{W_1}$. In other words,
$\norm{\rho-\tau}_{W_1}=\sum_i|c_i|$ with 
\begin{eqnarray}\label{eq:decom}
\rho-\tau=\sum_{i=1}^n c_i(\rho_i-\tau_i), \; \text{ and }\; \Ptr{i}{\rho_i}=\Ptr{i}{\tau_i}, \;   i=1,...,n \;.
\end{eqnarray}
Let $\set{w(p,q):(p,q)\in V_i}$ be the local Weyl operators  on the $i$-th qudit. Then we infer from Proposition \ref{prop:comm_wel} that 
\begin{eqnarray*}
\mathbb{E}_{(p,q)\in V_i}\,\mathcal{E}_{\sigma}(w(p ,q )\,\rho \,w(p ,q )^\dag)
=\mathbb{E}_{(p,q)\in V_i} \,
w(g_{00}p ,Ng_{11}q)\,
\mathcal{E}_{\sigma}(\rho ) \,w(g_{00}p ,Ng_{11}q )^\dag \;.
\end{eqnarray*}
Since 
$
\mathbb{E}_{(p,q)\in V_i}\, w(p ,q )\,(\ \cdot\ )\, w(p ,q )^\dag
= \trace_i[\cdot] \otimes {I_i}/{d}
$,
and $G$ is odd-parity positive, 
we have 
\begin{eqnarray}\label{eq:com_tr}
\Ptr{i}{\mathcal{E}_{\sigma}(\rho)}\ot I_i/d
=\mathcal{E}_{\sigma}(\Ptr{i}{\rho} \ot I_i/d) \;.
\end{eqnarray}
Hence, we have the following decomposition 
\begin{eqnarray*}
\mathcal{E}_{\sigma}(\rho)-\mathcal{E}_{\sigma}(\tau)=\sum_ic_i[\mathcal{E}_{\sigma}(\rho_i)-\mathcal{E}_{\sigma}(\tau_i)] \;,
\end{eqnarray*}
and
\begin{eqnarray*}
\Ptr{i}{\mathcal{E}_{\sigma}(\rho_i)}\ot I_i/d
=\mathcal{E}_{\sigma}(\Ptr{i}{\rho_i}\ot I_i/d)
=\mathcal{E}_{\sigma}(\Ptr{i}{\tau_i}\ot I_i/d)
=\Ptr{i}{\mathcal{E}_{\sigma}(\tau_i)}\ot I_i/d \;,
\end{eqnarray*}
where the first and third equalities come from \eqref{eq:com_tr}, and the second equality comes from \eqref{eq:decom}. Thus 
\begin{eqnarray*}
\Ptr{i}{\mathcal{E}_{\sigma}(\rho_i)}=\Ptr{i}{\mathcal{E}_{\sigma}(\tau_i)} \;,
\end{eqnarray*}
and therefore  $\norm{\sigma\boxtimes\rho-\sigma\boxtimes\tau}_{W_1} \le \sum_i |c_i| =\norm{\rho-\tau}_{W_1}$. This shows that case (1) holds.
A similar argument demonstrates  case (2). Finally case (3) is a corollary of cases (1) and (2). 
\end{proof}

\subsection{Convolution-Majorization}
Here we investigate how the notion of majorization in Definition \ref{def:major} is related to the convolution $\boxtimes$ of states. Recall the vector $\vec{\lambda}_\rho$ of the state $\rho$ introduced in \eqref{eq:eigen_vec}.

\begin{Def}
Assume the two states $\rho$ and $\sigma$ have vectors  $\vec{\lambda}_{\rho}, \vec{\lambda}_{\sigma}$ of eigenvalues.  
\begin{enumerate}[(1)]
    \item
    The convolution $\boxtimes$ has odd convolution-majorization if for all $\rho,\sigma$, 
\begin{eqnarray}\label{CM-1}
\vec{\lambda}_{\rho\boxtimes\sigma}
\prec\vec{\lambda}_{\sigma} \;.
\end{eqnarray}

\item
The convolution $\boxtimes$ has even convolution-majorization if for all $\rho,\sigma$,
\begin{eqnarray}\label{CM-2}
\vec{\lambda}_{\rho\boxtimes\sigma}
\prec\vec{\lambda}_{\rho} \;.
\end{eqnarray}

\item The convolution $\boxtimes$ has full convolution-majorization, if both (1) and (2) hold.   
\end{enumerate}
\end{Def}

\begin{thm}[\bf Convolution-Majorization]\label{thm:major}
(1) If $G$ is odd-parity positive, then the corresponding convolution $\boxtimes$ has  odd convolution-majorization.
For any Schur-concave function $f$, 
\begin{eqnarray}
f(\vec{\lambda}_{\rho \boxtimes \sigma})
\ge f(\vec{\lambda}_{\sigma})\;.
\end{eqnarray}

(2) If $G$ is even-parity positive, then the corresponding convolution $\boxtimes$ has even convolution-majorization.
For any Schur-concave function $f$, 
\begin{eqnarray}\label{ineq:sch_con_2}
f(\vec{\lambda}_{\rho \boxtimes \sigma})
\ge f(\vec{\lambda}_{\rho})\;.
\end{eqnarray}

(3) If $G$ is positive, then the corresponding convolution $\boxtimes$ has full convolution-majorization. For any Schur-concave function $f$, 
\begin{eqnarray}
f(\vec{\lambda}_{\rho\boxtimes \sigma})
\ge \max\set{f(\vec{\lambda}_{\rho}), f(\vec{\lambda}_{\sigma})}\;.
\end{eqnarray}

\end{thm}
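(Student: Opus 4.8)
The plan is to deduce all three parts from one observation: if one argument of $\boxtimes$ is held fixed, the resulting map on the other argument is a \emph{unital} quantum channel, and a unital channel contracts the spectrum in the majorization order (its eigenvalue vector is obtained from the input one by a doubly stochastic matrix). The parity hypotheses on $G$ are precisely what guarantees unitality, via Lemma \ref{lem:conv_iden}.

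For part (1) I would fix $\rho$ and set $\Phi_\rho(\sigma):=\rho\boxtimes\sigma=\Ptr{B}{U(\rho\ot\sigma)U^\dag}$. This is CPTP in $\sigma$, being the composition of $\sigma\mapsto\rho\ot\sigma$ (completely positive and trace preserving since $\Tr{\rho}=1$), conjugation by the Clifford unitary $U$, and the partial trace over $B$. When $G$ is odd-parity positive, Lemma \ref{lem:conv_iden}(1) gives $\Phi_\rho(I_n/d^n)=\rho\boxtimes I_n/d^n=I_n/d^n$, so $\Phi_\rho$ is unital. Then, exactly as in the proof of Theorem \ref{thm:stab_proj}, I take spectral decompositions $\sigma=\sum_i\lambda_i\proj{\varphi_i}$ and $\Phi_\rho(\sigma)=\sum_j\mu_j\proj{\psi_j}$ and form $M_{ij}=\langle\psi_j|\Phi_\rho(\proj{\varphi_i})|\psi_j\rangle\ge 0$; the row sums equal $\langle\psi_j|\Phi_\rho(I)|\psi_j\rangle=1$ by unitality, the column sums equal $\Tr{\Phi_\rho(\proj{\varphi_i})}=1$ by trace preservation, so $M$ is doubly stochastic and $\vec\mu=M^T\vec\lambda$. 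By Proposition 1.A.3 in \cite{MOA79} this gives $\vec\lambda_{\rho\boxtimes\sigma}\prec\vec\lambda_\sigma$, and $f(\rho\boxtimes\sigma)\ge f(\sigma)$ for Schur-concave $f$ is then immediate from the definition of Schur concavity.

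Part (2) is the mirror image: for fixed $\sigma$ I set $\Psi_\sigma(\rho)=\rho\boxtimes\sigma$, again CPTP, and when $G$ is even-parity positive Lemma \ref{lem:conv_iden}(2) gives $\Psi_\sigma(I_n/d^n)=I_n/d^n$, so $\Psi_\sigma$ is unital; the same doubly-stochastic-matrix argument yields $\vec\lambda_{\rho\boxtimes\sigma}\prec\vec\lambda_\rho$ and $f(\rho\boxtimes\sigma)\ge f(\rho)$. Part (3) follows by combining the two: a positive $G$ is both odd- and even-parity positive, so $\vec\lambda_{\rho\boxtimes\sigma}\prec\vec\lambda_\rho$ and $\vec\lambda_{\rho\boxtimes\sigma}\prec\vec\lambda_\sigma$ both hold, whence $f(\rho\boxtimes\sigma)\ge\max\{f(\rho),f(\sigma)\}$ for every Schur-concave $f$.

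I do not expect a genuine obstacle here: the substantive input is the unitality of the one-sided convolution channels, and that is already delivered by Lemma \ref{lem:conv_iden} (itself a consequence of the convolution--multiplication duality in Proposition \ref{prop:chara_conv} together with the parity conditions on $G$); everything downstream is the standard fact, carried out once already in Theorem \ref{thm:stab_proj}, that a unital CPTP map induces a doubly stochastic map on eigenvalue vectors. The only point requiring care is bookkeeping about which argument is frozen --- odd-parity positivity controls the $\sigma$-slot channel (hence majorization by $\vec\lambda_\sigma$) while even-parity positivity controls the $\rho$-slot channel --- which matches the asymmetry already visible in Lemma \ref{lem:conv_iden}.
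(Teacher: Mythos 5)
Your proposal is correct and follows essentially the same route as the paper: the paper's proof also fixes one argument, uses Lemma \ref{lem:conv_iden} to get $I\boxtimes\sigma=I$ (resp.\ $\rho\boxtimes I=I$), and builds exactly the doubly stochastic matrix $m_{kj}=\langle\xi_k|\,\proj{\psi_j}\boxtimes\sigma\,|\xi_k\rangle$ to conclude $\vec\lambda_{\rho\boxtimes\sigma}\prec\vec\lambda_\rho$ via Proposition 1.A.3 of \cite{MOA79}. Your framing in terms of unitality of the one-sided convolution channel is just a cleaner packaging of the identical computation.
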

\begin{proof}
Let us analyze the case where $G$ is even-parity positive; the proofs in the other cases  are similar.
Since $f$ is Schur-concave,   inequality \eqref{ineq:sch_con_2}
follows from $\vec{\lambda}_{\rho\boxtimes\sigma}
\prec\vec{\lambda}_{\rho}$. 
Consider the spectral decompositions of the states $\rho$ and $\rho \boxtimes \sigma$,
\begin{align*}
\rho = & \sum_{j=1}^{d^n} \lambda_j |\psi_j\rangle \langle \psi_j|,\\
\rho \boxtimes\sigma =& \sum_{j=1}^{d^n} \nu_j |\xi_j\rangle \langle \xi_ j| \;,
\end{align*}
where both $\ket{\psi_j}$ and $\ket{\xi_j}$ are  normalized eigenstates.
We need to prove that 
\begin{eqnarray*}
(\nu_1,...,\nu_{d^n}) \prec(\lambda_1,...,\lambda_{d^n}) \;.
\end{eqnarray*}
In the definition of order, we assume that $\lambda_1\ge \cdots \ge \lambda_{d^n}$ and $\nu_1\ge \cdots \ge \nu_{d^n} $.
Define
\begin{align*}
\tau_j 
=  
|\psi_j\rangle \langle \psi_j|\boxtimes \sigma \;,
\end{align*}
where each $\tau_j$ is a quantum state.
Moreover,
\begin{align}\label{0102shi1}
\sum_{j=1}^{d^n} \lambda_j \tau_j =  \sum_{j=1}^{d^n}  \lambda_j |\psi_j\rangle \langle \psi_j| \boxtimes \sigma = \rho\boxtimes \sigma \;,
\end{align}
and
\begin{align}\label{0102shi3}
\sum_{j=1}^{d^n} \tau_j = \sum_{j=1}^{d^n} \proj{\psi_j}\boxtimes \sigma=\left(\sum^{d^n}_{j=1}\proj{\psi_j}\right)\boxtimes\sigma = I \boxtimes \sigma = I \;,
\end{align}
where the last equality comes from Lemma \ref{lem:conv_iden}.
By (\ref{0102shi1}),
\begin{align}\label{0102shi4}
\nu_k = \langle \xi_k| \rho \boxtimes \sigma | \xi_k \rangle = \sum_{j=1}^{d^n} \lambda_j \langle \xi_k| \tau_j  | \xi_k \rangle \;.
\end{align}
Let us consider the $d^n\times d^n$ matrix $M=\left( m_{kj}\right)_{k,j=1}^{d^n}$, where each entry $m_{kj}$ is defined as
\begin{align*}
m_{kj} = \langle \xi_k| \tau_j  | \xi_k \rangle \;.
\end{align*}
By definition, each $m_{kj}\ge 0$, and
\begin{align}
\sum_{k } m_{kj} =& \sum_{k }  \langle \xi_k| \tau_j  | \xi_k \rangle = \Tr{\tau_j}= 1 \;,\\
\sum_{j } m_{kj} =&\sum_{j } \langle \xi_k| \tau_j  | \xi_k \rangle = \langle \xi_k| I  | \xi_k \rangle =1 \;, \label{0102shi2}
\end{align}
where (\ref{0102shi2}) comes from (\ref{0102shi3}).
Thus $M$ is a doubly stochastic matrix.
By (\ref{0102shi4}),
\begin{align}\label{0109shi3}
(\nu_1,...,\nu_{d^n})^T = M (\lambda_1,...,\lambda_{d^n}) ^T \;.
\end{align}
Based on Proposition 1.A.3 in \cite{MOA79}, $
(\nu_1,...,\nu_{d^n}) \prec(\lambda_1,...,\lambda_{d^n})$.
\end{proof}

\begin{prop}[\bf Convolution increases subentropy of states]
Let $\rho, \sigma$ be two $n$-qudit states,

(1) If $G$ is odd-parity positive, then the subentropy in Definition~\ref{def:suben} satisfies,
\begin{align}
Q(\rho \boxtimes \sigma) \ge  Q(  \sigma) \;.
\end{align} 

(2) If $G$ is even-parity positive, 
then 
\begin{align}
Q(\rho \boxtimes \sigma) \ge Q(\rho ) \;.
\end{align}

(3) If $G$ is positive, then  
\begin{eqnarray*}
Q(\rho\boxtimes \sigma)
\geq \max\set{Q(\rho), Q(\sigma)} \;.
\end{eqnarray*}
\end{prop}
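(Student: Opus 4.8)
The plan is to obtain all three inequalities as immediate consequences of the Convolution-Majorization Theorem~\ref{thm:major} together with the Schur concavity of the subentropy. The only analytic input we need about $Q$ itself is that, viewed as a function of the eigenvalue vector of a state, it is Schur concave; this is exactly the fact recorded in the corollary to Theorem~\ref{thm:stab_proj}. So the structure of the argument is: identify the relevant majorization relation for $\vec\lambda_{\rho\boxtimes\sigma}$ from Theorem~\ref{thm:major}, then feed it into Schur concavity.

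For part (1), assume $G$ is odd-parity positive. Then part (1) of Theorem~\ref{thm:major} says the convolution has odd convolution-majorization, i.e.\ $\vec\lambda_{\rho\boxtimes\sigma}\prec\vec\lambda_{\sigma}$. Since $Q$ is Schur concave, this gives $Q(\rho\boxtimes\sigma)\ge Q(\sigma)$, which is \eqref{ineq:entrop_con}. For part (2), assume $G$ is even-parity positive; part (2) of Theorem~\ref{thm:major} gives $\vec\lambda_{\rho\boxtimes\sigma}\prec\vec\lambda_{\rho}$, and the same Schur-concavity step yields $Q(\rho\boxtimes\sigma)\ge Q(\rho)$, which is \eqref{ineq:entrop_con_2}. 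For part (3), assume $G$ is positive; then part (3) of Theorem~\ref{thm:major} gives both majorization relations simultaneously, so $Q(\rho\boxtimes\sigma)\ge Q(\rho)$ and $Q(\rho\boxtimes\sigma)\ge Q(\sigma)$, hence $Q(\rho\boxtimes\sigma)\ge\max\{Q(\rho),Q(\sigma)\}$.

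There is no genuine obstacle here: the substantive work has already been carried out in establishing the doubly-stochastic-matrix argument behind Theorem~\ref{thm:major} and in recording that subentropy is Schur concave. The only points worth a line of care are bookkeeping ones: that $Q(\rho)$ as given in Definition~\ref{def:suben} really is the Schur-concave functional evaluated on $\vec\lambda_\rho$, and that the divided-difference formula for $Q$ is interpreted in the usual limiting sense when $\rho$ has repeated eigenvalues, so that the inequality $f(\vec p)\ge f(\vec q)$ for $\vec p\prec\vec q$ applies verbatim. Both are standard and can be dispatched in a sentence.

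Thus the proof is essentially a one-line invocation of Theorem~\ref{thm:major} and the Schur concavity of $Q$ for each of the three parity hypotheses on $G$, and no further estimates are needed.
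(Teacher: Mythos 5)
Your proposal is correct and matches the paper's (implicit) argument exactly: the proposition is stated as an immediate consequence of the Convolution-Majorization Theorem~\ref{thm:major} applied to the Schur-concave function $Q$, which is precisely what you do. No further comment is needed.
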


\begin{prop}[\bf Convolution increases R\'enyi entropy of states]\label{prop:entropy}
Let $\rho, \sigma$ be two $n$-qudit states, and $\alpha \in [-\infty, +\infty]$. 

(1) If $G$ is odd-parity positive, then the generalized R\'enyi entropy in Definition~\ref{def:gen_ren_en} satisfies,
\begin{align}\label{ineq:entrop_con}
H_\alpha(\rho \boxtimes \sigma) \ge  H_\alpha(  \sigma) \;.
\end{align} 

(2) If $G$ is even-parity positive, 
then 
\begin{align}\label{ineq:entrop_con_2}
H_\alpha(\rho \boxtimes \sigma) \ge H_\alpha(\rho ) \;.
\end{align}

(3) If $G$ is positive, then  
\begin{eqnarray*}
H_{\alpha}(\rho\boxtimes \sigma)
\geq \max\set{H_{\alpha}(\rho), H_{\alpha}(\sigma)} \;.
\end{eqnarray*}
\end{prop}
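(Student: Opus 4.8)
The plan is to obtain all three inequalities as an immediate consequence of the Convolution-Majorization Theorem \ref{thm:major}, once we record that the generalized quantum R\'enyi entropy $H_\alpha$ is a Schur-concave function of the spectrum for every $\alpha\in[-\infty,+\infty]$. Since $H_\alpha(\rho)$ depends only on the eigenvalue vector $\vec\lambda_\rho$, it suffices to show that the map $\vec\lambda\mapsto \frac{\sgn(\alpha)}{1-\alpha}\log\sum_i\lambda_i^\alpha$, restricted to probability vectors, is Schur concave. Granting this, part (1) follows because $G$ odd-parity positive gives $\vec\lambda_{\rho\boxtimes\sigma}\prec\vec\lambda_\sigma$ by Theorem \ref{thm:major}, so $H_\alpha(\rho\boxtimes\sigma)\ge H_\alpha(\sigma)$; part (2) follows from $\vec\lambda_{\rho\boxtimes\sigma}\prec\vec\lambda_\rho$ in the even-parity case; and part (3) follows from both, giving $H_\alpha(\rho\boxtimes\sigma)\ge\max\{H_\alpha(\rho),H_\alpha(\sigma)\}$.

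To verify the Schur concavity of $H_\alpha$ I would run the standard case analysis on $\alpha$. On $(0,\infty)$ the power $t\mapsto t^\alpha$ is concave for $0<\alpha<1$ and convex for $\alpha>1$ as well as for $\alpha<0$; hence the symmetric function $\vec\lambda\mapsto\sum_i\lambda_i^\alpha$ is Schur concave in the first regime and Schur convex in the others. Composing with the increasing map $\log$ and then with the scalar $\frac{\sgn(\alpha)}{1-\alpha}$ --- which is strictly positive for $0\le\alpha<1$ and strictly negative for $\alpha>1$ and for $\alpha<0$ --- produces a Schur-concave function in every case. The boundary values are handled by continuity or directly: $H_1$ is the von Neumann entropy (Schur concave since $-t\log t$ is concave), $H_0$ is the logarithm of the rank (Schur concave, since a majorized vector has at least as many nonzero components), $H_{+\infty}=-\log\lambda_{\max}$ is Schur concave because $\lambda_{\max}$ is Schur convex, and $H_{-\infty}=\log\lambda_{\min}$ is Schur concave because $\lambda_{\min}$ is Schur concave. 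Alternatively, one may simply cite the Schur concavity of $H_\alpha$ established in \cite{BHOW15}.

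There is no substantive obstacle; the only points needing care are the sign of $\frac{\sgn(\alpha)}{1-\alpha}$ across the ranges of $\alpha$ and the convention that $\lambda_i^\alpha=+\infty$ when $\alpha<0$ and $\rho$ is rank-deficient, which does not affect the majorization argument --- one may restrict to the common support of the relevant eigenvalue vectors, or pass to the limit from full-rank states. With Schur concavity in hand, applying Theorem \ref{thm:major} to $f=H_\alpha$ yields parts (1)--(3) verbatim, paralleling the preceding proposition on subentropy.
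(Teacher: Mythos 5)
Your proposal is correct and matches the paper's route exactly: the proposition is an immediate corollary of Theorem~\ref{thm:major} applied to the Schur-concave function $f=H_\alpha$, which the paper treats as a known example of a Schur-concave function while you verify it explicitly. Your case analysis of the sign of $\sgn(\alpha)/(1-\alpha)$ and of the boundary values $\alpha\in\{-\infty,0,1,+\infty\}$ is sound, so the extra verification only strengthens the argument.
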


We obtain the entropy inequalities for the convolution of quantum states based on the convolution-majorization property. This can be regarded as a quantum analog of the entropy power inequality, which is an important topic for classical convolution.

\begin{Rem}
If $G$ is not odd-parity positive (resp., not even-parity positive), 
then there may exist $\rho, \sigma$ such that 
$H_{\alpha}(\rho\boxtimes\sigma)<H_\alpha(\sigma)$ (resp., $H_{\alpha}(\rho\boxtimes\sigma)<H_\alpha(\rho)$.)
For example,
when $g_{10}=0$ and $g_{00}= g_{01}=g_{11}=1$,
let $\sigma=I/d^n$ and $\rho$ be any pure state associated with the maximal abelian Weyl group $\{w(\vec p,\vec 0): \vec p\in \Z_d^n\}$.
By Proposition \ref{prop:chara_conv},
$ \rho \boxtimes \sigma$ has characteristic function
\begin{align*}
\Xi_{ \rho \boxtimes \sigma} (\vec p, \vec q) = \Xi_\rho (\vec p, \vec q) \Xi_\sigma (\vec 0, \vec q) = \left\{
\begin{aligned}
&\Xi_\rho (\vec p, \vec q), && \vec q=\vec 0\\
&0, && \vec q\neq \vec 0
\end{aligned}
\right.
\end{align*}
which is equal to the characteristic function $\Xi_{\rho}$.
Hence $\rho \boxtimes \sigma= \rho$,
and $H_\alpha(\rho \boxtimes \sigma) =0< H_\alpha(\sigma )= n\log d$.
Similarly,
\eqref{ineq:entrop_con_2} may fail if $G$ is not even-parity positive.
\end{Rem}

In classical cases it was proved that
$
H\left(\frac{X_1+X_2+...X_{N+1}}{\sqrt{N+1}}\right)
\geq H\left(\frac{X_1+X_2+...X_{N}}{\sqrt{N}}\right) $,
where  $X_1, X_2,...$ are  i.i.d. square-integrable random variables \cite{artstein2004JAMS}. This is
 a classical analog of the second law of thermodynamics. Here, let us consider the 
 behavior of quantum entropies under our quantum convolution. 
Let us take the convolution repeatedly and define $\boxtimes^{N+1}\rho=(\boxtimes^N\rho)\boxtimes\rho$ inductively, where $\boxtimes^0\rho=\rho$.
We find that the quantum R\'enyi entropy $H_{\alpha}(\boxtimes^N\rho)$ is increasing w.r.t. the number $N$ of convolutions.

\begin{thm}[\bf Second law of thermodynamics  for quantum convolution]\label{0212prop2}
Let $G$ be even-parity positive, 
and  $\rho$ be an $n$-qudit state and  $\alpha\in [-\infty, +\infty]$.
Then the quantum R\'enyi entropy satisfies 
\begin{eqnarray*}
H_{\alpha}(\boxtimes^{N+1}\rho)
\geq H_{\alpha}(\boxtimes^N\rho) \;,\quad \forall N\geq 0 \;.
\end{eqnarray*}
\end{thm}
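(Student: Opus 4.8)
The plan is to obtain this as an immediate consequence of the convolution-majorization theorem, Theorem~\ref{thm:major}, combined with the Schur concavity of the generalized R\'enyi entropy.

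First I would unwind the inductive definition $\boxtimes^{N+1}\rho=(\boxtimes^N\rho)\boxtimes\rho$, so that the accumulated state $\boxtimes^N\rho$ sits in the \emph{left} slot of the convolution and the fixed state $\rho$ sits in the right slot. Under the running hypothesis that $G$ is positive (in fact even-parity positivity alone suffices at this point), part~(2) of Theorem~\ref{thm:major} applies with $\rho\leftarrow\boxtimes^N\rho$ and $\sigma\leftarrow\rho$, and yields the eigenvalue majorization
\begin{eqnarray*}
\vec{\lambda}_{\boxtimes^{N+1}\rho}\prec\vec{\lambda}_{\boxtimes^N\rho}\;,\qquad\forall N\ge 0\;.
\end{eqnarray*}

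Second, I would invoke that the generalized quantum R\'enyi entropy $H_\alpha$ of Definition~\ref{def:gen_ren_en} is Schur concave in the spectrum for every $\alpha\in[-\infty,+\infty]$: for $0<\alpha<1$ because $t\mapsto t^\alpha$ is concave while $1/(1-\alpha)>0$; for $\alpha>1$ because $t\mapsto t^\alpha$ is convex while $1/(1-\alpha)<0$; for $\alpha<0$ because the factor $\mathrm{sgn}(\alpha)=-1$ flips the sign once more; and for the boundary values $\alpha\in\{0,1,+\infty\}$ because $H_\alpha$ reduces to $\log\mathrm{rank}(\rho)$, the von Neumann entropy, or $-\log\lambda_{\max}(\rho)$, all Schur concave (alternatively one simply cites \cite{BHOW15} for the whole family). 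Feeding the majorization of the first step into Schur concavity gives $H_\alpha(\boxtimes^{N+1}\rho)\ge H_\alpha(\boxtimes^N\rho)$, as claimed, for every $N\ge 0$.

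I do not anticipate a real obstacle here: the statement is essentially a corollary of Theorem~\ref{thm:major}. The one point that needs care is bookkeeping --- one must notice that in $\boxtimes^{N+1}\rho=(\boxtimes^N\rho)\boxtimes\rho$ it is the left argument that carries the iterated state, so one must invoke the even-parity (left-slot) form of convolution-majorization, not the odd-parity one; invoking the wrong version would demand the unavailable majorization by the right slot, which, as the Remark preceding the theorem shows, can genuinely fail.
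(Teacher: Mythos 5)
Your proposal is correct and follows essentially the same route as the paper: the paper's proof simply cites Proposition~\ref{prop:entropy} (part (2), the even-parity case applied with $\rho\leftarrow\boxtimes^{N}\rho$ and $\sigma\leftarrow\rho$), which is itself the combination of Theorem~\ref{thm:major} with Schur concavity of $H_\alpha$ that you spell out. Your observation that the left-slot (even-parity) form of convolution-majorization is the one required by the inductive definition $\boxtimes^{N+1}\rho=(\boxtimes^{N}\rho)\boxtimes\rho$ is exactly the right bookkeeping point.
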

\begin{proof}
This is an immediate consequence of Proposition \ref{prop:entropy}.

\end{proof}

\begin{lem}\label{lem:eq_entr}
Let  $G$ be even-parity positive, and let $\rho =  \sum_{j=1}^{J} \mu_j P_j$ be the spectral decomposition of the quantum state $\rho$
where $P_j$ is the projection to the eigenspace corresponding to the eigenvalue $\mu_j$, and $\mu_i\neq\mu_j$ for any $i\neq j$.
For any $\alpha\not\in \{-\infty,0,\infty\}$,
the equality 
$$H_\alpha(\rho\boxtimes \sigma) = H_\alpha(\rho)$$
holds iff each $Q_j : = P_j \boxtimes \sigma$ is a projection of the same rank as $P_j$  and $Q_{j_1}\perp Q_{j_2}$ whenever $j_1\neq j_2$.
\end{lem}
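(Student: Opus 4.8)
The plan is to reduce the claim to the equality case of the majorization $\vec\lambda_{\rho\boxtimes\sigma}\prec\vec\lambda_\rho$ proved in Theorem~\ref{thm:major}(2), combined with the \emph{strict} Schur concavity of the generalized R\'enyi entropy $H_\alpha$ for $\alpha\notin\{-\infty,0,\infty\}$ (see Appendix~\ref{0107app1}; cf.\ \cite{MOA79}). Strict Schur concavity together with $\vec\lambda_{\rho\boxtimes\sigma}\prec\vec\lambda_\rho$ gives $H_\alpha(\rho\boxtimes\sigma)\ge H_\alpha(\rho)$, with equality if and only if $\vec\lambda_{\rho\boxtimes\sigma}$ is a permutation of $\vec\lambda_\rho$; equivalently, $H_\alpha(\rho\boxtimes\sigma)=H_\alpha(\rho)$ holds precisely when $\rho\boxtimes\sigma$ and $\rho$ have the same spectrum as multisets, i.e.\ when $\mu_j$ is an eigenvalue of $\rho\boxtimes\sigma$ with multiplicity $r_j:=\mathrm{rank}(P_j)$ for every $j$ and no other eigenvalue occurs.

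With this reduction the ``if'' direction is immediate: if each $Q_j=P_j\boxtimes\sigma$ is a projection of rank $r_j$ and $Q_{j_1}\perp Q_{j_2}$ whenever $j_1\ne j_2$, then $\sum_j\mathrm{rank}(Q_j)=\sum_j r_j=d^n$ forces $\sum_j Q_j=I$, so that $\rho\boxtimes\sigma=\sum_j\mu_j(P_j\boxtimes\sigma)=\sum_j\mu_j Q_j$ is a spectral decomposition with the same spectrum as $\rho$, whence $H_\alpha(\rho\boxtimes\sigma)=H_\alpha(\rho)$.

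For the ``only if'' direction I would revisit the doubly stochastic matrix constructed in the proof of Theorem~\ref{thm:major}. Write the spectral decompositions $\rho=\sum_k\lambda_k\proj{\psi_k}$ and $\rho\boxtimes\sigma=\sum_k\nu_k\proj{\xi_k}$ with $\lambda_1\ge\cdots\ge\lambda_{d^n}$, $\nu_1\ge\cdots\ge\nu_{d^n}$, and group the indices into the consecutive blocks $\mathcal{B}_1,\mathcal{B}_2,\dots$ carrying the distinct eigenvalues $\mu_1>\mu_2>\cdots$, so that $P_l=\sum_{k\in\mathcal{B}_l}\proj{\psi_k}$. Recall that $(\nu_1,\dots,\nu_{d^n})^T=M(\lambda_1,\dots,\lambda_{d^n})^T$ for the doubly stochastic matrix $M=(m_{kj})$ with $m_{kj}=\langle\xi_k|(\proj{\psi_j}\boxtimes\sigma)|\xi_k\rangle$. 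By the first paragraph, equality of entropies forces $\vec\nu=\vec\lambda$ componentwise, i.e.\ $M\vec\lambda=\vec\lambda$. Decomposing $M$ as a convex combination of permutation matrices (Birkhoff) and testing against a strictly convex function shows that every permutation occurring with positive weight fixes $\vec\lambda$, hence permutes only within the blocks, so $M$ is block diagonal with respect to $\{\mathcal{B}_l\}$: $m_{kj}=0$ whenever $k$ and $j$ lie in different blocks. Since $\proj{\psi_j}\boxtimes\sigma\ge0$, this entails $(\proj{\psi_j}\boxtimes\sigma)|\xi_k\rangle=0$ whenever $k,j$ are in different blocks; summing over $j\in\mathcal{B}_l$ gives $Q_l|\xi_k\rangle=0$ for all $k\notin\mathcal{B}_l$, so $\mathrm{range}(Q_l)\subseteq\mathrm{range}(R_l)$, where $R_l:=\sum_{k\in\mathcal{B}_l}\proj{\xi_k}$ is the spectral projection of $\rho\boxtimes\sigma$ for the eigenvalue $\mu_l$; as $Q_l\ge0$, this forces $Q_l=R_lQ_lR_l$. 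Finally $\sum_l Q_l=\bigl(\sum_l P_l\bigr)\boxtimes\sigma=I\boxtimes\sigma=I$ by Lemma~\ref{lem:conv_iden}(2) (applicable since $G$ is even-parity positive), and conjugating this identity by $R_m$ while using $R_mR_l=\delta_{ml}R_m$ yields $Q_m=R_m$. Thus each $Q_m$ equals the orthogonal projection $R_m$, the family $\{Q_m\}$ is mutually orthogonal, and $\mathrm{rank}(Q_m)=\mathrm{rank}(R_m)$ is the multiplicity of $\mu_m$ in $\rho\boxtimes\sigma$, which equals $r_m=\mathrm{rank}(P_m)$ by the equality of spectra.

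The step I expect to be the main obstacle is the equality analysis of the majorization relation: showing that $M\vec\lambda=\vec\lambda$ forces $M$ to be block diagonal on the level sets of $\vec\lambda$, and then transporting this combinatorial fact to the ranges of the operators $Q_l$. One also has to keep track of the fact that strict Schur concavity of $H_\alpha$ needs slightly different (elementary) arguments for $\alpha=1$ and for $\alpha<0$ than for $\alpha\in(0,1)\cup(1,\infty)$, which is exactly why the endpoints $\alpha\in\{-\infty,0,\infty\}$ are excluded.
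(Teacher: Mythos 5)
Your proposal is correct, and it shares the paper's overall skeleton — the doubly stochastic matrix $M$ from the proof of Theorem~\ref{thm:major}, the reduction of entropy equality to $\vec\nu=\vec\lambda$, and the block‑diagonality of $M$ on the level sets of $\vec\lambda$ — but it finishes differently. For the block‑diagonality itself, the paper argues directly on the top block (if some $m_{kj}>0$ with $j>K$ then $\nu_k<\nu_K$, a contradiction) and then iterates, whereas you invoke Birkhoff plus strict convexity to handle all blocks at once; both are fine, and your explicit appeal to strict Schur concavity actually makes the step ``equality of $H_\alpha$ forces $\vec\nu=\vec\lambda$'' more transparent than the paper's. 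The genuine divergence is in how one converts the vanishing of the off‑block entries $m_{kj}=\langle\xi_k|\tau_j|\xi_k\rangle$ into the statement that $Q_l$ is a projection: the paper forms $\Phi=\tfrac1K\,Q_1$, uses the $\alpha=2$ case of Proposition~\ref{prop:entropy} to bound $\|\Phi\|_2^2\le 1/K$, compares with the $K$ diagonal entries all equal to $1/K$ to kill the off‑diagonals, and then peels off the top eigenvalue and repeats. You instead observe that $\tau_j\ge0$ and $\langle\xi_k|\tau_j|\xi_k\rangle=0$ already force $\tau_j|\xi_k\rangle=0$, so $\mathrm{range}(Q_l)\subseteq\mathrm{range}(R_l)$, and then the resolution $\sum_lQ_l=I\boxtimes\sigma=I$ (Lemma~\ref{lem:conv_iden}, even‑parity case) sandwiched by the mutually orthogonal $R_m$ yields $Q_m=R_m$ in one stroke. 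This avoids both the purity computation and the iteration, and is arguably the cleaner route; what the paper's $L_2$ argument buys is that it never needs the positivity trick, only the already‑established entropy inequality at $\alpha=2$. One small caveat: your remark that $\sum_j r_j=d^n$ in the ``if'' direction presumes the spectral decomposition includes the zero eigenvalue; if it does not, the conclusion still holds because $\sum_j\mu_jQ_j$ with mutually orthogonal rank‑$r_j$ projections has the same spectrum as $\rho$ regardless, so nothing is lost.
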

\begin{proof}

We follow the notation in the proof of Theorem \ref{thm:major}.

("$\Rightarrow$"): Recall that we assume $\alpha\not\in \{-\infty,0,\infty\}$.
Since $ (\nu_1,...,\nu_{d^n}) \prec (\lambda_1,...,\lambda_{d^n})$ 
(and recall that we assume $\{\nu_j\}$ and $\{\lambda_j\}$ are non-increasing), we have
\begin{align}\label{0109shi2}
\sum_{j=1}^k \nu_j \le \sum_{j=1}^k \lambda_j,\quad k=1,2,..., d^n \;.
\end{align}
Moreover, $H_\alpha(\rho \boxtimes \sigma) = H_\alpha(\rho)$ iff every equality in (\ref{0109shi2}) holds, which means 
\begin{align*}
    (\nu_1,...,\nu_{d^n}) = (\lambda_1,...,\lambda_{d^n})\;.
\end{align*}

Assume $K$ is the largest number such that $\nu_1=\nu_K$. 
For any $k\le K$,
\begin{align}\label{0116shi1}
\nu_k = \sum_{j}m_{kj} \lambda_j = \left(\sum_{j=1}^K m_{kj}\right) \nu_k + \sum_{j>K} m_{kj}\nu_j \;.
\end{align}
If there exists $m_{kj}>0$ for some $j>K$, then
\begin{align*}
(\ref{0116shi1})< \left(\sum_{j=1}^K m_{kj}\right) \nu_k + \sum_{j>K} m_{kj} \nu_K= \nu_K \;,
\end{align*}
a contradiction.
Hence  $m_{kj}=0$ whenever $k\le K$ and $j>K$,
and therefore for every $k\le K$ we have
\begin{align*}
\sum_{j=1}^K m_{kj} =1 \;.
\end{align*}
The matrix $M$ is  doubly stochastic, so we also have $m_{kj}=0$ when $k>K$ and $j\le K$.
By the definition of $m_{kj}$,
we have $\langle \xi_k| \tau_j  | \xi_k \rangle=0$ whenever $k\le K< j$ or $j\le K<k$.
Denote $\rho_k= | \xi_k \rangle\langle \xi_k|$,
then $\tau_j \rho_k=0$ when $k\le K< j$ or when $j\le K<k$.
Therefore $\tau_j \sum_{k=1}^K \rho_k=0$ when $j>K$,
and $\tau_j \sum_{k=K+1}^{d^n} \rho_k=0$ when $j\le K$.

Denote $\psi_j=|\psi_j\rangle \langle \psi_j|$,
then by definition $\tau_j = \psi_j \boxtimes \sigma$.
When $k\le K$,
\begin{align*}
\langle \xi_k| \sum_{j=1}^K \tau_j | \xi_k \rangle = \sum_{j=1}^K m_{kj} =1 \;,
\end{align*}
that is
\begin{align}\label{0109shi4}
\langle \xi_k|\left(\frac 1K\sum_{j=1}^K \psi_j\right) \boxtimes \sigma| \xi_k \rangle =\frac 1K,\quad \forall k\le K \;.
\end{align}
Let us denote $\Phi= \left(\frac 1K\sum_{j=1}^K \psi_j\right) \boxtimes \sigma$.
Taking $\alpha=2$ in \eqref{ineq:entrop_con_2}, we have
\begin{align*}
 \left\|\Phi \right\|_2^2 \le \left\| \frac 1K \sum_{j=1}^K \psi_j \right\|_2^2 = \frac{1}{ K} \;.
\end{align*}
Expanding the matrix $\Phi$ under the basis $\{\xi_k\}$, we have
\begin{align*}
\sum_{k,k'}\left|\langle \xi_k|\Phi| \xi_{k'} \rangle\right|^2 \le \frac{1}{K} \;,
\end{align*}
and compared with (\ref{0109shi4}) we have that
\begin{align*}
\left(\frac 1K\sum_{j=1}^K \psi_j\right) \boxtimes \sigma=\Phi = \frac{1}{K}\sum_{k=1}^K | \xi_{k} \rangle\langle \xi_k| \;. 
\end{align*}
That is
\begin{eqnarray*}
P_1\boxtimes\sigma=Q_1 \;,
\end{eqnarray*}
where $P_1$ is the spectral projection of $\rho$ corresponding to the eigenvalue $v_1$, and 
$Q_1$ is the spectral projection of $\rho\boxtimes \sigma$ corresponding to $\lambda_1$. 
Repeat this process (by replacing $\rho$ by $(\rho - \nu_1 P_1)/\trace[\rho - \nu_1 P_1]$ )
and we obtain that, for each spectral projection $P_j$ of $\rho$, $Q_j:= P_j\boxtimes\sigma$ is a projection of the same rank as $P_j$, and $Q_j$ is a spectral projection of $\rho\boxtimes \sigma$.  The converse direction 
("$\Leftarrow$") is elementary, so  the proof is complete.
\end{proof}

Using this Lemma, we can study the conditions under which  equality holds.

\begin{thm}[\bf The case of equality]\label{0112thm1}
Let $G$ in \eqref{0125shi1} be positive and invertible, let 
 $\alpha\notin  \{-\infty, 0, +\infty\}$, let $\sigma$ be a quantum state,
and let
\begin{align}\label{0214shi1}
S=\{ w(- g_{10}^{-1} g_{11}\vec p, g^{-1}_{01} g_{00}\vec q): |\Xi_\sigma (\vec p, \vec q)|=1\}\;.
\end{align}
The equality 
\begin{align}\label{0110shi8}
H_\alpha(\rho \boxtimes  \sigma) =  H_\alpha(\rho)
\end{align}
holds for the state $\rho$, if and only if $\rho$ is in the abelian C*-algebra generated by $S$, i.e.,
$
\rho 
$ is a convex sum of MSPSs associated with $S$.
\end{thm}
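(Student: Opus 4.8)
The plan is to translate the entropy equality into a statement about $\text{Supp}(\Xi_\rho)$, using Lemma~\ref{lem:eq_entr} and Proposition~\ref{prop:chara_conv}. Set $S_\sigma=\set{(\vec p,\vec q)\in V^n:|\Xi_\sigma(\vec p,\vec q)|=1}$; by Lemmas~\ref{0106lem1}--\ref{0106lem3} this is an abelian (isotropic) subgroup of $V^n$. Let $\bar S\subseteq V^n$ be the set of indices of the Weyl operators occurring in $S$. Since $G$ is positive, all four entries $g_{ij}$ are invertible mod $d$ (and $d$ is odd), so the linear map $M(\vec p,\vec q)=(-g_{10}^{-1}g_{11}\vec p,\ g_{01}^{-1}g_{00}\vec q)$ is an invertible automorphism of $V^n$ that rescales the symplectic form by the nonzero constant $-g_{10}^{-1}g_{11}g_{01}^{-1}g_{00}$; hence $\bar S=M(S_\sigma)$ is again an abelian isotropic subgroup, the canonical Weyl operators indexed by $\bar S$ form a group, $C^*(S)$ is their linear span, and for a state $\rho$ one has the chain of equivalences: $\rho\in C^*(S)\iff\text{Supp}(\Xi_\rho)\subseteq\bar S\iff\rho$ is a convex sum of MSPSs associated with $S$. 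The single computation driving everything is: using $M^{-1}(\vec a,\vec b)=(-g_{11}^{-1}g_{10}\vec a,\ g_{00}^{-1}g_{01}\vec b)$ together with $\Xi_{\rho\boxtimes\sigma}(\vec p,\vec q)=\Xi_\rho(Ng_{11}\vec p,g_{00}\vec q)\,\Xi_\sigma(-Ng_{10}\vec p,g_{01}\vec q)$, one checks that $(Ng_{11}\vec p,g_{00}\vec q)\in\bar S$ if and only if $|\Xi_\sigma(-Ng_{10}\vec p,g_{01}\vec q)|=1$.

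For ``$\Leftarrow$'': assume $\text{Supp}(\Xi_\rho)\subseteq\bar S$ and write $\rho=\sum_{\vec k}c_{\vec k}\tau_{\vec k}$, a convex combination of the MSPSs $\tau_{\vec k}$ associated with $S$ (the $d^r$ normalized minimal projections of $C^*(S)$, mutually orthogonal, each of rank $d^{n-r}$, where $|\bar S|=d^r$). For each $\vec k$, $\Xi_{\tau_{\vec k}\boxtimes\sigma}(\vec p,\vec q)=\Xi_{\tau_{\vec k}}(Ng_{11}\vec p,g_{00}\vec q)\,\Xi_\sigma(-Ng_{10}\vec p,g_{01}\vec q)$; by the displayed equivalence this has modulus $1$ exactly on the subgroup $\tilde S=\set{(\vec p,\vec q):(Ng_{11}\vec p,g_{00}\vec q)\in\bar S}$ and vanishes off it, so $\CMM(\tau_{\vec k}\boxtimes\sigma)=\tau_{\vec k}\boxtimes\sigma$ and by Proposition~\ref{0109lem1} it is an MSPS associated with $\tilde S$, i.e.\ a rank-$d^{n-r}$ projection divided by $d^{n-r}$ (since $L(\vec p,\vec q)=(Ng_{11}\vec p,g_{00}\vec q)$ is a bijection, $|\tilde S|=|\bar S|=d^r$). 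Distinct $\vec k$ give distinct characters of $\bar S$, which pull back along $L$ to distinct characters of $\tilde S$, so the $\tau_{\vec k}\boxtimes\sigma$ are mutually orthogonal. Hence $\rho\boxtimes\sigma=\sum_{\vec k}c_{\vec k}(\tau_{\vec k}\boxtimes\sigma)$ has the same multiset of eigenvalues as $\rho$, and $H_\alpha(\rho\boxtimes\sigma)=H_\alpha(\rho)$ for every $\alpha$.

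For ``$\Rightarrow$'': $G$ positive implies even-parity positive, so Lemma~\ref{lem:eq_entr} applies to the hypothesis $H_\alpha(\rho\boxtimes\sigma)=H_\alpha(\rho)$ with $\alpha\notin\set{-\infty,0,\infty}$: writing the spectral decomposition $\rho=\sum_j\mu_j P_j$ with $\mathrm{rank}(P_j)=m_j$, each $\tfrac1{m_j}P_j\boxtimes\sigma$ is a rank-$m_j$ projection divided by $m_j$; in particular $\Tr{\big(\tfrac1{m_j}P_j\boxtimes\sigma\big)^2}=\tfrac1{m_j}=\Tr{\big(\tfrac1{m_j}P_j\big)^2}$. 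Expanding both purities by the Parseval identity $\Tr{A^2}=\tfrac1{d^n}\sum_{\vec p,\vec q}|\Xi_A(\vec p,\vec q)|^2$ and substituting the bijection $(\vec a,\vec b)=(Ng_{11}\vec p,g_{00}\vec q)$, the equality becomes
\[
\sum_{\vec a,\vec b}|\Xi_{P_j}(\vec a,\vec b)|^2\,|\Xi_\sigma(-g_{10}g_{11}^{-1}\vec a,\ g_{01}g_{00}^{-1}\vec b)|^2=\sum_{\vec a,\vec b}|\Xi_{P_j}(\vec a,\vec b)|^2 ,
\]
and since $|\Xi_\sigma|\le 1$ this forces $|\Xi_\sigma(-g_{10}g_{11}^{-1}\vec a,g_{01}g_{00}^{-1}\vec b)|=1$, equivalently $(\vec a,\vec b)\in\bar S$, for every $(\vec a,\vec b)\in\text{Supp}(\Xi_{P_j})$. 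Since this holds for all $j$ and $\Xi_\rho=\sum_j\mu_j\Xi_{P_j}$, we conclude $\text{Supp}(\Xi_\rho)\subseteq\bar S$, i.e.\ $\rho\in C^*(S)$, a convex sum of MSPSs associated with $S$.

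The main obstacle is the ``$\Leftarrow$'' bookkeeping: one must verify simultaneously that $\tau_{\vec k}\boxtimes\sigma$ always lands on the \emph{single} subgroup $\tilde S$ (for the rank count) and that distinct $\vec k$ keep distinct characters (for orthogonality), and that these assemble into a genuinely spectrum-preserving picture. All of this, and the reduction in ``$\Rightarrow$'', hinge on the displayed equivalence $(Ng_{11}\vec p,g_{00}\vec q)\in\bar S\iff|\Xi_\sigma(-Ng_{10}\vec p,g_{01}\vec q)|=1$, whose verification is exactly the place where \emph{full} positivity of $G$—rather than merely an even- or odd-parity hypothesis—is genuinely used.
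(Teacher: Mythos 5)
Your proposal is correct and follows essentially the same route as the paper's proof: the ``$\Leftarrow$'' direction via decomposing $\rho$ into MSPSs associated with $S$ and checking (through the convolution--multiplication duality for characteristic functions) that these map bijectively onto mutually orthogonal MSPSs associated with a second abelian group, and the ``$\Rightarrow$'' direction via Lemma~\ref{lem:eq_entr} applied to the spectral projections of $\rho$ together with the Parseval/purity identity, which forces $\operatorname{Supp}(\Xi_{P_j})$ into $\bar S$. Your explicit statement of the equivalence $(Ng_{11}\vec p, g_{00}\vec q)\in\bar S \iff |\Xi_\sigma(-Ng_{10}\vec p, g_{01}\vec q)|=1$ is exactly the computation the paper performs implicitly, so the two arguments coincide in substance.
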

\begin{proof}
First, we show $S$ is abelian. 
By Lemma \ref{0106lem2},
for every 
$(- g_{10}^{-1} g_{11}\vec p_1, g_{01}^{-1} g_{00}\vec q_1)$ and $(- g_{10}^{-1} g_{11}\vec p_2, g^{-1}_{01} g_{00}\vec q_2)$ in $ S$ we have $\ep{(\vec p_1,\vec q_1), (\vec p_2,\vec q_2 )}_s=0 $,
hence
\begin{align*}
&\ep{(- g_{10}^{-1} g_{11}\vec p_1, g_{01}^{-1} g_{00}\vec q_1), (- g_{10}^{-1} g_{11}\vec p_2, g_{01}^{-1} g_{00}\vec q_2)}_s \\
=& - g_{10}^{-1} g_{11} g_{01}^{-1} g_{00}\ep{(\vec p_1,\vec q_1), (\vec p_2,\vec q_2 )}_s=0 \;.
\end{align*}
Therefore, $S$ is abelian.
Similarly, 
the group\begin{align*}
S_1=\{ w( \vec p,  \vec q): |\Xi_\sigma (-N g_{10}\vec p, g_{01} \vec q)|=1\}
\end{align*}
is also an abelian Weyl group.
 
Now let $\rho$ be a state in the  C*-algebra generated by $S$,
and we show the equality \eqref{0110shi8} holds.
For each MSPS $\rho_j$ associated with $S$ we have $$|\Xi_{\rho_j}(\vec p,\vec q)| =\left\{
\begin{aligned}
 &1 &&  w( \vec p,\vec q) \in S,\\
&0 && w( \vec p,\vec q) \not \in S.
\end{aligned}\right.
$$ 
Combined with the definition \eqref{0214shi1} of $S$, we have  
$$|\Xi_{\rho_j \boxtimes \sigma} (\vec p,\vec q)| = |\Xi_{\rho_j}(N g_{11} \vec p, g_{00} \vec q) \Xi_{\sigma}(-Ng_{10} \vec p, g_{01} \vec q) | =\left\{
\begin{aligned}
 &1 &&  w( \vec p,\vec q) \in S_1,\\
&0 && w( \vec p,\vec q) \not \in S_1.
\end{aligned}\right.
$$
That is,
$\rho_j \boxtimes \sigma$ is an MSPS associated with the abelian group $S_1$.
Moreover,
$\rho_j \boxtimes \sigma = \rho_i \boxtimes \sigma$ if and only if $\rho_j = \rho _i$,
hence the map $\rho_j \mapsto \rho_j\boxtimes \sigma$ is a bijection from the set of MSPSs associated with $S$ to the set of MSPSs associated with $S_1$.
Therefore,
if $ \rho=\sum \mu_j \rho_j
 $ is a linear sum of MSPSs associated with $S$,
 then $ \rho\boxtimes \sigma =\sum \mu_j (\rho_j\boxtimes \sigma)
 $
is a linear sum (with the same coefficients) of MSPSs associated with $S_1$.
Hence, the equality \eqref{0110shi8} holds.

On the other hand, let us consider the case where  \eqref{0110shi8} holds.
Let $P$ be any spectral projection of $\rho$ and assume rank$(P)=R$.
In the following, we show $P$ is a stabilizer projection associated with $S$.
By Lemma \ref{lem:eq_entr},
$P \boxtimes \sigma$ is a projection $Q$ of rank $R$, hence $\|\frac1R P\|_2 = \|\frac1R Q\|_2$ and $\|\Xi_{\frac1R P}\|_2 = \|\Xi_{\frac1R Q}\|_2$.
While
\begin{align*}
\Xi_{\frac1RQ}(\vec p, \vec q) = \Xi_{ \frac1R P}(Ng_{11}\vec p, g_{00}\vec q) \Xi_\sigma (-Ng_{10}\vec p, g_{01}\vec q) \;,\quad \forall  (\vec p,\vec q) \in V^n \;.
\end{align*}
Thus, whenever $(\vec p,\vec q) \in V^n$ satisfies $\Xi_{ \frac1R P}(Ng_{11}\vec p, g_{00}\vec q)\neq 0$,
we  have $$|\Xi_\sigma (-Ng_{10}\vec p, g_{01}\vec q)|=1\;.$$
Equivalently,
\begin{align*}
\Xi_{ \frac1R P}(- g_{10}^{-1} g_{11}\vec p, g_{01}^{-1} g_{00}\vec q)\neq 0 \quad \Rightarrow \quad |\Xi_\sigma ( \vec p, \vec q)|=1 \;.
\end{align*}
That is, the characteristic function $ \Xi_{ \frac1R P}$ of $\frac1R P$ is supported on  $S$,
therefore
\begin{align*}
\frac1R P = \frac{1}{d^n} \sum_{ (\vec p, \vec q)\in S} \Xi_{ \frac1R P}( \vec p, \vec q) w(\vec p, \vec q) \in C^*(S) \;.
\end{align*}
Therefore $P$  is a stabilizer projection associated with $S$.
Hence $\rho$ is a linear sum of projections in the abelian C*-algebra generated by $S$.
\end{proof}

We have the following corollary for the case where $\sigma$ is a pure stabilizer state.

\begin{cor}
Suppose $G$ is positive, 
 $\alpha\notin \{-\infty, 0, +\infty\}$, and $\sigma$ is a pure stabilizer state whose stabilizer group $S$ is a maximal abelian subgroup of the Weyl group and $|S|=d^n$.
Denote
\begin{align*}
S_1 = \{ w(- g_{10}^{-1} g_{11}\vec p, g_{01}^{-1} g_{00}\vec q): (\vec p,\vec q)\in S\} \;.
\end{align*}
Then the  equality $H_\alpha(\rho \boxtimes  \sigma) =  H_\alpha(\rho) $ holds for some state $\rho$ iff $\rho$ is a convex combination of pure stabilizer states in $C^*(S_1)$, the (abelian) C* algebra generated by $S_1$.
\end{cor}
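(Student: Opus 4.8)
The plan is to read off this corollary from Theorem~\ref{0112thm1}, which already handles the equality case $H_\alpha(\rho\boxtimes\sigma)=H_\alpha(\rho)$ for an arbitrary state $\sigma$; all that remains is to specialize the set $S$ and the MSPSs appearing there to the situation where $\sigma$ is a maximal pure stabilizer state. First I would record that, because $\sigma$ is a pure stabilizer state whose stabilizer group is a maximal abelian subgroup of the Weyl operators of size $d^n$, its characteristic function satisfies $|\Xi_\sigma(\vec p,\vec q)|=1$ exactly when $w(\vec p,\vec q)$ belongs to that group (up to a phase) and $\Xi_\sigma(\vec p,\vec q)=0$ otherwise. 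This follows from the expansion \eqref{0109shi5} of $\sigma$ together with the Weyl commutation relations \eqref{0106shi2}/\eqref{0106shi2_2}: a Weyl operator not lying in the (Lagrangian) stabilizer group fails to commute with some generator, hence maps the stabilized vector into an orthogonal eigenspace, giving vanishing expectation. Consequently the set $S$ defined in \eqref{0214shi1} of Theorem~\ref{0112thm1} is precisely the set $S_1$ in the statement of the corollary, and the abelian $C^*$-algebras $C^*(S)$ and $C^*(S_1)$ coincide (attaching phases to generators does not change the generated algebra).

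With this identification in place, Theorem~\ref{0112thm1} (applicable since $G$ is positive and invertible and $\alpha\notin\{-\infty,0,+\infty\}$) says that $H_\alpha(\rho\boxtimes\sigma)=H_\alpha(\rho)$ holds if and only if $\rho$ is a convex sum of MSPSs associated with $S_1$. It then remains to identify these MSPSs. Since $G$ is positive, each of $g_{00},g_{01},g_{10},g_{11}$ is invertible in $\Z_d$, so $(\vec p,\vec q)\mapsto(-g_{10}^{-1}g_{11}\vec p,\,g_{01}^{-1}g_{00}\vec q)$ is a linear bijection of $V^n$; it therefore carries the stabilizer label subgroup of $\sigma$ onto a subgroup $S_1$ of the same size $d^n$, which is abelian by the symplectic-form computation already used at the start of the proof of Theorem~\ref{0112thm1}. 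Thus $S_1$ is itself a maximal abelian Weyl group, so by Definition~\ref{def:MSPS} its minimal stabilizer projection has rank $d^{\,n-n}=1$; hence the MSPSs associated with $S_1$ are exactly the pure stabilizer states lying in $C^*(S_1)$, and a convex sum of them is the same thing as a convex combination of pure stabilizer states in $C^*(S_1)$. This gives the claim.

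The only step requiring genuine care --- and the nearest thing to an obstacle --- is the first one: one must be careful that $|\Xi_\sigma|=1$ detects the \emph{label} subgroup of the stabilizer group (so that the linear substitution in \eqref{0214shi1} really acts on $V^n$), and that passing from the phase-adjusted Weyl operators that appear implicitly in Theorem~\ref{0112thm1} to the plainly phased generators written in the corollary does not alter the generated abelian algebra. Both points are routine once one invokes the standard correspondence between maximal abelian Weyl groups and Lagrangian subspaces of $V^n$, so overall the corollary is essentially immediate from the theorem.
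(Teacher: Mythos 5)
Your proposal is correct and follows the same route the paper intends: the corollary is a direct specialization of Theorem \ref{0112thm1}, obtained by noting that for a pure stabilizer state $\sigma$ with maximal stabilizer group one has $|\Xi_\sigma(\vec p,\vec q)|=1$ exactly on the label group of $S$, so the set in \eqref{0214shi1} is $S_1$, which is then a maximal abelian Weyl group whose MSPSs are the rank-one (pure stabilizer) projections in $C^*(S_1)$. Your care about phases and about the bijectivity of the label map (valid since $G$ positive makes all $g_{ij}$ invertible) is exactly the right bookkeeping, and no gap remains.
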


\subsection{Quantum Fisher information inequality}
We consider another important family of information measures that occur in quantum information theory, namely quantum Fisher information. In particular, we focus on divergence-based quantum Fisher information~\cite{Konig14}.

\begin{Def}[\bf Quantum Fisher information \cite{Konig14}] 
Given a smooth one-parameter family of states $\set{\rho_{\theta}}_{\theta}$, the divergence-based quantum Fisher information at $0$
is 
\begin{eqnarray*}
J(\rho_{\theta};\theta)|_{\theta=0}
:=\frac{d^2}{d\theta^2}D(\rho_0||\rho_{\theta})|_{\theta=0} \;.
\end{eqnarray*}
\end{Def}
Since the first derivative $\frac{d}{d\theta} D(\rho_0||\rho_{\theta})|_{\theta=0}=0$, the second derivative $J(\rho_{\theta};\theta)|_{\theta=0}$
quantifies the sensitivity of the divergence with respect to the change of the parameter $\theta$.
Since we only consider the divergence-based quantum Fisher information $J(\rho_{\theta};\theta)|_{\theta=0}$ in this work, 
we call it  quantum Fisher information for simplicity. 
If $\set{\rho_{\theta}}_{\theta}$ is a family of parameterized states defined by 
$\rho_{\theta}=\exp(i\theta H)\rho\exp(-i\theta H)$ with respect to a Hermitian operator $H$ for all $\theta\in \real$ , then 
the quantum Fisher information can be written as
\begin{align*}
J( \rho; H )=\frac{d^2}{d\theta^2}D(\rho||\rho_{\theta})|_{\theta=0} = \Tr{ \rho [H, [H, \log \rho]]} \;.
\end{align*}

In $n$-qudit systems,
we denote $X_k$ (resp., $Z_k$) to be the Pauli $X$ (resp., $Z$) operator on the $k$-th qudit.
For $R= X_k$ or $Z_k$ ($1\le k\le n$),
denote $|j\rangle_R $ to be an eigenvector of $R$  corresponding to the eigenvalue $\chi(j)$ with $j\in \mathbb{Z}_d$,
\begin{align}\label{Def:jR}
R |j\rangle_R = \chi(j)|j\rangle_R \;.
\end{align}
Define the completely dephasing channel $\Delta_R$ with respect to the eigenbasis of  the operator $R$ as
\begin{eqnarray}\label{Def:CDC}
\Delta_R (\rho) = \sum_j \langle j|\rho | j\rangle_R\proj{j}_R \;.
\end{eqnarray}
The Hermitian operator  $H^R_j$ for $j\in \mathbb{Z}_d$ is defined as 
\begin{align}\label{Defn:HRj}
H_j^R = \proj{j}_R \;,
\end{align}
and the corresponding parameterized unitary $U_j^R(\theta)$ is
\begin{align*}
U_j^R(\theta) = \exp( i \theta H_j^R ) \;.
\end{align*}
For any quantum state $\rho$, we consider the family of parameterized states
$\set{\rho^{R,j}_{\theta}}$ with
\begin{eqnarray*}
\rho^{R,j}_{\theta}
=U_j^R(\theta)\rho U_j^R(\theta)^\dag \;, \quad \theta\in\real \;,
\end{eqnarray*}
and its quantum Fisher information is 
\begin{align}\label{0103shi5}
J( \rho; H_j^R ) = \trace[ \rho [H_j^R, [H_j^R, \log \rho]]] \;.
\end{align}
We also denote 
\begin{align}\label{0103shi6}
J(\rho) = \sum_{k=1}^n \sum_{j=1}^d \left[J(\rho; H_j^{X_k}) + J(\rho; H_j^{Z_k}) \right]\;.
\end{align}

Now, let us consider the connection between the quantum convolutional channel
and completely dephasing channels $\Delta_X$ and $\Delta_Z$. 
We consider a $2n$-qudit system $\mathcal{H}_{AB}$ where subsystem $A$ denotes 
the first $n$-qudit system, and subsystem $B$ denotes the second $n$-qudit system.
And thus $X_k$ and $Z_{k}$  for $k\in [n]$ denote the Pauli $X$ and $Z$ operators acting on 
the first $n$-qudit system,  subsystem $A$. Likewise  $X_{n+k}$ and $Z_{n+k}$  for $k\in [n]$ denote the Pauli $X$ and $Z$ operators acting on the
second $n$-qudit system, subsystem $B$.
\begin{lem}\label{0103lem4}
Let $\rho_{AB}\in \mathcal{D}(\mathcal{H}_{AB})$ be a quantum state on $2n$-qudits,
and $\mathcal{E}_G$ be the quantum convolutional channel.

(1) If $G$ is odd-parity positive, then
\begin{align}\label{0103shi4_0}
\mathcal{E}_G\circ (id_A\otimes\Delta_{R_{n+k}}) (\rho_{AB}) &= \Delta_{R_k}\circ\mathcal{E}_G (\rho_{AB}) \;,
\end{align}
where $R= X$ or $Z$.

(2) If $G$ is even-parity positive, then
\begin{align}\label{0103shi4}
\mathcal{E}_G\circ (\Delta_{R_k} \otimes id_B) (\rho_{AB}) &= \Delta_{R_k}\circ\mathcal{E}_G (\rho_{AB}) \;,
\end{align}
where $R= X$ or $Z$.
\end{lem}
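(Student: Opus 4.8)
The plan is to realize the completely dephasing channel as a group average over powers of the relevant Pauli operator, and then commute the convolutional channel $\mathcal{E}$ past that average using Proposition~\ref{prop:comm_wel}. For $R = X_k$ or $Z_k$, a single-qudit Pauli of order $d$, one has
\[
\Delta_R(\rho) = \frac1d\sum_{t\in\Z_d} R^t\rho R^{-t} = \mathbb{E}_{t\in\Z_d}\big[R^t\rho R^{-t}\big]\;,
\]
since $R^t$ is diagonal in the eigenbasis of $R$ with entries $\chi(tj)$ and $\frac1d\sum_{t}\chi(t(j-j')) = \delta_{j,j'}$. In the Weyl notation $X = w(0,1)$, $Z = w(1,0)$, so $X_k^t = w(\vec 0, t\vec e_k)$ and $Z_k^t = w(t\vec e_k,\vec 0)$ are themselves Weyl operators; in particular $id_A\otimes\Delta_{R_{n+k}}$ (resp.\ $\Delta_{R_k}\otimes id_B$) inserts a conjugation by a Weyl operator supported on the $k$-th qudit of $B$ (resp.\ of $A$), averaged over $\Z_d$.

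For part (1), take for instance $R_{n+k}=X_{n+k}$, corresponding to conjugation by $w(\vec 0,\vec 0)\otimes w(\vec 0, t\vec e_k)$; pulling $\mathcal{E}$ inside the average and applying Proposition~\ref{prop:comm_wel} with $(\vec x_1,\vec y_1)=(\vec 0,\vec 0)$, $(\vec x_2,\vec y_2)=(\vec 0, t\vec e_k)$ yields
\[
\mathcal{E}\circ(id_A\otimes\Delta_{X_{n+k}})(\rho_{AB}) = \mathbb{E}_{t\in\Z_d}\big[w(\vec 0,-Ng_{10}t\vec e_k)\,\mathcal{E}(\rho_{AB})\,w(\vec 0,-Ng_{10}t\vec e_k)^\dag\big]\;.
\]
Because $G$ is odd-parity positive we have $g_{10}\not\equiv 0$, hence $-Ng_{10}\not\equiv 0\bmod d$ and $t\mapsto -Ng_{10}t$ permutes $\Z_d$; reindexing collapses the right-hand side to $\mathbb{E}_{s}[X_k^s\,\mathcal{E}(\rho_{AB})\,X_k^{-s}] = \Delta_{X_k}\circ\mathcal{E}(\rho_{AB})$. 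The case $R_{n+k}=Z_{n+k}$ is the same with $(\vec x_2,\vec y_2)=(t\vec e_k,\vec 0)$, giving the conjugation $w(g_{01}t\vec e_k,\vec 0)=Z_k^{g_{01}t}$, and odd-parity positivity supplies $g_{01}\not\equiv 0$ for the reindexing.

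Part (2) is symmetric: $\Delta_{R_k}\otimes id_B$ inserts a conjugation on the $k$-th qudit of $A$, i.e.\ $(\vec x_1,\vec y_1)=(\vec 0,t\vec e_k)$ for $R_k=X_k$ or $(\vec x_1,\vec y_1)=(t\vec e_k,\vec 0)$ for $R_k=Z_k$, with $(\vec x_2,\vec y_2)=(\vec 0,\vec 0)$. Proposition~\ref{prop:comm_wel} turns these into conjugation of $\mathcal{E}(\rho_{AB})$ by $w(\vec 0, Ng_{11}t\vec e_k)=X_k^{Ng_{11}t}$, respectively $w(g_{00}t\vec e_k,\vec 0)=Z_k^{g_{00}t}$; even-parity positivity gives $g_{11}\not\equiv 0$ and $g_{00}\not\equiv 0\bmod d$, so again the coefficient of $t$ is invertible mod $d$ and the average reproduces $\Delta_{R_k}\circ\mathcal{E}(\rho_{AB})$. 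The only point requiring care is exactly this invertibility: parity-positivity of $G$ is what guarantees the post-channel average still runs over the full cyclic group generated by $X_k$ or $Z_k$ rather than a proper subgroup, and it is also why the two parity conditions separate into the $B$-side and $A$-side statements.
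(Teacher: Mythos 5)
Your proof is correct and follows essentially the same route as the paper: both realize $\Delta_R$ as the average of conjugations by powers of the Pauli operator, commute $\mathcal{E}$ through via Proposition~\ref{prop:comm_wel}, and use the parity-positivity of $G$ to ensure the resulting exponent is invertible mod $d$ so the reindexed average is again $\Delta_{R_k}$. The only difference is cosmetic: the paper writes out one representative case ($G$ even-parity positive, $R_k=Z_k$), while you spell out all four, with the correct Weyl-operator bookkeeping in each.
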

\begin{proof}
We prove the case where $G$ is even-parity positive and $R_k=Z_k$, and the other cases can be proved in the same way.
By Proposition \ref{prop:comm_wel},  we have
\begin{align*}
\mathcal{E}_G ( Z_k^p \rho_{AB} Z_k^{-p}) = Z_k^{g_{00}p} \mathcal{E}_G ( \rho_{AB}) Z_k^{-g_{00}p} \;.
\end{align*}
Then
\begin{align*}
\mathcal{E}_G\circ \Delta_{Z_k} \otimes id_B (\rho_{AB})=& \frac 1d \sum_{p\in \Z_d} \mathcal{E}_G( Z_k^p \rho_{AB} Z_k^{-p}) \\
=&\frac 1d \sum_{p\in \Z_d} Z_k^{g_{00}p} \mathcal{E}_G ( \rho_{AB}) Z_k^{-g_{00}p}\\
=&\Delta_{Z_k}\circ\CEE_G (\rho_{AB}) \;,
\end{align*}
where the last equality comes from the fact that $G$ is even-parity positive.
\end{proof}

\begin{thm}[\bf Convolution decreases Fisher information of states]\label{thm:fisher}
Let $\rho_A, \rho_B$ be two $n$-qudit states.

(1) If $G$ is odd-parity positive, then
\begin{align*}
J( \rho_A \boxtimes \rho_B) \le  J(\rho_B) \;.
\end{align*}

(2) If $G$ is even-parity positive, then

\begin{align*}
J( \rho_A \boxtimes \rho_B) \le J(\rho_A) \;.
\end{align*}

(3) If $G$ is positive, then 
\begin{align*}
J( \rho_A \boxtimes \rho_B) \le \min\set{J(\rho_A), J(\rho_B)} \;.
\end{align*}

\end{thm}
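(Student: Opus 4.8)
The plan is to reduce all three inequalities to the monotonicity of the relative entropy under the convolutional channel $\mathcal{E}$ together with the intertwining identities of Lemma \ref{0103lem4}; the one new ingredient is a closed formula for the summed Fisher information that converts a double commutator into a sum of relative entropies.

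\textbf{Step 1: a closed form for $\sum_j J(\omega;H_j^R)$.} For any $n$-qudit state $\omega$ and any $R\in\{X_k,Z_k\}$ with associated dephasing channel $\Delta_R$, I would first prove
\begin{align}\label{eqn:fisher_key}
\sum_{j=1}^d J(\omega;H_j^R) = 2\left(S(\omega||\Delta_R(\omega)) + S(\Delta_R(\omega)||\omega)\right) \;.
\end{align}
This follows by expanding \eqref{0103shi5}: since $(H_j^R)^2=H_j^R$ and $[\omega,\log\omega]=0$ one gets $\Tr{\omega[H_j^R,[H_j^R,\log\omega]]}=2\Tr{H_j^R\omega\log\omega}-2\Tr{H_j^R\omega H_j^R\log\omega}$; summing over $j$ and using $\sum_j H_j^R=I$ and $\sum_j H_j^R\omega H_j^R=\Delta_R(\omega)$ gives $\sum_j J(\omega;H_j^R)=2\Tr{(\omega-\Delta_R(\omega))\log\omega}$. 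Since $\log\Delta_R(\omega)$ is block-diagonal in the eigenbasis of $R$ we have $\Tr{\omega M}=\Tr{\Delta_R(\omega)M}$ for any such block-diagonal $M$, so adding and subtracting $\Tr{\Delta_R(\omega)\log\Delta_R(\omega)}$ and recognising the two relative entropies produces \eqref{eqn:fisher_key}. All quantities are read in $[0,+\infty]$; both sides of \eqref{eqn:fisher_key} are $+\infty$ exactly when the support of $\omega$ is not invariant under $R$.

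\textbf{Step 2: transporting through $\mathcal{E}$.} Consider case (2), $G$ even-parity positive. Applying Lemma \ref{0103lem4}(2) to the product state $\rho_A\otimes\rho_B$, for every $k$ and every $R\in\{X_k,Z_k\}$ we get $\Delta_R(\rho_A\boxtimes\rho_B)=\Delta_R\circ\mathcal{E}(\rho_A\otimes\rho_B)=\mathcal{E}(\Delta_R(\rho_A)\otimes\rho_B)=\Delta_R(\rho_A)\boxtimes\rho_B$. Because the relative entropy is monotone under the channel $\mathcal{E}$ and additive under tensor products,
\begin{align*}
S(\rho_A\boxtimes\rho_B||\Delta_R(\rho_A)\boxtimes\rho_B)&\le S(\rho_A\otimes\rho_B||\Delta_R(\rho_A)\otimes\rho_B)=S(\rho_A||\Delta_R(\rho_A)) \;,\\
S(\Delta_R(\rho_A)\boxtimes\rho_B||\rho_A\boxtimes\rho_B)&\le S(\Delta_R(\rho_A)\otimes\rho_B||\rho_A\otimes\rho_B)=S(\Delta_R(\rho_A)||\rho_A) \;.
\end{align*}
Adding these two inequalities and invoking \eqref{eqn:fisher_key} (with $\omega=\rho_A\boxtimes\rho_B$ on the left and $\omega=\rho_A$ on the right) gives $\sum_j J(\rho_A\boxtimes\rho_B;H_j^R)\le\sum_j J(\rho_A;H_j^R)$; summing over $k=1,\dots,n$ and over $R\in\{X_k,Z_k\}$ and recalling \eqref{0103shi6} yields $J(\rho_A\boxtimes\rho_B)\le J(\rho_A)$, which is (2). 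Case (1) is proved identically, using Lemma \ref{0103lem4}(1) instead, which gives $\Delta_R(\rho_A\boxtimes\rho_B)=\rho_A\boxtimes\Delta_R(\rho_B)$ and hence $J(\rho_A\boxtimes\rho_B)\le J(\rho_B)$. Case (3) is immediate: a positive $G$ is both odd- and even-parity positive, so both bounds hold simultaneously and we take the minimum.

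The substantive point is Step 1 — establishing \eqref{eqn:fisher_key} and then running the whole chain consistently in the extended reals $[0,+\infty]$, in particular checking that the conventions for $\log\omega$ when $\omega$ is not full rank cause no trouble. They do not: whenever the right-hand side of the chain in Step 2 is finite so is the left, and otherwise the asserted inequality is vacuous. Everything after Step 1 is bookkeeping over the $2n$ choices of $(k,R)$, combined with the already-available monotonicity and additivity of the relative entropy and the intertwining Lemma \ref{0103lem4}.
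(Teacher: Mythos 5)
Your proposal is correct and follows essentially the same route as the paper: both rewrite $\sum_j J(\omega;H_j^R)$ as the symmetrized relative entropy $2\bigl(D(\omega\|\Delta_R(\omega))+D(\Delta_R(\omega)\|\omega)\bigr)$, then combine the intertwining identity of Lemma \ref{0103lem4} with monotonicity of the relative entropy under the convolution channel. The only cosmetic difference is that you apply monotonicity to $\mathcal{E}$ on tensor products and invoke additivity, whereas the paper packages the same step into the one-argument channel $\mathcal{E}_{\rho_B}(\cdot)=(\cdot)\boxtimes\rho_B$.
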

\begin{proof}
We prove  $J( \rho_A \boxtimes \rho_B) \le  J(\rho_A) $ for $G$ being  even-parity positive, and the other case can be proved in the same way.

By direct calculation, we find that 
\begin{align}\label{0103shi1}
\sum_{j=1}^d [H_j^R, [H_j^R,  \rho]] = 2 (\rho - \Delta_R(\rho)) \;.
\end{align}
Based on the definition of quantum Fisher information in \eqref{0103shi6}, 
we have
\begin{align*}
J(\rho_A) 
=& \sum_{k=1}^n \sum_{j=1}^d J(\rho_A; H_j^{X_k}) + J(\rho_A; H_j^{Z_k})\\
=& \sum_{k=1}^n \sum_{j=1}^d \Tr{ \rho_A \left[H_j^{X_k}, \left[H_j^{X_k}, \log \rho_A\right]\right]} + \Tr{ \rho_A \left[H_j^{Z_k}, \left[H_j^{Z_k}, \log \rho_A\right]\right]}\\
=&  2\sum^n_{k=1}\left[\Tr{\rho_A\log\rho_A}-\Tr{\rho_A\Delta_{X_k}(\log\rho_A)}\right.\\
&\qquad\qquad \left.+\Tr{\rho_A\log\rho_A}
-\Tr{\rho_A\Delta_{Z_k}(\log\rho_A)}\right] \;.
\end{align*}

Let $\CEE_{\rho_B}$ be the quantum channel with the given quantum state $\rho_B$ defined as
\begin{align*}
\CEE_{\rho_B} (\rho) = \rho \boxtimes \rho_B \;.
\end{align*}
Similarly,  $J(\rho_A\boxtimes\rho_B)$ can also be written as 
 \begin{align*}
  J(\CEE_{\rho_B}(\rho_A)) 
 = 2\sum^n_{k=1} \big[
 & \trace \left[\CEE_{\rho_B}(\rho_A)\log\CEE_{\rho_B}(\rho_A)\right]
 -\Tr{\CEE_{\rho_B}(\rho_A)\Delta_{X_k}(\log\CEE_{\rho_B}(\rho_A))} 
 \\
 &  +\Tr{\CEE_{\rho_B}(\rho_A)\log\CEE_{\rho_B}(\rho_A)}-\Tr{\CEE_{\rho_B}(\rho_A)\Delta_{Z_k}(\log\CEE_{\rho_B}(\rho_A))} \big] \;.
 \end{align*}
Therefore, we only need to prove that
\begin{align}
\nonumber &\Tr{\rho_A\log\rho_A}-\Tr{\rho_A\Delta_{R_k}(\log\rho_A)}\\
&\qquad\qquad\geq
\label{0103shi10}\Tr{\CEE_{\rho_B}(\rho_A)\log\CEE_{\rho_B}(\rho_A)}-\Tr{\CEE_{\rho_B}(\rho_A)\Delta_{R_k}(\log\CEE_{\rho_B}(\rho_A))} \;.
\end{align}
One can rewrite the left-hand side as, 
\begin{align*}
&\Tr{\rho_A\log\rho_A}-\Tr{\rho_A\Delta_{R_k}(\log\rho_A)}\\
&\ \ =  \trace\left[ \rho_A   \log \rho_A\right] -  \trace \left[ \rho_A   \log \Delta_{R_k} ( \rho_A)\right] \\
&\ \ \qquad\qquad +   \trace \left[ \rho_A  \log \Delta_{R_k} ( \rho_A)\right]
- \trace\left[  \Delta_{R_k} ( \rho_A ) \log \rho_A\right] \\
&\ \ = \trace\left[ \rho_A   \log \rho_A\right] -  \trace \left[ \rho_A   \log \Delta_{R_k} ( \rho_A)\right] \\
&\ \ \qquad\qquad +  \trace \left[ \Delta_{R_k} (\rho_A )  \log \Delta_{R_k} ( \rho_A)\right]-   \trace\left[  \Delta_{R_k} ( \rho_A ) \log \rho_A\right] \\
&\ \ =D(  \rho_A\| \Delta_{R_k} ( \rho_A ))+ D(  \Delta_{R_k} ( \rho_A )\| \rho_A ) \;,
\end{align*}
where the  second equality comes from the fact that, for every state $\rho$,
\begin{align*}
\trace \left[ \rho    \log \Delta_{R_k} ( \rho )\right] =   \trace \left[\Delta_{R_k}( \rho )   \log \Delta_{R_k} ( \rho )\right] \;.
\end{align*}
Since  $\text{supp}(\rho_A)\subseteq \text{supp} (\Delta_{R_k} ( \rho_A ))$, i.e., $D(  \rho_A\| \Delta_{R_k} ( \rho_A ))$ is well-defined and $\leq \log d^n$. If $supp (\Delta_{R_k} ( \rho_A )) \subseteq supp(\rho_A) $, then $D(  \Delta_{R_k} ( \rho_A )\| \rho_A ) $ is also well-defined, otherwise $D(  \Delta_{R_k} ( \rho_A )\| \rho_A ) =+\infty$ and \eqref{0103shi10} also holds.
Similarly, the right-hand side can be rewritten as
\begin{align*}
&\Tr{\CEE_{\rho_B}(\rho_A)\log\CEE_{\rho_B}(\rho_A)}-\Tr{\CEE_{\rho_B}(\rho_A)\Delta_{R_k}\log\CEE_{\rho_B}(\rho_A)}\\
 & \qquad\qquad= D(\Delta_{R_k} ( \CEE_{\rho_B}(\rho_A ))\| \CEE_{\rho_B}(\rho_A) ) + D( \CEE_{\rho_B}( \rho_A)\| \Delta_{R_k} (\CEE_{\rho_B}( \rho_A ))) \;.
\end{align*}
Based on  Lemma \ref{0103lem4}, if $G$ is even-parity positive,
\begin{align}\label{0103shi11}
\Delta_{R_k} \circ \CEE_{\rho_B} (\rho) = \Delta_{R_k} (\rho \boxtimes \rho_B)
= \CEE_{\rho_B} \circ \Delta_{R_k} (\rho) \;.
\end{align}
Hence
\begin{align*}
&\Tr{\CEE_{\rho_B}(\rho_A)\log\CEE_{\rho_B}(\rho_A)}-\Tr{\CEE_{\rho_B}(\rho_A)\Delta_{R_k}(\log\CEE_{\rho_B}(\rho_A))}\\
&\qquad\qquad =  D(\Delta_{R_k} ( \CEE_{\rho_B}(\rho_A ))\| \CEE_{\rho_B}(\rho_A) ) +  D( \CEE_{\rho_B}( \rho_A)\| \Delta_{R_k} (\CEE_{\rho_B}( \rho_A )))\\
&\qquad\qquad = D(  \CEE_{\rho_B}(\Delta_{R_k} ( \rho_A ))\| \CEE_{\rho_B}(\rho_A) ) + D( \CEE_{\rho_B}( \rho_A)\| \CEE_{\rho_B}(\Delta_{R_k} ( \rho_A )))\\
&\qquad\qquad \leq  D(  \Delta_{R_k} ( \rho_A )\| \rho_A ) +  D(  \rho_A\| \Delta_{R_k} ( \rho_A ))\\
&\qquad\qquad =\Tr{\rho_A\log\rho_A}-\Tr{\rho_A\Delta_{R_k}(\log\rho_A)} \;,
\end{align*}
where the inequality comes from the monotonicity of relative entropy under quantum channels.
Therefore, \eqref{0103shi10} holds and the proof is complete.
\end{proof}

We introduce a measure representing the number of non-identity local Pauli operators in the unitary $w(\vec p, \vec q)$.

\begin{Def}
For any $(\vec{p}, \vec q )\in V^n$, denote 
\begin{eqnarray*}
\interleave (\vec p, \vec q) \interleave
=\sum^n_{k=1}\delta_{p_k\neq 0} + \sum^n_{k=1}\delta_{q_k\neq 0}
\end{eqnarray*}
to be the number of nonzero coordinates in $(\vec p,\vec q)$.
\end{Def}

Now, we define the Liouvillian,  based on Pauli X and Z operators. With this, we establish the quantum de Bruijn inequality for qudit systems.

\begin{Def}[\bf Liouvillian]\label{0103def1}
The Liouvillian $\CLL$ on $n$-qudits is:
\begin{align}\label{0103shi8}
\CLL =\sum_{k=1}^n \left(\CLL_{X_k}+ \CLL_{Z_k}\right) \;,
\end{align}
and with $H^R_j$ defined in \eqref{Def:jR}--\eqref{Defn:HRj},
\begin{align*}
\CLL_R(\rho) = -\frac 14 \sum_{j=1}^d \left[H_j^R, \left[H_j^R, \rho\right]\right] \;.
\end{align*}
\end{Def}

\begin{prop}\label{0103prop1}
The  Liouvillian $\CLL$ has the following properties:

(1) The Liouvillian is Hermitian with respect to the inner product given by the trace, i.e.,
\be
\Tr{\CLL(A)B}=\Tr{A\CLL(B)}\;.
\ee

(2) The action of $\CLL$ on Weyl operators has the form
\begin{eqnarray*}
\CLL(w(\vec{p}, \vec q))
=-\frac{1}{2}\interleave (\vec{p}, \vec q)\interleave w(\vec{p}, \vec q) \;,
\end{eqnarray*}
so
\begin{align*}
e^{t\CLL } (w(\vec{p},\vec q)) = \exp \left(-\frac 12 \interleave (\vec{p}, \vec q) \interleave t\right) w(\vec{p}, \vec q) \;.
\end{align*}
\end{prop}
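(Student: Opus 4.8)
The plan is to prove part (2) first and then deduce part (1) from it (a self-contained argument for (1) is also available). For (2), the starting point is the identity \eqref{0103shi1}, $\sum_{j=1}^d [H_j^R,[H_j^R,\rho]]=2(\rho-\Delta_R(\rho))$, which holds for every $R=X_k$ or $Z_k$; feeding it into Definition \ref{0103def1} gives $\CLL_R=\tfrac12(\Delta_R-\id)$, so that $\CLL=\tfrac12\sum_{k=1}^n\big((\Delta_{X_k}-\id)+(\Delta_{Z_k}-\id)\big)$. It therefore suffices to compute $\Delta_{X_k}$ and $\Delta_{Z_k}$ on an arbitrary Weyl operator.

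Write $w(\vec p,\vec q)=\bigotimes_{l=1}^n w(p_l,q_l)$ and note that $\Delta_{Z_k}$ acts as the identity on all tensor factors except the $k$-th, so the computation reduces to a single qudit. Since $w(p,q)$ is a scalar phase times $Z^pX^q$ and $\Delta_Z$ is linear, $\Delta_Z(w(p,q))$ equals $w(p,q)$ when $q\equiv 0$ (then $Z^pX^q=Z^p$ is already diagonal in the $Z$-basis) and $0$ when $q\not\equiv 0$ (then every diagonal matrix element $\iinner{m}{m+q}$ vanishes); hence $\Delta_{Z_k}(w(\vec p,\vec q))=\delta_{q_k,0}\,w(\vec p,\vec q)$. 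The parallel computation in the $X$-eigenbasis (equivalently, conjugation by the single-site Fourier unitary, which swaps $X$ and $Z$) yields $\Delta_{X_k}(w(\vec p,\vec q))=\delta_{p_k,0}\,w(\vec p,\vec q)$. Consequently $\CLL_{Z_k}(w(\vec p,\vec q))=-\tfrac12\,\delta_{q_k\neq 0}\,w(\vec p,\vec q)$ and $\CLL_{X_k}(w(\vec p,\vec q))=-\tfrac12\,\delta_{p_k\neq 0}\,w(\vec p,\vec q)$, and summing over $k=1,\dots,n$ gives $\CLL(w(\vec p,\vec q))=-\tfrac12\interleave(\vec p,\vec q)\interleave\,w(\vec p,\vec q)$, which is (2). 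The exponential formula is then immediate: $w(\vec p,\vec q)$ is an eigenvector of $\CLL^m$ with eigenvalue $\big(-\tfrac12\interleave(\vec p,\vec q)\interleave\big)^m$ for every $m\ge 0$, so $e^{t\CLL}(w(\vec p,\vec q))=\exp\!\big(-\tfrac12\interleave(\vec p,\vec q)\interleave\,t\big)\,w(\vec p,\vec q)$.

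For (1), recall that $\{w(\vec p,\vec q)\}_{(\vec p,\vec q)\in V^n}$ is a basis of $L(\mathcal H^{\ot n})$, so by bilinearity of $(A,B)\mapsto\Tr{AB}$ it suffices to verify $\Tr{\CLL(w(\vec p,\vec q))\,w(\vec p',\vec q')}=\Tr{w(\vec p,\vec q)\,\CLL(w(\vec p',\vec q'))}$. By (2) both sides are scalar multiples of $\Tr{w(\vec p,\vec q)\,w(\vec p',\vec q')}$, which vanishes unless $(\vec p',\vec q')=(-\vec p,-\vec q)$, and in that case the two scalars agree because $\interleave(\vec p,\vec q)\interleave=\interleave(-\vec p,-\vec q)\interleave$. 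Alternatively, one can prove (1) without invoking (2): the trace identity $\Tr{[H,[H,A]]B}=-\Tr{[H,A][H,B]}$ holds for any $H$ (by cyclicity of the trace) and is symmetric in $A$ and $B$, so summing over $j$, $R$ and $k$ against the coefficient $-\tfrac14$ yields $\Tr{\CLL(A)B}=\Tr{A\CLL(B)}$ directly. There is no real obstacle in this proof; the only step that requires a little care is the single-qudit dephasing computation, keeping track of the fact that $\Delta_{X_k}$ is governed by the $Z$-exponent $p_k$ while $\Delta_{Z_k}$ is governed by the $X$-exponent $q_k$.
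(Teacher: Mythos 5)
Your proof is correct and follows essentially the same route as the paper: both reduce $\CLL$ via the identity $\sum_j[H_j^R,[H_j^R,\rho]]=2(\rho-\Delta_R(\rho))$ to $\CLL=\tfrac12\sum_k\big((\Delta_{X_k}-\id)+(\Delta_{Z_k}-\id)\big)$ and then compute $\Delta_{X_k}(w(\vec p,\vec q))=\delta_{p_k,0}\,w(\vec p,\vec q)$, $\Delta_{Z_k}(w(\vec p,\vec q))=\delta_{q_k,0}\,w(\vec p,\vec q)$, exactly as in the paper. The only (cosmetic) difference is in part (1): the paper simply notes that $\CLL$ is a combination of the identity and the self-adjoint dephasing channels, while you verify $\Tr{\CLL(A)B}=\Tr{A\CLL(B)}$ either on the Weyl basis using (2) or via the trace identity $\Tr{[H,[H,A]]B}=-\Tr{[H,A][H,B]}$ — both equally valid.
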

\begin{proof}

(1) By direct calculation, we have 
\begin{align}
\sum_{j=1}^d [H_j^R, [H_j^R,  \rho]] = 2 (\rho - \Delta_R(\rho)) \;,
\end{align}
and thus 
\begin{eqnarray}\label{0103shi9}
\mathcal{L}_{R}(\rho)=-\frac{1}{2}(\rho-\Delta_R(\rho)) \;.
\end{eqnarray}
Hence
\begin{align}\label{0103shi2}
\CLL(\rho) = -n\rho + \frac 12 \sum_{k=1}^n\left( \Delta_{X_k} (\rho)+ \Delta_{Z_k} (\rho) \right)\;.
\end{align}
Thus $ \CLL$ is Hermitian.

(2) For any Weyl operator $w(\vec{p}, \vec q)$ , we have 
\begin{align*}
\Delta_{X_k} (w(\vec{p}, \vec q)) = \left\{
\begin{aligned}
&w(\vec{p}, \vec q) \;, && p_k=0 \;,\\
&0 \;, && p_k\neq 0 \;,
\end{aligned}\right.
\end{align*}
and
\begin{align*}
\Delta_{Z_k} (w(\vec p, \vec q)) = \left\{
\begin{aligned}
&w(\vec{p}, \vec q) \;, && q_k=0 \;,\\
&0 \;, && q_k\neq 0 \;.
\end{aligned}\right.
\end{align*}
Due to  \eqref{0103shi2},  we have
\begin{align*}
\CLL(w(\vec{p}, \vec q)) =& -nw(\vec p ,\vec q) + \frac 12 \sum_{k=1}^n \left[\Delta_{X_k} (w(\vec{p}, \vec q))+ \Delta_{Z_k} (w(\vec{p}, \vec q))\right]\\
=& -n w(\vec{p}, \vec q)+ \frac 12 \left( \sum_{ k:\; p_k= 0} 1+ \sum_{ k:\; q_k= 0}1 \right)w(\vec{p}, \vec q)\\
=& -\frac 12 \interleave (\vec{p}, \vec q) \interleave w(\vec{p}, \vec q) \;.
\end{align*}
\end{proof}

We give the following proposition to provide some commutation relation between 
the convolutional channel and quantum Markov semigroup, which may be of independent interest.
\begin{prop}
Let $\rho_{AB}\in \mathcal{D}(\mathcal{H}_{AB})$ be an $2n$-qudit state,
and $\mathcal{E}_G$ be the quantum convolutional channel.

(1) If $G$ is odd-parity positive, 
then 
\begin{align*}
\mathcal{E}_G\circ e^{t \CLL_B} (\rho_{AB}) = e^{t \CLL} \circ \mathcal{E}_G(\rho_{AB}) \;.
\end{align*}

(2) If $G$ is even-parity positive, 
then 
\begin{align*}
\mathcal{E}_G\circ e^{t \CLL_A} (\rho_{AB}) = e^{t \CLL} \circ \mathcal{E}_G(\rho_{AB}) \;.
\end{align*}

(3) If $G$ is positive, then
\begin{align*}
\mathcal{E}_G\circ e^{t_A \CLL_A} \otimes e^{t_B \CLL_B} (\rho_{AB}) = e^{t \CLL} \circ \mathcal{E}_G(\rho_{AB}) \;,
\end{align*}
where $t= t_A+t_B$.
\end{prop}

\begin{proof}
We only prove the case (3), and the other two cases can be proved in a similar way.
For any Weyl operator $w(\vec p,\vec q)$,
\begin{align*}
& e^{t_A \CLL_A} \otimes e^{t_B \CLL_B} \circ \CEE^\dag_G (w(\vec{p}, \vec q)) \\
= & e^{t_A \CLL_A} \otimes e^{t_B \CLL_B} (w(Ng_{11}\vec{p}, g_{00}\vec q) \otimes w(-Ng_{10}\vec{p}, g_{01} \vec q)) \\
=& \exp\left(-\frac 12 \interleave(Ng_{11}\vec{p}, g_{00}\vec q)\interleave t_A -\frac 12 \interleave(-Ng_{10}\vec{p}, g_{01} \vec q) \interleave t_B\right)\\ &\quad \times w(Ng_{11}\vec{p}, g_{00}\vec q) \otimes w(-Ng_{10}\vec{p}, g_{01} \vec q)\\
=& \exp\left(-\frac 12 \interleave (\vec{p}, \vec q) \interleave t\right) w(Ng_{11}\vec{p}, g_{00}\vec q) \otimes w(-Ng_{10}\vec{p}, g_{01} \vec q)\\
=& \CEE^\dag_G \left( \exp\left(-\frac 12 \interleave(\vec{p}, \vec q) \interleave t\right)w(\vec{p}, \vec q) \right)\\
=& \CEE^\dag_G \circ e^{t\CLL}( w(\vec{p}, \vec q) ) \;,
\end{align*}
where the first and the fourth equalities come from Lemma \ref{lem:adj_con}, the second and the last equalities come from  Proposition \ref{0103prop1}, and the third equality comes from the fact that $G$ is positive.
\end{proof}

\begin{thm}[\bf Quantum de Bruijn for qudits]
For any quantum state $\rho\in\mathcal{D}(\mathcal{H}^{\ot n})$, we have
\begin{align*}
\frac{d}{dt} \bigg|_{t=0}  H\left( e^{t\CLL } (\rho) \right)= \frac 14 J(\rho) \;.
\end{align*}
\end{thm}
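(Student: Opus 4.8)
The plan is to differentiate the von Neumann entropy directly along the curve $\rho_t := e^{t\CLL}(\rho)$ and then to identify the resulting expression with $\frac14 J(\rho)$ through the explicit form of the Liouvillean recorded in Proposition \ref{0103prop1}. First note that $\CLL$ is a genuine Lindblad generator: since $e^{s\CLL_R}(\,\cdot\,)=e^{-s/2}(\,\cdot\,)+(1-e^{-s/2})\Delta_R(\,\cdot\,)$ is a quantum channel for every $R\in\{X_k,Z_k\}$, the semigroup $e^{t\CLL}$ is CPTP, $t\mapsto\rho_t$ is a smooth curve of states, $\dot\rho_0=\CLL(\rho)$, and $\Tr{\CLL(\rho)}=0$. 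Taking $\rho$ of full rank, the function $g(x)=x\log x$ is smooth near the spectrum of $\rho$, so the standard trace-derivative formula $\frac{d}{dt}\big|_{t=0}\Tr{g(\rho_t)}=\Tr{g'(\rho_0)\dot\rho_0}$ (which follows from first-order perturbation theory for the eigenvalues) gives
\begin{align*}
\frac{d}{dt}\bigg|_{t=0} H\big(e^{t\CLL}(\rho)\big) = -\Tr{(\log\rho+I)\,\CLL(\rho)} = -\Tr{\CLL(\rho)\log\rho}\;,
\end{align*}
where the second equality uses $\Tr{\CLL(\rho)}=0$.

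Next I would substitute the explicit generator and match. Equation \eqref{0103shi2} in the proof of Proposition \ref{0103prop1} expresses the generator as
\begin{align*}
\CLL(\rho)=-n\rho+\frac12\sum_{k=1}^n\big(\Delta_{X_k}(\rho)+\Delta_{Z_k}(\rho)\big)\;,
\end{align*}
so the left-hand side above equals $n\Tr{\rho\log\rho}-\frac12\sum_{k=1}^n\big(\Tr{\Delta_{X_k}(\rho)\log\rho}+\Tr{\Delta_{Z_k}(\rho)\log\rho}\big)$. It remains to show that $\frac14 J(\rho)$ equals the same quantity. Equation \eqref{0103shi7} in the proof of Theorem \ref{thm:fisher} states
\begin{align*}
J(\rho)=2\sum_{k=1}^n\Big(2\Tr{\rho\log\rho}-\Tr{\rho\,\Delta_{X_k}(\log\rho)}-\Tr{\rho\,\Delta_{Z_k}(\log\rho)}\Big)\;,
\end{align*}
and since each dephasing channel $\Delta_R$ is self-adjoint with respect to the trace inner product, $\Tr{\rho\,\Delta_R(\log\rho)}=\Tr{\Delta_R(\rho)\log\rho}$. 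Substituting this and dividing by $4$ yields exactly $n\Tr{\rho\log\rho}-\frac12\sum_{k=1}^n\big(\Tr{\Delta_{X_k}(\rho)\log\rho}+\Tr{\Delta_{Z_k}(\rho)\log\rho}\big)$, which coincides with the left-hand side. This completes the argument.

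The only delicate point, and the main obstacle, is justifying the differentiation of $t\mapsto H(\rho_t)$ at $t=0$ when $\rho$ is not of full rank: then $\log\rho$ is defined only on $\mathrm{range}(\rho)$, $g'(x)=\log x+1$ is singular at $x=0$, and $e^{t\CLL}$ typically enlarges the support for $t>0$ (already $\Delta_{X_k}(\rho)$ can have full rank when $\rho$ is pure), so the one-sided derivative at $t=0^{+}$ may equal $+\infty$. One therefore either restricts to $\rho>0$, where every manipulation above is literal, or keeps both sides valued in $[0,+\infty]$ and checks that they are finite, or infinite, together. The remaining ingredients -- cyclicity of the trace, self-adjointness of the $\Delta_R$, and the trace-derivative formula -- are routine and already appear, in one form or another, in the proofs of Proposition \ref{0103prop1} and Theorem \ref{thm:fisher}.
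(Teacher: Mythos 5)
Your proposal is correct and follows essentially the same route as the paper: differentiate the entropy to get $-\Tr{\CLL(\rho)\log\rho}$, then identify this with $\tfrac14 J(\rho)$ using the self-adjointness of the dephasing maps (which is exactly the Hermiticity of $\CLL$ that the paper invokes), your use of \eqref{0103shi2} and \eqref{0103shi7} being just a rewriting of the paper's double-commutator step. Your added caveat about rank-deficient $\rho$ (where the one-sided derivative can be $+\infty$) is a legitimate point of care that the paper's proof silently glosses over, and restricting to $\rho>0$ or reading both sides in $[0,+\infty]$ handles it as you say.
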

\begin{proof}
\begin{align*}
\frac{d}{dt} \bigg|_{t=0} H\left( e^{t\CLL } (\rho) \right) =& - \trace[ \CLL(\rho) \log \rho]\\
=& - \Tr{\rho \CLL(\log \rho)} \\
=& -\sum_{ k=1}^n\left( \Tr{\rho \CLL_{X_k}   (\log \rho)} - \Tr{\rho \CLL_{Z_k}   (\log \rho)}\right)\\
=& \frac 14 \sum_{ k=1}^n \sum_{j=1}^d \left(\Tr{\rho \left[H_j^{X_k}, \left[H_j^{X_k}, \log \rho\right]\right]} + \Tr{\rho \left[H_j^{Z_k}, \left[H_j^{Z_k}, \log \rho\right]\right]}\right)\\
=& \frac14 J(\rho) \;,
\end{align*}
where the second equation comes from the fact that $\CLL$ is Hermitian.
\end{proof}

The quantum de Bruijn identity for qudits establishes a connection between the quantum Fisher information and the rate of quantum entropy change under the Liouvillian $\CLL$.

\subsection{Stabilizer states in the convolution }

In this section, we study the role of stabilizer states in the convolutional channel.
First, let us discuss what kind of input states $\rho,\sigma$ 
will make the output state have the minimal output entropy. 
\begin{thm}\label{thm:min_outen}
Let  $G$ be positive and invertible, and let $\rho$ and $\sigma$ be two $n$-qudit states.
Then the output state $\mathcal{E}_G(\rho\ot\sigma)$ has the minimal 
output entropy iff both $\rho,\sigma$ are pure stabilizer states,
and their stabilizer groups $S_1$ and $S_2$
satisfy 
\begin{align}\label{0125shi2}
S_1 = \{ w(- g_{10}^{-1} g_{11}\vec p, g_{01}^{-1} g_{00}\vec q):   w(\vec p, \vec q)\in S_2\} \;.
\end{align}
\end{thm}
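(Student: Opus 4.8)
The plan is to reduce the statement about the output entropy $H(\mathcal{E}(\rho\otimes\sigma))$ to a statement about the purity / spectrum of $\rho\boxtimes\sigma$, and then to combine two ingredients: (i) the convolution-majorization of Theorem~\ref{thm:major}(3), which gives $\vec\lambda_{\rho\boxtimes\sigma}\prec\vec\lambda_\rho$ and $\vec\lambda_{\rho\boxtimes\sigma}\prec\vec\lambda_\sigma$ since $G$ is positive, hence $H(\rho\boxtimes\sigma)\ge\max\{H(\rho),H(\sigma)\}\ge 0$; and (ii) the equality analysis of Theorem~\ref{0112thm1} / Lemma~\ref{lem:eq_entr}, applied with $\alpha=1$ (or a limiting argument from $\alpha\notin\{-\infty,0,\infty\}$, noting that $H(\rho\boxtimes\sigma)=0$ forces equality in the $\alpha=2$ Rényi bound too, so the lemma applies directly). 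First I would observe that $H(\mathcal{E}(\rho\otimes\sigma))=0$ iff $\rho\boxtimes\sigma$ is pure. By the majorization $\vec\lambda_{\rho\boxtimes\sigma}\prec\vec\lambda_\rho$, a pure output forces $\vec\lambda_\rho$ to be majorized-below by a pure vector, which is only possible when $\rho$ is pure; symmetrically $\sigma$ is pure. So the minimal output entropy (which is $0$, attained e.g. when both are pure stabilizer states by convolutional stability) forces both inputs pure.

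Next, assume $\rho=\proj{\psi}$ and $\sigma=\proj{\phi}$ are both pure. I would use the convolution–multiplication duality (Proposition~\ref{prop:chara_conv}):
\[
\Xi_{\rho\boxtimes\sigma}(\vec p,\vec q)=\Xi_\rho(Ng_{11}\vec p, g_{00}\vec q)\,\Xi_\sigma(-Ng_{10}\vec p, g_{01}\vec q).
\]
The output is pure iff $\sum_{\vec p,\vec q}|\Xi_{\rho\boxtimes\sigma}(\vec p,\vec q)|^2 = d^{2n}\,\Tr{(\rho\boxtimes\sigma)^2}\cdot\tfrac{1}{d^n}$ equals $d^n$, i.e. iff $d^n\Tr{(\rho\boxtimes\sigma)^2}=d^n$. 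Now $\sum_{\vec p,\vec q}|\Xi_\rho(Ng_{11}\vec p,g_{00}\vec q)|^2|\Xi_\sigma(-Ng_{10}\vec p,g_{01}\vec q)|^2$: since $G$ is positive, all of $g_{00},g_{11},g_{10},g_{01}$ are invertible mod $d$, so the maps $\vec p\mapsto Ng_{11}\vec p$, $\vec q\mapsto g_{00}\vec q$, etc., are bijections on $\mathbb Z_d^n$. For a pure state $|\Xi_\rho|^2,|\Xi_\sigma|^2\le 1$ with sum $d^n$ each over $d^{2n}$ points; the product summing to $d^n$ forces, by Cauchy–Schwarz / the equality case of $\sum a_ib_i \le (\sum a_i^2)^{1/2}(\sum b_i^2)^{1/2}$-type reasoning (or more simply: each factor is $\le 1$ and the number of nonzero terms of $\Xi_\rho$ is exactly $d^n$, likewise for $\Xi_\sigma$, so the product support must coincide after the reparametrization and each $|\Xi|$ must be exactly $1$ there), that $\operatorname{Supp}\Xi_\rho\circ(\text{reparam})=\operatorname{Supp}\Xi_\sigma\circ(\text{reparam})$ and all values there have modulus $1$. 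A pure state whose characteristic function has modulus $1$ on all of its $d^n$-element support is a pure stabilizer state (its support is then an abelian subgroup of size $d^n$, by Lemmas~\ref{0106lem1}--\ref{0106lem3}, and it is the common eigenstate). Tracking the reparametrization $(\vec p,\vec q)\mapsto(Ng_{11}\vec p,g_{00}\vec q)$ versus $(\vec p,\vec q)\mapsto(-Ng_{10}\vec p,g_{01}\vec q)$, the condition that these two supports coincide as subgroups is exactly: $w(\vec p,\vec q)\in S_2$ iff $w(-g_{10}^{-1}g_{11}\vec p,\,g_{01}^{-1}g_{00}\vec q)\in S_1$, which is \eqref{0125shi2}.

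Conversely, if $\rho,\sigma$ are pure stabilizer states with stabilizer groups related by \eqref{0125shi2}, then the duality formula shows $|\Xi_{\rho\boxtimes\sigma}(\vec p,\vec q)|$ equals $1$ on an abelian subgroup of size $d^n$ and $0$ elsewhere (the two factors are simultaneously of modulus $1$ precisely on the set $S_1$ pulled back, and simultaneously nonzero only there, by \eqref{0125shi2}), so $\Tr{(\rho\boxtimes\sigma)^2}=1$ and the output is a pure stabilizer state with $H=0$, the minimum. The main obstacle I anticipate is the "forces both pure'' direction and, more delicately, pinning down that the two reparametrized supports must coincide (not merely that the pointwise products are large) — this is where positivity of $G$ (invertibility of all four entries) is essential, and where I would invoke the structure lemmas \ref{0106lem1}--\ref{0106lem3} together with Lemma~\ref{lem:pauR}(1) (a pure state has Pauli rank exactly $d^n$ iff it is stabilizer) rather than a bare Cauchy–Schwarz count; handling the edge in the equality case $\alpha=1$ versus $\alpha\neq 0,\pm\infty$ via Lemma~\ref{lem:eq_entr} is a secondary technical point but routine.
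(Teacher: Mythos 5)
Your proposal is correct, and its second half takes a genuinely more direct route than the paper. The first step coincides with the paper's: minimal output entropy means $\rho\boxtimes\sigma$ is pure, and purity of the output forces purity of both inputs via the convolution-majorization/entropy inequality (Theorem \ref{thm:major} / Proposition \ref{prop:entropy}). From there the paper simply cites Theorem \ref{0112thm1}, the equality case $H_\alpha(\rho\boxtimes\sigma)=H_\alpha(\rho)$ (note $\alpha=1$ is already allowed there, since only $\alpha\in\{-\infty,0,+\infty\}$ is excluded, so your worry about a limiting argument is unnecessary), whereas your core argument is a direct $L^2$ computation on characteristic functions: purity of the output means $\sum_{\vec p,\vec q}\abs{\Xi_\rho(Ng_{11}\vec p,g_{00}\vec q)}^2\abs{\Xi_\sigma(-Ng_{10}\vec p,g_{01}\vec q)}^2=d^n$, and since each factor is bounded by $1$ and each sums to $d^n$ over the bijectively reparametrized phase space (this is where positivity of $G$ enters), equality forces $\abs{\Xi_\sigma}=1$ on the support of the $\rho$-factor and vice versa; hence the two supports coincide, both characteristic functions take only the values $0$ or (in modulus) $1$, the modulus-one sets have size exactly $d^n$, and Lemmas \ref{0106lem1}--\ref{0106lem3} upgrade this to: $\rho,\sigma$ are pure stabilizer states whose groups are related exactly by \eqref{0125shi2}. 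In effect you reprove the pure-state special case of Theorem \ref{0112thm1} from scratch; what this buys is a short, self-contained argument in which \eqref{0125shi2} falls out of the support matching, while the paper's route is a one-liner but leans on the heavier equality machinery of Lemma \ref{lem:eq_entr}. One small inaccuracy to fix: your parenthetical shortcut claiming a pure state's characteristic function has exactly $d^n$ nonzero terms is false in general (the Pauli rank is $\geq d^n$, with equality only for stabilizer states, Lemma \ref{lem:pauR}); but your equality analysis never needs that claim --- the count $d^n$ is a consequence of the equality conditions --- and you flag this yourself at the end, so the argument stands.
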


\begin{proof}
If $\rho\boxtimes\sigma$ has the minimal output entropy,
then it is a pure state.
By Proposition \ref{prop:entropy}, both $\rho$ and $\sigma$ are pure,
so the equality $H(\rho\boxtimes\sigma)= H(\rho) = H(\sigma) $ holds.
Then Theorem \ref{0112thm1} tells us that $\rho$ and $\sigma$ are pure  stabilizer states, 
and \eqref{0125shi2} is satisfied.
\end{proof}

Classical capacity, in general, quantifies the maximal rate of reliable classical information transmission. Here, we study the classical capacity of the convolutional channel and explore its relation to the stabilizerness of the input states.
We also consider the 
Holevo capacity of the quantum channels, which can be used to quantify the classical capacity of a memory-less quantum channel~\cite{Schumacher97,Holevo98}.

\begin{Def}[\bf Holevo capacity]
The Holevo capacity $\chi_H(\mathcal{E})$ of a quantum channel $\mathcal{E}$ is 
\begin{eqnarray}
\chi_H(\mathcal{E})
=\max_{\set{p_i,\rho_i}}
H\left(\sum_ip_i\mathcal{E}(\rho_i)\right)
-\sum_ip_iH(\mathcal{E}(\rho_i)) \;,
\end{eqnarray}
where the maximum is taken over all ensembles $\set{p_i,\rho_i}$ of possible input states $\rho_i$ occurring with
probabilities $p_i$.
\end{Def}

\begin{thm}[\bf Holevo capacity bound: general case]\label{0118thm1}
Let  $G$ be positive and invertible, and $\sigma$ be an $n$-qudit state.
The Holevo capacity of the quantum channel $\mathcal{E}_{\sigma}$  satisfies
\begin{eqnarray}\label{eq:lo_up_hol}
n\log d-H(\CMM(\sigma))\leq \chi_H(\mathcal{E}_{\sigma})\leq  n\log d-H(\sigma) \;.
\end{eqnarray}
In particular, if $\sigma\in MSPS$,
then 
\begin{eqnarray*}
\chi_H(\mathcal{E}_{\sigma})
=n\log d-H(\sigma) \;.
\end{eqnarray*}
\end{thm}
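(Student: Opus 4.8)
The plan is to prove the two inequalities separately, then observe that they coincide when $\sigma$ is an MSPS.

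\emph{Upper bound.} Fix any ensemble $\{p_i,\rho_i\}$. The averaged output $\sum_i p_i\mathcal{E}_\sigma(\rho_i)=\bigl(\sum_i p_i\rho_i\bigr)\boxtimes\sigma$ is an $n$-qudit state, so its entropy is at most $n\log d$. Since $G$ positive implies $G$ odd-parity positive, Theorem~\ref{thm:major}(1) gives $\vec\lambda_{\rho_i\boxtimes\sigma}\prec\vec\lambda_\sigma$, and as von Neumann entropy is Schur-concave, $H(\mathcal{E}_\sigma(\rho_i))\ge H(\sigma)$ for every $i$, hence $\sum_i p_i H(\mathcal{E}_\sigma(\rho_i))\ge H(\sigma)$. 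Thus every ensemble has Holevo quantity at most $n\log d-H(\sigma)$, so $\chi_H(\mathcal{E}_\sigma)\le n\log d-H(\sigma)$.

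\emph{Lower bound.} It suffices to exhibit one ensemble attaining $n\log d-H(\CMM(\sigma))$. Let $S_\sigma=\{(\vec p,\vec q)\in V^n:|\Xi_\sigma(\vec p,\vec q)|=1\}$; by Proposition~\ref{0109lem1} this is an abelian Weyl group, $|S_\sigma|=d^r$, $\CMM(\sigma)$ is the MSPS it determines, and $H(\CMM(\sigma))=(n-r)\log d$. Let $S$ be the group in Theorem~\ref{0112thm1}, namely the image of $S_\sigma$ under the linear bijection $(\vec p,\vec q)\mapsto(-g_{10}^{-1}g_{11}\vec p,\,g_{01}^{-1}g_{00}\vec q)$ of $V^n$ --- a bijection since $g_{00},g_{01},g_{10},g_{11}$ are all nonzero when $G$ is positive --- so $|S|=d^r$. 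Choose $\rho_0$ to be any MSPS associated with $S$, so $H(\rho_0)=(n-r)\log d$; by the ``$\Leftarrow$'' direction of Theorem~\ref{0112thm1} (valid at $\alpha=1$, since it merely says $\rho_0\boxtimes\sigma$ is an MSPS of the same rank, hence of the same flat spectrum, as $\rho_0$), we get $H(\rho_0\boxtimes\sigma)=H(\rho_0)=H(\CMM(\sigma))$. Now use the ensemble $\bigl\{d^{-2n},\,w(\vec p,\vec q)\rho_0 w(\vec p,\vec q)^\dag\bigr\}_{(\vec p,\vec q)\in V^n}$. By Proposition~\ref{prop:comm_wel} with trivial second Weyl factor, $\mathcal{E}_\sigma\bigl(w(\vec p,\vec q)\rho_0 w(\vec p,\vec q)^\dag\bigr)=w(g_{00}\vec p,Ng_{11}\vec q)(\rho_0\boxtimes\sigma)w(g_{00}\vec p,Ng_{11}\vec q)^\dag$; since $(\vec p,\vec q)\mapsto(g_{00}\vec p,Ng_{11}\vec q)$ is again a bijection of $V^n$, the averaged output is $\frac{1}{d^{2n}}\sum_{(\vec x,\vec y)\in V^n}w(\vec x,\vec y)(\rho_0\boxtimes\sigma)w(\vec x,\vec y)^\dag=I/d^n$ by the standard Weyl (Pauli) twirl identity, while each individual output has entropy $H(\rho_0\boxtimes\sigma)=H(\CMM(\sigma))$ by unitary invariance of entropy. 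Hence this ensemble has Holevo quantity $n\log d-H(\CMM(\sigma))$, giving $\chi_H(\mathcal{E}_\sigma)\ge n\log d-H(\CMM(\sigma))$.

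\emph{The MSPS case and the main obstacle.} If $\sigma\in$ MSPS then $|\Xi_\sigma|$ takes only the values $0$ and $1$, so $\CMM(\sigma)=\sigma$, the two bounds coincide, and $\chi_H(\mathcal{E}_\sigma)=n\log d-H(\sigma)$. The main obstacle is the equality $H(\rho_0\boxtimes\sigma)=H(\CMM(\sigma))$: the naive choice $\rho_0=\CMM(\sigma)$ fails, because in $\Xi_{\rho_0\boxtimes\sigma}(\vec p,\vec q)=\Xi_{\rho_0}(Ng_{11}\vec p,g_{00}\vec q)\,\Xi_\sigma(-Ng_{10}\vec p,g_{01}\vec q)$ the second factor need not have modulus $1$ on the support carved out by the first; one must instead take $\rho_0$ to be the MSPS whose stabilizer group is the $G$-twisted image $S$ of $S_\sigma$, so that the support constraint contributed by $\rho_0$ precisely pins the $\sigma$-factor to $S_\sigma$ --- which is exactly what Theorem~\ref{0112thm1} supplies. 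Everything else is the standard Holevo upper/lower bound machinery together with the Pauli twirl.
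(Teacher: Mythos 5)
Your proof is correct, and it follows the paper's strategy in all essentials: the upper bound is the same majorization/Schur-concavity argument (the paper cites Proposition~\ref{prop:entropy}, you cite Theorem~\ref{thm:major}(1), which is its source), and your $\rho_0$ — the MSPS whose stabilizer group is the image of $S_\sigma$ under $(\vec p,\vec q)\mapsto(-g_{10}^{-1}g_{11}\vec p,\,g_{01}^{-1}g_{00}\vec q)$ — is exactly the paper's MSPS associated with its group $S_1=\{w(Ng_{11}\vec p,g_{00}\vec q):w(-Ng_{10}\vec p,g_{01}\vec q)\in S\}$. You differ in two places, both legitimately. First, to get $H(\rho_0\boxtimes\sigma)=H(\CMM(\sigma))$ you invoke the ``$\Leftarrow$'' direction of the equality-case Theorem~\ref{0112thm1} at $\alpha=1$, whereas the paper recomputes the characteristic function of $\rho_0\boxtimes\sigma$ from scratch and shows it is an MSPS for $S_2$; your route reuses an established result, the paper's is self-contained within the proof. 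Second, for the ensemble the paper takes the $d^r$ MSPSs associated with $S_1$ with uniform weights $d^{-r}$ (they sum to $I/d^n$, and $I/d^n\boxtimes\sigma=I/d^n$ by Lemma~\ref{lem:conv_iden} since $G$ is even-parity positive), while you take the full Weyl orbit $\{d^{-2n},\,w(\vec p,\vec q)\rho_0w(\vec p,\vec q)^\dag\}$ and use covariance (Proposition~\ref{prop:comm_wel}) plus the Weyl-twirl identity and the bijectivity of $(\vec p,\vec q)\mapsto(g_{00}\vec p,Ng_{11}\vec q)$ to see that the averaged output is $I/d^n$ and every individual output is unitarily equivalent to $\rho_0\boxtimes\sigma$. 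Your ensemble is larger but the bookkeeping is arguably simpler (no need to enumerate the MSPSs sharing a stabilizer group); the paper's is more economical and makes the structure of the optimal ensemble more explicit. Minor cosmetic point: the group property of $S_\sigma$ is established in the proof of Proposition~\ref{0109lem1} (via Lemmas~\ref{0106lem1}--\ref{0106lem3}), not in its statement, but this does not affect the argument.
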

\begin{proof}
Based on Proposition \ref{prop:entropy}, 
we have $\sum_ip_iH(\mathcal{E}_{\sigma}(\rho_i))\geq \sum_ip_iH(\sigma)=H(\sigma)$. The entropy
$H(\sum_ip_i\mathcal{E}_{\sigma}(\rho_i))\leq\log d^n$, hence we have 
\begin{eqnarray*}
H\left(\sum_ip_i\mathcal{E}_{\sigma}(\rho_i)\right)
-\sum_ip_iH(\mathcal{E}_{\sigma}(\rho_i))
\leq \log d^n-H(\sigma) \;.
\end{eqnarray*}

For the lower bound,
let $S$ be the abelian Weyl group associated with $\mathcal{M}(\sigma)$ and assume $|S|=d^r$. 
Consider the following two sets of Weyl operators:
\begin{eqnarray*}
S_1&=&\set{w(Ng_{11}\vec p, g_{00}\vec q): w(-Ng_{10}\vec p, g_{01}\vec q)\in S} \;,\\
S_2&=&\set{w(\vec p, \vec q): w(-Ng_{10}\vec p,g_{01} \vec q)\in S} \;.
\end{eqnarray*}
Clearly, $|S_1|=|S_2|=|S|=d^r$, and both $S_1$ and $S_2$ are abelian groups.

Let $\rho$ be an MSPS associated with $S_1$. Then  
the characteristic function of $\rho$ is supported on  $S_1$. 
Since
\begin{eqnarray*}
\Xi_{\rho\boxtimes \sigma}
(\vec p, \vec q)
=\Xi_{\rho}(Ng_{11}\vec p, g_{00}\vec q)
\Xi_{\sigma}(-Ng_{10}\vec p, g_{01}\vec q) \;,
\end{eqnarray*}
$\Xi_{\rho\boxtimes\sigma}(\vec p, \vec q)\neq 0$ implies that $w(Ng_{11}\vec p, g_{00}\vec q)\in S_1$ .
By the definition of $S_1$, we have $w(-Ng_{10}\vec p, g_{01}\vec q)\in S$,  then $w(\vec p, \vec q)\in S_2$, that is 
$\text{supp}\Xi_{\rho\boxtimes\sigma}\subseteq S_2$. 
Moreover, if $w(\vec p, \vec q)\in S_2$, then by the definition 
of $S_1, S_2$ and $\rho$, we have $|\Xi_{\rho\boxtimes\sigma}(\vec p, \vec q)|=1$. Hence, $\rho\boxtimes\sigma$
is an MSPS associated with $S_2$. Therefore
\begin{eqnarray*}
H(\rho\boxtimes\sigma)=\log d^{n-r} \;.
\end{eqnarray*}

Let $\{ \rho_i\}_{i=1}^{d^r}$ be the $d^{r}$ MSPSs associated with the abelian Weyl group $S_1$, 
and let the ensemble  be $\set{p_i=\frac{1}{d^r}, \rho_i}^{d^r}_{i=1}$, then
we have
\begin{eqnarray*}
\sum_ip_i\rho_i=I_n/d^n \;.
\end{eqnarray*}
Thus, $H(\sum_ip_i\rho_i\boxtimes\sigma)=H(I_n/d^n\boxtimes\sigma)=\log d^n$, and
$\sum_ip_iH(\rho_i\boxtimes\sigma)=\log d^{n-r}$. Therefore, $\chi_H(\mathcal{E}_{\sigma})\geq \log d^n-\log d^{n-r}=\log d^n-H(\CMM(\sigma))$.
\end{proof}

\begin{thm}[\bf Holevo capacity bound: pure state case]\label{thm:holv_stab}
Let  $G$ be positive and invertible,  and $\sigma$ be a pure state.
The  quantum channel $\mathcal{E}_{\sigma}$  has the maximal  Holevo capacity
$n\log d$ iff 
$\sigma$ is a stabilizer state.

\end{thm}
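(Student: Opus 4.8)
The plan is to prove both directions by combining Theorem~\ref{0118thm1} with the equality case analysis of Theorem~\ref{0112thm1}. The ``if'' direction is immediate: if $\sigma$ is a pure stabilizer state, then $\CMM(\sigma) = \sigma$ (since for a pure stabilizer state every value of $|\Xi_\sigma|$ in the support equals $1$), so $H(\CMM(\sigma)) = H(\sigma) = 0$, and the two-sided bound \eqref{eq:lo_up_hol} collapses to $\chi_H(\mathcal{E}_\sigma) = n\log d$. So the real content is the ``only if'' direction.

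For the ``only if'' direction, suppose $\chi_H(\mathcal{E}_\sigma) = n\log d$. First I would use the upper bound in \eqref{eq:lo_up_hol}, namely $\chi_H(\mathcal{E}_\sigma) \le n\log d - H(\sigma)$, to conclude $H(\sigma) = 0$, i.e.\ $\sigma$ is pure. Next, since the Holevo capacity is a supremum over ensembles and the function being maximized is continuous on the compact set of ensembles of bounded size (by Carath\'eodory one may restrict to ensembles with at most $d^{2n}$ elements), the supremum is attained by some ensemble $\{p_i, \rho_i\}$; one can also reduce to pure $\rho_i$ by concavity of entropy in the appropriate sense. Attaining $n\log d$ forces simultaneously $H\!\left(\sum_i p_i \mathcal{E}_\sigma(\rho_i)\right) = n\log d$ and $\sum_i p_i H(\mathcal{E}_\sigma(\rho_i)) = 0$. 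The latter forces $H(\mathcal{E}_\sigma(\rho_i)) = 0$ for every $i$ with $p_i > 0$, i.e.\ each $\rho_i \boxtimes \sigma$ is pure. By Proposition~\ref{prop:entropy} (part (3)), $0 = H(\rho_i \boxtimes \sigma) \ge \max\{H(\rho_i), H(\sigma)\}$, so each $\rho_i$ is pure and the equality $H(\rho_i \boxtimes \sigma) = H(\rho_i)$ holds. Now Theorem~\ref{0112thm1} applies: the equality $H_\alpha(\rho_i \boxtimes \sigma) = H_\alpha(\rho_i)$ at $\alpha = 1$ (taking the limit, or applying the theorem for $\alpha$ near $1$ and using continuity — here I should be slightly careful since Theorem~\ref{0112thm1} is stated for $\alpha \notin \{-\infty, 0, \infty\}$, so $\alpha = 1$ is covered) forces $\rho_i$ to lie in the abelian $C^*$-algebra generated by the group $S$ of \eqref{0214shi1}, with $S$ determined by $\sigma$; since $\rho_i$ is pure, it must be a pure stabilizer state and in particular $S$ must contain a maximal abelian subgroup.

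It remains to extract purity of $\sigma$ together with the stabilizer property from the structure of $S = \{ w(-g_{10}^{-1}g_{11}\vec p,\, g_{01}^{-1}g_{00}\vec q) : |\Xi_\sigma(\vec p, \vec q)| = 1\}$. The key point is that each pure $\rho_i$ being a stabilizer state with stabilizer group inside $S$ forces $|S| \ge d^n$; but then $\{(\vec p, \vec q): |\Xi_\sigma(\vec p, \vec q)| = 1\}$ has size at least $d^n$ as well (the map $(\vec p, \vec q) \mapsto (-g_{10}^{-1}g_{11}\vec p, g_{01}^{-1}g_{00}\vec q)$ is a bijection on $V^n$ since $G$ is positive, hence $g_{10}, g_{11}, g_{01}, g_{00}$ are all nonzero mod $d$). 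Combined with $\Tr{\sigma^2} = 1$ and the Parseval-type identity $\sum_{(\vec p, \vec q)} |\Xi_\sigma(\vec p, \vec q)|^2 = d^n \Tr{\sigma^2} = d^n$, having $d^n$ values of modulus $1$ forces $|\Xi_\sigma| \in \{0, 1\}$ everywhere, i.e.\ $\sigma = \CMM(\sigma)$ is an MSPS; purity of $\sigma$ (already established) then makes it a pure stabilizer state. I expect the main obstacle to be the bookkeeping in this last step — specifically, justifying the existence of an optimal ensemble and extracting $|S| \ge d^n$ cleanly — since one must argue that the ensemble achieving capacity cannot be supported on ``too small'' a stabilizer algebra without its average having entropy strictly less than $n\log d$; an alternative and perhaps cleaner route is to invoke the lower-bound construction in the proof of Theorem~\ref{0118thm1}, which shows $\chi_H(\mathcal{E}_\sigma) \ge n\log d - H(\CMM(\sigma))$ with equality structure, so that $\chi_H(\mathcal{E}_\sigma) = n\log d$ forces $H(\CMM(\sigma)) = 0$, i.e.\ $\CMM(\sigma) = \sigma$ is an MSPS, and purity upgrades this to a pure stabilizer state directly.
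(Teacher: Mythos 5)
Your main argument is correct, but it follows a different route from the paper. The paper's ``only if'' direction is a short reduction: attaining $n\log d$ forces $H(\mathcal{E}_{\sigma}(\rho_i))=0$ for the members of the optimal ensemble, and it then simply invokes Theorem \ref{thm:min_outen}, whose statement already says that zero output entropy forces both inputs to be pure stabilizer states. You instead unpack that theorem: Proposition \ref{prop:entropy}(3) gives purity of each $\rho_i$ together with the equality $H(\rho_i\boxtimes\sigma)=H(\rho_i)$; Theorem \ref{0112thm1} at $\alpha=1$ places $\rho_i$ in $C^*(S)$; purity of $\rho_i$ forces the minimal projections of $C^*(S)$ to have rank one, hence $|S|=d^n$; and since $G$ positive makes $(\vec p,\vec q)\mapsto(-g_{10}^{-1}g_{11}\vec p,\,g_{01}^{-1}g_{00}\vec q)$ a bijection, the Parseval identity $\sum_{(\vec p,\vec q)}|\Xi_{\sigma}(\vec p,\vec q)|^2=d^n\Tr{\sigma^2}=d^n$ kills every value of $|\Xi_\sigma|$ strictly between $0$ and $1$, so $R_P(\sigma)=d^n$ and $\sigma$ is a pure stabilizer state by Lemma \ref{lem:pauR}. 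This is longer than the paper's citation of Theorem \ref{thm:min_outen}, but it buys something: Theorem \ref{0112thm1} as literally stated constrains $\rho$, not $\sigma$, and your $|S|=d^n$ plus Parseval step is a clean way to transfer the conclusion to $\sigma$, a detail the paper's proof of Theorem \ref{thm:min_outen} leaves terse. Your ``if'' direction and the reduction to zero output entropy coincide with the paper; also, the rank argument alone already forces $|S|=d^n$, so the worry about the averaged state having entropy $n\log d$ is unnecessary.

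One caution: the ``alternative and perhaps cleaner route'' in your final sentence does not work. Theorem \ref{0118thm1} gives only the \emph{lower} bound $\chi_H(\mathcal{E}_\sigma)\ge n\log d-H(\CMM(\sigma))$; when $\chi_H(\mathcal{E}_\sigma)=n\log d$ this inequality is vacuously satisfied and places no constraint on $H(\CMM(\sigma))$. To argue that way you would need an upper bound on $\chi_H(\mathcal{E}_\sigma)$ in terms of $H(\CMM(\sigma))$, and \eqref{eq:lo_up_hol} only provides one in terms of $H(\sigma)$, which is already $0$ because $\sigma$ is assumed pure. So discard that shortcut and keep your main route.
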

\begin{proof}

If the pure state
$\sigma$ is a stabilizer state, the size of the stabilizer group is $d^n$, and thus
 the Holevo 
capacity 
$\chi_H(\mathcal{E}_{\sigma})=n\log d$ by  \eqref{eq:lo_up_hol} as
$H(\CMM(\sigma))=H(\sigma)=0$.

On the other hand, if there exists an ensemble $\set{p_i,\rho_i}$ such that 
\begin{eqnarray*}
H\left(\sum_ip_i\mathcal{E}_{\sigma}(\rho_i)\right)
-\sum_ip_iH(\mathcal{E}_{\sigma}(\rho_i))
=\log d^n \;,
\end{eqnarray*}
then 
\begin{eqnarray*}
H\left(\sum_ip_i\mathcal{E}_{\sigma}(\rho_i)\right)&=&n\log d \;,\\
H(\mathcal{E}_{\sigma}(\rho_i))&=&0 \;.
\end{eqnarray*}
That is, the output entropy $H(\mathcal{E}_G(\rho_i\ot\sigma))=H(\mathcal{E}_{\sigma}(\rho_i))=0$. By 
Theorem \ref{thm:min_outen}, $\sigma$ is  a stabilizer state.
\end{proof}

The above two theorems consider the case where the parameter matrix $G$ is positive. 
The following result deals with the case where $G$ is 
just odd-parity positive.

\begin{prop}\label{0211prop1}
Let the nontrivial parameter matrix $G$ be
odd-parity positive but not even-parity positive, and $\sigma$ be an $n$-qudit state. 
Then 
\begin{eqnarray*}
\chi_H(\mathcal{E}_{\sigma})\leq H\left(\frac{I_n}{d^n}\boxtimes \sigma\right)-H(\sigma) \;.
\end{eqnarray*}
\end{prop}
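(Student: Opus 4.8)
The plan is to bound the two pieces of
\[
\chi_H(\mathcal{E}_\sigma)=\max_{\set{p_i,\rho_i}}\Big[H\Big(\sum_i p_i\,\mathcal{E}_\sigma(\rho_i)\Big)-\sum_i p_i\,H(\mathcal{E}_\sigma(\rho_i))\Big]
\]
separately and combine them. For the subtracted term I would use that $\mathcal{E}_\sigma(\rho_i)=\rho_i\boxtimes\sigma$ together with the hypothesis that $G$ is odd-parity positive: Proposition~\ref{prop:entropy}(1) (equivalently the odd convolution-majorization of Theorem~\ref{thm:major}) gives $H(\rho_i\boxtimes\sigma)\ge H(\sigma)$ for every $i$, hence $\sum_i p_i\,H(\mathcal{E}_\sigma(\rho_i))\ge H(\sigma)$. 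It then suffices to establish the claim that $H(\rho\boxtimes\sigma)\le H\big(\tfrac{I_n}{d^n}\boxtimes\sigma\big)$ for \emph{every} $n$-qudit state $\rho$; indeed, applying this to the average $\bar\rho=\sum_i p_i\rho_i$ (again a state) bounds the first term by $H\big(\sum_i p_i\mathcal{E}_\sigma(\rho_i)\big)=H(\bar\rho\boxtimes\sigma)\le H\big(\tfrac{I_n}{d^n}\boxtimes\sigma\big)$, and the two estimates combine to give the assertion.

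To prove the claim I would exploit that $G$ is nontrivial, odd-parity positive, but not even-parity positive, which forces exactly one of $g_{00},g_{11}$ to be $0\bmod d$ while $g_{01}$, $g_{10}$, and the remaining one of $g_{00},g_{11}$ are units. Consider first $g_{00}=0$ (the case $g_{11}=0$ is symmetric), and let $\Delta_X$ denote the complete dephasing channel in the eigenbasis of the Pauli $X$ on every qudit, so that $\Xi_{\Delta_X(\mu)}(\vec p,\vec q)=\delta_{\vec p,\vec 0}\,\Xi_{\mu}(\vec p,\vec q)$. Using the convolution--multiplication duality of Proposition~\ref{prop:chara_conv} and $\Xi_{I_n/d^n}(\vec a,\vec b)=\delta_{\vec a,\vec 0}\delta_{\vec b,\vec 0}$, one computes
\[
\Xi_{\,I_n/d^n\,\boxtimes\,\sigma}(\vec p,\vec q)=\Xi_{I_n/d^n}(Ng_{11}\vec p,g_{00}\vec q)\,\Xi_\sigma(-Ng_{10}\vec p,g_{01}\vec q)=\delta_{\vec p,\vec 0}\,\Xi_\sigma(\vec 0,g_{01}\vec q),
\]
since $Ng_{11}$ is a unit; while $\Xi_{\Delta_X(\rho\boxtimes\sigma)}(\vec p,\vec q)=\delta_{\vec p,\vec 0}\,\Xi_\rho(\vec 0,\vec 0)\,\Xi_\sigma(\vec 0,g_{01}\vec q)=\delta_{\vec p,\vec 0}\,\Xi_\sigma(\vec 0,g_{01}\vec q)$ because $\Xi_\rho(\vec 0,\vec 0)=1$. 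Hence $\tfrac{I_n}{d^n}\boxtimes\sigma=\Delta_X(\rho\boxtimes\sigma)$, and symmetrically $\tfrac{I_n}{d^n}\boxtimes\sigma=\Delta_Z(\rho\boxtimes\sigma)$ when $g_{11}=0$.

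Since $\Delta_X$ (resp.\ $\Delta_Z$) is a pinching --- an average of unitary conjugations by powers of the local Pauli $X$'s (resp.\ $Z$'s) --- the spectrum of the output is majorized by that of the input, so the von Neumann entropy cannot decrease: $H\big(\tfrac{I_n}{d^n}\boxtimes\sigma\big)=H(\Delta_X(\rho\boxtimes\sigma))\ge H(\rho\boxtimes\sigma)$. This proves the claim, and assembling the pieces gives $\chi_H(\mathcal{E}_\sigma)\le H\big(\tfrac{I_n}{d^n}\boxtimes\sigma\big)-H(\sigma)$. The one genuinely substantive point --- and where I expect the main work to lie --- is the observation that the failure of even-parity positivity is exactly what makes left-convolution with the maximally mixed state collapse to a full $X$- or $Z$-dephasing of $\rho\boxtimes\sigma$; the remainder is bookkeeping with characteristic functions and standard entropy inequalities.
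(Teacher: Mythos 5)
Your proof is correct and follows essentially the same route as the paper: the lower bound $H(\mathcal{E}_\sigma(\rho_i))\ge H(\sigma)$ from odd-parity positivity, plus the identification of $\tfrac{I_n}{d^n}\boxtimes\sigma$ with a full $X$- or $Z$-dephasing of the output (depending on whether $g_{00}$ or $g_{11}$ vanishes), which caps $H(\bar\rho\boxtimes\sigma)$. The only cosmetic difference is that you verify $\Delta_X(\rho\boxtimes\sigma)=\tfrac{I_n}{d^n}\boxtimes\sigma$ directly on characteristic functions, whereas the paper first commutes the dephasing through the convolutional channel via Proposition~\ref{prop:comm_wel} and then compares $(\Delta_Z\rho)\boxtimes\sigma$ with $\tfrac{I_n}{d^n}\boxtimes\sigma$; both arguments are sound.
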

\begin{proof}
Since the
nontrivial parameter matrix $G$ is 
odd-parity positive and not even-parity positive, 
either $g_{00}=0$ or $g_{11}=0$. 
Without loss of generality, we consider the case where $g_{11}=0$, and $g_{00}, g_{01}, g_{10}\neq 0$. 

Since $g_{00}\neq 0$, by Proposition \ref{prop:comm_wel} we have
\begin{align*}
\mathcal{E}_G ( (Z_k^p \ot I) \rho_{AB} (Z_k^{-p} \ot I)) = Z_k^{g_{00}p} \mathcal{E}_G ( \rho_{AB}) Z_k^{-g_{00}p} \;.
\end{align*}
Then
\begin{align*}
\mathcal{E}_G\circ (\Delta_{Z_k} \otimes id_B) (\rho_{AB})
=& \frac 1d \sum_{p\in \Z_d} \mathcal{E}_G(  (Z_k^p \ot I) \rho_{AB} (Z_k^{-p} \ot I))\\ 
=&\frac 1d \sum_{p\in \Z_d} Z_k^{g_{00}p} \mathcal{E}_G ( \rho_{AB}) Z_k^{-g_{00}p} =\Delta_{Z_k}\circ\CEE_G (\rho_{AB}) \;.
\end{align*}
Denote $\Delta_Z=\Delta_{Z_1}\ot \Delta_{Z_2}\ot...\ot\Delta_{Z_n}$, we have
\begin{eqnarray*}
\Delta_Z(\rho \boxtimes \sigma)
=(\Delta_Z\rho)\boxtimes \sigma \;.
\end{eqnarray*}
It is easy to verify that $\Xi_{\Delta_Z(\rho)}(\vec p, \vec q)=\Xi_{\rho}(\vec p, \vec q)\delta_{\vec q, \vec 0}$.
Moreover, the characteristic function of $(\Delta_Z\rho)\boxtimes \sigma$ is 
\begin{align*}
&\Xi_{(\Delta_Z\rho)\boxtimes \sigma}(\vec p,\vec q)\\
=&\Xi_{\Delta_Z(\rho)}(\vec 0, g_{00}\vec q)\Xi_{\sigma}(-Ng_{10}\vec p, g_{01}\vec q)\\
=&\Xi_{\rho}(\vec 0, g_{00}\vec q)\delta_{\vec q, \vec 0}\Xi_{\sigma}(-Ng_{10}\vec p, g_{01}\vec q)
=\Xi_{I_n/d^n\boxtimes \sigma} \;.
\end{align*}
Therefore
\begin{eqnarray*}
(\Delta_Z\rho)\boxtimes \sigma
=I_n/d^n\boxtimes \sigma \;,
\end{eqnarray*}
and 
\begin{eqnarray*}
H(\rho\boxtimes\sigma)
\leq H(\Delta_Z(\rho \boxtimes \sigma))
=H((\Delta_Z\rho)\boxtimes \sigma)
=H(I_n/d^n\boxtimes\sigma) \;.
\end{eqnarray*}
Since $G$ is odd-parity positive, by Proposition \ref{prop:entropy}  we have
\begin{eqnarray*}
H(\rho\boxtimes\sigma)\geq H(\sigma) \;.
\end{eqnarray*}
Therefore, 
\begin{eqnarray*}
\chi_H(\mathcal{E}_{\sigma})\leq H(I_n/d^n\boxtimes \sigma)-H(\sigma) \;.
\end{eqnarray*}

The case  where $g_{00}=0$, and $g_{11}, g_{01}, g_{10}\neq 0$ can be proved in the same way by replacing 
$\Delta_Z$ by $\Delta_X$.
\end{proof}

From Proposition \ref{0211prop1} 
we have the following Corollary on the relationship between the 
Holevo capacity and the coherence of a state $\sigma$ for the convolution channels 
$G=[0,1;1,1]$ and $G=[1,1;1,0]$.  
Here the coherence is quantified by the 
relative entropy. That is, given an orthonormal basis $\set{\ket{i}}$, 
 the relative entropy of coherence is
 $C_r(\rho):=D(\rho||\Delta(\rho))$ \cite{BaumgratzPRL14,Plenio17}, where $\Delta$  is the complete dephasing channel with respect to the given basis.

\begin{cor}
Let the parameter matrix $G=[0,1;1,1]$ (resp., $G=[1,1;1,0]$), and $\sigma$ be an $n$-qudit state. Then the Holevo capacity of the convolutional channel $\mathcal{E}_{\sigma}$ satisfies
\begin{eqnarray*}
\chi_H(\mathcal{E}_{\sigma})\leq C_{r,X}(\sigma) \;\; ( \text{resp., }C_{r,Z}(\sigma)) \;,
\end{eqnarray*}
where $C_{r,X}(\sigma)$ (resp., $C_{r,Z}(\sigma)$) is the relative entropy of coherence 
with respect to the eigenbasis of Pauli X (or Z) operators.
\end{cor}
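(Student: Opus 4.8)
The plan is to derive the bound directly from Proposition~\ref{0211prop1}. Both matrices, $G=[0,1;1,1]$ and $G=[1,1;1,0]$, are invertible with $\det G\equiv -1$, each has exactly one entry equal to $0$, and in both cases $g_{01},g_{10}\neq 0$ while one of $g_{00},g_{11}$ vanishes; hence each $G$ is nontrivial, odd-parity positive, but not even-parity positive, so Proposition~\ref{0211prop1} applies and gives
\begin{align*}
\chi_H(\mathcal{E}_{\sigma})\leq H\!\left(\frac{I_n}{d^n}\boxtimes\sigma\right)-H(\sigma)\;.
\end{align*}
It therefore remains to make two identifications: first, that $\frac{I_n}{d^n}\boxtimes\sigma=\Delta_X(\sigma)$ when $G=[0,1;1,1]$ (resp.\ $=\Delta_Z(\sigma)$ when $G=[1,1;1,0]$); and second, that $H(\Delta_X(\sigma))-H(\sigma)=C_{r,X}(\sigma)$ (resp.\ with $Z$).

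For the first step I would substitute the explicit matrices into the convolution--multiplication duality, Proposition~\ref{prop:chara_conv}, using $\Xi_{I_n/d^n}(\vec a,\vec b)=\delta_{\vec a,\vec 0}\,\delta_{\vec b,\vec 0}$. For $G=[0,1;1,1]$ one has $N=(\det G)^{-1}=-1$, so $Ng_{11}=-1$, $g_{00}=0$, $-Ng_{10}=1$, $g_{01}=1$, whence
\begin{align*}
\Xi_{\frac{I_n}{d^n}\boxtimes\sigma}(\vec p,\vec q)=\Xi_{I_n/d^n}(-\vec p,\vec 0)\,\Xi_\sigma(\vec p,\vec q)=\delta_{\vec p,\vec 0}\,\Xi_\sigma(\vec 0,\vec q)\;.
\end{align*}
On the other hand $\Delta_X=\Delta_{X_1}\ot\cdots\ot\Delta_{X_n}$ annihilates every Weyl operator $w(\vec p,\vec q)$ with $\vec p\neq\vec 0$ (this is the computation of $\Delta_{X_k}$ on Weyl operators carried out in the proof of Proposition~\ref{0103prop1}), so $\Xi_{\Delta_X(\sigma)}(\vec p,\vec q)=\delta_{\vec p,\vec 0}\,\Xi_\sigma(\vec 0,\vec q)$ as well. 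Since a state is determined by its characteristic function through~\eqref{0109shi5}, this yields $\frac{I_n}{d^n}\boxtimes\sigma=\Delta_X(\sigma)$. The case $G=[1,1;1,0]$ is symmetric: there $Ng_{11}=0$, $g_{00}=1$, $-Ng_{10}=1$, $g_{01}=1$, so $\Xi_{\frac{I_n}{d^n}\boxtimes\sigma}(\vec p,\vec q)=\delta_{\vec q,\vec 0}\,\Xi_\sigma(\vec p,\vec 0)=\Xi_{\Delta_Z(\sigma)}(\vec p,\vec q)$, hence $\frac{I_n}{d^n}\boxtimes\sigma=\Delta_Z(\sigma)$.

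For the second step I would use the standard rewriting of the relative entropy of coherence: $C_{r,X}(\sigma)=D(\sigma\|\Delta_X(\sigma))=\Tr{\sigma\log\sigma}-\Tr{\sigma\log\Delta_X(\sigma)}$, together with $\Tr{\sigma\log\Delta_X(\sigma)}=\Tr{\Delta_X(\sigma)\log\Delta_X(\sigma)}$ --- the very identity already invoked in the proof of Theorem~\ref{thm:fisher}, valid because $\log\Delta_X(\sigma)$ is diagonal in the common eigenbasis of $X_1,\dots,X_n$. This gives $C_{r,X}(\sigma)=-H(\sigma)+H(\Delta_X(\sigma))$, and combining with the two displays above,
\begin{align*}
\chi_H(\mathcal{E}_{\sigma})\leq H\!\left(\frac{I_n}{d^n}\boxtimes\sigma\right)-H(\sigma)=H(\Delta_X(\sigma))-H(\sigma)=C_{r,X}(\sigma)\;,
\end{align*}
and the same computation with $\Delta_Z$ in place of $\Delta_X$ handles $G=[1,1;1,0]$. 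The only delicate point --- not really an obstacle --- is carefully tracking the scalars $N$ and $g_{ij}$ when plugging the two explicit matrices into Proposition~\ref{prop:chara_conv}; once the characteristic function of $\frac{I_n}{d^n}\boxtimes\sigma$ has been written down, every remaining step is a direct invocation of results established earlier in the paper.
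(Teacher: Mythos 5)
Your proposal is correct and follows essentially the same route as the paper: invoke Proposition~\ref{0211prop1}, use the convolution--multiplication duality of Proposition~\ref{prop:chara_conv} to identify $\tfrac{I_n}{d^n}\boxtimes\sigma$ with the dephased state $\Delta_X(\sigma)$ (resp.\ $\Delta_Z(\sigma)$), and then rewrite $H(\Delta(\sigma))-H(\sigma)$ as the relative entropy of coherence. The only difference is cosmetic: the paper writes out just the $G=[1,1;1,0]$ case and leaves the other to symmetry, whereas you verify both parameter matrices and the coherence identity explicitly.
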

\begin{proof}
If  $g_{11}=0$ and $g_{00}=g_{01}=g_{10}=1$,  
we have
\begin{align*}
\Xi_{ I/d^n\boxtimes \sigma} (\vec p, \vec q) = \Xi_{I/d^n} ( \vec 0, \vec q) \Xi_\sigma (\vec p, \vec q) 
= \delta_{ \vec q,\vec 0 }  \Xi_\sigma (\vec p, \vec q) 
= \Xi_\sigma (\vec p, \vec 0) \delta_{ \vec q,\vec 0 } 
=\Xi_{\Delta_Z(\sigma)}(\vec p, \vec q) \;.
\end{align*}
Hence  $I/d^n\boxtimes \sigma=\Delta_Z(\sigma)$,
and by Proposition \ref{0211prop1} we have
\begin{eqnarray*}
\chi_H(\mathcal{E}_{\sigma})\leq H(\Delta_Z(\sigma))-H(\sigma)=C_{r, Z}(\sigma) \;,
\end{eqnarray*}
and the proof is complete.
\end{proof}
\goodbreak

\subsection{Examples}\label{subsec:examp}
Having considered the general case, let us now consider some specific examples in detail.

\subsubsection{Hadamard Convolution on Qudits}\label{Hadamard convolution}

As our first example, let the parameter matrix $G$ equal the Hadamard 
matrix 
\begin{equation}\label{eq:had}
H=\left[
\begin{array}{cc}
1&1\\
1&d-1
\end{array}
\right]
\equiv
\left[
\begin{array}{cc}
1&1\\
1&-1
\end{array}
\right],
\end{equation}
which is invertible for odd $d$.

\begin{Def}[\bf Hadamard Convolution]\label{Def:Had_Conv}
The key unitary $U_H$ corresponding to the Hadamard matrix \eqref{eq:had} is 
\begin{eqnarray}
U_H=\sum_{\vec{i},\vec{j}}
\ket{\vec{i}}\bra{\vec i+\vec j}
\ot \ket{\vec j}\bra{\vec i-\vec j},
\end{eqnarray}
 where  $| \vec i \rangle = |  i_1 \rangle \otimes \cdots \otimes |  i_n \rangle \in \CHH^{\otimes n} $.
The convolution  of two $n$-qudit states $\rho$ and $\sigma$ is
\begin{eqnarray}
\rho\boxtimes_H\sigma=\Ptr{B}{U_H\rho\ot\sigma U^\dag_H}.
\end{eqnarray}
The corresponding convolutional channel $\mathcal{E}_H$ is $\mathcal{E}_H(\cdot)
=\Ptr{B}{U_H(\cdot) U^\dag_H}$.

\end{Def}
\begin{prop}
Given two $n$-qudit states $\rho$ and $\sigma$, the characteristic function 
satisfies
\begin{align}
\Xi_{ \rho \boxtimes_H \sigma} (\vec p, \vec q) = \Xi_\rho (2^{-1}\vec p, \vec q) \Xi_\sigma (2^{-1}\vec p, \vec q),
\forall (\vec p, \vec q)\in V^n.
\end{align}
\end{prop}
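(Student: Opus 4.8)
The plan is to specialize the general convolution--multiplication duality (Proposition~\ref{prop:chara_conv}) to the Hadamard matrix $G=H$ in \eqref{eq:had}. First I would record the relevant entries: $g_{00}=1$, $g_{01}=1$, $g_{10}=1$, $g_{11}=d-1\equiv -1$, so that $\det H = g_{00}g_{11}-g_{01}g_{10} = -1-1 = -2$, and hence $N=(\det H)^{-1} = (-2)^{-1} = -2^{-1}$ in $\mathbb{Z}_d$ (recall $2^{-1}=\tfrac{d+1}{2}$, which exists since $d$ is an odd prime). Then the substitutions appearing in Proposition~\ref{prop:chara_conv} become $Ng_{11} = (-2^{-1})(-1) = 2^{-1}$, $g_{00}=1$, $-Ng_{10} = -(-2^{-1})(1) = 2^{-1}$, and $g_{01}=1$.

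Plugging these into
\[
\Xi_{\rho\boxtimes\sigma}(\vec p,\vec q) = \Xi_\rho(Ng_{11}\vec p, g_{00}\vec q)\,\Xi_\sigma(-Ng_{10}\vec p, g_{01}\vec q)
\]
immediately yields $\Xi_{\rho\boxtimes_H\sigma}(\vec p,\vec q) = \Xi_\rho(2^{-1}\vec p,\vec q)\,\Xi_\sigma(2^{-1}\vec p,\vec q)$, which is exactly the claimed identity. So the proof is essentially a one-line substitution once the arithmetic of $N$, $Ng_{11}$, and $-Ng_{10}$ in $\mathbb{Z}_d$ is carried out correctly.

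The only place requiring any care — and the main (very mild) obstacle — is the modular arithmetic: one must be careful with signs and with the fact that $d-1\equiv -1$ and $N=(-2)^{-1}=-2^{-1}$, to confirm that both "momentum" arguments collapse to the same value $2^{-1}\vec p$ while both "position" arguments are simply $\vec q$. I would also remark, as a sanity check, that this makes $\boxtimes_H$ abelian, consistent with the earlier corollary (since $g_{11}=-1=-g_{10}=-1$ and $g_{00}=1=g_{01}$, the hypothesis $g_{11}=-g_{10}$, $g_{00}=g_{01}$ holds). Thus the proof reads: apply Proposition~\ref{prop:chara_conv} with $G=H$, compute $N=-2^{-1}$, $Ng_{11}=2^{-1}$, $-Ng_{10}=2^{-1}$, $g_{00}=g_{01}=1$, and substitute.
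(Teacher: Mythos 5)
Your proof is correct and is exactly the intended argument: the paper states this proposition as an immediate specialization of the convolution--multiplication duality (Proposition \ref{prop:chara_conv}) to $G=H$, and your arithmetic $N=(-2)^{-1}=-2^{-1}$, $Ng_{11}=-Ng_{10}=2^{-1}$, $g_{00}=g_{01}=1$ is right (and consistent with the explicit form of $U_H$, which sends $\ket{\vec a}\ot\ket{\vec b}$ to $\ket{2^{-1}(\vec a+\vec b)}\ot\ket{2^{-1}(\vec a-\vec b)}$, i.e.\ the key unitary for $G=H$).
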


\begin{prop}[\bf Hadamard convolution is abelian]
For any two $n$-qudit states $\rho$ and $\sigma$, 
\begin{eqnarray}
\rho\boxtimes_H\sigma=
\sigma\boxtimes_H\rho.
\end{eqnarray}

\end{prop}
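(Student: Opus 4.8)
The plan is to deduce the claim from the convolution--multiplication duality just stated for $\boxtimes_H$, together with the fact that a state is uniquely determined by its characteristic function. Concretely, for any state $\tau$ the expansion $\tau=\frac{1}{d^n}\sum_{(\vec p,\vec q)\in V^n}\Xi_\tau(\vec p,\vec q)\,w(\vec p,\vec q)$ in \eqref{0109shi5}, combined with the orthonormality of the Weyl basis, shows that the assignment $\tau\mapsto\Xi_\tau$ is injective; hence it suffices to verify that $\Xi_{\rho\boxtimes_H\sigma}=\Xi_{\sigma\boxtimes_H\rho}$ as functions on $V^n$.

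I would then invoke the preceding Proposition, which gives
$\Xi_{\rho\boxtimes_H\sigma}(\vec p,\vec q)=\Xi_\rho(2^{-1}\vec p,\vec q)\,\Xi_\sigma(2^{-1}\vec p,\vec q)$
for all $(\vec p,\vec q)\in V^n$. The right-hand side is a product of two scalars and is therefore symmetric under exchanging $\rho$ and $\sigma$, so it equals $\Xi_\sigma(2^{-1}\vec p,\vec q)\,\Xi_\rho(2^{-1}\vec p,\vec q)=\Xi_{\sigma\boxtimes_H\rho}(\vec p,\vec q)$. By the injectivity noted above, $\rho\boxtimes_H\sigma=\sigma\boxtimes_H\rho$.

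Alternatively, and more in line with the general framework, one can simply observe that the Hadamard matrix $H$ in \eqref{eq:had} has entries $g_{00}=g_{01}=1$ and $g_{10}=1$, $g_{11}=d-1\equiv -1\bmod d$, so that $g_{11}=-g_{10}$ and $g_{00}=g_{01}$; the Corollary on abelian convolution following Lemma~\ref{lem:adj_con} then applies verbatim. There is no genuine obstacle in this proof: the real content was already absorbed into the convolution--multiplication duality (Proposition~\ref{prop:chara_conv}) and into the bijectivity of the characteristic-function (quantum Fourier) transform, so what remains is the one-line observation that the product formula for $\Xi_{\rho\boxtimes_H\sigma}$ is symmetric in $\rho$ and $\sigma$.
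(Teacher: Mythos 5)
Your proof is correct and is essentially the paper's own (implicit) argument: the paper leaves the proposition unproved precisely because it follows immediately from the symmetry of the characteristic-function formula $\Xi_{\rho\boxtimes_H\sigma}(\vec p,\vec q)=\Xi_\rho(2^{-1}\vec p,\vec q)\,\Xi_\sigma(2^{-1}\vec p,\vec q)$ together with injectivity of $\tau\mapsto\Xi_\tau$, which is exactly your main route. Your alternative observation that $H$ satisfies $g_{11}=-g_{10}$, $g_{00}=g_{01}$, so the earlier corollary on abelian convolutions applies, is likewise consistent with the paper's framework.
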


\begin{prop}[\bf Wigner function positivity]
Given two $n$-qudit states $\rho$ and $\sigma$, the discrete Wigner function of  $\rho\boxtimes_H\sigma$  satisfies
\begin{eqnarray}
W_{\rho\boxtimes_H\sigma}(\vec u, \vec v)
=\sum_{\vec{u}_1,\vec{v}_1}
W_{\rho}(\vec{u}_1,2\vec{v}_1)
W_{\sigma_{\vec u, \vec v}}(\vec{u}_1, 2\vec{v}_1)
=\frac{1}{d^n}\Tr{\rho \sigma_{\vec u, \vec v}}\geq 0,
\end{eqnarray}
where $\sigma_{\vec u, \vec v}=w(\vec u, 2\vec v)T(\vec 0,\vec 0)\sigma T(\vec 0, \vec 0)w(\vec u, 2\vec v)^\dag$ (See
the properties of operators $T(\vec p, \vec q)$ in Remark \ref{rem:phas_op}).
\end{prop}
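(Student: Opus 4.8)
The plan is to compute the discrete Wigner function of $\rho\boxtimes_H\sigma$ by specializing the general convolution formula for Wigner functions (Proposition~\ref{prop:wig_conv}) to the Hadamard matrix $G=H$, where $g_{00}=g_{01}=g_{10}=1$, $g_{11}=d-1\equiv -1$, and $N=(\det H)^{-1}=(d-2)^{-1}=(-2)^{-1}=-2^{-1}$. Substituting these values, $g_{00}^{-1}=1$, $g_{01}^{-1}=1$, $(Ng_{11})^{-1}=(2^{-1})^{-1}=2$, and $(Ng_{10})^{-1}=(-2^{-1})^{-1}=-2$, so the general formula
\[
W_{\rho\boxtimes\sigma}(\vec u,\vec v)=\sum_{\vec u_1,\vec v_1}W_\rho\bigl(g_{00}^{-1}\vec u_1,(Ng_{11})^{-1}\vec v_1\bigr)\,W_\sigma\bigl(g_{01}^{-1}(\vec u-\vec u_1),(Ng_{10})^{-1}(\vec v_1-\vec v)\bigr)
\]
becomes $\sum_{\vec u_1,\vec v_1}W_\rho(\vec u_1,2\vec v_1)\,W_\sigma(\vec u-\vec u_1,-2(\vec v_1-\vec v))=\sum_{\vec u_1,\vec v_1}W_\rho(\vec u_1,2\vec v_1)\,W_\sigma(\vec u-\vec u_1,2(\vec v-\vec v_1))$. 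I would reindex the $\vec u_1$ sum so that the second factor carries argument $(\vec u_1,\cdot)$ rather than $(\vec u-\vec u_1,\cdot)$; this is the reflection built into the Hadamard key unitary, and it is the source of the shorthand $W_{\sigma_{\vec u,\vec v}}(\vec u_1,2\vec v_1)$ appearing in the statement. So the first displayed equality is just bookkeeping on Proposition~\ref{prop:wig_conv}.

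Next I would prove the middle equality, namely that this sum equals $\tfrac1{d^n}\Tr{\rho\,\sigma_{\vec u,\vec v}}$ where $\sigma_{\vec u,\vec v}=w(\vec u,2\vec v)\,T(\vec 0,\vec 0)\,\sigma\,T(\vec 0,\vec 0)\,w(\vec u,2\vec v)^\dagger$. The key tool is the resolution $\rho=\sum_{\vec p,\vec q}W_\rho(\vec p,\vec q)T(\vec p,\vec q)$ together with orthonormality of the phase-space point operators under $\inner{A}{B}=\tfrac1{d^n}\Tr{A^\dagger B}$ (Remark~\ref{rem:phas_op}(1)), so that $W_\rho(\vec p,\vec q)=\inner{T(\vec p,\vec q)}{\rho}=\tfrac1{d^n}\Tr{T(\vec p,\vec q)\rho}$ (the $T$'s are Hermitian). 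Hence $\tfrac1{d^n}\Tr{\rho\,\sigma_{\vec u,\vec v}}$ expands, using $\sigma=\sum W_\sigma(\vec p',\vec q')T(\vec p',\vec q')$ and property (3) of Remark~\ref{rem:phas_op} $T(\vec p,\vec q)=w(\vec p,\vec q)T(\vec 0,\vec 0)w(\vec p,\vec q)^\dagger$, into a double sum $\sum W_\rho(\vec p,\vec q)W_\sigma(\vec p',\vec q')\cdot\tfrac1{d^n}\Tr{T(\vec p,\vec q)\,w(\vec u,2\vec v)T(\vec 0,\vec 0)w(\vec p',\vec q')^\dagger\,T(\vec 0,\vec 0)^\dagger\cdots}$. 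Reducing the inner trace means computing $\tfrac1{d^n}\Tr{T(\vec a)\,T(\vec b)}$-type quantities after conjugating by $w(\vec u,2\vec v)$ and using $T(\vec 0,\vec 0)w(\vec p',\vec q')T(\vec 0,\vec 0)^\dagger = \pm\, w(\vec p',-\vec q')$-type identities coming from property (2), $T(\vec 0,\vec 0)=\sum_{\vec j}\out{-\vec j}{\vec j}$ (the parity operator). I expect this collapses the double sum to the single sum $\sum_{\vec u_1,\vec v_1}W_\rho(\vec u_1,2\vec v_1)W_\sigma(\vec u-\vec u_1,2\vec v-2\vec v_1)$, matching the left side.

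The final inequality $\tfrac1{d^n}\Tr{\rho\,\sigma_{\vec u,\vec v}}\ge 0$ is then immediate: $\sigma_{\vec u,\vec v}$ is a unitary conjugate of $T(\vec 0,\vec 0)\sigma T(\vec 0,\vec 0)$, and since $T(\vec 0,\vec 0)$ is Hermitian (indeed a self-inverse unitary, the parity operator), $T(\vec 0,\vec 0)\sigma T(\vec 0,\vec 0)=T(\vec 0,\vec 0)\sigma T(\vec 0,\vec 0)^\dagger\ge 0$; conjugating further by the unitary $w(\vec u,2\vec v)$ keeps it positive semidefinite, so $\rho\,\sigma_{\vec u,\vec v}$ is a product of two positive semidefinite operators and has nonnegative trace. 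I expect the main obstacle to be the middle equality: tracking the phases (the factors of $\chi$ and the various $2^{-1}$'s) through the conjugations $T(\vec 0,\vec 0)w(\cdot)T(\vec 0,\vec 0)$ and confirming that the symplectic Fourier sums telescope exactly to the claimed reindexed convolution, with no stray sign or normalization. A clean way to organize it is to verify the identity first at the level of characteristic functions — i.e., take the symplectic Fourier transform of both sides and use Proposition~\ref{prop:chara_conv} specialized to $G=H$, $\Xi_{\rho\boxtimes_H\sigma}(\vec p,\vec q)=\Xi_\rho(2^{-1}\vec p,\vec q)\Xi_\sigma(2^{-1}\vec p,\vec q)$, which is already recorded just above — and then Fourier-invert; this sidesteps most of the explicit phase-chasing.
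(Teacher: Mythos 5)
Your proposal is correct and follows essentially the paper's route: the paper likewise specializes Proposition~\ref{prop:wig_conv} to $G=H$ (giving $\sum_{\vec u_1,\vec v_1}W_\rho(\vec u_1,2\vec v_1)W_\sigma(\vec u-\vec u_1,2(\vec v-\vec v_1))$) and then identifies $W_\sigma(\vec u-\vec u_1,2(\vec v-\vec v_1))=W_{\sigma_{\vec u,\vec v}}(\vec u_1,2\vec v_1)$ via the reflection property $W_{T(\vec 0,\vec 0)\sigma T(\vec 0,\vec 0)}(\vec a,\vec b)=W_\sigma(-\vec a,-\vec b)$ together with the displacement covariance $W_{w(\vec x,\vec y)\sigma w(\vec x,\vec y)^\dag}(\vec a,\vec b)=W_\sigma(\vec a-\vec x,\vec b-\vec y)$, with the trace form coming from orthonormality of the $T(\vec p,\vec q)$ and positivity from $\sigma_{\vec u,\vec v}\ge 0$, exactly as you argue. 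One small correction to your sketch: this step is these two covariance properties rather than a mere reindexing of the sum, and parity conjugation flips both arguments with no extra phase, $T(\vec 0,\vec 0)\,w(\vec p,\vec q)\,T(\vec 0,\vec 0)=w(-\vec p,-\vec q)$ (not $\pm\,w(\vec p,-\vec q)$); with that fixed, your phase-chasing, or the cleaner characteristic-function route you mention, goes through.
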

\begin{proof}
By Proposition \ref{prop:wig_conv},
\begin{eqnarray*}
W_{\rho\boxtimes_H\sigma}(\vec u, \vec v)
&=&\sum_{\vec{u}_1,\vec{v}_1}
W_{\rho}(\vec{u}_1,2\vec{v}_1)
W_{\sigma}(\vec{u}-\vec{u}_1, 2(\vec{v}-\vec{v}_1))\\
&=&\sum_{\vec{u}_1,\vec{v}_1}
W_{\rho}(\vec{u}_1,2\vec{v}_1)
W_{\sigma_0}(\vec{u}_1-\vec u, 2(\vec{v}_1-\vec{v}))\\
&=&\sum_{\vec{u}_1,\vec{v}_1}
W_{\rho}(\vec{u}_1,2\vec{v}_1)
W_{w(\vec u, 2\vec v)\sigma_0w(\vec u, 2\vec v)^\dag}(\vec{u}_1, 2\vec{v}_1)\\
&=&\sum_{\vec{u}_1,\vec{v}_1}
W_{\rho}(\vec{u}_1,2\vec{v}_1)
W_{\sigma_{\vec u, \vec v}}(\vec{u}_1, 2\vec{v}_1),
\end{eqnarray*}
where $\sigma_0=T(0,0)\sigma T(0,0)$, 
and the third equality used the fact that 
$$W_{w(\vec u, \vec v)\rho w(\vec u, \vec v)^\dag}(\vec p, \vec q)=
W_{\rho}(\vec p-\vec u, \vec q-\vec v ) \;,$$
and the last equality comes from $\sigma_{\vec u,\vec v}=w(\vec u, 2\vec v)\sigma_0w(\vec u, 2\vec v)^\dag$.
\end{proof}

\begin{lem}
Let $\rho$ and $\sigma$ be two $n$-qudit states with $\CMM(\rho) = \CMM(\sigma)$. 
Then  we have 
\begin{eqnarray}
\CMM(\rho\boxtimes_H\sigma)=\CMM(\rho)\boxtimes_H\sigma
=\rho\boxtimes_H\CMM(\sigma)=\CMM(\rho)\boxtimes_H\CMM(\sigma).
\end{eqnarray} 
\end{lem}
\begin{proof}
Let $S$ be the abelian Weyl group associated with $\CMM(\rho) (= \CMM(\sigma))$,
then the characteristic function of $\CMM(\rho)$ is supported on  $S$.
Therefore
\begin{align*}
\Xi_{\CMM(\rho) \boxtimes_H \sigma } (\vec p, \vec q) 
= & \Xi_{\CMM(\rho)} (2^{-1}\vec p, \vec q)\; \Xi_{\sigma} (2^{-1}\vec p, \vec q)\\
=&  \Xi_{\rho} (2^{-1}\vec p, \vec q) \;\Xi_{ \sigma } (2^{-1}\vec p, \vec q) \delta_{(2^{-1}\vec p, \vec q)\in S}\\
=&  \Xi_{\rho\boxtimes_H \sigma} (\vec p, \vec q) \delta_{(2^{-1}\vec p, \vec q)\in S} \;.
\end{align*}
And for any $(\vec p, \vec q)$, $|\Xi_{\rho \boxtimes_H \sigma} (\vec p, \vec q)| = 1 $ if and only if $|\Xi_{\rho} (2^{-1}\vec p, \vec q)| = |\Xi_{\sigma} (2^{-1}\vec p, \vec q)| =1$, if and only if $(2^{-1}\vec p, \vec q)\in S$.
That is,
\begin{align*}
    \Xi_{\CMM(\rho\boxtimes_H \sigma)} = \Xi_{\rho\boxtimes_H \sigma} (\vec p, \vec q) \delta_{(2^{-1}\vec p, \vec q)\in S} = \Xi_{\CMM(\rho) \boxtimes_H \sigma} (\vec p, \vec q)= \Xi_{\rho \boxtimes_H \CMM(\sigma)} (\vec p, \vec q)\;,
\end{align*}
 for every $(\vec p, \vec q)$. 
Therefore $\CMM(\rho\boxtimes_H\sigma)=\CMM(\rho)\boxtimes_H\sigma
=\rho\boxtimes_H\CMM(\sigma)$. Moreover,   
\begin{align*}
\Xi_{\CMM(\rho) \boxtimes_H \CMM(\sigma) } (\vec p, \vec q) = \Xi_{\rho} (2^{-1}\vec p, \vec q) \Xi_{ \sigma } (2^{-1}\vec p, \vec q) \delta_{(2^{-1}\vec p, \vec q)\in S} \;.
\end{align*}
Hence $\CMM(\rho\boxtimes_H\sigma)=\CMM(\rho)\boxtimes_H\CMM(\sigma)$, and the result holds.
\end{proof}

\begin{lem}[\bf Commutativity with Clifford unitaries]
For any Clifford unitary $U$, there exists a Clifford unitary $U_1$ such that
\begin{eqnarray}\label{230520shi1}
U_1(\rho\boxtimes_{H}\sigma) U_1^\dag
=(U\rho U^\dag) \boxtimes_H(U\sigma U^\dag) \;,\quad \forall \rho,\sigma\;.
\end{eqnarray}
\end{lem}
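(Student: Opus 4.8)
The plan is to deduce the statement from one operator identity for the key unitary $U_H$ of Definition \ref{Def:Had_Conv}: for every Clifford unitary $U$ there exist Clifford unitaries $U_1$ on $\mathcal H_A$ and $U_2$ on $\mathcal H_B$ with $U_H\,(U\ot U)=(U_1\ot U_2)\,U_H$. Granting this, and recalling $\rho\boxtimes_H\sigma=\Ptr{B}{U_H(\rho\ot\sigma)U_H^\dag}$, one conjugates $\rho\ot\sigma$ by $U\ot U$ and then commutes $U_H$ past it:
\[
(U\rho U^\dag)\boxtimes_H(U\sigma U^\dag)=\Ptr{B}{(U_1\ot U_2)\,U_H(\rho\ot\sigma)U_H^\dag\,(U_1\ot U_2)^\dag}=U_1\,(\rho\boxtimes_H\sigma)\,U_1^\dag,
\]
the last step using $\Ptr{B}{(U_1\ot U_2)N(U_1^\dag\ot U_2^\dag)}=U_1\,\Ptr{B}{N}\,U_1^\dag$ because $U_2$ is unitary and acts on the discarded factor. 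So $U_1$ is the Clifford unitary promised by the lemma, and all the work is in the operator identity.

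To construct $U_1,U_2$ I would set $V:=U_H(U\ot U)U_H^\dag$, which is Clifford since $U_H$ is Clifford (Proposition \ref{prop:U_wely}) and the Clifford group is closed under products and tensor products. The crucial step is to show that $V$ is \emph{local}, meaning $V(w(\vec v)\ot I)V^\dag$ has the form $w(\cdot)\ot I$ and $V(I\ot w(\vec v))V^\dag$ has the form $I\ot w(\cdot)$ for all $\vec v\in V^n$. I would check this directly from Proposition \ref{prop:U_wely} and its inverse: for the Hadamard matrix \eqref{eq:had} one gets $U_H^\dag(w(\vec p,\vec q)\ot I)U_H=w(2^{-1}\vec p,\vec q)\ot w(2^{-1}\vec p,\vec q)$ and $U_H^\dag(I\ot w(\vec p,\vec q))U_H=w(2^{-1}\vec p,\vec q)\ot w(-2^{-1}\vec p,-\vec q)$ (the inverses of $2$ exist because $d$ is odd). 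Since $U$ is applied identically to the two tensor factors and a Clifford satisfies $Uw(\vec a)U^\dag=\omega(\vec a)\,w(F_U\vec a)$ with $F_U$ linear and symplectic, conjugation by $U\ot U$ sends the diagonal shape $w(\vec a)\ot w(\vec a)$ to $w(F_U\vec a)\ot w(F_U\vec a)$ and the anti-diagonal shape $w(\vec a)\ot w(-\vec a)$ to $w(F_U\vec a)\ot w(-F_U\vec a)$; a second application of $U_H$, again by Proposition \ref{prop:U_wely}, collapses the diagonal shape to $(\cdot)\ot I$ and the anti-diagonal shape to $I\ot(\cdot)$, which is precisely locality of $V$.

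Finally I would invoke the standard fact that a local Clifford unitary $V$ equals $e^{i\theta}\,U_1\ot U_2$ for Clifford unitaries $U_1$ on $\mathcal H_A$, $U_2$ on $\mathcal H_B$: the maps $P\mapsto V(P\ot I)V^\dag$ and $Q\mapsto V(I\ot Q)V^\dag$ are center-fixing automorphisms of the $n$-qudit Pauli groups (the groups generated by the Weyl operators and scalars) on $\mathcal H_A$ and $\mathcal H_B$, hence are implemented by Clifford unitaries $U_1,U_2$; conjugation by $V$ and by $U_1\ot U_2$ are then automorphisms of the full $2n$-qudit Pauli group that agree on the generators $w(\vec v)\ot I$ and $I\ot w(\vec v)$, so $(U_1\ot U_2)^\dag V$ centralizes every Weyl operator and is a scalar, which I absorb into $U_1$. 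This gives $U_H(U\ot U)=(U_1\ot U_2)U_H$ and completes the proof.

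The main obstacle is the locality check: this is the only place where the particular form of the Hadamard matrix is used (symmetric, with square $2I$), and for a general invertible parameter matrix $G$ one should not expect $U\ot U$ to conjugate to a local unitary — the argument really uses that the two inputs enter $\boxtimes_H$ symmetrically. A secondary point to watch is the phase function $\omega$ in the last two paragraphs, since $F_U$ does not determine $U$; but $\omega$ only rescales $V$'s action by an overall phase, so it does not obstruct the tensor factorization.
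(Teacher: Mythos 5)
Your proposal is correct, but it takes a genuinely different route from the paper. You prove the operator intertwining $U_H(U\ot U)=e^{i\theta}(U_1\ot U_2)U_H$ by showing that $V:=U_H(U\ot U)U_H^\dag$ is a \emph{local} Clifford — your locality computation is right, since $U_H^\dag(w(\vec p,\vec q)\ot I)U_H$ is the diagonal shape $w(2^{-1}\vec p,\vec q)\ot w(2^{-1}\vec p,\vec q)$ and $U_H^\dag(I\ot w(\vec p,\vec q))U_H$ the anti-diagonal shape, both of which are preserved (up to phase) by $U\ot U$ because the same symplectic map $F_U$ acts in both factors — and you then invoke the factorization of a local Clifford into a tensor product, absorbing the phase. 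The paper instead argues on the dual/adjoint side and is fully explicit: writing $U^\dag=\mu(M)\,w(\vec p_0,\vec q_0)$ via the Clifford structure theorem, it computes $(U^\dag\ot U^\dag)\,\mathcal{E}^\dag(w(\vec p,\vec q))\,(U\ot U)$ and exhibits $U_1^\dag=\mu(M_1)\,w(2\vec p_0,\vec q_0)$ with $M_1=\mathrm{diag}(2I_n,I_n)\,M\,\mathrm{diag}(2^{-1}I_n,I_n)$, verifying $\mathcal{E}^\dag(U_1^\dag w(\vec p,\vec q)U_1)=(U^\dag\ot U^\dag)\mathcal{E}^\dag(w(\vec p,\vec q))(U\ot U)$ on all Weyl operators. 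What each buys: the paper's computation yields a closed-form $U_1$ in terms of $(M,\vec p_0,\vec q_0)$, which is useful if one actually needs $U_1$; your argument is more conceptual and isolates exactly why the lemma holds (the symmetric way both inputs enter $\boxtimes_H$ and the special form of the Hadamard matrix), at the cost of being non-constructive — and note that the "standard fact" you invoke (center-fixing automorphisms of the Weyl group are implemented by Clifford unitaries) is essentially the same structure theorem (Theorem 3 of Gross) that the paper uses explicitly, so the logical ingredients coincide even though the organization differs. Your closing caveats (failure for general $G$, harmlessness of the phase function) are accurate.
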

\begin{proof}
We may assume $U$ satisfies 
\[
U^\dag =\mu(M) w(\vec p_0, \vec q_0)\;,
\]
for some symplectic matrix $M$ and $(\vec p_0, \vec q_0)\in V^n$,
where $\mu(M)$ is a unitary satisfying
\[\mu(M) w(\vec p,\vec q) \mu(M)^\dag
=w(M(\vec p,\vec q)) \;,\quad \forall (\vec p,\vec q)\in V^n.\]
Hence for every $(\vec p, \vec q ) \in V^n $,
\begin{eqnarray*}
\mathcal{E}^\dag_H(w(\vec p,\vec q))
=w(2^{-1}\vec p,\vec q)\ot w(2^{-1}\vec p, \vec q) \;,\quad\forall (\vec p, \vec q ) \in V^n \;,
\end{eqnarray*}
and
\begin{eqnarray*}
&&(U^\dag\otimes U^\dag) \mathcal{E}^\dag_H(w(\vec p,\vec q)) (U\otimes U) \\
&=& (U^\dag\otimes U^\dag) w(2^{-1}\vec p,\vec q)\ot w(2^{-1}\vec p, \vec q) (U\otimes U)\\
&=&  \chi( 2\vec p_0 \cdot \vec q - \vec p \cdot \vec q_0 )  w(M(2^{-1}\vec p, \vec q) )  \ot   w(M(2^{-1}\vec p, \vec q) ) \;.
\end{eqnarray*}

Let $U_1$ be the Clifford unitary satisfying 
\[U_1^\dag = \mu(M_1)w(2\vec p_0, \vec q_0) \;,\]
where 
\begin{eqnarray*}
M_1  =
\left(
\begin{matrix}
  2I_n  &  0 \\
  0 &  I_n
\end{matrix}\right)  M\left(
\begin{matrix}
 2^{-1}I_n &  0 \\
  0 &  I_n
\end{matrix}\right) .
\end{eqnarray*}
Then
\begin{eqnarray*}
&&\mathcal{E}^\dag_H( U_1^\dag w(\vec p,\vec q) U_1) \\
&=&   \mathcal{E}^\dag_H (\chi( 2\vec p_0 \cdot \vec q - \vec p \cdot \vec q_0 ) w(M_1 (\vec p,\vec q)))\\
&=& \chi( 2\vec p_0 \cdot \vec q - \vec p \cdot \vec q_0 )   w\left(\left(
\begin{matrix}
 2^{-1}I_n &  0 \\
  0 &  I_n
\end{matrix}\right) M_1 (\vec p,\vec q)\right) \otimes w\left(\left(
\begin{matrix}
 2^{-1}I_n &  0 \\
  0 &  I_n
\end{matrix}\right) M_1 (\vec p,\vec q)\right)\\
&=&\chi( 2\vec p_0 \cdot \vec q - \vec p \cdot \vec q_0 ) w(M(2^{-1}\vec p, \vec q) )  \ot   w(M(2^{-1}\vec p, \vec q) )\\
&=& (U^\dag\otimes U^\dag) \mathcal{E}^\dag_H(w(\vec p,\vec q)) (U\otimes U).
\end{eqnarray*}
Thus \eqref{230520shi1} holds.
\end{proof}

\subsubsection{Discrete Beam Splitter Convolution on Qudits}

Our second example  is the convolution whose  parameter matrix $G$ is 
\begin{equation}\label{0204shi5}
G=\left[
\begin{array}{cc}
s&t\\
t& -s
\end{array}
\right]\;,
\end{equation}
with $s^2+t^2\equiv 1\mod d$.
This  is a discrete version of the condition 
$(\sqrt{\lambda})^2+(\sqrt{1-\lambda})^2=1$ for rotation that occurs in CV  beam splitter.  In fact, 
the condition $s^2+t^2\equiv 1 \mod d$ can be satisfied for any prime number $d\ge 7$ (See Appendix \ref {sec:apen_numT}.)

\begin{Def}[\bf Discrete beam splitter]\label{Def:disc_BS}
Given $s^2+t^2\equiv 1 \mod d$, 
the key unitary  $U_{s,t}$ is
\begin{align}\label{1231shi1}
 U_{s,t} = \sum_{\vec i,\vec j\in \mathbb{Z}^n_d} |s\vec i+t\vec j \rangle \langle \vec i| \otimes | t\vec i-s\vec j\rangle \langle \vec j|\; ,
 \end{align}
 where the state $| \vec i \rangle = |  i_1 \rangle \otimes \cdots \otimes |  i_n \rangle \in \CHH^{\otimes n} $.
The convolution  of two $n$-qudit states $\rho$ and $\sigma$ is
\begin{align}\label{eq:conv_B}
\rho \boxtimes_{s,t} \sigma = \Ptr{B}{ U_{s,t} (\rho \otimes \sigma) U^\dag_{s,t}}\;.
\end{align}
\end{Def}

\begin{prop}
Given two $n$-qudit states $\rho$ and $\sigma$, 
the characteristic function satisfies
\begin{align*}
\Xi_{ \rho \boxtimes_{s,t} \sigma} (\vec p, \vec  q) = \Xi_\rho (s\vec p, s\vec q) \;\Xi_\sigma (t\vec p ,t\vec q) \;, \quad \forall (\vec{p}, \vec q)\in V^n \;.
\end{align*}
\end{prop}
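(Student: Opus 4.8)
The plan is to deduce this identity as an immediate specialization of the convolution--multiplication duality already established in Proposition~\ref{prop:chara_conv}. Recall that for a general nontrivial invertible parameter matrix $G=\begin{bmatrix} g_{00}&g_{01}\\ g_{10}&g_{11}\end{bmatrix}$ one has
\begin{eqnarray*}
\Xi_{\rho\boxtimes\sigma}(\vec p,\vec q)=\Xi_\rho(Ng_{11}\vec p,\;g_{00}\vec q)\;\Xi_\sigma(-Ng_{10}\vec p,\;g_{01}\vec q),
\end{eqnarray*}
where $N=(\det G)^{-1}$. So the only work is to substitute the entries of the beam-splitter matrix \eqref{0204shi5}, namely $g_{00}=s$, $g_{01}=t$, $g_{10}=t$, $g_{11}=-s$, and to simplify.

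First I would compute the determinant: $\det G = g_{00}g_{11}-g_{01}g_{10} = -s^2-t^2 \equiv -1 \bmod d$, using the standing hypothesis $s^2+t^2\equiv 1\bmod d$. Hence $N=(\det G)^{-1}=-1$ in $\mathbb{Z}_d$. Next I would plug this into the four arguments appearing above: $Ng_{11}=(-1)(-s)=s$, so the first slot of $\Xi_\rho$ is $s\vec p$; $g_{00}=s$, so the second slot of $\Xi_\rho$ is $s\vec q$; $-Ng_{10}=-(-1)t=t$, so the first slot of $\Xi_\sigma$ is $t\vec p$; and $g_{01}=t$, so the second slot of $\Xi_\sigma$ is $t\vec q$. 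This yields exactly $\Xi_{\rho\boxtimes_{s,t}\sigma}(\vec p,\vec q)=\Xi_\rho(s\vec p,s\vec q)\,\Xi_\sigma(t\vec p,t\vec q)$ for all $(\vec p,\vec q)\in V^n$.

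Alternatively, for a self-contained argument one could bypass Proposition~\ref{prop:chara_conv} and compute directly: write $\Xi_{\rho\boxtimes_{s,t}\sigma}(\vec p,\vec q)=\Tr{U_{s,t}(\rho\otimes\sigma)U_{s,t}^\dag\,(w(-\vec p,-\vec q)\otimes I)}$, move $U_{s,t}^\dag(\cdot)U_{s,t}$ onto the Weyl operator, and use the Clifford action of the key unitary (Proposition~\ref{prop:U_wely}, or equivalently Lemma~\ref{lem:adj_con} with the present $G$) to see that $U_{s,t}^\dag(w(-\vec p,-\vec q)\otimes I)U_{s,t}=w(-s\vec p,-s\vec q)\otimes w(t\vec p,-t\vec q)$; then factor the trace over the tensor product. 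I do not expect any genuine obstacle here: the statement is a computation, and the only point requiring care is the sign bookkeeping coming from $N=(\det G)^{-1}=-1$, which must be tracked consistently through both the first (momentum-type) and second (position-type) arguments so that the spurious minus signs cancel and the clean symmetric form $s\leftrightarrow\rho$, $t\leftrightarrow\sigma$ emerges.
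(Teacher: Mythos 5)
Your main argument is correct and is essentially the paper's (implicit) proof: the statement is just Proposition~\ref{prop:chara_conv} specialized to $g_{00}=s$, $g_{01}=g_{10}=t$, $g_{11}=-s$, where $s^2+t^2\equiv 1$ gives $\det G\equiv -1$, $N\equiv -1$, and the four slots simplify exactly as you compute. One small caveat in your optional self-contained route: the conjugated operator should be $U_{s,t}^\dag\bigl(w(-\vec p,-\vec q)\otimes I\bigr)U_{s,t}=w(-s\vec p,-s\vec q)\otimes w(-t\vec p,-t\vec q)$, not $w(-s\vec p,-s\vec q)\otimes w(t\vec p,-t\vec q)$ — with your sign the second trace would give $\Xi_\sigma(-t\vec p,t\vec q)$ rather than $\Xi_\sigma(t\vec p,t\vec q)$ — but this does not affect your primary proof.
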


\begin{prop}\label{prop:wign_bs}
Given two  $n$-qudit states $\rho$ and $\sigma$, 
the discrete Wigner function of  $\rho\boxtimes_{s,t}\sigma$ 
satisfies
\begin{align*}
W_{\rho\boxtimes_{s,t}\sigma}(\vec{u}, \vec v)=\sum_{\vec{p},\vec q}W_{\rho}(\vec{p}, \vec q)  W_{\sigma_{\vec{u},\vec v, t}}(t^{-1}s\vec{p}, t^{-1}s\vec q) \;,
\end{align*}
where  $\sigma_{\vec{u},\vec v,t}=w(t^{-1}\vec{u}, t^{-1}\vec v)T(\vec{0},\vec 0)\sigma T(\vec{0}, \vec 0)w(t^{-1}\vec{u}, t^{-1}\vec v)^\dag$.
\end{prop}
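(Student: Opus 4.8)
The plan is to derive the Wigner-function identity for the discrete beam splitter as a special case of the general convolution formula in Proposition~\ref{prop:wig_conv}, then rewrite the sum in the ``phase-space point operator'' form using the properties of $T(\vec p,\vec q)$ in Remark~\ref{rem:phas_op}. First I would specialize Proposition~\ref{prop:wig_conv} to the parameter matrix $G=\left[\begin{smallmatrix}s&t\\t&-s\end{smallmatrix}\right]$ in \eqref{0204shi5}, for which $\det G=-(s^2+t^2)\equiv -1\mod d$, so $N=(\det G)^{-1}\equiv -1$, $Ng_{11}=s$, $Ng_{10}=-t$, $g_{00}=s$, and $g_{01}=t$. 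Substituting these into
\begin{align*}
W_{\rho\boxtimes\sigma}(\vec u,\vec v)
=\sum_{\vec u_1,\vec v_1}
W_{\rho}(g^{-1}_{00}\vec u_1,(Ng_{11})^{-1}\vec v_1)\,
W_{\sigma}(g^{-1}_{01}(\vec u-\vec u_1),(Ng_{10})^{-1}(\vec v_1-\vec v))
\end{align*}
and re-indexing ($\vec u_1\mapsto s\vec u_1$, $\vec v_1\mapsto s\vec v_1$, which is a bijection on $\mathbb Z_d^n$ since $s$ is invertible when $s\neq 0$; the degenerate cases $s=0$ or $t=0$ are handled separately or excluded since $G$ must be positive for the proposition) gives a sum over $\vec p=\vec u_1,\vec q=\vec v_1$ of $W_\rho(\vec p,\vec q)\,W_\sigma(t^{-1}(\vec u-s\vec p),\, -t^{-1}(\vec v- s\vec q))$ up to organizing signs.

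Next I would bring the second factor into the stated form $W_{\sigma_{\vec u,\vec v,t}}(t^{-1}s\vec p,t^{-1}s\vec q)$. The two facts to use are the covariance of the Wigner function under Weyl conjugation, namely $W_{w(\vec a,\vec b)\tau w(\vec a,\vec b)^\dag}(\vec p,\vec q)=W_\tau(\vec p-\vec a,\vec q-\vec b)$ (which follows from property~(3) in Remark~\ref{rem:phas_op}, $T(\vec p,\vec q)=w(\vec p,\vec q)T(\vec0,\vec0)w(\vec p,\vec q)^\dag$, together with the Weyl multiplication rule), and the reflection identity $W_{T(\vec0,\vec0)\tau T(\vec0,\vec0)}(\vec p,\vec q)=W_\tau(-\vec p,-\vec q)$, again from Remark~\ref{rem:phas_op}~(2)--(3). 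Applying these, the substitution $W_\sigma(t^{-1}\vec u-s\vec p,\,\dots)=W_{\sigma_{\vec u,\vec v,t}}(s\vec p,\dots)$ with $\sigma_{\vec u,\vec v,t}=w(t^{-1}\vec u,t^{-1}\vec v)T(\vec0,\vec0)\sigma T(\vec0,\vec0)w(t^{-1}\vec u,t^{-1}\vec v)^\dag$ should match exactly, once the $t^{-1}s$ scaling of the argument is tracked. Alternatively, one can prove the identity directly from the characteristic-function rule $\Xi_{\rho\boxtimes_{s,t}\sigma}(\vec p,\vec q)=\Xi_\rho(s\vec p,s\vec q)\Xi_\sigma(t\vec p,t\vec q)$ (the preceding Proposition in the excerpt) combined with $W_\tau(\vec u,\vec v)=\frac1{d^{2n}}\sum_{\vec p,\vec q}\Xi_\tau(\vec p,\vec q)\chi(\vec p^T\vec v-\vec q^T\vec u)$, performing the Gaussian-type sum over $\vec p,\vec q$ to produce the delta functions, exactly as in the proof of Proposition~\ref{prop:wig_conv}; this is the cleaner route and I would likely present that.

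The main obstacle I anticipate is purely bookkeeping: correctly propagating the inverses $s^{-1},t^{-1}$, the sign $N=-1$, and the affine shift $(\vec u,\vec v)$ through the change of summation variables so that the final argument comes out as precisely $t^{-1}s\vec p$ and $t^{-1}s\vec q$ (not, say, $st^{-1}$ with a stray sign), and making sure the conjugating Weyl operator is exactly $w(t^{-1}\vec u,t^{-1}\vec v)$. There is also a minor case-analysis point — the formula as stated presupposes $t$ (and implicitly $s$) invertible mod $d$, which is the ``positive $G$'' hypothesis inherited from Proposition~\ref{prop:wig_conv}; I would note that $s^2+t^2\equiv1$ with $d\ge 7$ permits choosing $s,t$ both nonzero (cf. Appendix~\ref{sec:apen_numT}), so no generality is lost. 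No deep idea is needed beyond the substitution and the two covariance identities for $T(\vec0,\vec0)$.
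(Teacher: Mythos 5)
Your plan is essentially the paper's own proof: the paper likewise specializes Proposition~\ref{prop:wig_conv} to $G=\left[\begin{smallmatrix}s&t\\t&-s\end{smallmatrix}\right]$ with $N=-1$, rewrites the second factor using $\sigma_0=T(\vec 0,\vec 0)\sigma T(\vec 0,\vec 0)$ (the reflection $W_{\sigma_0}(\vec p,\vec q)=W_\sigma(-\vec p,-\vec q)$) together with the Weyl covariance $W_{w(\vec a,\vec b)\tau w(\vec a,\vec b)^\dag}(\vec p,\vec q)=W_\tau(\vec p-\vec a,\vec q-\vec b)$, and then rescales the summation variables by $s$. The only correction to your bookkeeping is the intermediate sign: $(Ng_{10})^{-1}(\vec v_1-\vec v)=-t^{-1}(s\vec q-\vec v)=t^{-1}(\vec v-s\vec q)$, i.e.\ the plus sign, which is exactly what makes the reflection-plus-shift step land on $(t^{-1}s\vec p,\,t^{-1}s\vec q)$ as stated.
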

\begin{proof}
By Proposition \ref{prop:wig_conv},
\begin{align*}
W_{\rho\boxtimes_{s,t}\sigma}(\vec{u}, \vec v)
=&\sum_{\vec{u}_1,\vec v_1} W_{\rho}(s^{-1}\vec{u}_1, s^{-1}\vec v_1) W_{\sigma }(t^{-1} (\vec u-\vec u_1), t^{-1} (\vec v-\vec v_1))\\
=&\sum_{\vec{u}_1,\vec v_1} W_{\rho}(s^{-1}\vec{u}_1, s^{-1}\vec v_1) W_{\sigma_0 }(t^{-1} \vec u_1-t^{-1}\vec u, t^{-1} \vec v_1-t^{-1}\vec v)\\
=&\sum_{\vec{u}_1,\vec v_1} W_{\rho}(s^{-1}\vec{u}_1, s^{-1}\vec v_1) W_{w(t^{-1}\vec{u}, t^{-1}\vec v)\sigma_0 w(t^{-1}\vec{u}, t^{-1}\vec v)^\dag}(t^{-1} \vec u_1, t^{-1} \vec v_1)\\
=&\sum_{\vec{u}_1,\vec v_1} W_{\rho}(\vec{u}_1, \vec v_1) W_{\sigma_{\vec u,\vec v,t}}(t^{-1} s\vec u_1, t^{-1} s\vec v_1) \;,
\end{align*}
where $\sigma_0=T(0,0)\sigma T(0,0)$, 
and the third equality used the fact that 
$$W_{w(\vec u, \vec v)\rho w(\vec u, \vec v)^\dag}(\vec p, \vec q)=
W_{\rho}(\vec p-\vec u, \vec q-\vec v ) \;,$$
and the last equality comes from $\sigma_{\vec u,\vec v,t}=w(t^{-1}\vec{u}, t^{-1}\vec v)\sigma_0 w(t^{-1}\vec{u}, t^{-1}\vec v)^\dag$.
\end{proof}

\begin{Rem}[\bf Wigner function positivity]
If $s\equiv t\mod d$ for the beam splitter, then $W_{\rho\boxtimes_{s,t}\sigma}$ is always nonnegative for any input states $\rho$ and $\sigma$.
This is because
\begin{align}
 W_{\rho\boxtimes_{s,t}\sigma}(\vec{u}, \vec v)=\sum_{\vec{p},\vec q}W_{\rho}(\vec{p}, \vec q)  W_{\sigma_{\vec{u},\vec v, t}}(\vec{p},\vec q)
 =\frac{1}{d^n}\Tr{\rho \sigma_{\vec{u},\vec v, t}}\geq 0.
\end{align}

\end{Rem}

\begin{lem}\label{lem:com_Mcon}
Let $\rho$ and $\sigma$ be two $n$-qudit states with $\CMM(\rho) = \CMM(\sigma)$. 
Then  we have 
\begin{eqnarray}\label{0204shi6}
\CMM(\rho\boxtimes_{s,t}\sigma)=\CMM(\rho)\boxtimes_{s,t}\sigma
=\rho\boxtimes_{s,t}\CMM(\sigma)=\CMM(\rho)\boxtimes_{s,t}\CMM(\sigma).
\end{eqnarray} 
\end{lem}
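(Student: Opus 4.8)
The plan is to prove that the four operators in \eqref{0204shi6} coincide by showing they have the same characteristic function, since a state is determined by its characteristic function via \eqref{0109shi5}. The only inputs needed are: the convolution--multiplication identity for the discrete beam splitter, namely the specialization of Proposition \ref{prop:chara_conv} to $G$ as in \eqref{0204shi5} (for which $N=-1$), which reads $\Xi_{\rho\boxtimes_{s,t}\sigma}(\vec p,\vec q)=\Xi_\rho(s\vec p,s\vec q)\,\Xi_\sigma(t\vec p,t\vec q)$; the description of $\CMM$ in Definition \ref{def:mean_state}; the scalar closure of the relevant Weyl group from Lemma \ref{0106lem1}; and Lemma \ref{lem:comm_mean} for a degenerate case.

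First I would set up the common stabilizer group. Let $S=\{(\vec p,\vec q)\in V^n:\ |\Xi_\rho(\vec p,\vec q)|=1\}$. Because $\CMM(\rho)=\CMM(\sigma)$, Definition \ref{def:mean_state} forces $S$ to also equal $\{(\vec p,\vec q):\ |\Xi_\sigma(\vec p,\vec q)|=1\}$, forces $\Xi_{\CMM(\rho)}=\Xi_{\CMM(\sigma)}$ to be supported on $S$ with modulus $1$ there, and forces $\Xi_\rho=\Xi_\sigma=\Xi_{\CMM(\rho)}$ on $S$. By Lemma \ref{0106lem1}, $S$ is closed under scalar multiplication by $\Z_d$. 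In the generic case $s\not\equiv 0$ and $t\not\equiv 0 \bmod d$ both $s,t$ are units, so scalar closure yields $(s\vec p,s\vec q)\in S\iff(\vec p,\vec q)\in S\iff(t\vec p,t\vec q)\in S$. The degenerate cases $s\equiv 0$ or $t\equiv 0$ (which force the other scalar to be $\pm1$) make $\rho\boxtimes_{s,t}\sigma$ equal a fixed Clifford conjugate of $\sigma$ that does not depend on $\rho$, and there all four equalities reduce to Lemma \ref{lem:comm_mean}; I would dispose of them in one line at the start.

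Then I would compute. From the multiplication identity, $|\Xi_{\rho\boxtimes_{s,t}\sigma}(\vec p,\vec q)|=1$ iff $(s\vec p,s\vec q)\in S$ and $(t\vec p,t\vec q)\in S$, i.e.\ iff $(\vec p,\vec q)\in S$; hence Definition \ref{def:mean_state} gives
\[ \Xi_{\CMM(\rho\boxtimes_{s,t}\sigma)}(\vec p,\vec q)=\Xi_{\rho\boxtimes_{s,t}\sigma}(\vec p,\vec q)\,\delta_{(\vec p,\vec q)\in S}. \]
Next, $\Xi_{\CMM(\rho)\boxtimes_{s,t}\sigma}(\vec p,\vec q)=\Xi_{\CMM(\rho)}(s\vec p,s\vec q)\,\Xi_\sigma(t\vec p,t\vec q)$ vanishes unless $(s\vec p,s\vec q)\in S$, i.e.\ unless $(\vec p,\vec q)\in S$; and on $S$ one also has $(s\vec p,s\vec q)\in S$, so $\Xi_{\CMM(\rho)}(s\vec p,s\vec q)=\Xi_\rho(s\vec p,s\vec q)$ and the product collapses to $\Xi_{\rho\boxtimes_{s,t}\sigma}(\vec p,\vec q)$. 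The identical substitution of $\Xi_{\CMM(\sigma)}$ for $\Xi_\sigma$ (and/or of $\Xi_{\CMM(\rho)}$ for $\Xi_\rho$) only annihilates the part of the support already lying outside $S$, so $\Xi_{\rho\boxtimes_{s,t}\CMM(\sigma)}$ and $\Xi_{\CMM(\rho)\boxtimes_{s,t}\CMM(\sigma)}$ equal the same thing. Thus all four characteristic functions equal $\Xi_{\rho\boxtimes_{s,t}\sigma}\cdot\delta_{(\cdot)\in S}$, and \eqref{0204shi6} follows.

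The only real obstacle is the bookkeeping around the substitution $\Xi_\rho\rightsquigarrow\Xi_{\CMM(\rho)}$: one must check it genuinely changes nothing, which rests entirely on scalar closure of $S$ together with the invertibility of $s$ and $t$ to pass between the conditions ``$(s\vec p,s\vec q)\in S$'' and ``$(\vec p,\vec q)\in S$''. This is also the one new point relative to the Hadamard-convolution argument, where $\Xi_\rho$ and $\Xi_\sigma$ are evaluated at a common point rather than at the two different points $(s\vec p,s\vec q)$ and $(t\vec p,t\vec q)$; this is precisely why the unit-ness of $s,t$ is used and why the degenerate cases must be peeled off first.
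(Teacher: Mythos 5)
Your main argument is correct and is essentially the paper's own proof: specialize Proposition \ref{prop:chara_conv} to $G=\left[\begin{smallmatrix} s & t\\ t & -s\end{smallmatrix}\right]$ (so $\Xi_{\rho\boxtimes_{s,t}\sigma}(\vec p,\vec q)=\Xi_\rho(s\vec p,s\vec q)\,\Xi_\sigma(t\vec p,t\vec q)$), let $S$ be the common modulus-one support of $\Xi_\rho$ and $\Xi_\sigma$, and show all four characteristic functions equal $\Xi_{\rho\boxtimes_{s,t}\sigma}\,\delta_{(\vec p,\vec q)\in S}$. You are in fact more explicit than the paper about the one point that makes this work, namely that $\delta_{(s\vec p,s\vec q)\in S}=\delta_{(\vec p,\vec q)\in S}$ because $S$ is closed under $\Z_d$-scaling (Lemma \ref{0106lem1}) and $s,t$ are units; the paper performs this conversion silently.

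The one flaw is your one-line disposal of the degenerate parameters. If $t\equiv 0$ (so $s=\pm1$) the convolution is a Weyl/Clifford image of $\rho$, not of $\sigma$, and the four-fold equality genuinely fails: for $s=1,\ t=0$ one gets $\rho\boxtimes_{s,t}\sigma=\rho$, so the chain \eqref{0204shi6} would force $\CMM(\rho)=\rho$. The case $s\equiv 0$ fails similarly, since then $\CMM(\rho)\boxtimes_{s,t}\sigma$ is an image of $\sigma$ while $\CMM(\rho\boxtimes_{s,t}\sigma)$ is the corresponding image of $\CMM(\sigma)$. So these cases cannot be "reduced to Lemma \ref{lem:comm_mean}"; they must simply be excluded, which is how the paper operates: with $s\equiv 0$ or $t\equiv 0$ the matrix $G$ has two vanishing entries and is not even nontrivial (let alone positive), and wherever the beam-splitter convolution is used (e.g.\ Lemma \ref{0212lem1}, Theorem \ref{thm:exremality}) the paper takes $s,t\not\equiv 0\bmod d$. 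With that restriction stated instead of the erroneous aside, your proof coincides with the paper's.
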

\begin{proof}
Let $S$ be the abelian Weyl group associated with $\CMM(\rho) = \CMM(\sigma)$,
then the characteristic function of $\CMM(\rho)$ is supported on  $S$.
Therefore
\begin{align*}
\Xi_{\CMM(\rho) \boxtimes_{s,t} \sigma } (\vec p, \vec q) 
= & \Xi_{\CMM(\rho)} (s\vec p, s\vec q)\; \Xi_{\sigma} (t\vec p, t\vec q)\\
=&  \Xi_{\rho} (s\vec p, s\vec q) \;\Xi_{ \sigma } (t\vec p, t\vec q) \delta_{(\vec p, \vec q)\in S}\\
=&  \Xi_{\rho\boxtimes_{s,t} \sigma} (\vec p, \vec q) \delta_{(\vec p, \vec q)\in S} \;.
\end{align*}
And for any $(\vec p, \vec q)$, $|\Xi_{\rho \boxtimes_{s,t} \sigma} (\vec p, \vec q)| = 1 $ if and only if $|\Xi_{\rho} (s\vec p, s\vec q)| = |\Xi_{\sigma} (t\vec p, t\vec q)| =1$, if and only if $(\vec p, \vec q)\in S$.
That is,
$\Xi_{\CMM(\rho\boxtimes_{s,t} \sigma)} = \Xi_{\rho\boxtimes_{s,t} \sigma} (\vec p, \vec q) \delta_{(\vec p, \vec q)\in S} = \Xi_{\CMM(\rho) \boxtimes_{s,t} \sigma} (\vec p, \vec q)= \Xi_{\rho \boxtimes_{s,t} \CMM(\sigma)} (\vec p, \vec q)$ for every $(\vec p, \vec q)$. 
Therefore $\CMM(\rho\boxtimes_{s,t}\sigma)=\CMM(\rho)\boxtimes_{s,t}\sigma
=\rho\boxtimes_{s,t}\CMM(\sigma)$. Moreover, we also have 
\begin{align*}
\Xi_{\CMM(\rho) \boxtimes_{s,t} \CMM(\sigma) } (\vec p, \vec q) = \Xi_{\rho} (s\vec p, s\vec q) \Xi_{ \sigma } (t\vec p, t\vec q) \delta_{(\vec p, \vec q)\in S} \;.
\end{align*}
Hence $\CMM(\rho\boxtimes_{s,t}\sigma)=\CMM(\rho)\boxtimes_{s,t}\CMM(\sigma)$, and \eqref{0204shi6} holds.
\end{proof}

\begin{lem}[\bf Commutativity with Clifford unitaries]
For any Clifford unitary $U$, there exists a Clifford unitary $U_1$ such that
\begin{eqnarray*}
U_1(\rho\boxtimes_{s,t}\sigma) U_1^\dag
=(U\rho U^\dag) \boxtimes_{s,t}(U\sigma U^\dag) \;.
\end{eqnarray*}
\end{lem}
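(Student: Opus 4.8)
The plan is to mirror the argument for the Hadamard convolution in \S\ref{Hadamard convolution}, using the fact that for the beam splitter the adjoint channel rescales the whole symplectic argument uniformly. First I would specialize Lemma~\ref{lem:adj_con} to the beam-splitter matrix \eqref{0204shi5}; writing $\mathcal{E}$ for the corresponding convolutional channel, we have $\det G\equiv -(s^2+t^2)\equiv -1$, hence $N\equiv -1$, and the formula becomes
\begin{eqnarray*}
\mathcal{E}^\dag\big(w(\vec p,\vec q)\big)=w(s\vec p,s\vec q)\ot w(t\vec p,t\vec q)\;,\qquad \forall (\vec p,\vec q)\in V^n\;,
\end{eqnarray*}
consistent with $\Xi_{\rho\boxtimes_{s,t}\sigma}(\vec p,\vec q)=\Xi_\rho(s\vec p,s\vec q)\,\Xi_\sigma(t\vec p,t\vec q)$. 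Writing an arbitrary Clifford unitary, up to an irrelevant global phase, as $U^\dag=\mu(M)\,w(\vec p_0,\vec q_0)$ with $M$ symplectic and $\mu(M)w(\vec p,\vec q)\mu(M)^\dag=w(M(\vec p,\vec q))$, I would then take as the witness
\begin{eqnarray*}
U_1^\dag:=\mu(M)\,w\big((s+t)\vec p_0,(s+t)\vec q_0\big)\;,
\end{eqnarray*}
which is again Clifford.

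To verify the identity I would pass to adjoints of superoperators — legitimate since products of states span $L(\mathcal{H}_A\ot\mathcal{H}_B)$ and both sides are conjugations, hence determined on the Weyl basis — so that it suffices to prove
\begin{eqnarray*}
\mathcal{E}^\dag\big(U_1^\dag w(\vec p,\vec q)U_1\big)=(U^\dag\ot U^\dag)\,\mathcal{E}^\dag\big(w(\vec p,\vec q)\big)\,(U\ot U)\;,\qquad\forall(\vec p,\vec q)\in V^n\;.
\end{eqnarray*}
From $w(\vec a)w(\vec b)w(\vec a)^\dag=\chi(\inner{\vec a}{\vec b}_s)w(\vec b)$ and the intertwining property of $\mu(M)$ one gets $U^\dag w(\vec a)U=\chi(\inner{(\vec p_0,\vec q_0)}{\vec a}_s)\,w(M\vec a)$, and likewise for $U_1$ with $(\vec p_0,\vec q_0)$ replaced by $(s+t)(\vec p_0,\vec q_0)$. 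Substituting into the right-hand side and using bilinearity of $\inner{\cdot}{\cdot}_s$ together with $\mathbb{Z}_d$-linearity of $M$ yields $\chi\big((s+t)\inner{(\vec p_0,\vec q_0)}{(\vec p,\vec q)}_s\big)\,w(sM(\vec p,\vec q))\ot w(tM(\vec p,\vec q))$; substituting into the left-hand side, using the adjoint formula above and $M(s\vec v)=sM(\vec v)$, yields $\chi\big(\inner{(s+t)(\vec p_0,\vec q_0)}{(\vec p,\vec q)}_s\big)\,w(sM(\vec p,\vec q))\ot w(tM(\vec p,\vec q))$, and the two expressions coincide.

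The main point to get right — and the reason the beam-splitter computation is lighter than the Hadamard one — is that here $\mathcal{E}^\dag$ rescales the symplectic argument by the \emph{scalar} $s$ (resp.\ $t$), not by a block-diagonal matrix, so it commutes with the symplectic part $M$ and no conjugation of $M$ is needed; one only rescales the displacement part of $U$ by $s+t$ in order to absorb the combined phase $\chi(s\inner{\cdot}{\cdot}_s)\chi(t\inner{\cdot}{\cdot}_s)$ coming from the two tensor legs. The degenerate cases $s\equiv 0$ or $t\equiv 0$ cause no difficulty, since $s^2+t^2\equiv 1$ forces at least one of $s,t$ to be invertible mod $d$ and the vanishing leg then imposes no constraint; the remainder is routine root-of-unity bookkeeping.
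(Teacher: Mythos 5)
Your proposal is correct and follows essentially the same route as the paper: you use the Gross decomposition of a Clifford unitary into a symplectic part and a Weyl displacement, keep the symplectic part unchanged, and rescale the displacement by $s+t$, which is exactly the paper's choice of $U_1$. The only difference is that you spell out the verification (via $\mathcal{E}^\dag$ on the Weyl basis, in the style of the paper's Hadamard-convolution proof), which the paper's beam-splitter proof leaves implicit, and your phase and scaling bookkeeping checks out.
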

\begin{proof}
First, based on Theorem 3 in \cite{{Gross06}}, 
Clifford unitaries have the following properties:

(1) For  any symplectic matrix $M$, there is a unitary operator $U(M)$
such that 
\begin{eqnarray*}
U(M)w(\vec v) U(M)^\dag
=w(M\vec v) \;.
\end{eqnarray*}

(2)  Up to a phase, any Clifford operation is of the form
\begin{eqnarray*}
U=w(\vec p,\vec q)U(M)\;,
\end{eqnarray*}
for some $(\vec p,\vec q) \in V^n$.
Without loss of generality, we may assume 
$U=w(\vec p,\vec q) U(M)$, then $U_1=w((s+t )(\vec p,\vec q))U(M)$
satisfies the equality.
\end{proof}

\begingroup
\setlength{\tabcolsep}{6pt} 
\renewcommand{\arraystretch}{1.5} 

\begin{table}[htbp]
\centering
 \resizebox{\textwidth}{20mm}{
\begin{tabular}{ |c|c|c|c| } 
\hline
Beam splitter& CV quantum systems & DV quantum systems\\
\hline
Parameter& $(\sqrt{\lambda},\sqrt{1-\lambda}),\;\lambda\in[0,1]$ & $(s,t)$,\; $s^2+t^2\equiv1 \mod d$ \\
\hline
\multirow{2}{*}{Convolution}& 
$\rho\boxtimes_{\lambda}\sigma=\Ptr{B}{U_{\lambda}\rho\ot\sigma U^\dag_{\lambda}}$, & $\rho\boxtimes_{s,t}\sigma=\Ptr{B}{U_{s,t}\rho\ot\sigma U^\dag_{s,t}}$,\\
& $U_{\lambda}$: beam splitter  & $ U_{s,t}$: discrete beam splitter  \\
\hline
{Characteristic function } &  $\Xi_{\rho\boxtimes_{\lambda}\sigma}(\vec{x})=\Xi_{\rho}(\sqrt{\lambda}\vec{x})\;\Xi_{\sigma}(\sqrt{1-\lambda}\vec{x})$&$\Xi_{\rho\boxtimes_{s,t}\sigma}(\vec{x})=\Xi_{\rho}(s\vec{x})\;\Xi_{\sigma}(t\vec{x})$\\
\hline
Wigner function  & $W_{\rho\boxtimes_{\lambda}\sigma}(\vec{b})=\int d\vec{x}\;W_{\rho}\left(\vec{x}\right)W_{\sigma_{\vec{b},\lambda}}\left(\frac{\sqrt{\lambda}}{\sqrt{1-\lambda}}\vec{x}\right)$
&$W_{\rho\boxtimes_{s,t}\sigma}(\vec{b})=\sum_{\vec{x}}W_{\rho}(\vec{x})W_{\sigma_{\vec{b},t}}(t^{-1}s\vec{x})$\\
\hline
\multirow{2}{*}{Quantum entropy  power inequality}& $H(\rho\boxtimes_{\lambda}\sigma)\geq \lambda H(\rho)+(1-\lambda)H(\sigma)$ \cite{Konig14},&
$H_{\alpha}(\rho\boxtimes_{s,t}\sigma)\geq \max\set{H_{\alpha}(\rho),H_{\alpha}(\sigma)}$,\\
&$e^{H(\rho\boxtimes_{\lambda}\sigma)/n}\geq \lambda e^{H(\rho)/n}+(1-\lambda)e^{H(\sigma)/n}$ \cite{Konig14,Palma14}
& $\alpha\in[-\infty,+\infty]$(Proposition \ref{prop:entropy})\\
\hline 
\multirow{2}{*}{Quantum Fisher information inequality}&
$w^2J(\rho\boxtimes_{\lambda}\sigma)\leq w^2_1J(\rho)+w^2_2J(\sigma)$,&
$J(\rho\boxtimes_{s,t}\sigma)\leq \min\set{J(\rho),J(\sigma)}$\\
& $w=\sqrt{\lambda}w_1+\sqrt{1-\lambda}w_2$ \cite{Konig14}&(Theorem \ref{thm:fisher})\\ 
\hline
\end{tabular}}
\vskip 5pt
\caption{\label{tab:sum_B}We compare our discrete beam splitter (displayed in the third column) with results for  CV quantum systems (in the second column.)}

\end{table}

\endgroup

\subsubsection{Discrete Amplifier Convolution on Qudits}

The third example of the convolution is the discrete amplifier.
Let  the  parameter matrix $G$ be
\begin{equation}\label{0205shi1}
G=\left[
\begin{array}{cc}
l&-m\\
-m& l
\end{array}
\right]\;,
\end{equation}
where $l^2-m^2\equiv 1\mod d$.
This  is a discrete version of the condition 
$(\sqrt{\kappa})^2-(\sqrt{\kappa-1})^2=1$ with $\kappa\in[1,\infty)$ that occurs in CV squeezing unitary.  
In fact, 
the condition $l^2-m^2\equiv 1\mod d$ can be satisfied for any prime number $d\ge 7$ (See Appendix \ref {sec:apen_numT}.)

\begin{Def}[\bf Discrete  amplifier]\label{Def:dis_sq}
Given $l^2-m^2\equiv 1 \mod d$, 
the unitary operator $V_{l,m}$ is
\begin{align}\label{eq:squeez}
 V_{l,m} = \sum_{\vec i,\vec j\in \mathbb{Z}^n_d} |l\vec i+m\vec j \rangle \langle \vec i| \otimes | m\vec i+l\vec j\rangle \langle \vec j| \;.
 \end{align}
 The convolution of two $n$-qudit states $\rho$ and $\sigma$ 
is 
\begin{align}\label{eq:conv_S}
\rho \boxtimes_{l,m} \sigma = \Ptr{B}{ V_{l,m} (\rho \otimes \sigma) V^\dag_{l,m}} \;.
\end{align}
\end{Def}

\begin{prop}
Given two $n$-qudit states $\rho$ and $\sigma$, 
 the characteristic function 
satisfies
\begin{align*}
\Xi_{ \rho \boxtimes_{l,m} \sigma} (\vec p,\vec q) = \Xi_\rho (l\vec p,l\vec{q}) \Xi_\sigma (m\vec p,-m\vec q)
=\Xi_\rho (l\vec p,l\vec{q}) \Xi_{\sigma^T} (m\vec p,m\vec q) \;, \quad \forall (\vec p,\vec q)\in V^n \;,
\end{align*}
where $ \sigma^T$ is the transpose of $\sigma$ as a matrix in the Pauli $Z$ basis.
\end{prop}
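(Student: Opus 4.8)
The plan is to derive the first equality directly from the convolution–multiplication duality of Proposition~\ref{prop:chara_conv}, and then to reduce the second equality to a single transpose identity for Weyl operators. For the amplifier the parameter matrix is the $G$ in \eqref{0205shi1}, so $g_{00}=g_{11}=l$ and $g_{01}=g_{10}=-m$; since $\det G = l^2-m^2\equiv 1 \mod d$ we have $N=(\det G)^{-1}=1$. Feeding these values into Proposition~\ref{prop:chara_conv} gives immediately
\[
\Xi_{\rho\boxtimes_{l,m}\sigma}(\vec p,\vec q)=\Xi_\rho(Ng_{11}\vec p,\,g_{00}\vec q)\,\Xi_\sigma(-Ng_{10}\vec p,\,g_{01}\vec q)=\Xi_\rho(l\vec p,\,l\vec q)\,\Xi_\sigma(m\vec p,\,-m\vec q),
\]
which is the first displayed identity.

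For the second equality it suffices to establish the general fact $\Xi_\sigma(\vec a,-\vec b)=\Xi_{\sigma^T}(\vec a,\vec b)$ for all $(\vec a,\vec b)\in V^n$, where $\sigma^T$ is the transpose in the Pauli $Z$ basis (note $d$ is an odd prime, since the amplifier requires $d\ge 7$); applying it with $\vec a=m\vec p$, $\vec b=m\vec q$ then converts $\Xi_\sigma(m\vec p,-m\vec q)$ into $\Xi_{\sigma^T}(m\vec p,m\vec q)$. The key step is the operator identity $w(\vec a,\vec b)^T=w(\vec a,-\vec b)$. First I would check this for a single qudit: in the $Z$ basis $Z^T=Z$ and $X^T=X^{-1}$, so by \eqref{eqn:wpq} we get $w(p,q)^T=\chi(-2^{-1}pq)X^{-q}Z^p$; moving $Z^p$ to the left through $X^{-q}$ (equivalently, using the Weyl relation \eqref{0106shi2}) picks up a factor $\chi(pq)$, and since $1-2^{-1}=2^{-1}$ in $\Z_d$ this collapses to $\chi(2^{-1}pq)Z^pX^{-q}=w(p,-q)$. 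Taking tensor products over the $n$ qudits gives $w(\vec a,\vec b)^T=w(\vec a,-\vec b)$.

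With this in hand I would finish by the short computation, using $\Xi_\sigma(\vec p,\vec q)=\Tr{\sigma\,w(-\vec p,-\vec q)}$ together with invariance of the trace under transposition and cyclicity:
\[
\Xi_{\sigma^T}(\vec a,\vec b)=\Tr{\sigma^T w(-\vec a,-\vec b)}=\Tr{\sigma\,w(-\vec a,-\vec b)^T}=\Tr{\sigma\,w(-\vec a,\vec b)}=\Xi_\sigma(\vec a,-\vec b),
\]
where the third step uses $w(-\vec a,-\vec b)^T=w(-\vec a,\vec b)$. This proves both equalities. I do not expect a genuine obstacle: the only place needing care is pinning down the phase in $w(p,q)^T=w(p,-q)$, which is a one-line verification from the definition of the Weyl operators and the commutation relation, and which is exactly where oddness of $d$ (so that $2^{-1}$ exists) is used.
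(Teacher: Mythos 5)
Your proposal is correct and follows the route the paper clearly intends (the proposition is stated there without proof as a specialization of Proposition~\ref{prop:chara_conv}): with $g_{00}=g_{11}=l$, $g_{01}=g_{10}=-m$ and $N=(\det G)^{-1}=1$ the duality formula gives the first equality at once, and your verification of the single-qudit identity $w(p,q)^T=w(p,-q)$ in the $Z$ basis (using $Z^T=Z$, $X^T=X^{-1}$ and the existence of $2^{-1}$ for odd $d$), tensored over the $n$ qudits and combined with $\Tr{\sigma^T A}=\Tr{\sigma A^T}$, is exactly the missing detail needed for the second equality. No gaps.
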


\begin{prop}\label{prop:wign_am}
Given two $n$-qudit states $\rho$ and $\sigma$,  the  discrete Wigner function 
satisfies
\begin{align*}
W_{\rho\boxtimes_{l,m}\sigma}(\vec{u}, \vec v)
=\sum_{\vec{p}, \vec q}W_{\rho}(\vec{p}, \vec q)\,\,W_{\sigma^T_{\vec{u}, \vec v, m}}(m^{-1}l\vec{p}, m^{-1}l\vec q) \;,
\end{align*}
where  $\sigma^T_{\vec{u},\vec v,m}=w(m^{-1}\vec{u}, m^{-1}\vec v)T(\vec{0},\vec 0)\sigma^T T(\vec{0}, \vec 0)w(m^{-1}\vec{u}, m^{-1}\vec v)^\dag$, 
and  $ \sigma^T$ is the transpose of $\sigma$ as the matrix in the Pauli Z basis.
\end{prop}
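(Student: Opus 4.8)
The plan is to derive Proposition~\ref{prop:wign_am} from the general convolution formula for discrete Wigner functions, Proposition~\ref{prop:wig_conv}, in exactly the same way Proposition~\ref{prop:wign_bs} was obtained for the beam splitter, the only new ingredient being the appearance of the transpose $\sigma^T$. First I would specialize Proposition~\ref{prop:wig_conv} to the amplifier matrix $G$ with $g_{00}=g_{11}=l$ and $g_{01}=g_{10}=-m$: here $\det G=l^2-m^2\equiv 1$, so $N=1$ and $Ng_{11}=g_{00}=l$, $Ng_{10}=g_{01}=-m$, and $G$ is positive and invertible since $l,m\not\equiv0$, so the proposition applies. Plugging in these entries gives
\[
W_{\rho\boxtimes_{l,m}\sigma}(\vec u,\vec v)=\sum_{\vec u_1,\vec v_1} W_\rho(l^{-1}\vec u_1,\, l^{-1}\vec v_1)\; W_\sigma\big(-m^{-1}(\vec u-\vec u_1),\, -m^{-1}(\vec v_1-\vec v)\big),
\]
and after the substitution $\vec u_1=l\vec p$, $\vec v_1=l\vec q$ the sum runs over $(\vec p,\vec q)$, the $\rho$-factor collapses to $W_\rho(\vec p,\vec q)$, and the $\sigma$-factor becomes $W_\sigma\big(m^{-1}l\vec p-m^{-1}\vec u,\; m^{-1}\vec v-m^{-1}l\vec q\big)$.

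The remaining task is to recognize this $\sigma$-factor as $W_{\sigma^T_{\vec u,\vec v,m}}(m^{-1}l\vec p,\, m^{-1}l\vec q)$, and for this I would assemble three elementary covariance identities for $W$: (i) translation covariance $W_{w(\vec a,\vec b)\tau w(\vec a,\vec b)^\dag}(\vec p,\vec q)=W_\tau(\vec p-\vec a,\vec q-\vec b)$, which follows from property~(3) of the phase-space point operators in Remark~\ref{rem:phas_op}; (ii) the reflection identity $W_{T(\vec 0,\vec 0)\tau T(\vec 0,\vec 0)}(\vec p,\vec q)=W_\tau(-\vec p,-\vec q)$, already used in the proof of Proposition~\ref{prop:wign_bs}; and (iii) the transpose identity $W_{\sigma^T}(\vec p,\vec q)=W_\sigma(-\vec p,\vec q)$. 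Identity~(iii) is the genuinely new computation: one checks $w(\vec p,\vec q)^T=w(\vec p,-\vec q)$ and that $T(\vec 0,\vec 0)=\sum_{\vec j}\ket{-\vec j}\bra{\vec j}$ is symmetric in the $Z$-basis, so by property~(3) of Remark~\ref{rem:phas_op} we get $T(\vec p,\vec q)^T=T(-\vec p,\vec q)$, whence $W_{\sigma^T}(\vec p,\vec q)=\tfrac1{d^n}\Tr{\sigma^T T(\vec p,\vec q)}=\tfrac1{d^n}\Tr{\sigma\, T(-\vec p,\vec q)}=W_\sigma(-\vec p,\vec q)$. Chaining (i) then (ii) then (iii) applied to $\sigma^T_{\vec u,\vec v,m}=w(m^{-1}\vec u,m^{-1}\vec v)\,T(\vec 0,\vec 0)\sigma^T T(\vec 0,\vec 0)\,w(m^{-1}\vec u,m^{-1}\vec v)^\dag$ turns $W_{\sigma^T_{\vec u,\vec v,m}}(m^{-1}l\vec p,\,m^{-1}l\vec q)$ into precisely the $\sigma$-factor obtained above, which finishes the proof.

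The core calculus is routine once (i)--(iii) are in place, and the only conceptually new step is identity~(iii) — i.e., seeing why $\sigma$ enters transposed, which traces to the sign of the off-diagonal entries of the amplifier $G$ (it is a hyperbolic rather than an orthogonal transformation). Consequently the step I would be most careful about is the sign bookkeeping: making sure the sign of $N=1$, the minus in $g_{01}=Ng_{10}=-m$, and the minus produced by~(iii) combine to leave exactly the arguments $m^{-1}l\vec p$, $m^{-1}l\vec q$ together with the displacement $w(m^{-1}\vec u,m^{-1}\vec v)$ asserted in the statement (and to confirm that the factor written as $w(m^{-1}\vec u,\vec v)^\dag$ in the displayed definition should read $w(m^{-1}\vec u,m^{-1}\vec v)^\dag$).
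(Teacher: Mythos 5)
Your proof is correct and follows essentially the same route the paper uses for the analogous beam-splitter statement (Proposition~\ref{prop:wign_bs}) — specialize Proposition~\ref{prop:wig_conv}, then use translation covariance and conjugation by $T(\vec 0,\vec 0)$ — which is evidently what the paper intends here, since Proposition~\ref{prop:wign_am} is stated without proof; your transpose identity $W_{\sigma^T}(\vec p,\vec q)=W_{\sigma}(-\vec p,\vec q)$, obtained from $w(\vec p,\vec q)^T=w(\vec p,-\vec q)$ and $T(\vec p,\vec q)^T=T(-\vec p,\vec q)$, is exactly the extra ingredient needed and is verified correctly. You are also right that the displayed definition of $\sigma^T_{\vec u,\vec v,m}$ should end with $w(m^{-1}\vec u, m^{-1}\vec v)^\dag$ rather than $w(m^{-1}\vec u, \vec v)^\dag$.
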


\begingroup
\setlength{\tabcolsep}{6pt} 
\renewcommand{\arraystretch}{1.5} 

\begin{table*}[!htbp]
\centering
\resizebox{\textwidth}{20mm}{
\begin{tabular}{ |c|c|c|c| } 
\hline
Amplifier & CV quantum systems & DV quantum systems\\
\hline
Parameter& $(\sqrt{\kappa},\sqrt{\kappa-1}),\kappa\in[1,\infty)$ & $(l,m)$, $l^2-m^2\equiv1 \mod d$ \\
\hline
\multirow{2}{*}{Convolution}& 
$\rho\boxtimes_{\kappa}\sigma=\Ptr{B}{V_{\kappa}\rho\ot\sigma V^\dag_{\kappa}}$, & $\rho\boxtimes_{l,m}\sigma=\Ptr{B}{V_{l,m}\rho\ot\sigma V^\dag_{l,m}}$,\\
& $V_{\kappa}$ : squeezing unitary & $V_{l,m}$ : discrete squeezing unitary \\
\hline
Characteristic function &  $\Xi_{\rho\boxtimes_{\kappa}\sigma}(\vec p, \vec q)=\Xi_{\rho}(\sqrt{\kappa}\vec p, \sqrt{\kappa} \vec q)\Xi_{\sigma}(\sqrt{\kappa-1}\vec p, -\sqrt{\kappa-1}\vec q)$&$\Xi_{\rho\boxtimes_{l,m}\sigma}(\vec{p},\vec q)=\Xi_{\rho}(l\vec{p}, l\vec q)\Xi_{\sigma}(m\vec p, -m\vec q)$\\
\hline
\multirow{2}{*}{Quantum entropy  power inequality}&$e^{H(\rho\boxtimes_{\kappa}\sigma)/n}\geq \kappa e^{H(\rho)/n}+(\kappa-1)e^{H(\sigma)/n}$  &
$H_{\alpha}(\rho\boxtimes_{l,m}\sigma)\geq \max\set{H_{\alpha}(\rho),H_{\alpha}(\sigma)}$,\\
& \cite{Palma14}
& $\alpha\in[-\infty,+\infty]$ (Proposition \ref{prop:entropy})\\
\hline 
\multirow{2}{*}{Quantum Fisher information inequality}&
$w^2J(\rho\boxtimes_{\kappa}\sigma)\leq w^2_1J(\rho)+w^2_2J(\sigma)$,&
$J(\rho\boxtimes_{l,m}\sigma)\leq \min\set{J(\rho),J(\sigma)}$\\
& $w=\sqrt{\kappa}w_1+\sqrt{\kappa-1}w_2$ \cite{Konig14}&(Theorem \ref{thm:fisher})\\ 
\hline
\end{tabular}}
\vskip 5pt
\caption{\label{tab:sum_S}We compare our results on the discrete amplifier (displayed in the third column) with results in  CV quantum systems (in the second column.)}
\end{table*}
\endgroup

\subsubsection{Convolution on qubits: CNOT}
Consider the convolution on qubits. There are only four choices of 
nontrivial invertible parameter matrices $G$ as follows 
\begin{equation*}
G=\left[
\begin{array}{cc}
1&0\\
1& 1
\end{array}
\right]\;,\quad
\left[
\begin{array}{cc}
1&1\\
0& 1
\end{array}
\right]\;,\quad
\left[
\begin{array}{cc}
0&1\\
1& 1
\end{array}
\right]\;,\quad
\left[
\begin{array}{cc}
1&1\\
1& 0
\end{array}
\right].
\end{equation*}
Following Definition \ref{def:Cli_unitary},
the corresponding unitaries on the 2-qubit system are 
\begin{equation*}
CNOT_{2\to 1} \;,\quad CNOT_{1\to 2} \;,\quad  SWAP \cdot CNOT_{1\to 2} \;,\quad  SWAP \cdot CNOT_{2\to 1}\;.
\end{equation*}
Here the various CNOT gates are 
\[
CNOT_{2\to 1}=\sum_{i,j}\ket{i+j}\bra{i}\ot\ket{j}\bra{j}\;,
\quad
CNOT_{1\to 2}=\sum_{i,j}\ket{i}\bra{i}\ot\ket{i+j}\bra{j}\;,
\]
and
\[
SWAP=\sum_{i,j}\ket{j}\bra{i}\ot \ket{i}\bra{j}\;.
\]
The parameter matrices are even-parity positive for the first two choices, and odd-parity positive for the last two choices. 
 Hence, we also have the corresponding partial quantum entropy power inequality and quantum Fisher information inequality for these 4 choices of $G$ in qubit systems.
For example, for $CNOT_{B\to A}$, by Proposition \ref{prop:entropy} we have  
$H_{\alpha}(\rho\boxtimes\sigma)\geq H_{\alpha}(\rho)$.

\section{Quantum central limit theorem for states}\label{sec:CLL}

In this section, we use $\boxtimes$ to abbreviate the beam splitter convolution 
$\boxtimes_{s,t}$,
and for any state $\rho$, define  $\boxtimes^{N+1}\rho=(\boxtimes^N\rho)\boxtimes\rho$ inductively, where $\boxtimes^0\rho=\rho$.
Before considering the quantum central limit theorem, let us review the classical case. 
Let $X$ be a random variable with probability density function $f$. 
The central limit theorem states that, if $X$ has zero mean, 
then  $\frac{1}{\sqrt N}X_1+\cdots + \frac{1}{\sqrt N}X_N$ converges to a normal random variable. That is, the  probability density function $ *_Nf $ converges to a normal distribution  as $N\rightarrow \infty$,
where $*_Nf$ denotes the balanced  $N$-fold convolution of $f$. 
Here the condition that $X$ has zero mean cannot be removed.
For example,
if  $X\sim \CNN( 1,1)$,
$\frac{1}{\sqrt N}X_1+\cdots + \frac{1}{\sqrt N}X_N\sim \CNN(\sqrt N, 1)$ and it does not converge.
Hence, given a random variable $X$, we should consider the zero-mean variable $X-\mathbb{E} X$ instead of $X$, where $\mathbb{E}X$ is the mean value of $X$.

The quantum analogue of this centering was introduced in Definition \ref{Def:Zero_mean}.
A state $\rho$ has zero mean precisely when the
characteristic function of $\mathcal M(\rho)$ takes values in $\{0,1\}$,
and by Lemma~\ref{lem:zero_mean_pau} every state has zero mean after conjugation by a
suitable Weyl operator. Since conjugation by a Weyl operator is a
discrete phase-space displacement, this assumption involves no loss of
generality, and we impose it throughout this section.

\begin{thm}[\bf Central limit theorem for states and the magic gap]\label{thm:CLT_gap} 
Let $\rho$ be a zero-mean  $n$-qudit state with the MS $\CMM(\rho)$ and  magic gap $MG(\rho)$. Then  
\begin{eqnarray*}
\norm{\boxtimes^N\rho-\CMM(\rho)}_2
\leq (1-MG(\rho))^{N}\norm{\rho-\CMM(\rho)}_2 \;.
\end{eqnarray*}
If $\rho\neq \CMM(\rho)$, then  $MG(\rho)>0$, and the  convergence is exponentially fast with respect to the time of
convolution.
\end{thm}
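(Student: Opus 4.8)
The plan is to work entirely on the level of characteristic functions, since the $L_2$ norm of an operator $O$ equals $\frac{1}{d^{n/2}}\|\Xi_O\|_{\ell^2(V^n)}$ (a consequence of the fact that $\{w(\vec p,\vec q)\}$ is an orthonormal basis with respect to the normalized inner product $\langle A,B\rangle=\frac{1}{d^n}\Tr{A^\dag B}$, as stated in the Preliminaries). Concretely, $\|O\|_2^2 = \frac{1}{d^n}\sum_{(\vec p,\vec q)\in V^n}|\Xi_O(\vec p,\vec q)|^2$. First I would reduce the claim to showing that applying one factor of $\boxtimes\rho$ to any zero-mean state $\omega$ contracts $\|\omega-\CMM(\rho)\|_2$ by the factor $1-MG(\rho)$ when $\omega$ ranges over the iterates; then the theorem follows by induction on $N$ after checking that $\boxtimes^N\rho$ is again zero-mean with the same MS, i.e. $\CMM(\boxtimes^N\rho)=\CMM(\rho)$ and $\boxtimes^N\rho - \CMM(\rho)$ is supported off the ``modulus-one'' locus.

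The key computation is as follows. By Proposition~\ref{prop:chara_conv} specialized to the beam splitter (the proposition preceding Definition~\ref{Def:disc_BS}), $\Xi_{\rho\boxtimes\sigma}(\vec p,\vec q)=\Xi_\rho(s\vec p,s\vec q)\,\Xi_\sigma(t\vec p,t\vec q)$. Iterating, $\Xi_{\boxtimes^N\rho}$ is a product of $N$ factors $\Xi_\rho$ evaluated at rescaled arguments, and since $s$ and $t$ are invertible mod $d$, the rescalings are just permutations of $V^n$; hence the set where $|\Xi_{\boxtimes^N\rho}|=1$ is exactly the set $S=\{(\vec p,\vec q):|\Xi_\rho(\vec p,\vec q)|=1\}$ from the proof of Proposition~\ref{0109lem1}, and on that set $\Xi_{\boxtimes^N\rho}=\Xi_{\CMM(\rho)}$ precisely because $\rho$ is zero-mean (the values there are in $\{0,1\}$, and $1\cdot 1=1$). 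This gives $\CMM(\boxtimes^N\rho)=\CMM(\rho)$ and shows $\Xi_{\boxtimes^N\rho}-\Xi_{\CMM(\rho)}$ is supported on $\{(\vec p,\vec q):|\Xi_\rho(s^?\vec p,\dots)|<1\}$, i.e. on points where at least one of the $N$ factors has modulus $\le 1-MG(\rho)$ by the definition of the magic gap. Therefore, writing $\eta = \boxtimes^{N-1}\rho$, for every $(\vec p,\vec q)$ with $|\Xi_{\boxtimes^N\rho}(\vec p,\vec q)|\neq 1$ we have
\begin{align*}
\big|\Xi_{\boxtimes^N\rho}(\vec p,\vec q)-\Xi_{\CMM(\rho)}(\vec p,\vec q)\big|
&= \big|\Xi_\eta(s\vec p,s\vec q)\Xi_\rho(t\vec p,t\vec q)-\Xi_{\CMM(\rho)}(\vec p,\vec q)\big|.
\end{align*}
When $(\vec p,\vec q)\notin S$ one of the two factors, say $\Xi_\rho(t\vec p,t\vec q)$, has modulus $\le 1-MG(\rho)$ (using that the point lies in the support, else the difference is $0$), and $|\Xi_\eta(s\vec p,s\vec q)|\le 1$, while $\Xi_{\CMM(\rho)}(\vec p,\vec q)=0$ there; this yields the pointwise bound $|\Xi_{\boxtimes^N\rho}-\Xi_{\CMM(\rho)}|\le (1-MG(\rho))\,|\Xi_\eta - \Xi_{\CMM(\rho)}|$ after matching supports carefully. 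Summing the squares over $V^n$ and taking square roots gives $\|\boxtimes^N\rho-\CMM(\rho)\|_2\le (1-MG(\rho))\|\boxtimes^{N-1}\rho-\CMM(\rho)\|_2$, and induction finishes the estimate. The final sentence of the theorem is immediate: Proposition~\ref{prop:magap}(1) says $MG(\rho)=0$ iff $\rho$ is an MSPS, i.e. iff $\rho=\CMM(\rho)$, so $\rho\neq\CMM(\rho)$ forces $MG(\rho)>0$ and hence geometric decay.

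The step I expect to be the main obstacle is the bookkeeping of \emph{which} of the $N$ factors supplies the contraction at a given frequency $(\vec p,\vec q)$, and making sure the pointwise inequality $|\Xi_{\boxtimes^N\rho}(\vec p,\vec q)-\Xi_{\CMM(\rho)}(\vec p,\vec q)|\le(1-MG(\rho))|\Xi_{\boxtimes^{N-1}\rho}(\vec p,\vec q)-\Xi_{\CMM(\rho)}(\vec p,\vec q)|$ is valid simultaneously at every frequency without double counting — in particular handling the boundary cases where $(\vec p,\vec q)\in S$ (difference is zero on both sides) versus where $\Xi_{\boxtimes^{N-1}\rho}(s\vec p,s\vec q)=0$ (difference on the right may already vanish). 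The cleanest route is probably to peel off a single convolution at the \emph{outer} level (convolving $\boxtimes^{N-1}\rho$ with one copy of $\rho$) and argue: on $S$ both sides are $0$; off $S$ the outer factor $\Xi_\rho$ at the rescaled argument has modulus $\le 1-MG(\rho)$ whenever it is nonzero, and when it is zero both sides vanish. This avoids tracking all $N$ factors at once and reduces everything to the one-step contraction, which is the honest content of the argument.
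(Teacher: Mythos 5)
Your proposal is correct in substance and follows essentially the same route as the paper: pass to characteristic functions, use the zero-mean hypothesis to make $\Xi_{\CMM(\rho)}$ the indicator of $S=\{(\vec p,\vec q):|\Xi_\rho(\vec p,\vec q)|=1\}$, observe $\CMM(\boxtimes^N\rho)=\CMM(\rho)$, and bound the off-$S$ Fourier coefficients by powers of $1-MG(\rho)$ before applying Parseval. The only structural difference is cosmetic: the paper telescopes the full $N$-fold product at once, keeping the last factor $|\Xi_\rho(t\vec p,t\vec q)|$ to recover $\norm{\rho-\CMM(\rho)}_2$, whereas you peel one convolution and induct on the one-step contraction; both yield the stated $(1-MG(\rho))^{N}$ rate.

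One imprecision should be repaired. The same-frequency pointwise bound
$\big|\Xi_{\boxtimes^N\rho}(\vec p,\vec q)-\Xi_{\CMM(\rho)}(\vec p,\vec q)\big|\le(1-MG(\rho))\big|\Xi_{\boxtimes^{N-1}\rho}(\vec p,\vec q)-\Xi_{\CMM(\rho)}(\vec p,\vec q)\big|$
is false in general: the one-step identity gives, off $S$, the value $|\Xi_{\boxtimes^{N-1}\rho}(s\vec p,s\vec q)|\,|\Xi_\rho(t\vec p,t\vec q)|$, with the previous iterate evaluated at the \emph{rescaled} frequency, and $\Xi_{\boxtimes^{N-1}\rho}$ may vanish at $(\vec p,\vec q)$ while being nonzero at $(s\vec p,s\vec q)$. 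The correct pointwise statement is
$\big|\Xi_{\boxtimes^N\rho}(\vec p,\vec q)\big|\le(1-MG(\rho))\,\big|\Xi_{\boxtimes^{N-1}\rho}(s\vec p,s\vec q)\big|$ for $(\vec p,\vec q)\notin S$,
using that $(t\vec p,t\vec q)\notin S$ forces $|\Xi_\rho(t\vec p,t\vec q)|\le 1-MG(\rho)$ (or $=0$). The $L_2$ contraction then follows because $(\vec p,\vec q)\mapsto(s\vec p,s\vec q)$ is a bijection of $V^n$ preserving $S$ and its complement (a fact you already noted), so summing squares over $(\vec p,\vec q)\notin S$ and relabeling gives $\norm{\boxtimes^N\rho-\CMM(\rho)}_2\le(1-MG(\rho))\norm{\boxtimes^{N-1}\rho-\CMM(\rho)}_2$, and induction completes the proof exactly as you intended.
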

\begin{proof}
Let $S$ be the abelian subgroup associated with $\CMM(\rho)$.
Since $\rho$ is zero-mean, we have
\begin{align*}
\Xi_{\CMM(\rho)}(\vec p, \vec q) = \left\{
\begin{aligned}
1&& (\vec p, \vec q)\in S \;,\\
0&& (\vec p, \vec q)\not\in S \;.
\end{aligned}\right.
\end{align*} 
By Lemma \ref{lem:com_Mcon}, it can be proved inductively that  
\begin{eqnarray*}
\CMM(\boxtimes^N\rho) 
=\CMM(\rho) \;.
\end{eqnarray*}
Thus,
\begin{eqnarray*}
\boxtimes^N\rho
-\CMM(\rho)
=\frac{1}{d^n}
\sum_{(\vec p, \vec q)\notin S}
\Xi_{\boxtimes^N\rho}(\vec p, \vec q)
w(\vec p, \vec q) \;.
\end{eqnarray*}
Moreover,  for any $(\vec p, \vec q) \notin S$,
\begin{eqnarray*}
|\Xi_{\boxtimes^N\rho}(\vec p, \vec q)|
=|\Xi_{\boxtimes^{N-1}\rho}(s\vec p, s\vec q)
||\Xi_{\rho}(t\vec p, t\vec q)|
\leq (1-MG(\rho))^{N}
|\Xi_{\rho}(t\vec p, t\vec q)| \;.
\end{eqnarray*}
Therefore, 
\begin{align*}
\norm{\boxtimes^N\rho-\CMM(\rho)}_2^2
=&\frac{1}{d^n}
\sum_{(\vec p, \vec q)\notin S}
|\Xi_{\boxtimes^N\rho}(\vec p, \vec q)|^2
\leq \frac{1}{d^n}(1-MG(\rho))^{2N}
\sum_{(\vec p, \vec q)\notin S}|\Xi_{\rho}(t\vec p, t\vec q)|^2\\
\leq& (1-MG(\rho))^{2N}\norm{\rho-\CMM(\rho)}^2_2 \;.
\end{align*}
\end{proof}

The above theorem shows exponential decay with respect to the number of repeated convolutions;  the exponential rate of convergence is controlled by the magic gap. This is quite different from the classical central limit theorem, where the rate of convergence is on the order of $O(1/\sqrt{N})$.

\begin{Rem}
All the results in this section rely on the definition of the $N$-fold convolution $\boxtimes^N \rho$.
Although we assumed that the parameter matrix $G$ for each convolution is the same in the definition of $\boxtimes^N \rho$, all conclusions in this section can relax this assumption. 
Specifically, given a sequence of pairs of parameters $(s_1,t_1), (s_2,t_2)$,... with $s_i^2+t_i^2\equiv 1 \mod d$ for every $i$, 
then we can define $\boxtimes^{N+1}\rho=(\boxtimes^N\rho)\boxtimes_{s_N, t_N}\rho$ inductively, where $\boxtimes^0\rho=\rho$.
Theorem \ref{0212prop2} and Theorem \ref{thm:CLT_gap} also hold with this newly defined multiple convolution.
\end{Rem}

\section{A framework for convolution of quantum channels}
\label{sect:ConvolutionChannels}
In this section, we focus on the convolutions of $n$-qudit channels, i.e., the quantum channels
acting on $n$-qudit systems. To study the convolution of quantum channels, we will use the Choi-Jamiołkowski isomorphism \cite{Choi75,Jamio72}.
By the Choi-Jamiołkowski isomorphism, any  quantum
channel $\Lambda$ from $\mathcal{H}_{A}$ to $\mathcal{H}_{A'}$ can be represented by its Choi state 
\begin{eqnarray*}
J_{\Lambda}=id_{A}\ot \Lambda (\proj{\Phi}) \;,
\end{eqnarray*}
where $| \Phi \rangle = \frac{1}{\sqrt{d^n}} \sum_{\vec j \in \Z_d^n}\ket{\vec j}_{A}\ot\ket{\vec j}_{A'} $.
For any input state $\rho$, the output state of the quantum channel $\Lambda(\rho)$ can be represented via the Choi state $J_{\Lambda}$ as 
\begin{eqnarray}\label{0127shi2}
\Lambda(\rho)
=d^n\Ptr{A}{J_{\Lambda} ( \rho^T_{A}\ot I_{A'})} \;.
\end{eqnarray}
On the other hand, for any operator $J$ on $\mathcal{H}_A\ot\mathcal{H}_{A'}$, 
the map 
\begin{eqnarray*}
\rho\to
d^n\Ptr{A}{J ( \rho^T_{A}\ot I_{A'})},
\end{eqnarray*}
is (1) completely positive if and only if $J$ is positive, (2) trace-preserving 
if and only if $\Ptr{A'}{J}=I_n/d^n$.

\begin{lem}[\bf Convolution of Choi states is Choi]
Given a nontrivial parameter matrix $G$, and  two quantum states $\rho_{AA'},
\sigma_{AA'}$ on $\mathcal{H}_A\ot\mathcal{H}_{A'}$ with 
$\Ptr{A'}{\rho_{AA'}} =\Ptr{A'}{\sigma_{AA'} } =I_A/d^n$. 
Then
$\rho_{AA'}\boxtimes\sigma_{AA'}$ also satisfies $\Ptr{A'}{\rho_{AA'}\boxtimes\sigma_{AA'}}=I_A/d^n$.
\end{lem}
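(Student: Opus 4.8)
The plan is to compute the partial trace of the convolution directly in terms of characteristic functions, using the convolution-multiplication duality (Proposition \ref{prop:chara_conv}) together with the fact that taking a partial trace over the $A'$ subsystem corresponds to restricting the characteristic function to the Weyl operators supported on $A$. Concretely, for a $2n$-qudit state $\tau_{AA'}$, one has $\Ptr{A'}{\tau_{AA'}} = \frac{1}{d^n}\sum_{(\vec p,\vec q)\in V^n} \Xi_{\tau_{AA'}}((\vec p,\vec 0_{A'}),(\vec q,\vec 0_{A'}))\, w(\vec p,\vec q)$, where the $A'$-labels are set to zero. So the statement $\Ptr{A'}{\tau_{AA'}} = I_A/d^n$ is equivalent to saying $\Xi_{\tau_{AA'}}$ vanishes on all such ``$A$-only'' arguments except the origin, where it equals $1$.

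First I would record this dictionary: the hypothesis $\Ptr{A'}{\rho_{AA'}} = I_A/d^n$ means $\Xi_{\rho_{AA'}}$ restricted to arguments with zero $A'$-component is $\delta_{(\vec p,\vec q),(\vec 0,\vec 0)}$, and similarly for $\sigma_{AA'}$. Here I must be careful about which tensor factors the convolution acts on: in Definition \ref{def:convo} the convolution $\boxtimes$ pairs the two $n$-qudit systems $\mathcal{H}_A$ and $\mathcal{H}_B$; when we write $\rho_{AA'}\boxtimes\sigma_{AA'}$ for Choi states, the convolution is applied ``slotwise'' on the $A$ registers and on the $A'$ registers separately (this is the natural reading, and is exactly the setup of the convolution of channels). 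So by Proposition \ref{prop:chara_conv} applied independently on the $A$-block and the $A'$-block of coordinates,
\begin{align*}
\Xi_{\rho_{AA'}\boxtimes\sigma_{AA'}}((\vec p_A,\vec p_{A'}),(\vec q_A,\vec q_{A'}))
&= \Xi_{\rho_{AA'}}\big((Ng_{11}\vec p_A,Ng_{11}\vec p_{A'}),(g_{00}\vec q_A,g_{00}\vec q_{A'})\big)\\
&\quad\times\ \Xi_{\sigma_{AA'}}\big((-Ng_{10}\vec p_A,-Ng_{10}\vec p_{A'}),(g_{01}\vec q_A,g_{01}\vec q_{A'})\big).
\end{align*}

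Next I would set the $A'$-components to zero, i.e.\ evaluate at $\vec p_{A'}=\vec 0$, $\vec q_{A'}=\vec 0$. Then both factors on the right have zero $A'$-argument, so by the hypothesis each factor equals the corresponding Kronecker delta: the first factor is $\delta_{Ng_{11}\vec p_A,\vec 0}\,\delta_{g_{00}\vec q_A,\vec 0}$ and the second is $\delta_{Ng_{10}\vec p_A,\vec 0}\,\delta_{g_{01}\vec q_A,\vec 0}$. Since $d$ is prime and $N=(\det G)^{-1}\neq 0$, and since $G$ is nontrivial so that at most one of the $g_{ij}$ vanishes, the product of these four deltas forces $\vec p_A = \vec 0$ and $\vec q_A = \vec 0$: indeed, if say $g_{00}=0$ then $g_{10},g_{01}\neq 0$, and $\delta_{Ng_{11}\vec p_A,\vec 0}\delta_{Ng_{10}\vec p_A,\vec 0}$ already gives $\vec p_A=\vec 0$ while $\delta_{g_{01}\vec q_A,\vec 0}$ gives $\vec q_A=\vec 0$; the other cases are symmetric. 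At $\vec p_A=\vec q_A=\vec 0$ the product of characteristic functions is $\Xi_{\rho_{AA'}}(\vec 0,\vec 0)\,\Xi_{\sigma_{AA'}}(\vec 0,\vec 0)=1$. Hence $\Xi_{\rho_{AA'}\boxtimes\sigma_{AA'}}$ restricted to $A$-only arguments is $\delta_{(\vec p_A,\vec q_A),(\vec 0,\vec 0)}$, which is exactly $\Ptr{A'}{\rho_{AA'}\boxtimes\sigma_{AA'}} = I_A/d^n$.

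The only real obstacle I anticipate is bookkeeping: making the ``slotwise'' action of $\boxtimes$ on the $AA'$ Choi states precise and verifying that the characteristic-function identity of Proposition \ref{prop:chara_conv} tensorizes correctly across the $A$- and $A'$-coordinate blocks. Once that is pinned down, the argument is a one-line consequence of nontriviality of $G$ (primality of $d$ plus ``at most one $g_{ij}=0$'') forcing the vanishing of the relevant arguments. I would present the proof in exactly this order: (i) the partial-trace$\leftrightarrow$characteristic-function dictionary, (ii) apply Proposition \ref{prop:chara_conv} slotwise, (iii) set $A'$-arguments to zero and invoke the hypotheses, (iv) use nontriviality of $G$ to conclude $\vec p_A=\vec q_A=\vec 0$, (v) evaluate at the origin to get the value $1$.
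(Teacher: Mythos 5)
Your proof is correct and follows essentially the same route as the paper: apply the convolution--multiplication duality (Proposition \ref{prop:chara_conv}) to the $2n$-qudit states, restrict the characteristic function to arguments with vanishing $A'$-components, and use nontriviality of $G$ (at most one $g_{ij}\equiv 0$, with $N\neq 0$) to force the remaining characteristic function to be the Kronecker delta at the origin, i.e.\ $\Ptr{A'}{\rho_{AA'}\boxtimes\sigma_{AA'}}=I_A/d^n$. Your worry about the ``slotwise'' bookkeeping is harmless, since the convolution of the $AA'$ states is just the $2n$-qudit convolution and the key unitary acts identically on every qudit, so the duality formula applies verbatim to the block coordinates exactly as you wrote it.
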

\begin{proof}
By Proposition \ref{prop:chara_conv}, we have
\begin{align*}
&\Xi_{\rho\boxtimes\sigma}(\vec{p}_A,\vec{p}_{A'}, \vec{q}_A, \vec{q}_{A'})\\
&\qquad=\Xi_{\rho}(Ng_{11}\vec{p}_A,Ng_{11}\vec{p}_{A'},g_{00} \vec{q}_A, g_{00}\vec{q}_{A'})\\
&\qquad\qquad \times\Xi_{\sigma}(-Ng_{10}\vec{p}_A,-Ng_{10}\vec{p}_{A'}, g_{01}\vec{q}_A, g_{01}\vec{q}_{A'})\;.
\end{align*}
Since 
$\Ptr{A'}{\rho_{AA'}} =\Ptr{A'}{\sigma_{AA'}} =I_A/d^n$, we have 
$$\Xi_{\rho}(\vec{p}_A,\vec{0}_{A'}, \vec{q}_A, \vec{0}_{A'})=0, \quad\Xi_{\sigma}(\vec{p}_A,\vec{0}_{A'}, \vec{q}_A, \vec{0}_{A'})=0, $$
for any $(\vec{p}_A,\vec{q}_{A})\neq (\vec 0, \vec 0)$.
The parameter matrix $G$ is nontrivial, which means at most one of $g_{11}, g_{10}, g_{01}, g_{00}$ is  $0\mod d$. 
Then 
\begin{align*}
\Xi_{\rho\boxtimes\sigma}(\vec{p}_A, \vec{0}_{A'}, \vec{q}_A, \vec{0}_{A'})
&=\Xi_{\rho}(Ng_{11}\vec{p}_A, \vec{0}_{A'},g_{00} \vec{q}_A, \vec{0}_{A'})\\ 
&\qquad \times
\Xi_{\sigma}(-Ng_{10}\vec{p}_A, \vec{0}_{A'}, g_{01}\vec{q}_A, \vec{0}_{A'})
=0,
\end{align*}
for any $(\vec{p}_A,\vec{q}_{A})\neq (\vec 0, \vec 0)$.
Therefore $$\Ptr{A'}{\rho_{AA'}\boxtimes\sigma_{AA'}} = \frac{1}{d^n} \sum_{\vec{p}_A, \vec{q}_A} \Xi_{\rho\boxtimes\sigma}(\vec{p}_A, \vec{0}_{A'}, \vec{q}_A, \vec{0}_{A'}) w( \vec p_A, \vec q_A) =I_A/d^n, $$
and the proof is complete.
\end{proof}

\subsection{Definition of convolution of channels}

\begin{Def}[\bf Convolution of channels]\label{def:con_chan}
Given two $n$-qudit channels $\Lambda_1$ and $\Lambda_2$, the convolution $\Lambda_1\boxtimes \Lambda_2$ 
is the quantum channel  with the Choi state
\begin{align*}
J_{ \Lambda_1\boxtimes \Lambda_2} := J_{\Lambda_1} \boxtimes J_{\Lambda_2}\;,
\end{align*}
where the state $ J_{\Lambda_1} \boxtimes J_{\Lambda_2}$ is the convolution of the Choi states  $J_{\Lambda_1}$ and $J_{\Lambda_2}$.
\end{Def}

In the definition above, we use the convolution of Choi states to induce the convolution of quantum channels. As this does not provide a direct formula for the convolution of quantum channels, we now derive such an expression.
Denote the right inverse of the channel $\mathcal{E}$ to be
\begin{eqnarray}\label{eq:inver_conv_chn}
\mathcal{E}^{-1}(\rho)
=U^\dag \left(\rho\ot \frac{I_n}{d^n}\right) U \;,
\end{eqnarray}
which satisfies that $\mathcal{E}\circ\mathcal{E}^{-1}=id$. In addition, we observe that
$\mathcal{E}^{-1}=\frac{1}{d^n}\mathcal{E}^\dag$.

\begin{thm}[\bf Convolution of quantum channels]
\label{thm:exact_conv_chan}
Given two $n$-qudit channels $\Lambda_1, \Lambda_2$, their convolution $\Lambda_1\boxtimes \Lambda_2$ 
is 
\begin{eqnarray}\label{eq:exp_box_chan}
\Lambda_1\boxtimes\Lambda_2(\cdot)
=\mathcal{E}_G\circ (\Lambda_1\ot\Lambda_2)\circ\mathcal{E}^{-1}_G(\cdot) \;,
\end{eqnarray}
where $\mathcal{E}_G$ is the convolutional channel  in 
\eqref{eq:convo_chn}, and  $\mathcal{E}^{-1}_G$ is the inverse of the convolutional channel in \eqref{eq:inver_conv_chn}.
\end{thm}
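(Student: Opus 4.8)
The plan is to verify the identity \eqref{eq:exp_box_chan} by checking that both sides have the same Choi state, since a channel is determined by its Choi state via the Choi–Jamio\l kowski isomorphism. By Definition \ref{def:con_chan}, the left-hand side has Choi state $J_{\Lambda_1\boxtimes\Lambda_2}=J_{\Lambda_1}\boxtimes J_{\Lambda_2}=\mathcal{E}(J_{\Lambda_1}\ot J_{\Lambda_2})$, where $\mathcal{E}$ is the convolutional channel of Definition \ref{def:con_Channel} and the convolution acts on the two copies of $\mathcal{H}_A\ot\mathcal{H}_{A'}$. So it suffices to show that the right-hand map $\mathcal{E}\circ(\Lambda_1\ot\Lambda_2)\circ\mathcal{E}^{-1}$ has this same Choi state.

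First I would record the ingredients I will use: the representation \eqref{0127shi2} of a channel output through its Choi state, the formula \eqref{eq:inver_conv_chn} for $\mathcal{E}^{-1}$ together with $\mathcal{E}\circ\mathcal{E}^{-1}=\mathrm{id}$, and the key unitary $U$ (Definition \ref{def:Cli_unitary}) implementing $\mathcal{E}$. The natural computation is to apply the right-hand map to the maximally entangled reference vector $\ket{\Phi}$ on $\mathcal{H}_A\ot\mathcal{H}_{A'}$ (or rather to $\proj{\Phi}$ tensored with the reference system that gets left alone in the Choi construction) and track how $\mathcal{E}^{-1}$ first dilates by adjoining $I_n/d^n$, how the product channel $\Lambda_1\ot\Lambda_2$ then acts, and how $\mathcal{E}$ contracts back. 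The cleanest route is probably to work at the level of the adjoint on Weyl operators: Lemma \ref{lem:adj_con} gives $\mathcal{E}^\dag(w(\vec p,\vec q))=w(Ng_{11}\vec p,g_{00}\vec q)\ot w(-Ng_{10}\vec p,g_{01}\vec q)$ and $\mathcal{E}^{-1}=\tfrac{1}{d^n}\mathcal{E}^\dag$, so one can expand $J_{\Lambda_1}$ and $J_{\Lambda_2}$ in the Weyl basis and compare the characteristic functions of $\mathcal{E}(J_{\Lambda_1}\ot J_{\Lambda_2})$ and of the Choi state of $\mathcal{E}\circ(\Lambda_1\ot\Lambda_2)\circ\mathcal{E}^{-1}$, invoking the convolution–multiplication duality (Proposition \ref{prop:chara_conv}) on both sides.

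Alternatively, and perhaps more transparently, I would argue structurally: note that for a channel $\Lambda$, the map $\rho\mapsto d^n\Ptr{A}{J_\Lambda\cdot\rho_A^T\ot I_{A'}}$ recovers $\Lambda(\rho)$, and that the convolution $\mathcal{E}$ is implemented by the Clifford unitary $U$ which acts ``locally'' in the sense of Proposition \ref{prop:comm_wel}, i.e. it intertwines Weyl conjugations on the two input factors with Weyl conjugations on the single output factor according to $G$. Since the Choi state of a composition of maps is obtained by applying the maps to half of $\ket{\Phi}$, and since $\mathcal{E}^{-1}$ restores the ancilla that $\mathcal{E}$ traces out (so that $\mathcal{E}\circ\mathcal{E}^{-1}=\mathrm{id}$), the composite $\mathcal{E}\circ(\Lambda_1\ot\Lambda_2)\circ\mathcal{E}^{-1}$ acting on an input $\rho$ first ``splits'' $\rho$ across the two registers via $U^\dagger(\rho\ot I_n/d^n)U$, applies $\Lambda_1\ot\Lambda_2$ independently, then recombines via $\Ptr{B}{U(\cdot)U^\dagger}$; taking Choi states of each side and using that $J_{\Lambda_1}\ot J_{\Lambda_2}$ is the Choi state of $\Lambda_1\ot\Lambda_2$ (on the doubled system, with a reshuffling of tensor factors) reduces the claim to the identity $J_{\Lambda_1\boxtimes\Lambda_2}=\mathcal{E}(J_{\Lambda_1}\ot J_{\Lambda_2})$, which is Definition \ref{def:con_chan}.

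\textbf{Main obstacle.} The genuine difficulty is bookkeeping of tensor-factor orderings: the Choi state lives on $\mathcal{H}_A\ot\mathcal{H}_{A'}$, the convolution $\mathcal{E}$ was defined on $\mathcal{H}_A^{\ot n}\ot\mathcal{H}_B^{\ot n}$, and when we convolve two Choi states we are secretly applying $U$ not on ``$A$ versus $B$'' but on ``(input+output of $\Lambda_1$) versus (input+output of $\Lambda_2$)'', so the partial transpose in \eqref{0127shi2} and the placement of the ancilla $I_n/d^n$ in \eqref{eq:inver_conv_chn} must be matched carefully to confirm that $\mathcal{E}^{-1}$ on the input register is compatible with the convolution of the output registers. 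I expect the quickest rigorous version to be the characteristic-function computation: expand everything via \eqref{0109shi5}, apply Proposition \ref{prop:chara_conv} and Lemma \ref{lem:adj_con}, and check that the two characteristic functions on $V^{2n}$ agree identically, which is a finite and essentially mechanical verification once the identification of registers is fixed.
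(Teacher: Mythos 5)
Your plan is correct and follows essentially the same route as the paper: the paper likewise expands the Choi states in the Weyl basis via \eqref{0127shi3}, uses $\mathcal{E}^{-1}=\tfrac{1}{d^n}\mathcal{E}^\dag$ together with Lemma \ref{lem:adj_con} (equivalently Proposition \ref{prop:chara_conv}), and matches the two sides coefficient-by-coefficient — the only cosmetic difference is that the paper compares the action of both maps on an arbitrary state $\rho$ expanded in Weyl operators rather than comparing characteristic functions of the Choi states, which amounts to the same finite verification. Your identification of the register-bookkeeping issue (the convolution of Choi states uses the key unitary on the doubled $2n$-qudit systems, not the $n$-qudit $\mathcal{E}$ of the statement) is exactly the point the paper's computation handles implicitly.
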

\begin{proof}
First, the Choi state can be rewritten in terms of Weyl operators as 
\begin{eqnarray}\label{0127shi3}
J_\Lambda = \frac{1}{d^{2n}} \sum_{(\vec p,\vec q)\in V^n  } w(\vec p,\vec q) \otimes \Lambda(w(-\vec p,\vec q )) \;,
\end{eqnarray}
and thus 
\begin{eqnarray*}
J_{\Lambda_1}\boxtimes J_{\Lambda_2}
=\frac{1}{d^{3n}}\sum_{\vec p,\vec q}
w(\vec{p},\vec{q})
\ot (
\Lambda_1(w(-Ng_{11}\vec p, g_{00}\vec q))
\boxtimes \Lambda_2(w(Ng_{10}\vec p, g_{01}\vec q ))
) \;.
\end{eqnarray*}
For any $n$-qudit state $\rho=\frac{1}{d^n}\sum_{\vec{p},\vec{q}}\Xi_{\rho}(\vec p, \vec q)w(\vec p, \vec q)$,
\begin{eqnarray*}
\Lambda_1\boxtimes\Lambda_2(\rho)
&=&d^n\Ptr{A}{J_{\Lambda_1}\boxtimes J_{\Lambda_2} \cdot \rho^T\ot I}\\
&=&  
\sum_{\vec{p},\vec{q}}\Xi_{\rho}(-\vec p, \vec q)
\Ptr{A}{J_{\Lambda_1}\boxtimes J_{\Lambda_2}w(-\vec p, -\vec q)\ot I}\\
&=&\frac{1}{d^{2n}}
\sum_{\vec{p},\vec{q}}\Xi_{\rho}(-\vec p, \vec q)
\Lambda_1(w(-Ng_{11}\vec p, g_{00}\vec q))
\boxtimes \Lambda_2(w(Ng_{10}\vec p, g_{01}\vec q ))\\
&=&\frac{1}{d^{2n}}
\sum_{\vec{p},\vec{q}}\Xi_{\rho}(\vec p, \vec q) \Lambda_1(w(Ng_{11}\vec p, g_{00}\vec q))
\boxtimes \Lambda_2(w(-Ng_{10}\vec p, g_{01}\vec q )) \;.
\end{eqnarray*}
Moreover, 
\begin{eqnarray*}
&&\mathcal{E}_G\circ (\Lambda_1\ot\Lambda_2) \circ\mathcal{E}^{-1}_G(\rho)\\
&=&\frac{1}{d^{2n}}\sum_{\vec{p},\vec{q}}\Xi_{\rho}(\vec p, \vec q)
\mathcal{E}_G\circ (\Lambda_{1}\ot\Lambda_2)\circ\mathcal{E}^\dag_G
(w(\vec p, \vec q))\\
&=&\frac{1}{d^{2n}}\sum_{\vec{p},\vec{q}}\Xi_{\rho}(\vec p, \vec q)
\mathcal{E}_G\circ (\Lambda_1\ot\Lambda_2) (w(Ng_{11}\vec p, g_{00}\vec q)\ot w(-Ng_{10}\vec p, g_{01}\vec q))\\
&=&\frac{1}{d^{2n}}\sum_{\vec{p},\vec{q}}\Xi_{\rho}(\vec p, \vec q)
\mathcal{E}_G(\Lambda_1(w(Ng_{11}\vec p, g_{00}\vec q))\ot \Lambda_2( w(-Ng_{10}\vec p, g_{01}\vec q)))\\
&=&\frac{1}{d^{2n}}\sum_{\vec{p},\vec{q}}\Xi_{\rho}(\vec p, \vec q)
\Lambda_1(w(Ng_{11}\vec p, g_{00}\vec q))
\boxtimes \Lambda_2(w(-Ng_{10}\vec p, g_{01}\vec q )) \;.
\end{eqnarray*}
Therefore, \eqref{eq:exp_box_chan} holds for any quantum state $\rho$.
\end{proof}

\begin{figure}[t]
  \center{\includegraphics[width=12cm]{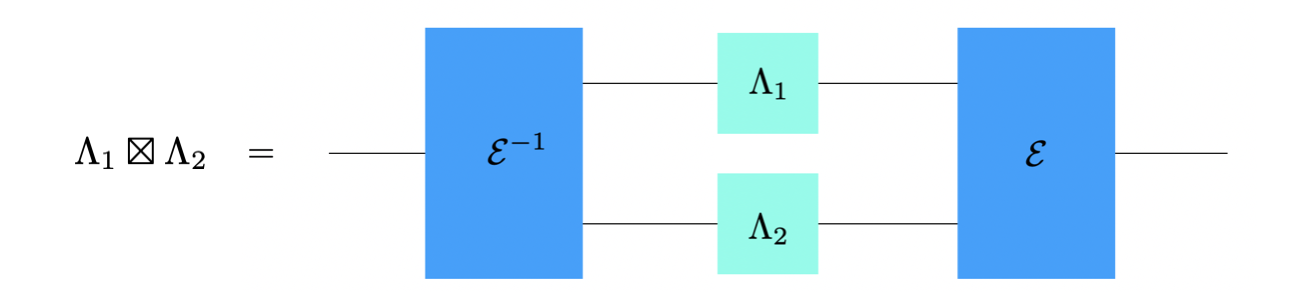} }    
  \caption{ The quantum circuit to realize the convolution of quantum channels.}
 \end{figure}

\begin{Def}[\bf Convolutional superchannel]
The convolutional superchannel $\Theta_G$ from $2n$-qudit to $n$-qudit is
\begin{eqnarray}
\Theta_G(\cdot)=\mathcal{E}_G\circ (\cdot)\circ \mathcal{E}^{-1}_G \;.
\end{eqnarray}
Hence $\Lambda_1\boxtimes \Lambda_2=\Theta_G(\Lambda_1\ot\Lambda_2)$ is the output channel of $\Theta_G$.
\end{Def}

The completely-depolarizing channel $\mathcal{R}$ on $n$-qudit systems is
\begin{eqnarray}\label{0212shi4}
\mathcal{R}(\rho)
=\Tr{\rho} \frac{I_n}{d^n} \;.
\end{eqnarray}
Then we have the following result.

\begin{prop}\label{prop:iden_pre}
Let  $\Lambda$ be an $n$-qudit channel.

(1) If $G$ is odd-parity positive, then 
\begin{eqnarray*}
\Lambda\boxtimes \mathcal{R}=\mathcal{R} \;.
\end{eqnarray*}

(2)  If $G$ is even-parity positive, then 
\begin{eqnarray*}
\mathcal{R}\boxtimes \Lambda=\mathcal{R} \;.
\end{eqnarray*}
\end{prop}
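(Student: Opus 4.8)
The plan is to reduce this to the characteristic-function formula of Proposition \ref{prop:chara_conv} (equivalently, Lemma \ref{lem:conv_iden}) together with the fact that the completely depolarizing channel $\mathcal{R}$ has a particularly simple Choi state, namely $J_{\mathcal{R}} = \frac{I_A}{d^n}\ot\frac{I_{A'}}{d^n} = \frac{I_{AA'}}{d^{2n}}$. First I would record that, in the Weyl expansion \eqref{0127shi3}, the only nonzero coefficient of $J_{\mathcal{R}}$ is at $(\vec p,\vec q)=(\vec 0,\vec 0)$, i.e. $\Xi_{J_{\mathcal{R}}}(\vec p_A,\vec p_{A'},\vec q_A,\vec q_{A'})=\delta_{\vec p_A,\vec 0}\delta_{\vec p_{A'},\vec 0}\delta_{\vec q_A,\vec 0}\delta_{\vec q_{A'},\vec 0}$. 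By Definition \ref{def:con_chan}, $\Lambda\boxtimes\mathcal{R}$ is the channel with Choi state $J_{\Lambda}\boxtimes J_{\mathcal{R}}$, so it suffices to compute this convolution of states.

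For part (1), assume $G$ is odd-parity positive, so neither $g_{01}$ nor $g_{10}$ is $0\bmod d$. By Proposition \ref{prop:chara_conv} applied on the $2n$-qudit system $\mathcal{H}_A\ot\mathcal{H}_{A'}$,
\begin{align*}
\Xi_{J_{\Lambda}\boxtimes J_{\mathcal{R}}}(\vec p_A,\vec p_{A'},\vec q_A,\vec q_{A'})
&= \Xi_{J_{\Lambda}}(Ng_{11}\vec p_A,Ng_{11}\vec p_{A'},g_{00}\vec q_A,g_{00}\vec q_{A'})\\
&\quad\times \Xi_{J_{\mathcal{R}}}(-Ng_{10}\vec p_A,-Ng_{10}\vec p_{A'},g_{01}\vec q_A,g_{01}\vec q_{A'}).
\end{align*}
The second factor vanishes unless $-Ng_{10}\vec p_A=\vec 0$, $-Ng_{10}\vec p_{A'}=\vec 0$, $g_{01}\vec q_A=\vec 0$ and $g_{01}\vec q_{A'}=\vec 0$; since $d$ is prime, $N\neq 0$, and $g_{10},g_{01}\neq 0\bmod d$, these force $\vec p_A=\vec p_{A'}=\vec q_A=\vec q_{A'}=\vec 0$. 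Hence $\Xi_{J_{\Lambda}\boxtimes J_{\mathcal{R}}}$ is supported only at the origin, where its value is $\Xi_{J_{\Lambda}}(\vec 0,\vec 0,\vec 0,\vec 0)\,\Xi_{J_{\mathcal{R}}}(\vec 0,\vec 0,\vec 0,\vec 0)=1$ (both Choi states have unit trace). Therefore $J_{\Lambda}\boxtimes J_{\mathcal{R}}=\frac{I_{AA'}}{d^{2n}}=J_{\mathcal{R}}$, and since the Choi correspondence is a bijection, $\Lambda\boxtimes\mathcal{R}=\mathcal{R}$. Part (2) is identical, now using that $G$ even-parity positive means $g_{00},g_{11}\neq 0\bmod d$, so the first factor $\Xi_{J_{\Lambda}}(Ng_{11}\vec p_A,\dots,g_{00}\vec q_{A'})$ already forces all arguments to be zero.

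There is no real obstacle here; the only point requiring a little care is to make sure the parity hypotheses are invoked against the correct factor (the $g_{10},g_{01}$ entries kill the $\mathcal{R}$-factor in the odd-parity case, while the $g_{00},g_{11}$ entries kill it in the even-parity case), and that primality of $d$ plus $\det G\neq 0\bmod d$ guarantee $N$ and the relevant entries are invertible in $\mathbb{Z}_d$. Alternatively, one can bypass Choi states entirely and argue directly from Theorem \ref{thm:exact_conv_chan}: $\Lambda\boxtimes\mathcal{R} = \mathcal{E}\circ(\Lambda\ot\mathcal{R})\circ\mathcal{E}^{-1}$, and since $\mathcal{R}(\cdot)=\Tr{\cdot}\,I_n/d^n$ replaces the second tensor factor by the maximally mixed state, $(\Lambda\ot\mathcal{R})\circ\mathcal{E}^{-1}(\rho)$ has the form $(\text{something})\ot\frac{I_n}{d^n}$, and then Lemma \ref{lem:conv_iden}(1) gives $\mathcal{E}$ of that equal to $I_n/d^n$; this is the same computation packaged operator-side rather than characteristic-function-side. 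I would present the characteristic-function version as the main proof since it is the shortest.
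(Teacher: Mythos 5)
Your proof is correct, but it takes a different route from the paper's. The paper never touches Choi states in this argument: it evaluates the channel $\Lambda\boxtimes\mathcal{R}$ directly on each Weyl operator via the exact formula of Theorem \ref{thm:exact_conv_chan}, i.e. $\Lambda\boxtimes\mathcal{R}=\mathcal{E}\circ(\Lambda\ot\mathcal{R})\circ\mathcal{E}^{-1}$, uses Lemma \ref{lem:adj_con} to push $w(\vec p,\vec q)$ through $\mathcal{E}^\dag$, observes that $\mathcal{R}$ applied to the second Weyl factor produces $\delta_{\vec p,\vec 0}\delta_{\vec q,\vec 0}$ precisely because $G$ is odd-parity positive, and then finishes with Lemma \ref{lem:conv_iden}. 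Your main argument instead works entirely at the level of Choi states: by Definition \ref{def:con_chan} it suffices to show $J_\Lambda\boxtimes J_{\mathcal{R}}=J_{\mathcal{R}}=I_{AA'}/d^{2n}$, which follows from Proposition \ref{prop:chara_conv} on the $2n$-qudit system plus the fact that $\Xi_{J_{\mathcal{R}}}$ is a delta at the origin and the relevant entries of $G$ are invertible mod the prime $d$. This is logically cleaner in one respect: it does not need Theorem \ref{thm:exact_conv_chan} at all, only the definition of channel convolution and the state-level convolution--multiplication duality, whereas the paper's route (which you correctly identify as your ``alternative'') gives the statement as an identity of channel actions on the operator basis, in the same spirit as its other channel computations. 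One small slip to fix: in part (2) the factor that forces all arguments to vanish is $\Xi_{J_{\mathcal{R}}}(Ng_{11}\vec p_A,Ng_{11}\vec p_{A'},g_{00}\vec q_A,g_{00}\vec q_{A'})$, not $\Xi_{J_{\Lambda}}(\cdots)$ as written; your concluding parenthetical shows you mean the $\mathcal{R}$-factor, so this is only a typographical correction, not a gap.
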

\begin{proof}
First, we prove the case where $G$ is odd-parity positive.
For any Weyl operator $w(\vec p,\vec q)$,
\begin{eqnarray*}
\Lambda\boxtimes \mathcal{R}(w(\vec p, \vec q))
&=&\mathcal{E}_G\circ \Lambda\ot \mathcal{R}\circ\mathcal{E}^{-1}_G(w(\vec p, \vec q))\\
&=&\frac{1}{d^n}\mathcal{E}_G\circ \Lambda\ot \mathcal{R}\circ\mathcal{E}^{\dag}_G(w(\vec p, \vec q))\\
&=&\frac{1}{d^n}
\mathcal{E}_G\circ \Lambda\ot \mathcal{R}(w(Ng_{11}\vec p, g_{00}\vec q)\ot w(-Ng_{10}\vec p, g_{01}\vec q))\\
&=&\mathcal{E}_G\left(\Lambda(w(Ng_{11}\vec p, g_{00}\vec q))
\ot \frac{I_n}{d^n}
\right)\delta_{\vec p,\vec 0}\delta_{\vec q, \vec 0}\\
&=&\Lambda(I_n)
\boxtimes \frac{I_n}{d^n}
\delta_{\vec p,\vec 0}\delta_{\vec q, \vec 0}\\
&=& I_n
\delta_{\vec p,\vec 0}\delta_{\vec q, \vec 0}\\
&=&\mathcal{R}(w(\vec p, \vec q)) \;,
\end{eqnarray*}
where the fourth equality comes from the fact that $G$ is odd-parity positive, and  the fifth equality comes from Lemma \ref{lem:conv_iden}.
Similar arguments also work for the case where $G$ is even-parity positive.
\end{proof}

\begin{prop}[\bf Convolutional stability for channels]
Given two stabilizer channels $\Lambda_1, \Lambda_2$, the convolution 
$\Lambda_1\boxtimes\Lambda_2$ is a stabilizer channel.
\end{prop}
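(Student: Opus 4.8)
The goal is to show that if $\Lambda_1,\Lambda_2$ are stabilizer channels then $\Lambda_1\boxtimes\Lambda_2$ is a stabilizer channel, i.e.\ $(\Lambda_1\boxtimes\Lambda_2)(\tau)$ is a stabilizer state for every stabilizer state $\tau$. The plan is to use the exact formula from Theorem~\ref{thm:exact_conv_chan}, namely $\Lambda_1\boxtimes\Lambda_2=\mathcal{E}\circ(\Lambda_1\ot\Lambda_2)\circ\mathcal{E}^{-1}$, and chase a stabilizer input through each of the three stages, checking that stabilizer-ness is preserved at every step.

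\textbf{Step 1: the inverse convolutional channel sends stabilizer states to stabilizer states.} By \eqref{eq:inver_conv_chn}, $\mathcal{E}^{-1}(\tau)=U^\dag(\tau\ot I_n/d^n)U$. The maximally mixed state $I_n/d^n$ is a stabilizer state (it is a convex combination of the computational-basis states, or equivalently an MSPS for the trivial subgroup), so $\tau\ot I_n/d^n$ is a stabilizer state on the $2n$-qudit system. Since $U$ is a Clifford unitary (Proposition~\ref{prop:U_wely}), conjugation by $U^\dag$ maps stabilizer states to stabilizer states, so $\mathcal{E}^{-1}(\tau)$ is a $2n$-qudit stabilizer state.

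\textbf{Step 2: the tensor-product channel preserves this.} Write the $2n$-qudit stabilizer state $\mathcal{E}^{-1}(\tau)$ as a convex combination $\sum_k p_k\,\rho_k$ of pure stabilizer states $\rho_k$ on $\mathcal{H}_A^{\ot n}\ot\mathcal{H}_B^{\ot n}$. For each pure stabilizer state we must argue $(\Lambda_1\ot\Lambda_2)(\rho_k)$ is a stabilizer state. The subtle point here is that $\rho_k$ need not be a product state across the $A$/$B$ cut, so I cannot simply apply the definitions of $\Lambda_1,\Lambda_2$ factorwise. The clean way around this is to use the Choi/characteristic-function bookkeeping already developed: expand $\rho_k=\frac{1}{d^{2n}}\sum_{(\vec p_1,\vec q_1),(\vec p_2,\vec q_2)}\Xi_{\rho_k}(\vec p_1,\vec p_2,\vec q_1,\vec q_2)\,w(\vec p_1,\vec q_1)\ot w(\vec p_2,\vec q_2)$ and use $(\Lambda_1\ot\Lambda_2)(w(\vec p_1,\vec q_1)\ot w(\vec p_2,\vec q_2))=\Lambda_1(w(\vec p_1,\vec q_1))\ot\Lambda_2(w(\vec p_2,\vec q_2))$; alternatively, and more economically, one notes that a stabilizer channel tensored with a stabilizer channel is again a stabilizer channel on the combined system — this follows because a $2n$-qudit stabilizer state can be produced by a Clifford circuit acting on a product of single-qudit stabilizer inputs, then $(\Lambda_1\ot\Lambda_2)$ applied after such a circuit can be analysed by pushing the Clifford layer around using the fact (Proposition~\ref{prop:comm_wel}-style commutation, or directly the definition of stabilizer channel applied to each basis stabilizer state). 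The honest shortcut is: the set of $2n$-qudit stabilizer states is the convex hull of pure stabilizer states, each pure stabilizer state on $\mathcal{H}_A\ot\mathcal{H}_B$ is $V(\ket{0}\!\bra{0}^{\ot 2n})V^\dag$ for a Clifford $V$; but this still entangles the cut, so the cleanest rigorous route is the characteristic-function expansion combined with linearity of $\Lambda_1\ot\Lambda_2$ and the observation that $\Lambda_i(w(\vec p,\vec q))$ is, up to normalization, a difference of stabilizer states (since $\Lambda_i$ maps the stabilizer states $\frac{1}{d^n}(I\pm\,\mathrm{Re}/\mathrm{Im}\;w)$-type combinations to stabilizer states). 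I expect this to be \emph{the main obstacle}: making precise why ``stabilizer channel $\ot$ stabilizer channel = stabilizer channel'' without circular reasoning. The resolution I would adopt is to invoke that each $\Lambda_i$ sends every stabilizer state to a stabilizer state, hence by the Choi-state characterization $J_{\Lambda_i}$ is a stabilizer state (take the input half of $\ket{\Phi}$ together with a Clifford; $\ket\Phi$ is a stabilizer state, so $J_{\Lambda_i}=(id\ot\Lambda_i)(\proj\Phi)$ is a stabilizer state), and then $J_{\Lambda_1}\ot J_{\Lambda_2}$ is a stabilizer state, i.e.\ the Choi state of $\Lambda_1\ot\Lambda_2$ is a stabilizer state; a channel whose Choi state is a stabilizer state maps stabilizer inputs to stabilizer outputs by formula \eqref{0127shi2} (partial trace of a product of stabilizer-type operators).

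\textbf{Step 3: close the loop.} Apply $\mathcal{E}=\Ptr{B}{U(\cdot)U^\dag}$: $U$ is Clifford, so $U(\,\cdot\,)U^\dag$ preserves stabilizer states, and the partial trace of a stabilizer state is a stabilizer state (partial traces of stabilizer states remain in the convex hull of stabilizer states — e.g.\ because $\Ptr{B}{\cdot}$ is itself a stabilizer channel). Hence $(\Lambda_1\boxtimes\Lambda_2)(\tau)=\mathcal{E}\big((\Lambda_1\ot\Lambda_2)(\mathcal{E}^{-1}(\tau))\big)$ is a stabilizer state for every stabilizer $\tau$, which is exactly the statement that $\Lambda_1\boxtimes\Lambda_2$ is a stabilizer channel. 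The only facts beyond the excerpt that I am leaning on are the elementary closure properties of the set of stabilizer states under (i) tensor product with a stabilizer state, (ii) Clifford conjugation, and (iii) partial trace — all standard and each provable in a line; everything else is Theorem~\ref{thm:exact_conv_chan}, Proposition~\ref{prop:U_wely}, and the Choi-isomorphism identities \eqref{0127shi2}–\eqref{0127shi3} already in hand.
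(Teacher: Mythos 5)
Your skeleton is exactly the paper's: the paper proves this proposition in one line by invoking Theorem \ref{thm:exact_conv_chan} together with the assertion that $\mathcal{E}$ and $\mathcal{E}^{-1}$ are stabilizer channels, so that $\Lambda_1\boxtimes\Lambda_2=\mathcal{E}\circ(\Lambda_1\ot\Lambda_2)\circ\mathcal{E}^{-1}$ is a stabilizer channel. Your Steps 1 and 3 correctly supply the content behind that assertion (tensoring with $I_n/d^n$, conjugation by the Clifford $U$ or $U^\dag$, and partial trace all preserve the set of stabilizer states), and they match the paper's implicit reasoning.

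The problem is Step 2, and it is a genuine gap rather than a stylistic one: the justification you ultimately adopt is circular. To conclude that $J_{\Lambda_i}=(\mathrm{id}\ot\Lambda_i)(\proj{\Phi})$ is a stabilizer state you need $\mathrm{id}\ot\Lambda_i$ to map the $A$--$B$ entangled stabilizer state $\proj{\Phi}$ to a stabilizer state, which is an instance of the very claim you are trying to establish (take $\Lambda_1=\mathrm{id}$, a stabilizer channel). Under the paper's definition, ``stabilizer channel'' quantifies only over $n$-qudit inputs, and the implication ``stabilizer-preserving $\Rightarrow$ the Choi state is a stabilizer state'' (equivalently, stabilizer-preserving $\Rightarrow$ completely stabilizer-preserving) is not an elementary closure property; it is exactly the distinction between stabilizer-preserving and completely stabilizer-preserving maps, and it cannot be obtained simply by evaluating on $\proj{\Phi}$. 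Your fallback remarks (termwise action on the Weyl expansion, ``$\Lambda_i(w(\vec p,\vec q))$ is a difference of stabilizer states'') do not close this, since stabilizer-ness of the output is not detected term by term in the characteristic-function expansion. To be fair, the paper's one-line proof silently assumes the same tensor stability of stabilizer channels, so you have put your finger on the real issue; the clean repair is to argue at the level of Choi states from the outset: if $J_{\Lambda_1}$ and $J_{\Lambda_2}$ are stabilizer states, then $J_{\Lambda_1}\ot J_{\Lambda_2}$ is, after a (Clifford) reordering of tensor factors, the Choi state of $\Lambda_1\ot\Lambda_2$, and then \eqref{0127shi2} together with the fact the paper uses in proving that mean channels are stabilizer channels (that $\Ptr{A}{\proj{\psi}(\proj{\phi}\ot I)}$ is a convex combination of pure stabilizer states) shows that $\Lambda_1\ot\Lambda_2$, hence $\Lambda_1\boxtimes\Lambda_2$, maps stabilizer states to stabilizer states. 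What is missing from your write-up (and left implicit in the paper) is the bridge from the stated definition of a stabilizer channel to that Choi-state property; either prove it or strengthen the hypothesis to ``the Choi states $J_{\Lambda_i}$ are stabilizer states.''
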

\begin{proof}
Since  both $\mathcal{E}_G$ and $\mathcal{E}^{-1}_G$ are stabilizer channels, Theorem 
\ref{thm:exact_conv_chan} implies that  $\Lambda_1\boxtimes\Lambda_2=\mathcal{E}_G\circ \Lambda_1\ot \Lambda_2\circ\mathcal{E}^{-1}_G$
is a stabilizer channel for any stabilizer channels $\Lambda_1, \Lambda_2$.
\end{proof}

\subsection{Mean channel}
To study the quantum central limit theorem for channels, let us introduce the mean channel; this plays a similar role to the mean state.

\begin{lem}
Let $\rho_{AA'}$ be a quantum state with $\Ptr{A'}{\rho_{AA'}}= I_A/d^n$.
Then the MS $\CMM(\rho_{AA'})$ also satisfies $\Ptr{A'}{\CMM(\rho_{AA'})}= I_A/d^n$.
\end{lem}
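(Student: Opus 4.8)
The plan is to translate the hypothesis and conclusion into statements about characteristic functions, where the mean state has a transparent description.  Recall that for any $2n$-qudit state $\rho_{AA'}$ the partial trace over $A'$ has characteristic function $\Xi_{\Ptr{A'}{\rho_{AA'}}}(\vec p_A,\vec q_A) = \Xi_{\rho_{AA'}}(\vec p_A,\vec 0_{A'},\vec q_A,\vec 0_{A'})$, and the maximally mixed state $I_A/d^n$ has characteristic function $\delta_{\vec p_A,\vec 0}\,\delta_{\vec q_A,\vec 0}$.  So the hypothesis $\Ptr{A'}{\rho_{AA'}} = I_A/d^n$ is equivalent to
\begin{align*}
\Xi_{\rho_{AA'}}(\vec p_A,\vec 0_{A'},\vec q_A,\vec 0_{A'}) = 0 \quad\text{for all } (\vec p_A,\vec q_A)\neq(\vec 0,\vec 0),
\end{align*}
and likewise the desired conclusion is equivalent to $\Xi_{\CMM(\rho_{AA'})}(\vec p_A,\vec 0_{A'},\vec q_A,\vec 0_{A'}) = 0$ for all $(\vec p_A,\vec q_A)\neq(\vec 0,\vec 0)$.

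The key step is then just Definition~\ref{def:mean_state}: the characteristic function of $\CMM(\rho_{AA'})$ agrees with $\Xi_{\rho_{AA'}}$ at points where $|\Xi_{\rho_{AA'}}| = 1$ and vanishes elsewhere.  In particular $\mathrm{Supp}(\Xi_{\CMM(\rho_{AA'})})\subseteq \mathrm{Supp}(\Xi_{\rho_{AA'}})$.  Since by hypothesis $\Xi_{\rho_{AA'}}$ vanishes on the set $\{(\vec p_A,\vec 0_{A'},\vec q_A,\vec 0_{A'}): (\vec p_A,\vec q_A)\neq(\vec 0,\vec 0)\}$, so does $\Xi_{\CMM(\rho_{AA'})}$.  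Together with $\Xi_{\CMM(\rho_{AA'})}(\vec 0) = \Tr{\CMM(\rho_{AA'})} = 1$ (the mean state is always a normalized state, cf.\ Proposition~\ref{0109lem1}), this gives $\Ptr{A'}{\CMM(\rho_{AA'})} = I_A/d^n$, completing the argument.

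This lemma is essentially immediate once the characteristic‑function dictionary is in place; I do not anticipate a real obstacle.  The only point requiring a line of care is verifying the partial‑trace/characteristic‑function identity and the support inclusion for $\CMM$, both of which follow directly from the definitions already recorded in \S\ref{sec:pre} and \S3.
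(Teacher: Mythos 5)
Your argument is correct and is essentially identical to the paper's proof: both translate the hypothesis $\Ptr{A'}{\rho_{AA'}}=I_A/d^n$ into the vanishing of $\Xi_{\rho_{AA'}}(\vec p_A,\vec 0,\vec q_A,\vec 0)$ for $(\vec p_A,\vec q_A)\neq(\vec 0,\vec 0)$, and then use the support inclusion built into Definition~\ref{def:mean_state} to conclude the same vanishing for $\Xi_{\CMM(\rho_{AA'})}$. The extra remark about the value at the origin is a harmless addition; no gap here.
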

\begin{proof}
The condition $\Ptr{A'}{\rho_{AA'}}= I_A/d^n$ 
means 
that 
\begin{eqnarray*}
\Xi_{\rho_{AA'}}(\vec p_A, \vec 0, \vec q_A, \vec 0) 
=0\;, \quad \forall (\vec{p}_A, \vec q_A)\neq (\vec 0, \vec 0)\;.
\end{eqnarray*}
Thus, by the definition of  $\CMM(\rho_{AA'})$,
\begin{eqnarray*}
\Xi_{\CMM(\rho_{AA'})}(\vec p_A, \vec 0, \vec q_A, \vec 0)  
=0\;,\quad \forall(\vec{p}_A, \vec q_A)\neq (\vec 0, \vec 0)\;,
\end{eqnarray*}
that is $\Ptr{A'}{\CMM(\rho_{AA'})}= I_A/d^n$. 
\end{proof}

Based on the above lemma, we have the following 
definition of the mean channel.
\begin{Def}[\bf Mean channel]
Given a quantum channel  $\Lambda$, 
the mean channel 
 $\CMM(\Lambda)$  is  the quantum channel with the 
 Choi state $J_{\CMM(\Lambda)} = \CMM(J_\Lambda)$, where  $\CMM(J_\Lambda)$
 is the MS of the Choi state $J_{\Lambda}$.
\end{Def}

\begin{prop}[\bf Mean channels are stabilizer channels]
For any quantum channel $\Lambda$, 
 $\mathcal{M}(\Lambda)$ is a stabilizer 
 channel.
\end{prop}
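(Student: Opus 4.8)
The plan is to show that $\CMM(\Lambda)$ is a stabilizer channel by unwinding the definitions and invoking the results already established for states. First I would recall that by definition $\CMM(\Lambda)$ is the channel whose Choi state is $J_{\CMM(\Lambda)} = \CMM(J_\Lambda)$; by the lemma just proved, $\CMM(J_\Lambda)$ indeed has the correct marginal $\Ptr{A'}{\CMM(J_\Lambda)} = I_A/d^n$, so it is the Choi state of a genuine quantum channel, and $\CMM(\Lambda)$ is well defined.

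Next I would use Proposition \ref{0109lem1}, which tells us that for any state $\rho$ the mean state $\CMM(\rho)$ is an MSPS; applied to $\rho = J_\Lambda$, this says $\CMM(J_\Lambda)$ is a minimal stabilizer-projection state, hence in particular a stabilizer state on $\mathcal{H}_A \ot \mathcal{H}_{A'}$. So the goal reduces to the general fact: \emph{a quantum channel whose Choi state is a stabilizer state is a stabilizer channel.} To prove this, let $\omega$ be an arbitrary pure stabilizer state on $\mathcal{H}^{\ot n}$ fed into $\CMM(\Lambda)$; using formula \eqref{0127shi2}, $\CMM(\Lambda)(\omega) = d^n\Ptr{A}{J_{\CMM(\Lambda)} \cdot \omega^T_A \ot I_{A'}}$. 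Since $\omega$ is a stabilizer state, so is $\omega^T$ (transpose in the Pauli $Z$ basis maps stabilizer states to stabilizer states, as conjugation/transpose permutes Weyl operators up to phase), and $\omega^T \ot I_{A'}$ is a positive multiple of a stabilizer \emph{projection}; the product $J_{\CMM(\Lambda)} \cdot (\omega^T_A \ot I_{A'})$ then lies in the stabilizer formalism, and the partial trace of a (subnormalized) stabilizer operator over a subsystem is again a subnormalized stabilizer operator. Hence $\CMM(\Lambda)(\omega)$ is a stabilizer state. Since stabilizer states are convex combinations of pure ones and channels are linear, $\CMM(\Lambda)$ maps every stabilizer state to a stabilizer state, i.e., it is a stabilizer channel.

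The main obstacle is making the step ``partial trace / multiplication keeps us inside the stabilizer set'' rigorous rather than heuristic. The cleanest route is to avoid any appeal to the stabilizer \emph{operator} calculus and instead argue at the level of the convolutional channel structure: by Theorem \ref{thm:exact_conv_chan} and the observation that $\CMM(\Lambda)$ — having a stabilizer Choi state — can be realized as the concatenation of Clifford unitaries with ancillas in stabilizer states (this is the standard characterization: a CPTP map with stabilizer Choi matrix is exactly a stabilizer channel, which the paper has effectively been using for $\mathcal{E}$ and $\mathcal{E}^{-1}$). Concretely, since $J_{\CMM(\Lambda)}$ is an MSPS and Clifford unitaries act transitively enough on such objects (Lemma \ref{lem:comm_mean} and the Clifford normal forms used earlier), one writes $\CMM(\Lambda)(\cdot) = \Ptr{E}{W(\,\cdot \ot \tau_E\,)W^\dag}$ for a Clifford $W$ and a stabilizer state $\tau_E$, and then stabilizer-preservation is immediate because Clifford unitaries, tensoring with stabilizer states, and partial traces each preserve the stabilizer property. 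I would present the short version in the paper body and, if space permits, relegate the ``stabilizer Choi state $\Rightarrow$ stabilizer channel'' lemma to the appendix, since it is the one genuinely non-formal ingredient.
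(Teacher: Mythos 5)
Your main line of argument is essentially the paper's own proof: note that $J_{\CMM(\Lambda)}=\CMM(J_\Lambda)$ is an MSPS (Proposition \ref{0109lem1}), hence a convex combination of pure stabilizer states, plug this into the Choi formula \eqref{0127shi2}, use that the transpose of a pure stabilizer state is again one, and reduce everything by linearity to the single claim that $\Ptr{A}{\proj{\psi}\,(\proj{\phi}\ot I)}$ is (up to normalization) a mixture of pure stabilizer states whenever $\psi$ is a pure stabilizer state on $2n$ qudits and $\phi$ one on $n$ qudits. The paper closes exactly this step, stating it as an elementary verification: projecting a pure stabilizer state by a pure stabilizer projection on subsystem $A$ and tracing out $A$ leaves a (subnormalized) pure stabilizer state on $A'$, which follows from the standard stabilizer update rules (or from a characteristic-function computation). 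So up to that point your proposal and the paper coincide, and the step you flag as the ``obstacle'' is genuinely the only content of the proof; it is cleaner to prove it directly in this rank-one form than to speak of products of operators ``lying in the stabilizer formalism,'' since $J\cdot(\proj{\phi}\ot I)$ is not Hermitian, and one should first rewrite $\Ptr{A}{J(\proj{\phi}\ot I)}=\Ptr{A}{(\proj{\phi}\ot I)J(\proj{\phi}\ot I)}$ before decomposing.

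The genuine gap is in your proposed alternative route. The assertion that a CPTP map with stabilizer Choi state ``is exactly a stabilizer channel'' in the sense of admitting a dilation $\CMM(\Lambda)(\cdot)=\Ptr{E}{W(\,\cdot\ot\tau_E)W^\dag}$ with $W$ Clifford and $\tau_E$ a stabilizer state is not a standard fact and is not something the paper uses for $\mathcal{E}$ or $\mathcal{E}^{-1}$ (those are stabilizer channels because they are explicitly built from a Clifford unitary and a maximally mixed ancilla, not because of a converse Choi-state characterization). Channels with stabilizer Choi states are the completely stabilizer-preserving maps; that this class coincides with Clifford-circuit-plus-stabilizer-ancilla channels is not established (and is generally expected to fail), so resting the proof on this ``characterization'' would be circular or wrong. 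Moreover it is stronger than needed: for the proposition, with the paper's definition of stabilizer channel as one mapping stabilizer states to stabilizer states, only the implication ``stabilizer Choi state $\Rightarrow$ stabilizer-preserving'' is required, and that is precisely the direct projection argument above. Drop the dilation claim and prove the rank-one lemma instead.
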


\begin{proof}
First, it is easy to verify that, for any $2n$-qudit pure stabilizer state $\proj{\psi}$ and any $n$-qudit pure stabilizer state $\phi$, the partial trace $\Ptr{A}{\proj{\psi} (\proj{\phi}\ot I)}$ can be written as a convex combination of pure stabilizer states. 
Since the MS $\mathcal{M}(J_{\Lambda}) $ can be written as a convex combination of pure stabilizer states,  
for any pure stabilizer state $\psi$,  the output state of the channel is
\begin{eqnarray*}
\mathcal{M}(\Lambda)(\proj{\psi})
=d^n\Ptr{A}{\mathcal{M}(J_{\Lambda})\proj{\psi}^T\ot I}
\end{eqnarray*}
can be written as  a convex combination of pure stabilizer states. Hence the mean channel $\mathcal{M}(\Lambda)$ is a stabilizer 
channel.
\end{proof}

We now consider the extremality of the mean channel. For this, we need to introduce the information measures for quantum 
channels. Specifically, we consider
the R\'enyi entropy of a quantum channel,  introduced in Ref. \cite{Gour21}. This can be used to quantify the capacity of merging   quantum channels.

\begin{Def}
Let $\Lambda$ be an $n$-qudit channel.
The R\'{e}nyi entropy of order $\alpha$ is
\begin{align*}
H_\alpha(\Lambda) = n\log d - D_\alpha(\Lambda\| \CRR) \;,
\end{align*}
where the channel $\CRR$ is the completely depolarizing channel \eqref{0212shi4}, and 
$$D_{\alpha}(\Lambda||\CRR):=\sup_{\rho_{AR}}D_{\alpha}(\Lambda\ot I_R(\rho_{AR})||\CRR\ot I_R(\rho_{AR}))\;,$$
with $\rho_{AR}$ running over bipartite states on 
$\mathcal{H}_A\ot \mathcal{H}_R$ for any ancilla system $R$.
\end{Def}

\begin{thm}\label{thm:exremality}
Let the local dimension $d$ be a prime number $\geq 7$, let $\alpha\in [1/2,+\infty]$, and let $\Lambda$ be an $n$-qudit channel. 
Then
\begin{eqnarray*}
H_{\alpha}(\CMM(\Lambda))
\geq H_{\alpha}( \Lambda).
\end{eqnarray*}
\end{thm}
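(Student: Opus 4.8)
The plan is to reduce the channel statement to the already-proven state statement via the Choi isomorphism, exactly as was done for the other channel results in this section. Recall that $H_\alpha(\Lambda) = n\log d - D_\alpha(\Lambda\|\CRR)$ and $H_\alpha(\CMM(\Lambda)) = n\log d - D_\alpha(\CMM(\Lambda)\|\CRR)$, so the claimed inequality $H_\alpha(\CMM(\Lambda))\ge H_\alpha(\Lambda)$ is equivalent to
\begin{align*}
D_\alpha(\CMM(\Lambda)\|\CRR) \le D_\alpha(\Lambda\|\CRR)\;.
\end{align*}
First I would unwind the definition of $D_\alpha(\Lambda\|\CRR)$ as a supremum over ancilla-assisted inputs $\rho_{AR}$ of $D_\alpha(\Lambda\ot I_R(\rho_{AR})\|\CRR\ot I_R(\rho_{AR}))$, and similarly for $\CMM(\Lambda)$. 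The goal is to show that for each fixed input $\rho_{AR}$, the output state $\CMM(\Lambda)\ot I_R(\rho_{AR})$ is no farther (in $D_\alpha$) from the maximally mixed reference than $\Lambda\ot I_R(\rho_{AR})$ is — and then take suprema.

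The key structural input is that $\CMM$ acting on the Choi state is realized by a convolution with a fixed stabilizer state: by the analysis in the ``stabilizer states in convolution'' subsection and Lemma~\ref{lem:conv_iden}, for a positive invertible $G$ (available since $d\ge 7$) one can choose the beam-splitter-type convolution so that $J_{\CMM(\Lambda)} = J_\Lambda \boxtimes \tau$ for an appropriate MSPS $\tau$ (indeed $\CMM(J_\Lambda)$ lies in an abelian $C^*$-algebra, and Lemma~\ref{lem:com_Mcon} together with the convolution-majorization structure identifies it with such a convolution). Equivalently, at the channel level, Theorem~\ref{thm:exact_conv_chan} gives $\CMM(\Lambda) = \mathcal{E}\circ(\Lambda\ot\Lambda_\tau)\circ\mathcal{E}^{-1}$ where $\Lambda_\tau$ is the replacer channel $\rho\mapsto\tau$ (a stabilizer channel), and $\mathcal{E},\mathcal{E}^{-1}$ are stabilizer channels. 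The essential point is then: $\CMM(\Lambda)$ is obtained from $\Lambda$ by pre- and post-composing with fixed quantum channels (and tensoring with a fixed channel). Since $D_\alpha$ for $\alpha\ge 1/2$ is monotone under quantum channels and additive under tensor product (both recalled in the preliminaries), and $\CRR$ is the fixed point of all these operations (the convolutional channel maps $I/d^n\ot I/d^n$ to $I/d^n$ by Lemma~\ref{lem:conv_iden}, and $\mathcal{E}^{-1}$, $\Lambda_\tau$ behave compatibly with $\CRR$ as in Proposition~\ref{prop:iden_pre}), we get a chain of data-processing inequalities collapsing $D_\alpha(\CMM(\Lambda)\|\CRR)$ down to $D_\alpha(\Lambda\|\CRR)$. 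Concretely: $D_\alpha(\CMM(\Lambda)\ot I_R(\rho_{AR})\|\CRR\ot I_R(\rho_{AR})) = D_\alpha(\mathcal{E}\circ(\Lambda\ot\Lambda_\tau)\circ\mathcal{E}^{-1}\ot I_R(\rho_{AR})\|\cdots) \le D_\alpha((\Lambda\ot\Lambda_\tau)\circ\mathcal{E}^{-1}\ot I_R(\rho_{AR})\|\cdots)$ by monotonicity under $\mathcal{E}$, then $\le D_\alpha(\Lambda\ot I_{R'}(\sigma_{AR'})\|\CRR\ot I_{R'}(\sigma_{AR'}))$ for a suitable enlarged input $\sigma_{AR'}$ (absorbing the $\Lambda_\tau$ factor and the action of $\mathcal{E}^{-1}$ into the input and the ancilla), which is $\le D_\alpha(\Lambda\|\CRR)$ by definition of the supremum.

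The main obstacle I anticipate is bookkeeping the reference channel correctly through these manipulations: one must check that $\mathcal{E}$, $\mathcal{E}^{-1}$, and the tensor factor $\Lambda_\tau$ all send the relevant ``denominator'' states built from $\CRR$ to the corresponding denominator states, so that each data-processing step really compares $D_\alpha$ of matched pairs rather than mismatched ones. This is exactly what Proposition~\ref{prop:iden_pre} and Lemma~\ref{lem:conv_iden} are for — $\Lambda\boxtimes\CRR=\CRR$ when $G$ is odd-parity positive and $\CRR\boxtimes\Lambda=\CRR$ when even-parity positive, and for positive $G$ both hold — so I would invoke them to guarantee $\CMM(\CRR)=\CRR$ and, more importantly, that the post-composition with $\mathcal{E}$ sends $\CRR\ot\CRR$-type references to $\CRR$. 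A secondary subtlety is justifying the reduction $J_{\CMM(\Lambda)} = J_\Lambda\boxtimes\tau$ with a \emph{fixed} $\tau$ independent of $\Lambda$; if that is not literally available for a single $G$, the fallback is to observe that $\CMM(J_\Lambda)$ is obtained from $J_\Lambda$ by the completely-positive trace-preserving ``twirl'' $X\mapsto \mathbb{E}_{w\in S_2}\, w X w^\dagger$ over an abelian group (cf. \eqref{0202shi1} in the proof of Proposition~\ref{0109lem1} and Theorem~\ref{thm:major}), which is manifestly a quantum channel fixing $\CRR$; monotonicity of $D_\alpha$ under this single channel, plus additivity and the supremum definition, then yields the claim directly. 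I expect the cleanest write-up to use this twirl formulation, and to conclude $H_\alpha(\CMM(\Lambda)) = n\log d - D_\alpha(\CMM(\Lambda)\|\CRR) \ge n\log d - D_\alpha(\Lambda\|\CRR) = H_\alpha(\Lambda)$.
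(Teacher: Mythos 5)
Your overall strategy coincides with the paper's: rewrite the claim as $D_\alpha(\CMM(\Lambda)\|\CRR)\le D_\alpha(\Lambda\|\CRR)$ (your direction is the correct one; the paper's displayed ``$\ge$'' there is a sign typo), realize $\CMM$ at the Choi level as convolution with a fixed object so that the map $\Lambda\mapsto\CMM(\Lambda)$ becomes a superchannel fixing $\CRR$, and apply data processing. The genuine gap is that your load-bearing step --- the existence of a bipartite MSPS $\tau$ with $J_\Lambda\boxtimes\tau=\CMM(J_\Lambda)$ --- is asserted rather than proved, and it does not follow from Lemma \ref{lem:com_Mcon}: that lemma requires $\CMM(\rho)=\CMM(\sigma)$ and only controls $\CMM(\rho\boxtimes\sigma)$, not $\rho\boxtimes\sigma$ itself. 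In particular the naive choice $\tau=\CMM(J_\Lambda)$ fails whenever $J_\Lambda$ is not zero-mean: on a generator $w(\vec p_i,\vec q_i)$ of the associated group with $\Xi_{\CMM(J_\Lambda)}(\vec p_i,\vec q_i)=\chi(k_i)$ one gets $\Xi_{J_\Lambda\boxtimes\tau}(\vec p_i,\vec q_i)=\chi((s+t)k_i)\neq\chi(k_i)$ in general. The paper's proof spends its entire first half exactly here: it constructs $\tau$ as the MSPS on the same group but with shifted phases $u_i=(1-s)t^{-1}k_i$, and it verifies the marginal condition $\Ptr{A'}{\tau}=I_A/d^n$, which is what makes the induced map $\Lambda_\tau(\rho)=d^n\Ptr{A}{\tau\,(\rho^T\ot I)}$ trace preserving. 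Relatedly, $\Lambda_\tau$ is the channel whose \emph{Choi state} is $\tau$, not the replacer channel $\rho\mapsto\tau$ (a replacer channel has Choi state $\tfrac{I}{d^n}\ot\tau_0$), so your identification would not yield $J_{\CMM(\Lambda)}=J_\Lambda\boxtimes J_{\Lambda_\tau}$, and the trace-preservation issue you would need to check gets hidden.

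Your fallback via the abelian twirl (the identity used in the proofs of Theorems \ref{thm:main1} and \ref{thm:stab_proj}) avoids the phase problem, but as written it has its own gap: $D_\alpha(\Lambda\|\CRR)$ is defined through a supremum over ancilla-assisted inputs and is monotone under \emph{superchannels}, not under arbitrary CPTP maps acting on Choi matrices, so ``monotonicity of $D_\alpha$ under this single channel'' is not directly applicable. This is repairable: every Weyl operator on $AA'$ factors as $w_A\ot w_{A'}$, and conjugating $J_\Lambda$ by $w_A\ot w_{A'}$ produces the Choi state of $\mathrm{Ad}_{w_{A'}}\circ\Lambda\circ\mathrm{Ad}_{w_A^{T}}$, so the twirl (together with the Clifford conjugations) is a mixture of unitary pre/post-processings, hence a bona fide superchannel that fixes $\CRR$, after which monotonicity applies. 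Either route --- the paper's explicit superchannel $\Theta_\sigma:\Lambda_1\mapsto\Lambda_1\boxtimes\Lambda_\sigma$ built from the phase-matched $\sigma$, or your twirl once promoted to a superchannel --- closes the argument; as submitted, these two steps are missing.
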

\begin{proof}
The local dimension $d$ is a prime $\geq 7$, thus we can define the beam splitter convolution
$\boxtimes_{s,t} = \boxtimes$ with $s,t\neq 0 \mod d$.

(1) First  we show that, for every bipartite state $\rho$ on  $AA'$,  we can find a state $\sigma_{AA'}$  so that $\rho_{AA'} \boxtimes \sigma_{AA'}= \CMM(\rho_{AA'})$.
Moreover, if $\Ptr{A'}{\rho} = I_A/d^n$,
$\sigma$ can also be chosen to satisfy $\Ptr{A'}{\sigma} = I_A/d^n$.

Denote the abelian group associated with $\CMM(\rho)$ to be $S$,
and assume $S$ is generated by $ w(\vec p_1, \vec q_1),...,w(\vec p_r, \vec q_r)$.
Assume $\Xi_{\CMM(\rho)} (w(\vec p_i, \vec q_i))= \chi(k_i) $ with $k_i\in \Z_d$.
Let $ u_i = (1-s) t^{-1}k_i $ for every $i$,
and let $ \sigma$ be the  MSPS associated with $S$ such that
$\Xi_{\sigma}(\vec p_i, \vec q_i)= \chi(u_i) $ for every $i$.
Then for every $w(\vec p,\vec q)\in S$ we can write$ (\vec p, \vec q)  = \sum_i \lambda_i (\vec p_i, \vec q_i)  $ for some $\lambda_i \in \Z_d$, hence
\begin{align*}
\Xi_{\CMM(\rho)} ( \vec p, \vec q )= \chi(\sum_i \lambda_i  k_i) \;,
\end{align*}
and
\begin{align*}
\Xi_{\sigma} ( \vec p, \vec q  )= \chi(\sum_i \lambda_i  u_i) =  \chi((1-s) t^{-1} \sum_i \lambda_i  k_i) = \Xi_{\CMM(\rho)} ((1-s) t^{-1} \vec p,  (1-s)t^{-1} \vec q) \;.
\end{align*}
Therefore 
\begin{align*}
\Xi_{\rho \boxtimes \sigma} (\vec p, \vec q) =& \Xi_{ \rho} ( s\vec p, s\vec q) \Xi_{ \sigma} ( t\vec p, t\vec q)\\
=& \Xi_{ \rho} ( s\vec p, s\vec q) \Xi_{ \sigma} ( t\vec p, t\vec q) \delta_{w(\vec p, \vec q)\in S}\\
=& \Xi_{ \CMM(\rho)} ( s\vec p, s\vec q) \Xi_{ \sigma} ( t\vec p, t\vec q) \delta_{w(\vec p, \vec q)\in S}\\
=& \Xi_{ \CMM(\rho)} ( s\vec p, s\vec q) \Xi_{ \CMM(\rho)} ( (1-s)\vec p, (1-s)\vec q) \\
=& \Xi_{ \CMM(\rho)} (  \vec p, \vec q) \;.
\end{align*}
Thus $\rho \boxtimes \sigma= \CMM(\rho)$.
Moreover,
when $\Ptr{A'}{\rho} = I_A/d^n$,
we have $\Xi_{\rho}(\vec p_A, \vec p_{A'}, \vec q_A, \vec q_{A'}) = 0$  if $(\vec p_A, \vec q_A) \neq (\vec 0,\vec 0)$ and $ (\vec p_{A'},
\vec q_{A'})=(\vec 0, \vec 0)$.
Therefore $(\vec p_A, \vec p_{A'}, \vec q_A, \vec q_{A'})\not\in S $ when $(\vec p_A, \vec q_A) \neq (\vec 0,\vec 0)$ and $ (\vec p_{A'} ,
\vec q_{A'})=(\vec 0, \vec 0)$.
Therefore we have $\Ptr{A'}{\sigma} = I_A/d^n$.

(2) Let $J_\Lambda$ be the Choi state of $\Lambda$.
Then $J_\Lambda\ge 0$,
and $\trace_{A'}(J_\Lambda) = I_A/d^n$.
Based on the above observation,  we can find state $\sigma$ on $AA'$ such that $\trace_{A'}(\sigma) = I_A/d^n$ and $J_{\Lambda} \boxtimes \sigma = \CMM( J_{\Lambda}) = J_{\CMM(\Lambda)}$.
Therefore the map
\begin{align}\label{0207shi2}
\Lambda_{\sigma}: \rho_A \; \mapsto \; d^n \trace_A \left[\sigma \cdot (\rho_A^T\ot I_{A'}) \right] \;,\quad\forall \rho_A\in \mathcal{B}(\mathcal{H}^{\ot n})
\end{align}
is a quantum channel,
and the map
\begin{align*}
\Theta_\sigma:  \Lambda_1\mapsto \Lambda_1\boxtimes \Lambda_{\sigma}
\end{align*}
is a superchannel, 
satisfying $\Theta_{\sigma} (\Lambda) = \CMM(\Lambda)$. Moreover, we have 
$\Theta_{\sigma} (\mathcal{R})=\mathcal{R}\boxtimes\Lambda_{\sigma}=\mathcal{R}$.
Hence, by the monotonicity of R\'enyi relative entropy, 
\begin{align*}
D_{\alpha}(\CMM(\Lambda)||\mathcal{R}) = D_{\alpha}(\Theta_{\sigma} (\Lambda)|| \Theta_{\sigma}(\mathcal{R})) \leq D_{\alpha}(\Lambda||\mathcal{R}) \;,
\end{align*}
for any $\alpha\in [1/2, +\infty]$.
By the definition of the R\'enyi entropy of channels,
we have $H_{\alpha}(\CMM(\Lambda))\geq H_{\alpha}(\Lambda)$.
\end{proof}

The above result demonstrates the extremality of the mean channel in terms of entropic measures. This extends our extremality results from states to channels, and it also highlights the universality of our quantum convolution framework. This framework can be applied both to  states and to channels.

\subsection{Quantum entropy inequality for channels}
In this subsection, we will focus on the 
entropy power inequalities for the R\'enyi entropy of quantum channels.

\begin{thm}[\bf Convolution increases entropy of channels]\label{thm:entropy_power_chn}
Let $\Lambda_1, \Lambda_2$  be two $n$-qudit channels and $\alpha\in [1/2,+\infty]$.

(1) If $G$ is odd-parity positive, then 
\begin{align*}
H_\alpha( \Lambda_1\boxtimes \Lambda_2) \ge  H_\alpha(\Lambda_2) \;.
\end{align*}

(2) If $G$ is even-parity positive, then 
\begin{align*}
H_\alpha( \Lambda_1\boxtimes \Lambda_2) \ge  H_\alpha(\Lambda_1)  \;.
\end{align*}

(3) If $G$ is positive, then
\begin{align*}
H_\alpha( \Lambda_1\boxtimes \Lambda_2) \ge  \max\set{H_\alpha(\Lambda_1), H_\alpha(\Lambda_2)} \;.
\end{align*}

\end{thm}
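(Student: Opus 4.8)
The plan is to reduce the statement to the monotonicity of the R\'enyi channel divergence $D_\alpha(\cdot\,\|\,\cdot)$ under superchannels, following closely the strategy already used in the proof of Theorem~\ref{thm:exremality}. Since by definition $H_\alpha(\Lambda) = n\log d - D_\alpha(\Lambda\|\CRR)$, the asserted inequality $H_\alpha(\Lambda_1\boxtimes\Lambda_2)\ge H_\alpha(\Lambda_1)$ is equivalent to $D_\alpha(\Lambda_1\boxtimes\Lambda_2\|\CRR)\le D_\alpha(\Lambda_1\|\CRR)$, and likewise with the roles of $\Lambda_1,\Lambda_2$ interchanged.

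First I would package the convolution as a superchannel. Fixing $\Lambda_2$, define $\Theta_2(\cdot):=\mathcal{E}\circ\bigl((\cdot)\otimes\Lambda_2\bigr)\circ\mathcal{E}^{-1}$; by Theorem~\ref{thm:exact_conv_chan} this sends $\Lambda_1\mapsto\Lambda_1\boxtimes\Lambda_2$. It is a superchannel, being the composition of ``tensor in the fixed channel $\Lambda_2$'' with the convolutional superchannel $\Theta(\cdot)=\mathcal{E}\circ(\cdot)\circ\mathcal{E}^{-1}$, both of which are superchannels; the fact that its output is always a legitimate channel --- equivalently, that the induced map on Choi operators preserves the marginal condition $\Ptr{A'}{J}=I_A/d^n$ --- is exactly the lemma that the convolution of two Choi states is again a Choi state. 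Symmetrically, fixing $\Lambda_1$, the map $\Theta_1(\cdot):=\mathcal{E}\circ\bigl(\Lambda_1\otimes(\cdot)\bigr)\circ\mathcal{E}^{-1}=\Lambda_1\boxtimes(\cdot)$ is a superchannel with $\Lambda_2\mapsto\Lambda_1\boxtimes\Lambda_2$.

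The key step is to pin down where these superchannels send the completely depolarizing channel $\CRR$. If $G$ is even-parity positive, Proposition~\ref{prop:iden_pre}(2) gives $\Theta_2(\CRR)=\CRR\boxtimes\Lambda_2=\CRR$; if $G$ is odd-parity positive, Proposition~\ref{prop:iden_pre}(1) gives $\Theta_1(\CRR)=\Lambda_1\boxtimes\CRR=\CRR$. Then, by the monotonicity of the R\'enyi channel divergence under superchannels (valid for $\alpha\in[1/2,+\infty]$), in the even-parity-positive case
\[
D_\alpha(\Lambda_1\boxtimes\Lambda_2\,\|\,\CRR)=D_\alpha\bigl(\Theta_2(\Lambda_1)\,\|\,\Theta_2(\CRR)\bigr)\le D_\alpha(\Lambda_1\,\|\,\CRR),
\]
so $H_\alpha(\Lambda_1\boxtimes\Lambda_2)\ge H_\alpha(\Lambda_1)$, which is~(2); the same argument with $\Theta_1$ in the odd-parity-positive case gives~(1); and~(3) follows at once, since a positive $G$ is simultaneously odd- and even-parity positive, so both bounds hold and one takes the maximum.

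I expect the only real work to be the careful verification that $\Theta_1$ and $\Theta_2$ are bona fide superchannels --- in particular that the induced map on Choi states preserves the range condition $\Ptr{A'}{J}=I_A/d^n$, supplied by the ``convolution of Choi states is Choi'' lemma, and that the associated supermap is completely CP-preserving --- together with invoking the correct form of the data-processing inequality for the ancilla-assisted channel R\'enyi divergence $D_\alpha(\Lambda\|\CRR)=\sup_{\rho_{AR}}D_\alpha(\Lambda\otimes I_R(\rho_{AR})\|\CRR\otimes I_R(\rho_{AR}))$ in the range $\alpha\ge 1/2$. Beyond that, the proof is a direct transcription of the argument already used for Theorem~\ref{thm:exremality}.
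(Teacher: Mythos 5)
Your proposal is correct and follows essentially the same route as the paper: fix one channel, view $\Lambda\mapsto\Lambda\boxtimes\Lambda_2$ (resp.\ $\Lambda_1\boxtimes\Lambda$) as a superchannel via Theorem~\ref{thm:exact_conv_chan}, use Proposition~\ref{prop:iden_pre} to see that it fixes $\CRR$ under the appropriate parity-positivity hypothesis, and invoke monotonicity of $D_\alpha(\cdot\|\CRR)$ under superchannels for $\alpha\in[1/2,+\infty]$, with (3) following from (1) and (2). The extra care you flag about the Choi-marginal condition is exactly what the paper's ``convolution of Choi states is Choi'' lemma supplies, so nothing is missing.
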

\begin{proof}
(3) is a consequence of  (1) and (2). 
Let us start with the even-parity positive case; the odd-parity case can be proved in the same way.
For $\alpha\in [1/2, +\infty]$, 
the quantum R\'enyi  relative entropy is monotone under quantum superchannels~\cite{Tomamichel2015quantum}, 
which 
means that for any quantum superchannel $\Theta$ we have 
\begin{eqnarray*}
D_{\alpha}(\Lambda_1||\Lambda_2)\geq
D_{\alpha}(\Theta(\Lambda_1)||\Theta(\Lambda_2)) \;.
\end{eqnarray*}

Given a quantum channel $\Lambda_2$,  let us define 
the superchannel $\Theta_{\Lambda_2}(\cdot)$ 
to be 
\begin{eqnarray*}
\Theta_{\Lambda_2}(\Lambda)
=\mathcal{E}_G\circ (\Lambda\ot\Lambda_2)\circ \mathcal{E}^{-1}_G \;.
\end{eqnarray*}
Due to Proposition \ref{prop:iden_pre}, we have 
$\Theta_{\Lambda_2}(\mathcal{R})=\mathcal{R}\boxtimes \Lambda_2=\mathcal{R}$ as 
$G$ is even-parity positive. Hence
\begin{eqnarray*}
D_{\alpha}(\Lambda_1||\mathcal{R})
\geq
D_{\alpha}(\Theta_{\Lambda_2}(\Lambda_1)||\Theta_{\Lambda_2}(\mathcal{R}))
=D_{\alpha}(\Theta_{\Lambda_2}(\Lambda_1)||\mathcal{R}) \;,
\end{eqnarray*}
that is 
\begin{eqnarray*}
H_{\alpha}(\Lambda_1\boxtimes\Lambda_2)
\geq H_{\alpha}(\Lambda_1)
\end{eqnarray*}
for $\alpha\geq 1/2$. Similarly, we can prove that 
$H_{\alpha}(\Lambda_1\boxtimes\Lambda_2)
\geq H_{\alpha}(\Lambda_2)$ when $G$ is odd-parity positive.
\end{proof}

Let us take the convolution repeatedly and define $\boxtimes^{N+1}\Lambda=(\boxtimes^N \Lambda)\boxtimes\Lambda$, and $\boxtimes^0\Lambda=\Lambda$.
Similar to the state case, we show that the quantum R\'enyi entropy $H_{\alpha}(\boxtimes^N\Lambda)$ is also nondecreasing w.r.t. the number of convolutions $N$.

\begin{prop}[\bf Second law of thermodynamics for channel convolution]
Let $\Lambda$ be an $n$-qudit quantum channel and $\alpha\in [1/2, +\infty]$.
Then
\begin{eqnarray*}
H_{\alpha}(\boxtimes^{N+1}\Lambda)
\geq H_{\alpha}(\boxtimes^N\Lambda) \;, \forall N\geq 0 \;.
\end{eqnarray*}
\end{prop}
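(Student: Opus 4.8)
The plan is to derive this directly from Theorem~\ref{thm:entropy_power_chn}, exactly as the state version (Theorem~\ref{0212prop2}) follows from Proposition~\ref{prop:entropy}. Recall that by definition $\boxtimes^{N+1}\Lambda=(\boxtimes^{N}\Lambda)\boxtimes\Lambda$, so it suffices to compare the channel $\Lambda_1\boxtimes\Lambda_2$ with $\Lambda_1$ in the special case $\Lambda_1=\boxtimes^{N}\Lambda$ and $\Lambda_2=\Lambda$.

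First I would invoke Theorem~\ref{thm:entropy_power_chn}(2) (equivalently, part~(3)), which applies since the parameter matrix $G$ used to define the repeated convolution is positive, hence in particular even-parity positive; this gives $H_\alpha(\Lambda_1\boxtimes\Lambda_2)\ge H_\alpha(\Lambda_1)$ for every $\alpha\in[1/2,+\infty]$. Substituting $\Lambda_1=\boxtimes^{N}\Lambda$ and $\Lambda_2=\Lambda$ yields
\begin{eqnarray*}
H_\alpha(\boxtimes^{N+1}\Lambda)
= H_\alpha\big((\boxtimes^{N}\Lambda)\boxtimes\Lambda\big)
\ge H_\alpha(\boxtimes^{N}\Lambda)\;,
\end{eqnarray*}
valid for all $N\ge 0$, which is precisely the claimed inequality.

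There is essentially no obstacle here; the only points requiring care are bookkeeping of hypotheses. One must keep $\alpha\in[1/2,+\infty]$, the range in which the R\'enyi relative entropy of channels is monotone under superchannels (this monotonicity is what powers the proof of Theorem~\ref{thm:entropy_power_chn} via the superchannel $\Theta_{\Lambda_2}$ and Proposition~\ref{prop:iden_pre}). One must also record that each convolution in $\boxtimes^{N}$ is taken with the same positive invertible $G$; more generally, if one allows a sequence $G_1,G_2,\dots$ of even-parity positive matrices, the same step-by-step argument using Theorem~\ref{thm:entropy_power_chn}(2) still goes through without change. No computation beyond Theorem~\ref{thm:entropy_power_chn} is needed, so the proof is a one-line corollary.
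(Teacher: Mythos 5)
Your proposal is correct and matches the paper's proof, which simply observes that the claim is an immediate corollary of Theorem~\ref{thm:entropy_power_chn} applied with $\Lambda_1=\boxtimes^{N}\Lambda$ and $\Lambda_2=\Lambda$. Your additional remarks about the parity hypothesis on $G$ and the range $\alpha\in[1/2,+\infty]$ are just the bookkeeping implicit in invoking that theorem, so there is nothing further to add.
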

\begin{proof}
This is a corollary of Theorem \ref{thm:entropy_power_chn}.
\end{proof}

\subsection{Clifford unitary in the convolution}
Here we consider the role of Clifford unitaries in channel convolution.

\begin{lem}[\bf Corollary 8, \cite{Gour21}]\label{lem:chn_0}
Given an $n$-qudit channel $\Lambda$, the channel entropy $H(\Lambda)\geq -n\log d$, and 
$H(\Lambda)=-n\log d$ iff $\Lambda$ is a unitary channel.
\end{lem}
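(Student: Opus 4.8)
The statement to prove is Lemma \ref{lem:chn_0}, quoted from \cite{Gour21}: for an $n$-qudit channel $\Lambda$, the channel entropy satisfies $H(\Lambda) \geq -n\log d$, with equality if and only if $\Lambda$ is a unitary channel. Since this is cited as a known result (Corollary 8 of \cite{Gour21}), the natural proof proposal is to reconstruct the short argument using the definitions already set up in the excerpt. Recall $H(\Lambda) = n\log d - D(\Lambda\|\mathcal{R})$ where $\mathcal{R}$ is the completely depolarizing channel and $D(\Lambda\|\mathcal{R}) = \sup_{\rho_{AR}} D(\Lambda\otimes I_R(\rho_{AR}) \,\|\, \mathcal{R}\otimes I_R(\rho_{AR}))$.

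The plan is as follows. First I would reduce the bound $H(\Lambda)\geq -n\log d$ to the claim $D(\Lambda\|\mathcal{R}) \leq 2n\log d$. To see this, pick the maximally entangled input $\rho_{AR}=\proj{\Phi}$ on $\mathcal{H}_A\otimes\mathcal{H}_{A'}$; then $\Lambda\otimes I(\proj\Phi)=J_\Lambda$ is the Choi state and $\mathcal{R}\otimes I(\proj\Phi)=I/d^{2n}$ is the maximally mixed state on the $2n$-qudit space. For any state $\omega$ on a $D$-dimensional space, $D(\omega\|I/D)=\log D - H(\omega)\leq \log D$, and this is the maximal value of relative entropy to the maximally mixed state over all inputs (monotonicity of relative entropy under the channel $\Lambda\otimes I_R$ applied after a purification, plus the fact that relative entropy to $I/D$ is bounded by $\log D$). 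Hence $D(\Lambda\|\mathcal{R})\leq \log d^{2n}=2n\log d$, giving $H(\Lambda)\geq n\log d - 2n\log d = -n\log d$.

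Next, for the equality condition: $H(\Lambda)=-n\log d$ forces $D(\Lambda\|\mathcal{R})=2n\log d$, which by the above forces $D(J_\Lambda\| I/d^{2n})=2n\log d$, i.e. $H(J_\Lambda)=0$, i.e. the Choi state $J_\Lambda$ is pure. A channel whose Choi state is pure is an isometry; since $\Lambda$ maps $n$ qudits to $n$ qudits, the isometry is a unitary. Conversely, if $\Lambda$ is unitary then $J_\Lambda=\proj{(\mathbb{1}\otimes U)\Phi}$ is pure, so $D(\Lambda\otimes I(\proj\Phi)\|\mathcal{R}\otimes I(\proj\Phi))=2n\log d$, and since this already achieves the maximal possible value $2n\log d$ the supremum is attained there, giving $D(\Lambda\|\mathcal{R})=2n\log d$ and $H(\Lambda)=-n\log d$.

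The main obstacle — and the one place where some care is genuinely needed — is justifying that the supremum over $\rho_{AR}$ in the definition of $D(\Lambda\|\mathcal{R})$ is in fact attained at (and bounded by) the maximally entangled input, so that $D(\Lambda\|\mathcal{R})=D(J_\Lambda\|I/d^{2n})$. This is the standard fact that the channel divergence to the depolarizing channel reduces to a state divergence on Choi states; it follows by writing an arbitrary $\rho_{AR}$ via a purification and a local map on the reference, then applying data-processing for $D$, together with the observation that $\mathcal{R}\otimes I_R(\rho_{AR})$ always has the form (maximally mixed on $A'$) $\otimes$ (marginal on $R$). Since the result is quoted verbatim from \cite{Gour21}, in the final text one may simply cite it; but the above sketch records the argument for completeness. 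No lengthy computation is required beyond these structural steps.
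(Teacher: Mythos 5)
The paper itself gives no proof of this lemma---it is imported verbatim as Corollary~8 of \cite{Gour21}---so the only question is whether your reconstruction is sound, and it has one genuine gap: the ``standard fact'' you lean on, namely that $D(\Lambda\|\mathcal{R})=D(J_\Lambda\|I/d^{2n})$ because the supremum in the channel divergence is attained at (and bounded by) the maximally entangled input, is false in general. For an arbitrary input one has $D\big(\Lambda\otimes id_R(\rho_{AR})\,\big\|\,\mathcal{R}\otimes id_R(\rho_{AR})\big)=n\log d-H(A'|R)_{\omega}$ with $\omega=(\Lambda\otimes id_R)(\rho_{AR})$; the second argument is $\rho_R\otimes I/d^n$, not the maximally mixed state, so your ``$D(\omega\|I/D)\le\log D$'' bound does not apply directly, and the claimed identity would amount to saying the minimal conditional output entropy is always attained at the maximally entangled input. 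Counterexample: on $\mathbb{C}^4=S_1\oplus S_2$ with $\dim S_i=2$, let $\Lambda(\rho)=P\rho P+\Tr{Q\rho}\,Q/2$ (identity on the $S_1$ block, depolarizing on $S_2$); a maximally entangled input supported on $S_1$ yields conditional output entropy $-\log 2$, while the Choi state yields $0$, so $D(\Lambda\|\mathcal{R})>D(J_\Lambda\|I/d^{2n})$. Since your ``only if'' direction ($H(\Lambda)=-n\log d\Rightarrow J_\Lambda$ pure) is routed entirely through this identity, that direction is not established as written; the inequality $H(\Lambda)\ge -n\log d$ and the ``if'' direction do survive once the upper bound $D(\Lambda\|\mathcal{R})\le 2n\log d$ is proved correctly, e.g.\ from $H(A'|R)\ge -\log d^n$ (Araki--Lieb).

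The repair is nearby, but it is a different argument from the one you gave: restrict to pure inputs with a $d^n$-dimensional reference (purification plus data processing) and use compactness, so $H(\Lambda)=-n\log d$ is attained by some input with $H(A'|R)=-n\log d$. Then $H(A'|R)\ge -H(A')\ge -n\log d$ together with $H(RA')\ge 0$ and $H(R)\le n\log d$ force $H(RA')=0$ and $H(R)=H(A')=n\log d$: the output is a pure maximally entangled state and $\rho_R=I/d^n$, so the input itself is maximally entangled. Any such input equals $(V_R\otimes I)\ket{\Phi}$ for a unitary $V_R$ on the reference, so purity of the output is equivalent to purity of $J_\Lambda$, which for equal input and output dimensions means $\Lambda$ is a unitary channel. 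In other words, the optimizer is forced to be maximally entangled only at the extreme value; it is not so in general, and asserting otherwise is the gap. (Alternatively, one may simply cite Corollary~8 of \cite{Gour21}, as the paper does.)
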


\begin{lem}\label{lem:chann_unitary}
Let the parameter matrix $G$ be positive and invertible, and $\Lambda_1$, $\Lambda_2$ be two $n$-qudit quantum unitary channels. 
If $\Lambda_1\boxtimes \Lambda_2$ is a unitary channel,
then all $\Lambda_1$, $\Lambda_2$ and $\Lambda_1\boxtimes \Lambda_2$ are Clifford unitaries.
Moreover, for every $(\vec{p},\vec q)\in V^n$, there is $(\vec{x}_0,\vec{y}_0)\in V^n$ such that 
\begin{eqnarray*}
\Lambda_1\boxtimes\Lambda_2(w(\vec p, \vec{q}))&=&w(\vec x_0,\vec y_0) \;, \\
\Lambda_1(w(Ng_{11}\vec p, g_{00}\vec q))&=&w(Ng_{11}\vec x_0, g_{00}\vec y_0) \;,\\
 \Lambda_2 (w(-Ng_{10}\vec p, g_{01}\vec q))&=&w(-Ng_{10}\vec x_0, g_{01}\vec y_0) \;,
\end{eqnarray*}
up to some phases.
\end{lem}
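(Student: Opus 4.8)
The plan is to use the exact formula for channel convolution from Theorem~\ref{thm:exact_conv_chan}, namely $\Lambda_1\boxtimes\Lambda_2 = \mathcal{E}\circ(\Lambda_1\ot\Lambda_2)\circ\mathcal{E}^{-1}$, together with the channel-entropy characterization of unitarity in Lemma~\ref{lem:chn_0} ($H(\Lambda)=-n\log d$ iff $\Lambda$ is a unitary channel). First I would observe that if $\Lambda_1\boxtimes\Lambda_2$ is a unitary channel, then by the entropy power inequality for channels (Theorem~\ref{thm:entropy_power_chn}, part (3), with $\alpha\to 1$) we get $-n\log d = H(\Lambda_1\boxtimes\Lambda_2)\ge\max\{H(\Lambda_1),H(\Lambda_2)\}\ge -n\log d$, so in fact $H(\Lambda_1)=H(\Lambda_2)=-n\log d$; but this is automatic here since $\Lambda_1,\Lambda_2$ are assumed unitary. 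The real content is that all three unitaries must be \emph{Clifford}, and the displayed intertwining relations hold.

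For the Clifford property, I would look at the action on Weyl operators. Since $\Lambda_1,\Lambda_2$ are unitary channels, write $\Lambda_i(\cdot)=W_i(\cdot)W_i^\dag$ for unitaries $W_i$. Using the proof of Theorem~\ref{thm:exact_conv_chan}, for each $(\vec p,\vec q)$ one computes
\begin{align*}
(\Lambda_1\boxtimes\Lambda_2)(w(\vec p,\vec q))
= \mathcal{E}\bigl(\Lambda_1(w(Ng_{11}\vec p,g_{00}\vec q))\ot\Lambda_2(w(-Ng_{10}\vec p,g_{01}\vec q))\bigr).
\end{align*}
Now $\Lambda_1\boxtimes\Lambda_2$ being a unitary channel $W(\cdot)W^\dag$ means $(\Lambda_1\boxtimes\Lambda_2)(w(\vec p,\vec q))$ is again a unitary of norm-square $d^n$ under the normalized trace inner product; in particular its operator norm equals $1$ and it has unit Hilbert--Schmidt norm. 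Tracking norms through $\mathcal{E}$ and using that $\mathcal{E}^\dag$ sends a Weyl operator to a tensor product of two Weyl operators (Lemma~\ref{lem:adj_con}), I would argue that $\Lambda_1(w(Ng_{11}\vec p,g_{00}\vec q))$ and $\Lambda_2(-Ng_{10}\vec p,g_{01}\vec q)$ must each be a scalar multiple of a single Weyl operator: if either were a genuine superposition of $\ge 2$ Weyl operators, the partial trace after conjugation by the key unitary $U$ could not produce a unitary (one can see this by computing $\|\cdot\|_2$ versus $\|\cdot\|_\infty$, or by the characteristic-function support via Proposition~\ref{prop:chara_conv} applied to the Choi states). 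Since $G$ is positive and invertible, the maps $(\vec p,\vec q)\mapsto(Ng_{11}\vec p,g_{00}\vec q)$ and $(\vec p,\vec q)\mapsto(-Ng_{10}\vec p,g_{01}\vec q)$ are bijections of $V^n$, so $\Lambda_1$ and $\Lambda_2$ send \emph{every} Weyl operator to a phase times a Weyl operator, i.e.\ $W_1,W_2$ are Clifford. Then $\Lambda_1\boxtimes\Lambda_2=\mathcal{E}\circ(\Lambda_1\ot\Lambda_2)\circ\mathcal{E}^{-1}$ is a composition of Clifford operations (as $\mathcal{E},\mathcal{E}^{-1}$ are Clifford channels by Proposition~\ref{prop:U_wely}), hence Clifford.

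Finally, for the displayed relations: once we know $\Lambda_1(w(Ng_{11}\vec p,g_{00}\vec q))=c_1\,w(\vec a,\vec b)$ and $\Lambda_2(w(-Ng_{10}\vec p,g_{01}\vec q))=c_2\,w(\vec c,\vec e)$ for phases $c_1,c_2$ and Weyl labels, applying $\mathcal{E}=\Ptr{B}{U(\cdot)U^\dag}$ and using Proposition~\ref{prop:U_wely} (the explicit Clifford action of $U$) shows that $\mathcal{E}(w(\vec a,\vec b)\ot w(\vec c,\vec e))$ is nonzero only when the labels are "compatible," and in that case equals a Weyl operator $w(\vec x_0,\vec y_0)$; matching coordinates against the formula for $U w(\cdot)\ot w(\cdot)U^\dag$ forces $\vec a=Ng_{11}\vec x_0$, $\vec b=g_{00}\vec y_0$, $\vec c=-Ng_{10}\vec x_0$, $\vec e=g_{01}\vec y_0$, which is exactly the claimed system. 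I expect the main obstacle to be the norm/support argument showing that $\Lambda_i$ applied to a Weyl operator cannot spread over several Weyl operators — this needs a clean lemma that "partial trace of $U(A\ot B)U^\dag$ is unitary, with $A,B$ unitary and $U$ the key Clifford, forces $A,B$ proportional to Weyl operators," which I would prove by expanding $A=\sum_\alpha a_\alpha w_\alpha$, $B=\sum_\beta b_\beta w_\beta$ in the Weyl basis and computing the Hilbert--Schmidt norm of the output, comparing with the operator-norm-$1$ constraint of a unitary to conclude that only one term in each sum survives.
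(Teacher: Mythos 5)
Your proposal follows essentially the same route as the paper: the paper likewise reduces to the per-Weyl identity $\Lambda_1\boxtimes\Lambda_2(w(\vec p,\vec q))=\frac{1}{d^n}\Lambda_1(w(Ng_{11}\vec p,g_{00}\vec q))\boxtimes\Lambda_2(w(-Ng_{10}\vec p,g_{01}\vec q))$, expands all three images in the Weyl basis so the coefficients multiply ($\lambda_0=\lambda_1\lambda_2$ after the bijective reindexing allowed by positivity of $G$), and uses that unitaries have normalized Hilbert--Schmidt norm $1$ to force each expansion to concentrate on a single Weyl operator, which is exactly the "clean lemma" you defer to at the end. So the proposal is correct and matches the paper's proof; only the norm bookkeeping (the $\sum_{\vec x,\vec y}|\lambda_j|^2=1$ equality case) needs to be written out as you indicate.
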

\begin{proof}
First, we have
$-n \log d =H(\Lambda_1\boxtimes\Lambda_2)\geq \max\set{H(\Lambda_1), H(\Lambda_2)}$ based on  Theorem \ref{thm:entropy_power_chn} and Lemma \ref{lem:chn_0}. Hence
$H(\Lambda_1)=H(\Lambda_2)=-n\log d$ and both $\Lambda_1$ and $\Lambda_2$ are unitary channels.
Second, for any $(\vec p,\vec q)\in V^n$, we have 
\begin{eqnarray}\label{eq:conv_equal}
\Lambda_1\boxtimes\Lambda_2(w(\vec p, \vec q))
=\frac{1}{d^n}
\Lambda_1(w(Ng_{11}\vec p, g_{00}\vec q))
\boxtimes \Lambda_2 (w(-Ng_{10}\vec p, g_{01}\vec q))\;.
\end{eqnarray}
Assume
\begin{align*}
\Lambda_1\boxtimes \Lambda_2(w(\vec p, \vec q)) =& 
\sum_{\vec x, \vec y} \lambda_0(\vec x, \vec y) w(\vec x, \vec y) \;,\\
\Lambda_1(w(Ng_{11}\vec p, g_{00}\vec q)) =& 
\sum_{\vec x, \vec y} \lambda_1(\vec x, \vec y) w(\vec x ,\vec y) \;,\\
 \Lambda_2 (w(-Ng_{10}\vec p, g_{01}\vec q)) =&
\sum_{\vec x, \vec y} \lambda_2(\vec x, \vec y) w(\vec x, \vec  y) \;,
\end{align*}
Then from \eqref{eq:conv_equal} one infers 
\begin{eqnarray*}
\lambda_0(\vec x, \vec y)
=\lambda_1(Ng_{11}\vec x, g_{00}\vec y)
 \lambda_2 (-Ng_{10}\vec x, g_{01}\vec y) \;.
\end{eqnarray*}
Since  $\Lambda_1$, $\Lambda_2$ and $\Lambda_1\boxtimes \Lambda_2$ are unitary channels, 
\begin{align}\label{0117shi2}
\sum_{\vec x,\vec y} |\lambda_j(\vec x, \vec y) |^2=1 \;,\quad j=0,1,2.
\end{align}
Hence, 
\begin{eqnarray*}
1= \sum_{\vec x,\vec y}|\lambda_0(\vec x, \vec y)|^2
=\sum_{\vec x,\vec y}
|\lambda_1(Ng_{11}\vec x, g_{00}\vec y)
 \lambda_2 (-Ng_{10}\vec x, g_{01}\vec y)|^2\\
 \leq 
 \min\left\{
 \sum_{\vec x,\vec y}|\lambda_1(\vec x, \vec y)|^2,
 \sum_{\vec x,\vec y}|\lambda_2(\vec x, \vec y)|^2
 \right\}=1 \;.
\end{eqnarray*}
Condition \eqref{0117shi2} says that the above inequality is an equality,
thus we can find some $\vec x_0, \vec y_0$ such that 
\begin{align*}
|\lambda_0(\vec x_0 \vec y_0) |= 
|\lambda_1(Ng_{11}\vec x_0, g_{00}\vec y_0) |= |\lambda_2(-Ng_{10}\vec x_0, g_{01}\vec y_0) |=1 \;,
\end{align*}
and the proof is complete.
\end{proof}

\begin{thm}\label{thm:min_out_cupchn}
Let $\Lambda_1$ and $\Lambda_2$ be two $n$-qudit quantum channels. 
The output channel $\Lambda_1\boxtimes \Lambda_2$ has the minimal channel entropy $-n\log d$ iff
 $\Lambda_1$ and $\Lambda_2$ are two Clifford unitary channels such that, for every $(\vec{p},\vec q)\in V^n$,
there exists $(\vec{x}_0,\vec{y}_0)\in V^n$ satisfying 
\begin{eqnarray*}
\Lambda_1(w(Ng_{11}\vec p, g_{00}\vec q))&=&w(Ng_{11}\vec x_0, g_{00}\vec y_0) \;,\\
 \Lambda_2 (w(-Ng_{10}\vec p, g_{01}\vec q))&=&w(-Ng_{10}\vec x_0, g_{01}\vec y_0) \;,
\end{eqnarray*}
up to some phases.
\end{thm}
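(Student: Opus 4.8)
\textbf{Proof proposal for Theorem \ref{thm:min_out_cupchn}.}

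The plan is to reduce this ``equality case'' statement to the combination of Theorem \ref{thm:entropy_power_chn} (channel entropy is nondecreasing under convolution) and Lemma \ref{lem:chann_unitary} (the structural description of unitary-producing convolutions), together with Lemma \ref{lem:chn_0} (channel entropy equals $-n\log d$ iff the channel is a unitary channel). The ``if'' direction should be essentially immediate: assuming $\Lambda_1,\Lambda_2$ are Clifford unitaries satisfying the stated intertwining relations, I would use the exact formula for the convolution of channels, Theorem \ref{thm:exact_conv_chan}, to compute $\Lambda_1\boxtimes\Lambda_2$ on each Weyl operator $w(\vec p,\vec q)$; by Lemma \ref{lem:adj_con} the input $\mathcal{E}^{-1}(w(\vec p,\vec q))=\frac1{d^n}\mathcal{E}^\dag(w(\vec p,\vec q))=\frac1{d^n}w(Ng_{11}\vec p,g_{00}\vec q)\ot w(-Ng_{10}\vec p,g_{01}\vec q)$, applying $\Lambda_1\ot\Lambda_2$ sends this (up to phase) to $\frac1{d^n}w(Ng_{11}\vec x_0,g_{00}\vec y_0)\ot w(-Ng_{10}\vec x_0,g_{01}\vec y_0)$, and then $\mathcal{E}$ (again via Lemma \ref{lem:adj_con} read backwards, i.e.\ via Proposition \ref{prop:U_wely}) collapses this to $w(\vec x_0,\vec y_0)$ up to phase. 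Hence $\Lambda_1\boxtimes\Lambda_2$ maps each Weyl operator to a Weyl operator up to a phase, so it is a unitary Clifford channel, and by Lemma \ref{lem:chn_0} its channel entropy is the minimal value $-n\log d$.

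For the ``only if'' direction I would argue as follows. Suppose $H(\Lambda_1\boxtimes\Lambda_2)=-n\log d$. By Lemma \ref{lem:chn_0}, $\Lambda_1\boxtimes\Lambda_2$ is a unitary channel. Now invoke Theorem \ref{thm:entropy_power_chn}(3): since $G$ is positive, $-n\log d = H(\Lambda_1\boxtimes\Lambda_2)\ge \max\{H(\Lambda_1),H(\Lambda_2)\}\ge -n\log d$, forcing $H(\Lambda_1)=H(\Lambda_2)=-n\log d$, so both $\Lambda_1$ and $\Lambda_2$ are unitary channels by Lemma \ref{lem:chn_0}. At this point all three channels are unitary, so Lemma \ref{lem:chann_unitary} applies verbatim and yields not only that $\Lambda_1,\Lambda_2,\Lambda_1\boxtimes\Lambda_2$ are Clifford unitaries but also, for each $(\vec p,\vec q)\in V^n$, the existence of $(\vec x_0,\vec y_0)$ with the three displayed intertwining identities (up to phases). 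Dropping the first of these three identities gives exactly the statement of the theorem.

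The one point requiring a little care — and the place I expect to spend the most effort — is making sure the hypotheses of Theorem \ref{thm:entropy_power_chn} and Lemma \ref{lem:chann_unitary} are actually in force. The theorem statement as written does not explicitly restate that $G$ is positive and invertible, but both ingredients need it (Theorem \ref{thm:entropy_power_chn}(3) needs positivity; Lemma \ref{lem:chann_unitary} needs positive and invertible). So the clean way to write the proof is to note at the outset that we are in the setting of Lemma \ref{lem:chann_unitary}, i.e.\ $G$ positive and invertible (this matches the companion Theorem \ref{thm:min_outen} for states, which carries the same standing hypothesis). A secondary routine check is that the phase factors accumulated when passing Weyl operators through $\mathcal{E}^{-1}$, then $\Lambda_1\ot\Lambda_2$, then $\mathcal{E}$ are all unimodular and hence harmless for the ``if'' direction — this follows from Proposition \ref{prop:comm_wel} and the fact that Clifford conjugation sends Weyl operators to Weyl operators up to a phase. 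Beyond that, the argument is a direct assembly of results already proved, so no genuinely new obstacle arises.
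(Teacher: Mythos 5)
Your proposal is correct and follows essentially the same route as the paper: the ``only if'' direction reduces to Lemma \ref{lem:chn_0}, the entropy inequality of Theorem \ref{thm:entropy_power_chn}, and Lemma \ref{lem:chann_unitary} (whose own proof already derives the unitarity of $\Lambda_1,\Lambda_2$ that you re-derive), while the ``if'' direction is the same direct Weyl-operator computation via Theorem \ref{thm:exact_conv_chan}, Lemma \ref{lem:adj_con} and Proposition \ref{prop:U_wely}. Your explicit remarks on the standing hypothesis that $G$ be positive and invertible and on the unimodular phases merely make explicit what the paper leaves implicit.
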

\begin{proof}
In one direction,
if the output channel $\Lambda_1\boxtimes \Lambda_2$ has the minimal channel entropy $-n\log d$, then   $\Lambda_1\boxtimes \Lambda_2$ is a unitary channel, hence  the statement holds by Lemma \ref{lem:chann_unitary}. 

In the other direction, 
\begin{eqnarray*}
\Lambda_1\boxtimes \Lambda_2(w(\vec p, \vec q))
&=&\frac{1}{d^n}\Lambda_1(w(Ng_{11}\vec p, g_{00}\vec q))
\boxtimes  \Lambda_2 (w(-Ng_{10}\vec p, g_{01}\vec q))\\
&=&\frac{1}{d^n}w(Ng_{11}\vec x_0, g_{00}\vec y_0)
\boxtimes w(-Ng_{10}\vec x_0, g_{01}\vec y_0)
=w(\vec x_0, \vec y_0)
\end{eqnarray*}
up to some phases.
So $\Lambda_1\boxtimes \Lambda_2$ is a unitary channel with channel entropy $-n \log d$.
\end{proof}

Given a quantum channel $\Lambda$, denote $\Theta_{\Lambda}$ to be the superchannel that maps a channel $\Lambda_1$ to $\Lambda_1\boxtimes \Lambda$.
In the following, we consider the Holevo capacity of superchannels $\Theta_{\Lambda}$,
which is defined as follows.

\begin{Def}[\bf Holevo capacity for superchannel]
The Holevo capacity of a superchannel $\Theta$ is 
\begin{align}\label{0117shi4}
\chi_H(\Theta) = \sup_{p_i\,,\CNN_i} \left\{ H\left(\sum_i p_i \Theta(\CNN_i)\right) - \sum_i p_i H\left(\Theta (\CNN_i)\right) \right\} \;,
\end{align}
where each $p_i\ge 0$, \   $ \sum_i p_i=1$,
and each $\CNN_i$ is a quantum channel.
\end{Def}

\begin{figure}[t]
  \center{\includegraphics[width=12cm]{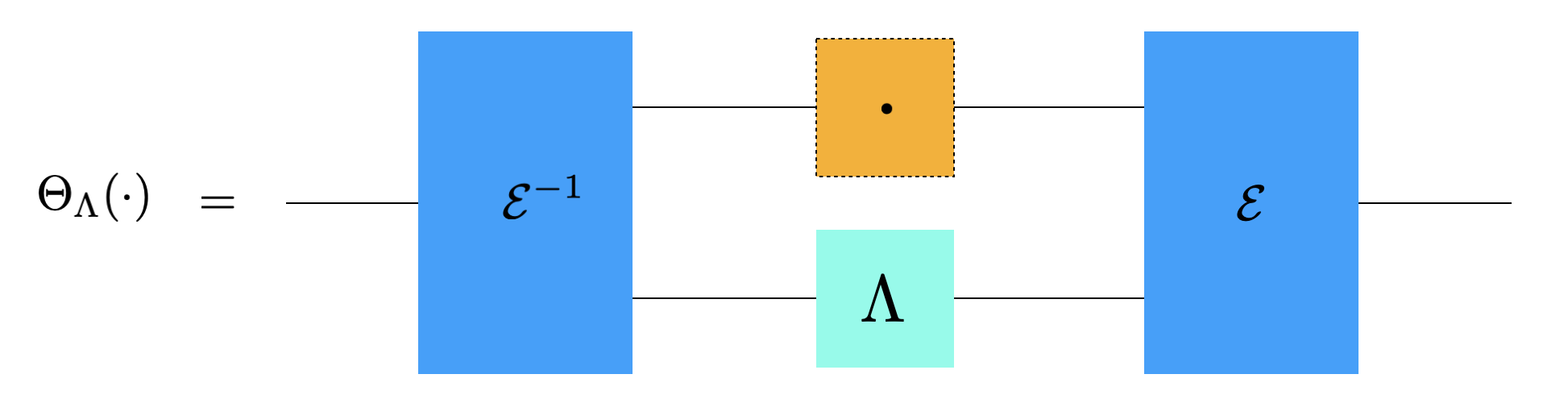} }    
  \caption{ The quantum circuit to realize the superchannel $\Theta_{\Lambda}$.}
 \end{figure}

\begin{thm}
Let $\Lambda$ be a unitary channel.
The Holevo capacity 
$\chi_H(\Theta_{\Lambda})\leq 2n\log d$, and $\Theta_{\Lambda}$ achieves the maximal Holevo capacity $2n\log d$,  if and only if 
$\Lambda $ is a Clifford unitary channel.
\end{thm}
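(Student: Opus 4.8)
The plan is to prove the upper bound $\chi_H(\Theta_\Lambda)\le 2n\log d$ directly from the definition and an entropy argument, and then to establish the equality characterization using the channel-entropy machinery built up in Lemmas \ref{lem:chn_0}, \ref{lem:chann_unitary} and Theorem \ref{thm:min_out_cupchn}. For the upper bound, note that $\Theta_\Lambda(\CNN_i)=\CNN_i\boxtimes\Lambda$ is an $n$-qudit channel, so its Choi state lives on a $2n$-qudit system and hence $H(\Theta_\Lambda(\CNN_i))\ge -n\log d$ by Lemma \ref{lem:chn_0}, while $H\!\left(\sum_i p_i\,\Theta_\Lambda(\CNN_i)\right)\le n\log d$ since the channel entropy of any $n$-qudit channel is at most $n\log d$ (the entropy of the depolarizing channel, equivalently the maximal von Neumann entropy of a $2n$-qudit Choi state minus $n\log d$). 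Combining, each term in the supremum in \eqref{0117shi4} is at most $n\log d-(-n\log d)=2n\log d$, giving $\chi_H(\Theta_\Lambda)\le 2n\log d$.

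Next I would handle the ``if'' direction: suppose $\Lambda$ is a Clifford unitary channel. I want to exhibit an ensemble $\{p_i,\CNN_i\}$ saturating both bounds simultaneously, i.e. with $H(\Theta_\Lambda(\CNN_i))=-n\log d$ for every $i$ and $H(\sum_i p_i\Theta_\Lambda(\CNN_i))=n\log d$. By Theorem \ref{thm:min_out_cupchn}, the first condition forces each $\CNN_i$ to be a Clifford unitary channel compatible with $\Lambda$ in the displayed sense; concretely, one should take $\CNN_i$ to run over the $d^{2n}$ Weyl-conjugation channels $\rho\mapsto w(\vec x_i,\vec y_i)\rho\, w(\vec x_i,\vec y_i)^\dag$ (or an appropriate subset), each giving $\Lambda_1\boxtimes\Lambda$ a unitary channel of minimal entropy $-n\log d$ by the computation at the end of the proof of Theorem \ref{thm:min_out_cupchn}. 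Taking $p_i$ uniform, the mixture $\sum_i p_i\Theta_\Lambda(\CNN_i)$ has Choi state equal to the uniform average over a full Weyl orbit of a pure stabilizer Choi state; using Proposition \ref{prop:comm_wel} (the covariance of $\mathcal{E}$, equivalently of $\boxtimes$, under Weyl conjugation) together with Lemma \ref{lem:conv_iden}, this average is the maximally mixed Choi state $I_{2n}/d^{2n}$, whose channel entropy is $n\log d$. Hence $\chi_H(\Theta_\Lambda)=2n\log d$.

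For the ``only if'' direction, assume $\chi_H(\Theta_\Lambda)=2n\log d$. Since the supremum in \eqref{0117shi4} is over finite ensembles and the objective is continuous on the (compact) set of channel ensembles of bounded size — or, if one prefers, by taking a maximizing sequence and passing to a limit — there is an ensemble $\{p_i,\CNN_i\}$ with $p_i>0$ achieving the value $2n\log d$. The bound in the previous paragraph is tight only if, for every $i$, $H(\Theta_\Lambda(\CNN_i))=-n\log d$, i.e. $\CNN_i\boxtimes\Lambda$ is a unitary channel. By Lemma \ref{lem:chann_unitary}, this already forces $\Lambda$ (and each $\CNN_i$) to be a Clifford unitary channel. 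This closes the argument.

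\textbf{Main obstacle.} The routine part is the inequality; the delicate point is the ``if'' direction, specifically verifying that some explicit ensemble of Weyl-conjugation channels simultaneously achieves $H(\Theta_\Lambda(\CNN_i))=-n\log d$ for all $i$ \emph{and} makes the output mixture exactly maximally mixed. One must check that the compatibility condition \eqref{0125shi2}-type relation from Theorem \ref{thm:min_out_cupchn} is met by Weyl-conjugation channels when $\Lambda$ is Clifford — this uses that Clifford conjugation permutes Weyl operators and the explicit action of $\mathcal{E}^\dag$ from Lemma \ref{lem:adj_con} — and that summing over a complete orbit collapses the characteristic function to $\delta_{\vec p,\vec 0}\delta_{\vec q,\vec 0}$. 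A secondary subtlety is justifying that the supremum defining $\chi_H(\Theta_\Lambda)$ is attained (or can be approached by ensembles for which the tightness analysis applies), which is handled by a standard compactness/continuity argument once one restricts to ensembles of size at most $(d^{2n})^2+1$ via Carathéodory.
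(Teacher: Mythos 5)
Your upper bound and your ``only if'' direction are essentially the paper's argument (the paper in fact skips the attainment-of-supremum point that you patch with a compactness/Carath\'eodory remark, which is fine). The genuine gap is in the ``if'' direction, and it is exactly at the spot you flagged as the main obstacle: the concrete ensemble you propose does not satisfy the compatibility condition of Theorem \ref{thm:min_out_cupchn}. If $\CNN_i$ is a plain Weyl-conjugation channel $\rho\mapsto w(\vec a,\vec b)\rho\,w(\vec a,\vec b)^\dag$, then $\CNN_i$ fixes every Weyl operator up to a phase, so in the condition of Theorem \ref{thm:min_out_cupchn} (with $\Lambda_1=\CNN_i$, $\Lambda_2=\Lambda$) the first displayed equation forces $(\vec x_0,\vec y_0)=(\vec p,\vec q)$, and the second then demands $\Lambda(w(-Ng_{10}\vec p,g_{01}\vec q))=w(-Ng_{10}\vec p,g_{01}\vec q)$ up to phase for \emph{all} $(\vec p,\vec q)$; since $g_{10},g_{01}\neq 0$ this says $\Lambda$ fixes every Weyl operator up to phase, i.e.\ $\Lambda$ is itself a Weyl conjugation. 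For a generic Clifford $\Lambda$ (e.g.\ one mapping $X$ to $Z$), $\CNN_i\boxtimes\Lambda$ is therefore \emph{not} unitary, so $H(\Theta_\Lambda(\CNN_i))>-n\log d$ and your ensemble does not certify the value $2n\log d$.

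What is missing is the construction the paper supplies: given Clifford $\Lambda=U_\Lambda(\cdot)U_\Lambda^\dag$, one first builds an auxiliary Clifford unitary channel $\mathcal{N}$ \emph{adapted to} $\Lambda$, defined on generators so that $\mathcal{N}(w(Ng_{11}\vec p,g_{00}\vec q))$ and $\Lambda(w(-Ng_{10}\vec p,g_{01}\vec q))$ land on the matched pair $w(Ng_{11}\vec x_0,g_{00}\vec y_0)$ and $w(-Ng_{10}\vec x_0,g_{01}\vec y_0)$ for a common $(\vec x_0,\vec y_0)$ (this uses that $U_\Lambda$ sends the generators $X_i^{g_{01}},Z_i^{-Ng_{10}}$ to Weyl operators $w(-Ng_{10}\vec x_j,g_{01}\vec y_j)$ with the symplectic relations preserved, so the matching $\mathcal{N}$ exists and is Clifford). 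The correct ensemble is then $\mathcal{N}_{\vec p,\vec q}=\mathcal{N}\bigl(w(\vec p,\vec q)\,\cdot\,w(\vec p,\vec q)^\dag\bigr)$ with uniform weights $d^{-2n}$: by Proposition \ref{prop:comm_wel} each $\Theta_\Lambda(\mathcal{N}_{\vec p,\vec q})$ is a Weyl shift of the Clifford unitary channel $\mathcal{N}\boxtimes\Lambda$, hence has entropy $-n\log d$, while the average is $\mathcal{N}\circ\mathcal{R}=\mathcal{R}$, so $\Theta_\Lambda(\mathcal{R})=\mathcal{R}$ has entropy $n\log d$. Without this $\Lambda$-dependent $\mathcal{N}$ (or some equivalent device producing an ensemble of compatible Clifford channels averaging to $\mathcal{R}$), the achievability half of the theorem is not proved.
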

\begin{proof}
First, 
$\chi_H(\Theta_{\Lambda})\leq 2n\log d$ comes from the fact that 
the channel entropy satisfies $-n\log d\leq H(\mathcal{N})\leq n\log d$.
If $\chi_H(\Theta_{\Lambda})=2n\log d $, then $H(\Theta_\Lambda (\CNN_i)) = - n\log d $ for some channel $\CNN_i$,
which implies that $\Lambda \boxtimes\CNN_i $ is unitary. 
By Theorem \ref{thm:min_out_cupchn}, 
$\Lambda$ is a Clifford unitary channel. 

On the other hand, if $\Lambda$ is a Clifford unitary channel, say $\Lambda(\cdot)=U_{\Lambda}(\cdot) U^\dag_{\Lambda}$, 
then there exists a subset $\set{(\vec x_i, \vec y_i)}^{2n}_{i=1}$ of  $V^n$ such that 
\begin{align*}
U_{\Lambda}X^{g_{01}}_{i} U^\dag_{\Lambda}
=&w(-Ng_{10}\vec x_{2i-1}, g_{01}\vec y_{2i-1})\;,  \\
U_{\Lambda} Z^{-Ng_{10}}_i U^\dag_{\Lambda}
=&w(-Ng_{10}\vec x_{2i}, g_{01}\vec y_{2i})\;,
\end{align*}
up to phases, and
\begin{eqnarray*}
\inner{(\vec{x}_i,\vec{y}_i)}{(\vec{x}_j,\vec{y}_j)}_s
=\Omega_{ij} \;, \quad \forall i,j\;,
\end{eqnarray*}
where $\Omega$ is the $2n\times 2n $ matrix 
\begin{equation*}
    \Omega=
    \left[
    \begin{array}{cc}
      0   & 1 \\
      -1  & 0
    \end{array}
    \right]^{\oplus n} \;.
\end{equation*}

Define another unitary quantum channel $\mathcal{N}$ by 
\begin{eqnarray*}
\mathcal{N}(X^{g_{00}}_{i} )
&=&w(Ng_{11}\vec x_{2i-1}, g_{00}\vec y_{2i-1}) \;,\\
\mathcal{N}( Z^{Ng_{11}}_i )
&=&w(Ng_{11}\vec x_{2i}, g_{00}\vec y_{2i}) \;,
\end{eqnarray*}
then $\mathcal{N}$ is 
a Clifford unitary channel. 
Moreover, we have 
\begin{eqnarray*}
\mathcal{N}\boxtimes \Lambda(X_i)
&=&w(x_{2i-1}, y_{2i-1}) \;,\\
\mathcal{N}\boxtimes \Lambda(Z_i)
&=&w(x_{2i}, y_{2i}) \;.
\end{eqnarray*}
Now, for any $(\vec p, \vec q)\in V^n$,
let $\mathcal{N}_{\vec p,\vec q}$ be the quantum channel such that 
\begin{eqnarray*}
\mathcal{N}_{\vec p, \vec q}(\rho)
=\mathcal{N}\left(w(\vec p,\vec q)\rho w(\vec p, \vec q)^\dag\right),
\end{eqnarray*}
for any quantum state $\rho$.
Let  $\lambda_{\vec p, \vec q}=\frac{1}{d^{2n}}$ for each $(\vec p,\vec q)$,
then
\begin{eqnarray*}
\sum_{\vec p,\vec q}\lambda_{\vec p, \vec q}
\mathcal{N}_{\vec p,\vec q}
=\mathcal{N}\circ \mathcal{R}
=\mathcal{R} \;,
\end{eqnarray*}
and thus 
\begin{eqnarray*}
\sum_{\vec p,\vec q}\lambda_{\vec p, \vec q}
\Theta_{\Lambda}
(\mathcal{N}_{\vec p,\vec q})
=\Theta_{\Lambda}(\mathcal{R})
=\mathcal{R} \;.
\end{eqnarray*}
Hence
\begin{eqnarray*}
H\left(\sum_{\vec p, \vec q}\lambda_{\vec p, \vec q} \mathcal{N}_{\vec p, \vec q}\right)
=n\log d \;.
\end{eqnarray*}

One also observes that
\begin{eqnarray*}
\Theta_{\Lambda}(\mathcal{N}_{\vec p,\vec q})(\rho)
=
\mathcal{N}_{\vec p,\vec q}
\boxtimes \Lambda(\rho)
=\mathcal{N}\boxtimes \Lambda
 \left(w(g_{11}N\vec p, g_{00}\vec q)\rho w(g_{11}N\vec p, g_{00}\vec q)^\dag\right) \;,
\end{eqnarray*}
i.e., $\Theta_{\Lambda}(\mathcal{N}_{\vec p,\vec q})$ is Clifford unitary. 
Hence
\begin{eqnarray*}
H\left(\Theta_{\Lambda}(\mathcal{N}_{\vec p,\vec q})\right)
=-n\log d \;.
\end{eqnarray*}
Therefore $
\chi_H(\Theta_{\Lambda})=2n\log d
$.
\end{proof}

\subsection{Central limit theorem for channels}

Denote $\boxtimes^{N+1}\Lambda=(\boxtimes^N \Lambda)\boxtimes\Lambda$, and $\boxtimes^0\Lambda=\Lambda$, where $\boxtimes$ is short for a beam splitter convolution 
$\boxtimes_{s,t}$.

\begin{Def}[\bf Zero-mean channel]\label{0128def2}
An $n$-qudit channel $\Lambda$ is called a zero-mean channel if 
its Choi state $J_\Lambda$ is a zero-mean state.
\end{Def}

\begin{cor}\label{0212cor5}
A channel $\Lambda$   has zero mean if and only if the characteristic function of $\Lambda(w(\vec x,\vec y))$ takes values in $\{0,1\}$ for every $(\vec x,\vec y)\in V^n$.
\end{cor}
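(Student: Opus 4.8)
The plan is to unwind the three definitions — zero-mean channel, zero-mean state, and the mean-value vector — and show the claimed equivalence reduces to the analogous statement for the Choi state $J_\Lambda$, which is already packaged in Definition \ref{Def:Zero_mean}. By Definition \ref{0128def2}, $\Lambda$ has zero mean iff $J_\Lambda$ is a zero-mean state, and by Definition \ref{Def:Zero_mean} this holds iff the characteristic function of the mean state $\CMM(J_\Lambda)$ takes values in $\{0,1\}$, equivalently iff every value $\Xi_{J_\Lambda}(\vec p,\vec q)$ of modulus $1$ actually equals $1$ (the values of modulus $<1$ are killed in $\CMM$ anyway, so the condition is only about the unit-modulus ones).

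So the first step is to translate this condition on $\Xi_{J_\Lambda}$ into a condition on $\Lambda$ acting on Weyl operators. I would use the expansion \eqref{0127shi3}, namely $J_\Lambda = \frac{1}{d^{2n}} \sum_{(\vec p,\vec q)} w(\vec p,\vec q) \otimes \Lambda(w(-\vec p,\vec q))$, together with the fact that $\{w(\vec p,\vec q)\otimes w(\vec p',\vec q')\}$ is an orthonormal basis of $L(\mathcal H^{\ot n}\ot\mathcal H^{\ot n})$ under $\inner{A}{B}=\frac1{d^{2n}}\Tr{A^\dag B}$. Writing $\Lambda(w(\vec x,\vec y)) = \sum_{(\vec p,\vec q)} c_{\vec x,\vec y}(\vec p,\vec q)\, w(\vec p,\vec q)$, one reads off that the characteristic function coefficients of $J_\Lambda$ are (up to signs/relabeling of the arguments, which I would track carefully) exactly the numbers $c_{\vec x,\vec y}(\vec p,\vec q)$. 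Hence the set of values $\{\Xi_{J_\Lambda}(\cdot)\}$ coincides, up to this bookkeeping, with the set of all expansion coefficients $\{c_{\vec x,\vec y}(\vec p,\vec q) : (\vec x,\vec y),(\vec p,\vec q)\in V^n\}$, i.e.\ with the union over all $(\vec x,\vec y)$ of the values taken by the characteristic function of $\Lambda(w(\vec x,\vec y))$.

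The second step is then purely logical: "$\Xi_{J_\Lambda}$ takes values in $\{0,1\}$ on its unit-modulus support, i.e.\ every coefficient of modulus $1$ is $1$, and (vacuously for the rest) all other coefficients are unconstrained except $|\cdot|<1$" — wait, more precisely, $J_\Lambda$ zero-mean means $\Xi_{\CMM(J_\Lambda)}\in\{0,1\}$, which by the definition of $\CMM$ is equivalent to: for every $(\vec p,\vec q)$ with $|\Xi_{J_\Lambda}(\vec p,\vec q)|=1$ one has $\Xi_{J_\Lambda}(\vec p,\vec q)=1$. By the identification of the previous paragraph this says: for every $(\vec x,\vec y)$ and every coefficient $c_{\vec x,\vec y}(\vec p,\vec q)$ of modulus $1$, that coefficient equals $1$. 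Since a quantum channel applied to a Weyl operator is an operator of Hilbert-Schmidt norm $\le 1$ (as $\|w(\vec x,\vec y)\|_2 = d^{n/2}$ and $\Lambda$ is a contraction in the normalized inner product — I should state this cleanly), its nonzero characteristic-function values have modulus $\le 1$, with a value of modulus exactly $1$ forcing $\Lambda(w(\vec x,\vec y))$ to be a single Weyl operator times a phase $1$. So "characteristic function of $\Lambda(w(\vec x,\vec y))$ takes values in $\{0,1\}$ for every $(\vec x,\vec y)$" is exactly the same statement. That closes the equivalence.

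The main obstacle is the index bookkeeping: matching the four-argument characteristic function $\Xi_{J_\Lambda}(\vec p_A,\vec p_{A'},\vec q_A,\vec q_{A'})$ to the coefficients of $\Lambda$ on Weyl operators with the correct sign conventions from \eqref{0127shi3}, and making sure the passage from "values of $\Xi_{\CMM(J_\Lambda)}$ lie in $\{0,1\}$" to "unit-modulus values of $\Xi_{J_\Lambda}$ equal $1$" is stated without circularity. None of this is deep, so the corollary really is a direct corollary; I would present it as three short displayed equivalences rather than a long computation, citing \eqref{0127shi3}, Definition \ref{0128def2}, and Definition \ref{Def:Zero_mean}.
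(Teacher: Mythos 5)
Your reduction is the right one, and it is the route the paper itself intends: by Definition \ref{0128def2} and Definition \ref{Def:Zero_mean}, zero mean for $\Lambda$ is the statement that every modulus-one value of $\Xi_{J_\Lambda}$ equals $1$, and via \eqref{0127shi3} one has $\Xi_{J_\Lambda}[(-\vec x,\vec y),(\vec a,\vec b)]=\frac{1}{d^n}\Xi_{\Lambda(w(\vec x,\vec y))}(\vec a,\vec b)$, i.e.\ the values of $\Xi_{J_\Lambda}$ are exactly the normalized Weyl coefficients $c_{\vec x,\vec y}(\vec a,\vec b)$ of $\Lambda(w(\vec x,\vec y))$ (note the factor $1/d^n$, which both you and the paper gloss over; with the unnormalized characteristic function of Definition \ref{def:charfn} even the identity channel would violate the statement, since $\Xi_{w(\vec x,\vec y)}(\vec x,\vec y)=d^n$). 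Up to this point your bookkeeping plan is sound, and it matches how the corollary is actually invoked in Lemma \ref{lem:zero_wel_pre}.

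The gap is in your final step, in two places. First, a quantum channel is not a contraction in Hilbert--Schmidt norm: for $\Lambda(A)=\Tr{A}\,\sigma$ with $\sigma$ pure one gets $\norm{\Lambda(I)}_2=d^n>d^{n/2}=\norm{I}_2$, so a modulus-one coefficient does not force $\Lambda(w(\vec x,\vec y))$ to be a single Weyl operator times a phase (the correct and sufficient bound $|c_{\vec x,\vec y}(\vec a,\vec b)|\le 1$ follows instead from $J_\Lambda$ being a state). Second, and more seriously, ``every modulus-one coefficient equals $1$'' is \emph{not} the same statement as ``all coefficients lie in $\{0,1\}$'': zero mean only constrains $\CMM(J_\Lambda)$, i.e.\ the modulus-one part of $\Xi_{J_\Lambda}$, and coefficients of modulus strictly between $0$ and $1$ are perfectly allowed. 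Concretely, the depolarizing channel $\Lambda_p(\rho)=(1-p)\rho+p\Tr{\rho}I_n/d^n$ has Choi state $(1-p)\proj{\Phi}+p\,I/d^{2n}$, whose mean state is $I/d^{2n}$, so $\Lambda_p$ is zero-mean; yet $\frac{1}{d^n}\Xi_{\Lambda_p(w(\vec x,\vec y))}(\vec x,\vec y)=1-p\notin\{0,1\}$. So the ``only if'' direction of the literal reading cannot be proved the way you close the argument (or at all); what your reduction genuinely establishes — and what Lemma \ref{lem:zero_wel_pre} actually uses — is the weaker, correct equivalence: $\Lambda$ has zero mean iff every value of $\frac{1}{d^n}\Xi_{\Lambda(w(\vec x,\vec y))}$ of modulus one equals $1$, for all $(\vec x,\vec y)\in V^n$. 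You should either state and prove that version, or flag explicitly that the ``values in $\{0,1\}$'' phrasing must be read as a condition on the unit-modulus values only.
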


\begin{Def}[\bf Diamond distance, \cite{Aharonov98}]
Given two $n$-qudit channels $\Lambda_1$ and $\Lambda_2$, the diamond distance between $\Lambda_1$ and $\Lambda_2$ is
\begin{eqnarray*}
\norm{\Lambda_1-\Lambda_2}_{\diamond}
=\sup_{\rho\in \mathcal{D}(\mathcal{H}_S\ot \mathcal{H}_R)}\norm{(\Lambda_1-\Lambda_2)\ot id_R(\rho)}_1 \;,
\end{eqnarray*}
where $id_R$ is the identity mapping on the ancilla system.
\end{Def}

\begin{Def}[\bf Magic gap of a quantum channel]
Given a quantum channel 
$\Lambda$, 
the magic gap of $\Lambda$ is  the magic gap of the Choi state $J_{\Lambda}$, i.e., 
\begin{eqnarray*}
MG(\Lambda)
:=MG(J_{\Lambda}) \;.
\end{eqnarray*}
\end{Def}

\begin{lem}\label{lem:zero_wel_pre}
Let $\Lambda$ be a zero-mean channel.
For every $(\vec p,\vec q)\in V^n$ and $t\in \Z_d$, we have
\begin{align*}
\Xi_{\Lambda (w(\vec p,\vec q)) }(\vec a,\vec b) = 1 \quad \Rightarrow \quad \Xi_{\Lambda( w(t\vec p,t\vec q)) }(t\vec a,t\vec b) = 1 \;.
\end{align*}

\end{lem}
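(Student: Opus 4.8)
The plan is to reduce everything to the already-proved structure theorems for states with a modulus-one characteristic-function value. Observe first that since $\Lambda$ is a zero-mean channel, its Choi state $J_\Lambda$ is a zero-mean state, so by Definition \ref{Def:Zero_mean} the characteristic function $\Xi_{J_\Lambda}$ takes values only in $\{0,1\}$; equivalently, by Corollary \ref{0212cor5}, for every Weyl operator $w(\vec u,\vec v)$ the characteristic function of $\Lambda(w(\vec u,\vec v))$ takes values in $\{0,1\}$. So the content of the lemma is not about the modulus (which is automatically $0$ or $1$) but about \emph{which} coordinates carry the value $1$, and how that set transforms when $(\vec p,\vec q)$ is scaled by $t\in\Z_d$.

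Next I would fix $(\vec p,\vec q)\in V^n$ and work with the operator $A:=\Lambda(w(\vec p,\vec q))/\sqrt{d^n}$, or more conveniently pass through the Choi state: write out $\Xi_{\Lambda(w(\vec p,\vec q))}(\vec a,\vec b)$ in terms of $\Xi_{J_\Lambda}$ using \eqref{0127shi3}, namely
\begin{align*}
\Xi_{\Lambda(w(\vec p,\vec q))}(\vec a,\vec b) = d^n\,\Tr{J_\Lambda\,\big(w(\vec p,\vec q)^T\ot w(-\vec a,-\vec b)\big)} \;,
\end{align*}
so the hypothesis $\Xi_{\Lambda(w(\vec p,\vec q))}(\vec a,\vec b)=1$ says that $J_\Lambda$ has a modulus-one characteristic value at the point of $V^n\times V^n$ corresponding to $(\vec p,\vec q,\vec a,\vec b)$. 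Now apply Lemma \ref{0106lem1} to $J_\Lambda$: the set of points where $|\Xi_{J_\Lambda}|=1$ is closed under multiplication by scalars $t\in\Z_d$, and moreover the value at the $t$-scaled point is the $t$-th power of the original value. Since the original value is $1$, the $t$-th power is again $1$. Translating back through the same Choi-state identity (and using that $w(t\vec p,t\vec q)^T$ pairs with $w(-t\vec a,-t\vec b)$ at exactly the $t$-scaled point), we get $\Xi_{\Lambda(w(t\vec p,t\vec q))}(t\vec a,t\vec b)=1$, which is the claim.

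The main obstacle — really the only thing requiring care — is bookkeeping: making sure the scaling map $(\vec p,\vec q,\vec a,\vec b)\mapsto(t\vec p,t\vec q,t\vec a,t\vec b)$ on $V^n\times V^n$ is exactly the scaling map to which Lemma \ref{0106lem1} (the ``$t\vec x\in S$ and $\Xi$ multiplies'' clause) applies, and that the transpose and the sign flips in \eqref{0127shi3} do not disturb this. An alternative, perhaps cleaner route that avoids the Choi state entirely: note $w(\vec p,\vec q)^t = \chi(c_t)\,w(t\vec p,t\vec q)$ for an explicit phase $c_t$ (from \eqref{0106shi2}), apply $\Lambda$, and compare characteristic functions of $\Lambda(w(\vec p,\vec q))^{\otimes \text{-power-like}}$ — but since $\Lambda$ is not multiplicative this does not literally work, so I would instead stick with the Choi-state argument and Lemma \ref{0106lem1}. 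I expect the whole proof to be three or four lines once the identification is set up correctly.
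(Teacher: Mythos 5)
Your proposal is correct and follows essentially the same route as the paper: translate the hypothesis into a modulus-one value of $\Xi_{J_\Lambda}$ via the Choi-state identity \eqref{0127shi2}/\eqref{0127shi3}, apply Lemma \ref{0106lem1} to the point scaled by $t$, and translate back, with the sign flip and transpose commuting with the scaling exactly as you anticipate. The one small difference is at the final step: you invoke a ``value at the $t$-scaled point is the $t$-th power'' refinement, which is not literally in the statement of Lemma \ref{0106lem1} (though it follows from its proof since $w(\vec p,\vec q)^t=w(t\vec p,t\vec q)$ for odd prime $d$), whereas the paper simply re-uses the zero-mean assumption (Corollary \ref{0212cor5}) to upgrade modulus one to value one; either closing move is fine.
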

\begin{proof}
Since $\Lambda$ has zero mean,
by Corollary \ref{0212cor5},
both $\Lambda (w(\vec p,\vec q))$ and $\Lambda (w(t\vec p,t\vec q))$ have characteristic functions taking values in $\{0,1\}$.
When 
\begin{align*}
\Xi_{\Lambda (w(\vec p,\vec q)) }(\vec a,\vec b) = 1 \;,
\end{align*}
by \eqref{0127shi3} we have
\begin{align*}
\Xi_{J_\Lambda} [ (-\vec p,\vec q), (\vec a,\vec b) ]= 1 \;.
\end{align*}
By Lemma \ref{0106lem1},
\begin{align*}
|\Xi_{J_\Lambda} [( -t\vec p,t\vec q), (t\vec a,t\vec b)] |= 1 \;.
\end{align*}
We assume $\Lambda$ has zero mean,
thus we have
\begin{align*}
\Xi_{J_\Lambda}[ ( -t\vec p,t\vec q), (t\vec a,t\vec b) ]= 1 \;.
\end{align*}
Again by \eqref{0127shi3},
\begin{align*}
\Xi_{\Lambda (w(t\vec p,t\vec q)) }(t\vec a,t\vec b) = 1 \;.
\end{align*}
The proof is complete.
\end{proof}

\begin{thm}
Given an $n$-qudit zero-mean channel $\Lambda$, we have 
\begin{eqnarray*}
\norm{\boxtimes^N \Lambda-\CMM(\Lambda)}_{\diamond}
\leq d^{2n} (1-MG(\Lambda))^N\norm{J_{\Lambda}-J_{\CMM(\Lambda)}}_2 \;.
\end{eqnarray*}
\end{thm}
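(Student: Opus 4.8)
The plan is to reduce the diamond-distance estimate to the $L_2$-estimate for states that was already proved in Theorem~\ref{thm:CLT_gap}, combined with the exact formula $\Lambda_1\boxtimes\Lambda_2=\mathcal{E}\circ(\Lambda_1\ot\Lambda_2)\circ\mathcal{E}^{-1}$ of Theorem~\ref{thm:exact_conv_chan} and the compatibility of $\CMM$ with $\boxtimes$. First I would record the Choi-state identities: by Definition~\ref{def:con_chan}, $J_{\boxtimes^N\Lambda}=\boxtimes^N J_\Lambda$, and since $\Lambda$ is zero-mean its Choi state $J_\Lambda$ is a zero-mean state (Definition~\ref{0128def2}), so Theorem~\ref{thm:CLT_gap} applies to $J_\Lambda$. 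Also $J_{\CMM(\Lambda)}=\CMM(J_\Lambda)$ by the definition of the mean channel, and the proof that $\CMM(\boxtimes^N J_\Lambda)=\CMM(J_\Lambda)$ goes through via Lemma~\ref{lem:com_Mcon} exactly as in the state case. Hence
\begin{align*}
\norm{J_{\boxtimes^N\Lambda}-J_{\CMM(\Lambda)}}_2
=\norm{\boxtimes^N J_\Lambda-\CMM(J_\Lambda)}_2
\leq (1-MG(J_\Lambda))^N\norm{J_\Lambda-J_{\CMM(\Lambda)}}_2,
\end{align*}
and $MG(\Lambda)=MG(J_\Lambda)$ by definition.

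Next I would convert the Choi-state $L_2$-bound into a diamond-norm bound on the channels. The two ingredients are: (i) the diamond norm is dominated by a constant times the Choi-state trace norm — concretely, for any Hermiticity-preserving map $\Phi$ one has $\norm{\Phi}_\diamond\leq d^n\norm{J_\Phi}_1$ where $J_\Phi=(\id\ot\Phi)(\proj{\Phi})$ (this follows from $\Phi(\rho)=d^n\Ptr{A}{J_\Phi(\rho^T\ot I)}$ in \eqref{0127shi2} together with monotonicity of the trace norm under the partial-trace/multiplication operations and a Cauchy–Schwarz estimate on the ancilla); and (ii) the trace norm is dominated by the $L_2$ norm via $\norm{X}_1\leq\sqrt{\mathrm{rank}(X)}\,\norm{X}_2\leq d^n\norm{X}_2$ on $\mathcal{H}^{\ot n}\ot\mathcal{H}^{\ot n}$ (rank at most $d^{2n}$, so the constant is $d^n$). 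Applying these to $\Phi=\boxtimes^N\Lambda-\CMM(\Lambda)$, whose Choi operator is $J_{\boxtimes^N\Lambda}-J_{\CMM(\Lambda)}$, gives
\begin{align*}
\norm{\boxtimes^N\Lambda-\CMM(\Lambda)}_\diamond
\leq d^n\norm{J_{\boxtimes^N\Lambda}-J_{\CMM(\Lambda)}}_1
\leq d^{2n}\norm{J_{\boxtimes^N\Lambda}-J_{\CMM(\Lambda)}}_2,
\end{align*}
and combining with the displayed $L_2$-bound yields the claimed inequality $\norm{\boxtimes^N\Lambda-\CMM(\Lambda)}_\diamond\leq d^{2n}(1-MG(\Lambda))^N\norm{J_\Lambda-J_{\CMM(\Lambda)}}_2$.

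The main obstacle I anticipate is pinning down the correct constant in step (i), i.e. the precise inequality relating $\norm{\Phi}_\diamond$ to $\norm{J_\Phi}_1$ (and checking that the supremum over ancilla-extended inputs in the diamond norm is genuinely controlled by the Choi operator without an extra dimensional blowup). One clean route is: for a bipartite test state $\omega_{SR}$, write $(\Phi\ot\id_R)(\omega_{SR})=d^n\Ptr{A}{(J_\Phi\ot I_R)(\omega_{SR}^{T_S}\ot I_{A'})}$, bound its trace norm by $d^n\norm{J_\Phi}_1\cdot\norm{\omega_{SR}^{T_S}}_\infty\cdot(\text{operator-norm factors})$, and observe the remaining factors are $\leq 1$; alternatively invoke the standard bound $\norm{\Phi}_\diamond\le d^n\,\norm{J_\Phi}_1$ for $n$-qudit channels as a known fact. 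The rest — the Choi/convolution bookkeeping, the zero-mean propagation through $\CMM$, and the $\norm{\cdot}_1\le d^n\norm{\cdot}_2$ step — is routine given the earlier results, so I would keep those brief and focus the writeup on the diamond-to-Choi reduction.
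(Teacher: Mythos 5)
Your proposal is correct and follows essentially the same route as the paper: bound the diamond norm by $d^n\norm{J_{\boxtimes^N\Lambda}-J_{\CMM(\Lambda)}}_1$ (the paper simply cites the standard Choi-state bound from Watrous rather than rederiving it), pass to the $L_2$ norm via $\norm{\cdot}_1\le\sqrt{d^{2n}}\norm{\cdot}_2$, and apply Theorem~\ref{thm:CLT_gap} to the zero-mean Choi state $J_\Lambda$, using $J_{\boxtimes^N\Lambda}=\boxtimes^N J_\Lambda$, $J_{\CMM(\Lambda)}=\CMM(J_\Lambda)$ and $MG(\Lambda)=MG(J_\Lambda)$. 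No gaps; the only difference is your optional sketch of the diamond-to-Choi inequality, which the paper treats as a known fact.
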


\begin{proof}
We have the estimates
\begin{align*}
\norm{\boxtimes^N \Lambda-\CMM(\Lambda)}_{\diamond}
\leq& d^n \norm{J_{\boxtimes^N \Lambda}-J_{\CMM(\Lambda)}}_1
=d^n \norm{\boxtimes^NJ_{\Lambda}-\CMM(J_{\Lambda})}_1\\
\leq& d^{2n}\norm{\boxtimes^NJ_{\Lambda}-\CMM(J_{\Lambda})}_2
\leq  d^{2n} (1-MG(\Lambda))^N\norm{J_{\Lambda}-J_{\CMM(\Lambda)}}_2 \;,
\end{align*}
where the first inequality comes from the fact that $\norm{ \Lambda_1-\Lambda_2}_{\diamond}\leq d^n\norm{J_{\Lambda_1}-J_{\Lambda_2}}_1$ \cite{Watrous18}, the equality comes from the definition of mean channels, 
and the next inequality comes from the fact that 
$\norm{\cdot}_1\leq \sqrt{d^{2n}}\norm{\cdot}_2$.
The last inequality uses Theorem \ref{thm:CLT_gap}.
\end{proof}

\section{Conclusion and open problems}
In this work, we introduce a framework of convolution to study stabilizer states and channels in DV quantum systems. We have collected some of these results without proof in a companion paper~\cite{BGJ23a},  and also mentioned
several open problems in the last section of our companion work \cite{BGJ23a}. 
In addition, one might consider the following:

(1) We have proved the extremality of the mean state for every quantum R\'enyi entropy. We thank Eric Carlen for asking: is this related to the extremality of Wehrl  entropy in the CV case~\cite{LiebCMP1973,Lieb14}?

(2) We prove the extremality of the mean channel for local dimension being prime $\geq 7$. Does it hold for other prime dimensions?

(3) Here we consider the entropy inequality on convolution of channels by using the channel entropy defined in 
\cite{Gour21}, which depends on the optimization of some joint states on system and ancilla system. Besides Clifford unitaries, what are other quantum channels such that the equality in Theorem \ref{thm:entropy_power_chn} holds?
Can one have  entropy inequalities for convolution with a different definition of channel entropy or convolution?

(4)  Can one consider the convolution of channels in CV systems, e.g., define the convolution in CV systems by using \eqref{eq:exp_box_chan} in Theorem \ref{thm:exact_conv_chan}? More research is necessary on the convolution of quantum channels.

Aside from  the questions mentioned above, further applications and connections of our quantum convolution have been explored in quantum capacity within quantum information theory~\cite{BJ2025a} and in additive combinatorics~\cite{BGJ2025a}. 
 We believe that our quantum convolution will shed further  insights into both mathematics and quantum information theory.

\section{Acknowledgments}
 We thank Roy Garcia, Jiange Li, Seth Lloyd,  and Sijie Luo for helpful discussions.
This work was supported in part by ARO Grant W911NF-19-1-0302, ARO MURI Grant W911NF-20-1-0082, and NSF Eager Grant 2037687.

\section{Appendix}
We recall some basic facts about convolution for CV systems, namely majorization and the existence of parameters in the discrete beam splitter and amplifier.
\subsection{Convolution in CV quantum systems}
In CV quantum systems, Gaussian states (and processes which can be represented in terms of Gaussian distributions) are primary tools in studying CV quantum information~\cite{SethRMP12}. 
One important property of Gaussian states is extremality within all CV states, under some constraint on the covariance matrix~\cite{Holevo99,HolevoMutual99,CerfPRL04,Eisert07}.  
Gaussian states also
minimize the output entropy or maximize the achievable rate of communication by Gaussian channels. One sees this using quantum entropy-power inequalities on the convolution of CV states~\cite{Konig13,KonigPRL13,Konig14,Palma14,PalmaIEE16,Qi2016,Huber17,PalmaPRA15,PalmaIEEE17,PalmaPRL17,PalmaIEEE19}. This statement is a quantum analogue of Shannon's entropy power inequality~\cite{Shannon,Stam,Lieb78}.
These states have both been realized in experiment, and also applied in quantum information tasks, such as quantum teleportation~\cite{Vaidman94,Braunstein98,Tittel98},
quantum-enhanced sensing~\cite{Caves81,Bondurant84,Tan08,Zhuang17},  quantum-key distribution~\cite{Grosshans02} and 
quantum-speed limits~\cite{Becker21}.

Similar to DV systems,  computational processes with only Gaussian states and processes can be efficiently simulated  on a classical computer~\cite{Bartlett02,Mari12,Veitch_2013}. Hence, non-Gaussian states and processes are necessary to implement universal quantum computing~\cite{Lloyd99,BartlettPRL02}. To quantify the non-Gaussian nature of a quantum state or process,  the framework of resource theory has been used~\cite{Albarelli18,Takagi18,Chabaud20}.
CV quantum systems have also been considered as a platform to implement quantum computation and realize quantum advantage. Several sampling tasks have been proposed~\cite{Lund14,Douce17,Hamilton17,Cerf17}, including Gaussian boson sampling, a modification of the original boson sampling proposed by Aaronson and Arkhipov~\cite{aaronson2011computational}. This  has attracted much attention and has been realized experimentally; it is claimed that they  beat classical computers~\cite{Pan20,Pan21,Jonathan22}.

Let us consider the CV quantum system with $n$ modes, which has $2n$ canonical degrees of freedom (see the review paper \cite{SethRMP12}).
Let $\hat{q}_k,\hat{p}_k$ be the “position” and “momentum” operators of the $k$-th mode. 
Let us define
\begin{eqnarray*}
\hat{R}:=(\hat{q}_1,\hat{p}_1,...,\hat{q}_n,\hat{p}_n)^T \;,
\end{eqnarray*}
and these operators satisfy the following relations
\begin{eqnarray*}
[\hat{R}_k,\hat{R}_l]=i\Omega_{kl} \;,
\end{eqnarray*}
where
\begin{equation*}
  \Omega=\left[
\begin{array}{cc}
    0 &  1\\
    -1 & 0
\end{array}  
\right]^{\oplus n} \;,
\end{equation*}
and $\oplus n$ is the $n$-fold direct sum.
The Weyl displacement
 operators are defined as 
 \begin{eqnarray*}
 D(\vec{x})=\exp(i\vec{x}^T\Omega\hat{R}), \forall \vec{x}\in\real^{2n} \;.
 \end{eqnarray*}
 The characteristic function $\Xi_{\rho}$ in CV systems is defined as
 \begin{eqnarray*}
 \Xi_{\rho}[\vec{x}]=\Tr{\rho D(\vec{x})} \;.
 \end{eqnarray*}
 The Wigner function $W_{\rho}$ in CV systems is defined as the Fourier transform of the characteristic function, 
\begin{eqnarray*}
W_{\rho}(\vec{x})
=\int_{\real^{2n}}
\frac{d^{2n}\vec{\xi}}{(2\pi)^{2n}}
\exp(-i\vec{x}^T\Omega \vec{\xi})
\Xi_{\rho}(\vec{\xi}) \;.
\end{eqnarray*}

 The beam splitter $U_{\lambda}$, whose action on $2n$ modes is defined by the symplectic matrix
 \begin{equation*}
 U_{\lambda}
 =\left[
 \begin{array}{cc}
     \sqrt{\lambda}I_2 & \sqrt{1-\lambda}I_2 \\
     -\sqrt{1-\lambda}I_2 &\sqrt{\lambda}I_2
 \end{array}
 \right]^{\oplus n} \;.
 \end{equation*}
 The convolution of two $n$-mode quantum states $\rho\boxtimes_{\lambda}\sigma$ by the beam splitter is  defined as 
 \begin{eqnarray*}
 \rho\boxtimes_{\lambda}\sigma
=\Ptr{2}{U_{\lambda}\rho \ot\sigma 
U^\dag_{\lambda}} \;,
 \end{eqnarray*}
 where $\Ptr{2}{\cdot}$ denotes the partial trace taken on the second $n$ modes.
 
 The squeezing unitary  $V_{\kappa}$, whose action on $2n$ modes is defined by the symplectic matrix
 \begin{equation*}
     V_{\kappa}
=\left[
\begin{array}{cc}
   \sqrt{\kappa} I_2 &\sqrt{\kappa-1}Z_2  \\
   \sqrt{\kappa-1}Z_2  &\sqrt{\kappa} I_2 
\end{array}
\right]^{\oplus n} \;,
 \end{equation*}
 where $Z_2=diag[1,-1]$.
  The convolution of two $n$-mode quantum states $\rho\boxtimes_{\kappa}\sigma$ by the squeezing unitary is defined as 
 \begin{eqnarray*}
 \rho\boxtimes_{\kappa}\sigma
=\Ptr{2}{V_{\kappa}\rho \ot\sigma 
V^\dag_{\kappa}} \;.
 \end{eqnarray*}

\subsection{Majorization}\label{0107app1}
In this Appendix, we introduce definitions and related results on majorization used in the paper.
For the details, we refer to \cite{MOA79}.
Let $\vec x=(x_1,...,x_n)$ be any vector in $\R^n$.
Let
$ x_1^{\downarrow} \ge x_2^{\downarrow}\ge \cdots \ge x_n^{\downarrow}$ denote the components of $\vec x$ in decreasing order,
and denote $\vec x^{\downarrow} = ( x_1^{\downarrow} , ...,x_n^{\downarrow} )$ to be the decreasing rearrangement of $\vec x$. Similarly, we let $\vec x^{\uparrow} = ( x_1^{\uparrow} , ...,x_n^{\uparrow} )$ denote the increasing rearrangement of $\vec x$.
Let $\vec y=(y_1,...,y_n)$ also be a vector in $\R^n$.
We write $\vec x\le \vec y$ if $x_k\le y_k$ for every $1\le k\le n$.
When $\sum_{k=1}^n x_k = \sum_{k=1}^n y_k $, we will say $\vec x$ is majorized by $\vec y$ and write $\vec x \prec\vec y$ if 
\begin{align*}
\sum_{j=1}^k x_k^\downarrow \le \sum_{j=1}^k y_k^\downarrow \;,\quad \forall 1\le k\le n \;.
\end{align*}

A matrix $M = (M_{ij})$ is called doubly stochastic if $M$ has non-negative
entries and each row and each column sums to 1. 
Proposition 1.A.3 in \cite{MOA79} states that $\vec x= \vec y M$ for some doubly stochastic matrix $M$ iff $\vec x\prec \vec y$.
A  function $f:\R^n \rightarrow \R$  is said to be Schur convex if $\vec x \prec \vec y \Rightarrow f (\vec x) \le f (\vec y)$. And $f$ is said to be Schur concave if $- f$ is Schur convex.

\begin{lem}[\bf Page 14, \cite{MOA79}]
    The following conditions are equivalent: 
    
(i) $\vec x\prec \vec y$;

(ii) $\vec x=\vec yP$ for some doubly stochastic matrix $P$;

(iii) $\sum_ig(x_i)\leq \sum_ig(y_i)$ for all continuous convex functions $g$;

(iv) $\sum_i|x_i-a|\leq \sum_i|y_i-a|$ for all $a\in \real$;

(v) $\vec x$ is in the convex hull of the $n!$ permutations of $\vec y$.
     
\end{lem}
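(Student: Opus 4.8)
The plan is to prove the cyclic chain of implications $(\mathrm{i})\Rightarrow(\mathrm{ii})\Rightarrow(\mathrm{v})\Rightarrow(\mathrm{iii})\Rightarrow(\mathrm{iv})\Rightarrow(\mathrm{i})$; closing this loop establishes the mutual equivalence of all five conditions. Three of the arrows are short. For $(\mathrm{ii})\Rightarrow(\mathrm{v})$ I would invoke the Birkhoff--von Neumann theorem: a doubly stochastic $P$ is a convex combination $P=\sum_k\lambda_k\Pi_k$ of permutation matrices, so $\vec x=\vec y P=\sum_k\lambda_k(\vec y\,\Pi_k)$ lies in the convex hull of the permutations of $\vec y$. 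For $(\mathrm{v})\Rightarrow(\mathrm{iii})$, writing $\vec x=\sum_k\lambda_k\vec y^{(k)}$ with each $\vec y^{(k)}$ a permutation of $\vec y$, Jensen's inequality applied coordinatewise gives $\sum_i g(x_i)\le\sum_k\lambda_k\sum_i g\big(y^{(k)}_i\big)=\sum_i g(y_i)$ for any continuous convex $g$, since permutations fix $\sum_i g(y_i)$. For $(\mathrm{iii})\Rightarrow(\mathrm{iv})$, specialize to the continuous convex function $g(t)=|t-a|$.

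The computational heart is $(\mathrm{iv})\Rightarrow(\mathrm{i})$. I would use the elementary identity $\max(t-a,0)=\tfrac12\big(|t-a|+(t-a)\big)$. Letting $a\to+\infty$ and then $a\to-\infty$ in $(\mathrm{iv})$ forces $\sum_i x_i=\sum_i y_i$, after which $(\mathrm{iv})$ upgrades to $\sum_i\max(x_i-a,0)\le\sum_i\max(y_i-a,0)$ for every $a\in\real$. Taking $a=y^{\downarrow}_k$ and combining $\sum_i\max(x_i-a,0)\ge\sum_{j=1}^k\big(x^{\downarrow}_j-a\big)$ with the exact evaluation $\sum_i\max(y_i-a,0)=\sum_{j=1}^k y^{\downarrow}_j-ka$ yields $\sum_{j=1}^k x^{\downarrow}_j\le\sum_{j=1}^k y^{\downarrow}_j$ for all $k$, which together with the equality of totals is exactly $(\mathrm{i})$.

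The main obstacle is $(\mathrm{i})\Rightarrow(\mathrm{ii})$, the theorem of Hardy, Littlewood and P\'olya. I would argue by induction on the number $m$ of coordinates at which the sorted vectors $\vec x^{\downarrow}$ and $\vec y^{\downarrow}$ disagree. If $m>0$, since the totals agree there exist indices $i<j$ with $x^{\downarrow}_i>y^{\downarrow}_i$, $x^{\downarrow}_j<y^{\downarrow}_j$, and no strict disagreement strictly between them (pick $j$ minimal with a strict deficit, then $i$ the nearest preceding strict surplus). A single $T$-transform, the "Robin Hood" averaging replacing the pair $(y^{\downarrow}_i,y^{\downarrow}_j)$ by $(y^{\downarrow}_i-\delta,\,y^{\downarrow}_j+\delta)$ for a suitable $\delta>0$, produces $\vec y'=\vec y\,T$ with $T$ doubly stochastic, $\vec x\prec\vec y'\prec\vec y$, and strictly fewer disagreements; composing the finitely many $T$-transforms gives a doubly stochastic $P$ with $\vec x=\vec y P$. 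The delicate bookkeeping is choosing $\delta$ so that it preserves the sorted order and keeps $\vec x\prec\vec y'$ while strictly decreasing $m$; alternatively one may route $(\mathrm{i})\Rightarrow(\mathrm{v})$ through an extreme-point analysis of the permutohedron and recover $(\mathrm{ii})$ from $(\mathrm{v})$ via Birkhoff--von Neumann, but the $T$-transform argument is the most self-contained. Closing the cycle then yields the full equivalence $(\mathrm{i})\Leftrightarrow(\mathrm{ii})\Leftrightarrow(\mathrm{iii})\Leftrightarrow(\mathrm{iv})\Leftrightarrow(\mathrm{v})$, and in particular the doubly-stochastic characterization $(\mathrm{ii})$ that underlies the majorization arguments in Theorems~\ref{thm:stab_proj} and~\ref{thm:major}.
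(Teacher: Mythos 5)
You should first note that the paper contains no proof of this lemma at all: it is quoted from Marshall--Olkin (page 14) and used as a black box, with only the equivalence (i)$\Leftrightarrow$(ii) (their Proposition 1.A.3) actually invoked in the proofs of Theorems \ref{thm:stab_proj} and \ref{thm:major}. So there is no in-paper route to compare against; what you supply is the standard textbook argument, and your cyclic architecture (i)$\Rightarrow$(ii)$\Rightarrow$(v)$\Rightarrow$(iii)$\Rightarrow$(iv)$\Rightarrow$(i) is the right one. The arrows (ii)$\Rightarrow$(v) via Birkhoff--von Neumann, (v)$\Rightarrow$(iii) via coordinatewise Jensen, (iii)$\Rightarrow$(iv) by specializing $g(t)=|t-a|$, and (iv)$\Rightarrow$(i) via $\max(t-a,0)=\tfrac12(|t-a|+(t-a))$, sending $a\to\pm\infty$ to equate totals and then taking $a=y^{\downarrow}_k$, are all correct as written. (Minor remark: (v)$\Rightarrow$(ii) is immediate without Birkhoff, since a convex combination of permutation matrices is doubly stochastic; Birkhoff is only needed for the converse direction (ii)$\Rightarrow$(v).)

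The one step that would fail as written is the index selection in (i)$\Rightarrow$(ii). If $\vec x\prec\vec y$ and the sorted vectors first disagree at index $l$, then $\sum_{m\le l}x^{\downarrow}_m\le\sum_{m\le l}y^{\downarrow}_m$ together with agreement before $l$ forces $x^{\downarrow}_l<y^{\downarrow}_l$; hence the minimal index $j$ with a strict deficit $x^{\downarrow}_j<y^{\downarrow}_j$ has \emph{no} preceding strict surplus, and the pair $i<j$ with $x^{\downarrow}_i>y^{\downarrow}_i$, $x^{\downarrow}_j<y^{\downarrow}_j$ that you posit need not exist at all (take $\vec x=(1,1)\prec\vec y=(2,0)$: the deficit is at index $1$ and the surplus at index $2$). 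Moreover, with your orientation the transfer $(y^{\downarrow}_i,y^{\downarrow}_j)\mapsto(y^{\downarrow}_i-\delta,\,y^{\downarrow}_j+\delta)$ pushes $\vec y$ \emph{away} from $\vec x$ at both coordinates, so it neither reduces the number of disagreements nor preserves $\vec x\prec\vec y'$. The correct Robin Hood move takes from where $y$ is too rich and gives to where it is too poor: choose $i<j$ with $y^{\downarrow}_i>x^{\downarrow}_i$, $y^{\downarrow}_j<x^{\downarrow}_j$ and no strict disagreement strictly between them (such a pair exists precisely because the first disagreement is a deficit of $x$ while the totals agree), set $\delta=\min\{y^{\downarrow}_i-x^{\downarrow}_i,\;x^{\downarrow}_j-y^{\downarrow}_j\}$, which lies in $(0,\,y^{\downarrow}_i-y^{\downarrow}_j]$, and apply the T-transform to those two coordinates of $\vec y$. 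One then checks $\vec x\prec\vec y'\prec\vec y$ and that $\vec y'$ agrees with $\vec x^{\downarrow}$ in strictly more coordinates, so the induction closes as in Marshall--Olkin, Lemma 2.B.1. With that orientation corrected, your proof is complete and establishes exactly the implication the paper needs.
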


\subsection{Existence of parameters in discrete beam splitter and amplifier: a number theory guarantee }\label{sec:apen_numT}

\begin{lem}\label{0212lem1}
For every prime number $d$, there  exist  $\lfloor \frac{d+1}{8}\rfloor$ pairs $s,t\in \Z_d$, with $s\not\equiv 0,\pm1$,   such that
\begin{align}\label{0120shi1}
s^2+t^2\equiv 1 \mod d\;.
\end{align}
There also exist $\lfloor \frac{d-3}{4}\rfloor$ pairs $s,t\in \Z_d$ with $s\not\equiv 0,\pm1$, such that 
\begin{align}\label{coshsinh}
s^2-t^2\equiv 1 \mod d\;.
\end{align}
Thus solutions exist in both cases for every prime $d\ge7$.
\end{lem}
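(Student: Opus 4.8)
The plan is to count, for each prime $d$, the number of solutions $(s,t)\in\Z_d\times\Z_d$ to the conic $s^2+t^2\equiv 1$ (resp.\ $s^2-t^2\equiv 1$), and then remove the few degenerate solutions with $s\equiv 0,\pm1$ to obtain the stated lower bounds. For the first equation, the total solution count is a classical fact: the projective conic $x^2+y^2=z^2$ over $\F_d$ has $d+1$ points, and after dehomogenizing one finds that the affine circle $s^2+t^2=1$ has exactly $d-1$ points when $-1$ is a square mod $d$ (i.e.\ $d\equiv1\bmod 4$) and $d+1$ points when $-1$ is a nonsquare ($d\equiv3\bmod 4$); in either case at least $d-1$ solutions. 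The same argument applied to $s^2-t^2=1$, i.e.\ $(s-t)(s+t)=1$, is even cleaner: the substitution $u=s+t$, $v=s-t$ is a bijection on $\Z_d^2$ (for odd $d$, since $2$ is invertible), and $uv=1$ has exactly $d-1$ solutions, one for each $u\in\Z_d^\times$. So $s^2-t^2\equiv1$ has exactly $d-1$ affine solutions for every odd prime $d$.

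Next I would discard the degenerate solutions. For $s^2+t^2\equiv1$: the cases $s\equiv0$ give $t\equiv\pm1$ ($2$ solutions), and $s\equiv\pm1$ forces $t\equiv0$ ($2$ solutions), so at most $4$ solutions are removed, leaving at least $d-5$ pairs with $s\not\equiv0,\pm1$. Likewise for $s^2-t^2\equiv1$: $s\equiv0$ would need $t^2\equiv-1$ (at most $2$ solutions), and $s\equiv\pm1$ forces $t\equiv0$ ($2$ solutions), again removing at most $4$, leaving at least $d-5$. To reach the sharper bounds $\lfloor(d+1)/8\rfloor$ and $\lfloor(d-3)/4\rfloor$ claimed in the statement, I would exploit the symmetry group acting on the solution set — for the circle, the maps $(s,t)\mapsto(\pm s,\pm t)$ and $(s,t)\mapsto(\pm t,\pm s)$ generate an action of order $8$ (the dihedral group $D_4$), and for the hyperbola the maps $(s,t)\mapsto(\pm s,\pm t)$ give a group of order $4$. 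Since the degenerate points are exactly the fixed points / short orbits under these actions, the non-degenerate solutions split into full orbits, yielding $(d-1-(\text{small}))/8$ and $(d-1-(\text{small}))/4$ respectively; a careful bookkeeping of the short orbits (and the cases $d\equiv1$ vs $3\bmod4$, $d\equiv1$ vs $3\bmod8$) produces the floor expressions. Finally, for $d\ge7$ one checks $\lfloor(d+1)/8\rfloor\ge1$ and $\lfloor(d-3)/4\rfloor\ge1$ directly, so a valid pair exists in both cases.

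The main obstacle I anticipate is not the existence part — which already follows from the crude count $d-5\ge2$ for $d\ge7$ — but matching the \emph{exact} orbit-counting constants in $\lfloor(d+1)/8\rfloor$ and $\lfloor(d-3)/4\rfloor$, since this requires tracking how the dihedral/Klein action degenerates on the coordinate axes and on the ``diagonal'' points $s\equiv\pm t$, and these degeneracies depend on congruence conditions on $d$ (whether $-1$, $2$ are squares mod $d$). I would organize this as a short case analysis on $d\bmod 8$. An alternative, cleaner route for the first equation is to use the standard rational parametrization of the circle: when $d\equiv3\bmod4$ every point $(s,t)$ with $s\ne-1$ is $\bigl(\tfrac{1-m^2}{1+m^2},\tfrac{2m}{1+m^2}\bigr)$ for a unique $m\in\Z_d$ (note $1+m^2\ne0$ since $-1$ is a nonsquare), reducing everything to counting $m$ with $\tfrac{1-m^2}{1+m^2}\not\equiv0,\pm1$, i.e.\ $m\not\equiv0,\pm1$ and $m^2\not\equiv1$; this is the approach I would actually write up, handling $d\equiv1\bmod4$ by the analogous parametrization over the quadratic extension or by the direct $uv$-substitution $s^2+t^2=(s+it)(s-it)$ when $i=\sqrt{-1}\in\Z_d$.
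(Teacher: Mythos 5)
Your proposal is correct, and the operative content of the lemma (existence of admissible pairs for every prime $d\ge 7$) is fully proved by your crude bound: at least $d-5\ge 2$ honest solutions with $s\not\equiv 0,\pm1$ in each case. Your route, however, differs from the paper's. The paper produces solutions by the rational parametrizations $s\equiv\frac{2m}{m^2+1}$, $t\equiv\frac{m^2-1}{m^2+1}$ for the circle and $s\equiv\frac{m+m^{-1}}{2}$, $t\equiv\frac{m-m^{-1}}{2}$ for the hyperbola, counts admissible $m\in\Z_d$ (excluding $m\equiv 0,\pm1$ and the two values with $m^2\equiv-1$), and then quotients by the equivalence $\{(\pm s,\pm t),(\pm t,\pm s)\}$ (resp.\ by sign changes) to land on the floor expressions; your primary route instead counts affine points on the conic directly ($d\mp 1$ points on $s^2+t^2=1$ according to whether $-1$ is a square, and exactly $d-1$ points on $s^2-t^2=1$ via $u=s+t$, $v=s-t$), deletes the four degenerate points, and divides into orbits of the dihedral (order $8$) resp.\ sign-change (order $4$) symmetry group. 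The two approaches are close in spirit — both interpret the stated counts as counts of essentially distinct pairs — but yours buys a cleaner and more uniform total count, and in the hyperbola case it actually explains the constant $\lfloor\frac{d-3}{4}\rfloor$ more transparently than the paper's remark that ``the counting remains the same.'' The one caveat is that you leave the exact matching of the floor constants as ``careful bookkeeping''; this is not a gap in the idea, and it does close as you predict: for the circle the only short orbits come from $2s^2\equiv 1$, which exist precisely when $2$ is a square mod $d$ (i.e.\ $d\equiv\pm1\bmod 8$), and running the four residue classes mod $8$ gives orbit counts $(d-1)/8$, $(d-5)/8$, $(d-3)/8$, $(d+1)/8$, each equal to $\lfloor\frac{d+1}{8}\rfloor$; for the hyperbola all nondegenerate orbits have size $4$, giving $(d-5)/4$ or $(d-3)/4$, i.e.\ $\lfloor\frac{d-3}{4}\rfloor$. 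Your alternative parametrization route for the circle is essentially the paper's argument (restricted to $d\equiv 3\bmod 4$), so if you pursue it you would need the same exclusions at $m^2\equiv-1$ that the paper makes in the $d\equiv1\bmod 4$ case rather than passing to a quadratic extension.
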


\begin{proof}
For $m\in\Z_{d}$, consider $s\equiv\frac{2m}{m^2+1}$, 
and $t\equiv\frac{m^2-1}{m^2+1}$ as a potential solution of \eqref{0120shi1}.  
One can not choose $m=0,1,d-1$, 
for in those cases $s$ or $t$ are zero, 
leaving $d-3$ possible choices for $m$. 
In addition one must rule out two values of $m$ in case $m^{2}=-1\in\Z_{d}$, 
for then $m^{2}+1$ will not have a multiplicative inverse. 
One can count distinct pairs of $(s,t)$ up to the equivalence relation $\{(\pm s,\pm t), (\pm t, \pm s)\}$.  If a solution satisfies $s=t$, then the equivalence class has 4 elements;
otherwise, the equivalence class has 8 elements.  So the number of distinct solutions $(s,t)$ is one of the following numbers (depending on which of these is an integer):
\begin{align*}
\frac{d-3}{8} \;,\;\text{ or } \;\; \frac{d-5}{8} \;,\;\text{ or }\;\;\frac{d-3-4}{8}+1 \;,\;\text{ or }\;\; \frac{d-5-4}{8}+1\;,
\end{align*}
or after simplification  $\lfloor \frac{d+1}{8}\rfloor$.  This establishes the stated number of solutions to \eqref{0120shi1}.
In order to establish solutions to \eqref{coshsinh}, consider $s=\frac{m+m^{-1}}{2}$ and $t=\frac{m-m^{-1}}{2}$.  
Again we cannot choose $m=0,1,d-1$ as $s$ or $t$ would vanish.  
In addition one must rule out two values of $m$ in case $m^{2}=-1\in\Z_{d}$, 
for then $m=-m^{-1}$ and $s$ would vanish.
The counting thus remains the same as for the study of \eqref{0120shi1}.
\end{proof} 

\bibliographystyle{siam}
\bibliography{reference}{}
\end{document}